% =========================================================================
% SciPost LaTeX template
% Version 2021-08
%
% Submissions to SciPost Journals should make use of this template.
%
% INSTRUCTIONS: simply look for the `TODO:' tokens and adapt your file.
%
% You can also make use of our empty "skeleton" templates for each Journal,
% e.g. SciPostPhys_skeleton.tex
% =========================================================================

% TODO: uncomment ONE of the class declarations below

% Class declaration format: \documentclass[submission, {DOI label of journal}]{SciPost}
% where the DOI label of the journal should be one of:
% Phys          (for SciPost Physics)
% PhysCore      (for SciPost Physics Core)
% PhysLectNotes (for SciPost Physics Lecture Notes)
% PhysProc      (for SciPost Physics Proceedings -> !! Please use the conference-specific template which you will find on our website !!
% PhysCodeb     (for SciPost Physics Codebases)
% Astro         (for SciPost Astronomy)
% Bio           (for SciPost Biology)
% Chem          (for SciPost Chemistry)
% CompSci       (for SciPost Computer Science)
% Math          (for SciPost Mathematics)

% \documentclass[]{SciPost}
% \fancyhf{}
% \fancyfoot[C]{\thepage}
% \renewcommand{\headrulewidth}{0pt}
%% PHYSICS:
% If you are submitting a paper to SciPost Physics: uncomment next line
\documentclass[submission, Phys]{SciPost}
% If you are submitting a paper to SciPost Physics Core: uncomment next line
% \documentclass[submission, PhysCore]{SciPost}
% If you are submitting a paper to SciPost Physics Lecture Notes: uncomment next line
%\documentclass[submission, PhysLectNotes]{SciPost}
% If you are submitting a paper to SciPost Physics Proceedings: uncomment next line
%\documentclass[submission, PhysProc]{SciPost}
% If you are submitting a paper to SciPost Physics Codebases: uncomment next line
%\documentclass[submission, PhysCodeb]{SciPost}

%% ASTRONOMY:
% If you are submitting a paper to SciPost Astronomy: uncomment next line
% \documentclass[submission, Astro]{SciPost}

%% BIOLOGY:
% If you are submitting a paper to SciPost Biology: uncomment next line
% \documentclass[submission, Bio]{SciPost}

%% CHEMISTRY:
% If you are submitting a paper to SciPost Chemistry: uncomment next line
% \documentclass[submission, Chem]{SciPost}

%% COMPUTER SCIENCE:
% If you are submitting a paper to SciPost Computer Science: uncomment next line
% \documentclass[submission, CompSci]{SciPost}

%% MATHEMATICS:
% If you are submitting a paper to SciPost Mathematics: uncomment next line
% \documentclass[submission, Math]{SciPost}

% Prevent all line breaks in inline equations.
% \binoppenalty=10000
% \relpenalty=10000

\hypersetup{
    colorlinks,
    linkcolor={red!50!black},
    citecolor={blue!50!black},
    urlcolor={blue!80!black}
}

\usepackage{longtable}
\usepackage[bitstream-charter]{mathdesign}
\urlstyle{sf}

% Fix \cal and \mathcal characters look (so it's not the same as \mathscr)
\DeclareSymbolFont{usualmathcal}{OMS}{cmsy}{m}{n}
\DeclareSymbolFontAlphabet{\mathcal}{usualmathcal}

\usepackage{amsmath,amsthm,tikz,bm} % Lots of math symbols and environments
\numberwithin{equation}{section}

\usetikzlibrary{calc}
\usetikzlibrary{arrows,shapes, positioning, matrix,cd}
\usetikzlibrary{decorations.markings}
\newcommand*{\halfway}{0.5*\pgfdecoratedpathlength+.5*4pt}
\tikzstyle arrowstyle=[scale=1]
\tikzstyle directed=[postaction={decorate,decoration={markings,
    mark=at position .15 with {\arrow[arrowstyle]{stealth}}}}]
\tikzstyle string=[thick,postaction={decorate},decoration={markings,
    mark=at position \halfway with {\arrow[arrowstyle]{stealth}}}]
\tikzstyle group=[]
\usepackage{bbm,yfonts}
\usepackage{array,makecell}

\input{macros.sty}

\newcommand{\alg}{\mathrm{Alg}}
\newcommand{\irr}{\text{Irr}}
\DeclareMathOperator{\Ima}{Im}
\DeclareMathOperator{\coker}{coker}

\newcommand{\itk}[1]{%
   \begin{tikzpicture}[baseline=(current bounding box.center),>={Triangle[open,width=0pt 20]},scale=0.9]
   #1
   \end{tikzpicture}
   }
\usetikzlibrary{arrows.meta}
\newcommand{\curr}[3][O]{#1{\uparrow}_{#2}^{#3} }

\newcommand{\TL}[1]{#1}  \newcommand{\JR}[1]{#1}

\begin{document}

% TODO: write your article's title here.
% The article title is centered, Large boldface, and should fit in two lines
\begin{center}{\Large \textbf{
Quantum Current and Holographic Categorical Symmetry\\
}}\end{center}

% TODO: write the author list here. Use first name (+ other initials) + surname format.
% Separate subsequent authors by a comma, omit comma and use "and" for the last author.
% Mark the corresponding author with a superscript star.
\begin{center}
Tian Lan\textsuperscript{1$\star$}
and Jing-Ren Zhou\textsuperscript{1}
\end{center}

% TODO: write all affiliations here.
% Format: institute, city, country
\begin{center}
{\bf 1} 
Department of Physics, The Chinese University of Hong Kong, Shatin, New Territories, Hong Kong, China
\\
% TODO: provide email address of corresponding author
${}^\star$ {\small \sf tlan@cuhk.edu.hk}
\end{center}

\begin{center}
\today
\end{center}

% For convenience during refereeing (optional),
% you can turn on line numbers by uncommenting the next line:
%\linenumbers
% You should run LaTeX twice in order for the line numbers to appear.

\section*{Abstract}
{\bf
% TODO: write your abstract here.
   We establish the formulation for quantum current. Given a symmetry group $G$, let $\mathcal{C}:=\mathop\mathrm{Rep} G$ be its representation category. Physically, symmetry charges are objects of $\mathcal{C}$ and symmetric operators are morphisms in $\mathcal{C}$. The addition of charges is given by the tensor product of representations. For any symmetric operator $O$ crossing two subsystems, the exact symmetry charge transported by $O$ can be extracted. The quantum current is  defined as symmetric operators that can transport symmetry charges over an arbitrary long distance. A quantum current exactly corresponds to an object in the Drinfeld center $Z_1(\mathcal{C})$. The condition for quantum currents to be \TL{superconducting} is also specified, \TL{which corresponds to condensation of anyons in one higher dimension}. To express the local conservation, the internal hom must be used to compute the charge difference, and the framework of enriched category is inevitable. To illustrate these ideas, we develop a rigorous scheme of renormalization in one-dimensional lattice systems and analyse the fixed-point models. It is proved that in the fixed-point models, \TL{superconducting} quantum currents form a Lagrangian algebra in $Z_1(\mathcal{C})$ and the boundary-bulk correspondence is verified in the enriched setting. Overall, the quantum current provides a natural physical interpretation to the holographic categorical symmetry. 
}

% TODO: include a table of contents (optional)
% Guideline: if your paper is longer that 6 pages, include a TOC
% To remove the TOC, simply cut the following block
\vspace{10pt}
\noindent\rule{\textwidth}{1pt}
\tableofcontents
\thispagestyle{fancy}
\noindent\rule{\textwidth}{1pt}
\vspace{10pt}

\section{Introduction}
The electric charge is a conserved quantity. Classically, we think that the electric charge is a continuous quantity and we talk about the charge density $\rho$. The global charge $Q=\int \rho$
is conserved. If we divide the whole system into two parts $\BA$ and $\BB$, and denote $Q_\BA=\int_\BA \rho$ and $Q_\BB=\int_\BB \rho$, then
\begin{align}
   0=\Delta Q=\Delta Q_\BA+ \Delta Q_\BB  \implies \Delta Q_\BA =-\Delta Q_\BB .
\end{align}
The change of charge in one subsystem must compensate that in the other. When the charge in $\BA$ increases, there must be charge flowing from $\BB$ to $\BA$, i.e., a current. Again one can consider the current density $\bl j$, and we have the famous differential equation for local conservation of charge
\begin{align}
   \frac{\partial \rho}{\partial t}+\nabla\cdot \bl j=0.
\end{align}
However, we know in reality that electric charge is discrete instead of continuous. The above treatment, including the density of charge or current as well as the differential equation, is only an effective approximation at a macroscopic or statistical level.

The case becomes even worse when we consider other symmetries. The electric charge is just the conserved quantity of global $U(1)$ symmetry, taking value in integers (with appropriate unit). The addition of electric charges is the usual addition of integers. We may switch to, for example, angular momentum, which is the conserved quantity of the rotation symmetry, $SU(2)$. The quantum angular momentum takes value in, no longer ordinary numbers, but the representations of the symmetry group $SU(2)$. The addition of angular momentum is also more complicated than the addition of numbers. For example, the total angular momentum of two spin $1/2$'s together is the ``direct sum''  of a spin $0$ singlet and a spin $1$ triplet: $1/2\otimes 1/2=0\oplus 1$.

In this paper we give a serious treatment to the local conservation of quantum symmetry charge, which seems long missing in the literature, although the mathematics involved is quite fundamental. The essential technical step is to realize that in a quantum system with symmetry, the symmetric operators are just the morphisms in the tensor (or fusion when the group is finite) category of symmetry charges, or equivalently, symmetric (or invariant) tensor networks. With this in mind, we show that for any symmetric operator crossing two subsystems, a symmetry charge can be extracted which we interpret as the symmetry charge transported by the operator. Moreover, we define the notion of quantum current, which is the collection of symmetric operators that can transport a symmetry charge over a long distance.

It can be seen directly from our definition that a quantum current is an object in the Drinfeld center of the fusion category of symmetry charges.
Our work is a generalization of the \emph{Noether theorem}: when the symmetry, which can be discrete, is given, the possible current is fully determined, by computing the Drinfeld center. The local conservation is now written as
    \begin{equation}
    \Hom_\cC(Q\ot X,Y)\cong \Hom_{Z_1(\cC)}((Q,\beta),[X,Y]_{Z_1(\cC)}), \end{equation}
where $\cC$ is the category of symmetry charges and $Z_1(\cC)$ is the category of quantum currents. The functor on the left hand side $\Hom(-\ot X,Y)$ compute the ways how $X$ can be converted to $Y$.  The internal hom $[X,Y]_{Z_1(\cC)}$ is an object in $Z_1(\cC)$ that represents the functor $\Hom(-\ot X,Y)$. Physically, $[X,Y]_{Z_1(\cC)}$ is the quantum current providing the universal answer for the ways how $X$ can be converted to $Y$. We have to use the internal hom to compute the quantum current, including the charge difference and how the charge flows, for discrete symmetry charges.

\TL{We like to comment on the difference between our formulation and the traditional notion of current in quantum mechanics or  current operator in quantum field theory. The current carried by a charged quantum particle is traditionally defined as the charge times the probability current. Based on such notion, one can only conclude that the expectation value of the charge is locally conserved. Similarly, a current operator in quantum field theory is such an operator that its expectation value (correlation function) satisfies a local conservation condition. We consider these traditional notions of current as only ``semi-classical'', in that (1) the local conservation is only satisfied on average, at a macroscopic or statistical level; (2) they can not be used to deal with the discrete or quantized charge transport in a single quantum mechanical process; and (3) they usually require continuous space-time and continuous symmetry. There are recent works~\cite{Kapustin_2022,sopenko2023topological} extending the ``semi-classical'' current to lattice systems. Our formulation, on the contrary, is truly quantum: it can apply to discrete space and discrete symmetry, and can be used to analyse quantized charge transport exactly instead of on average.}

On the other hand, quantum currents can be identified with excitations in a topological order in one higher dimension. In fact, our work provides a concrete physical interpretation to the \emph{categorical symmetry}\cite{JW1912.13492,KLW+2005.14178,KWZ2108.08835,CW2203.03596,CW2205.06244,XZ2205.09656,moradi2022topological,delcamp2023higher,Kong_2018,KLW+2003.08898,KZ2011.02859,kong2022categories2,kong2022categories}.

The term categorical symmetry was first proposed in Ref.~\cite{JW1912.13492}, which aims at providing an invariant shared by gapped phases before and after a phase transition. For example, the 1+1D transverse field Ising model has two gapped phases, $\Z_2$ symmetric and $\Z_2$ symmetry breaking: One exhibits the $\Z_2$ symmetry while the other exhibits the dual $\Z_2$ symmetry. Putting the two $\Z_2$ together, one says that there is a categorical symmetry of 2+1D $\Z_2$ topological order (or toric code model~\cite{Kit9707021}), which has both $\Z_2$ charges and $\Z_2$ fluxes. From a purely mathematical point of view, this categorical symmetry is the Drinfeld center of the fusion category $\Rep \Z_2$ of symmetry charges. Because of the boundary-bulk correspondence of topological ordered phases~\cite{KW1405.5858,KWZ1502.01690,KONG201762}, one can say, virtually, that categorical symmetry is the topological order in one dimension higher. Because of this relation, in this paper we call such invariant (topological order in one dimension higher) as the \emph{holographic} categorical symmetry, in order to avoid possible confusion with e.g., fusion category symmetry~\cite{thorngren2019fusion,Bhardwaj_2018,davydov2011field,Ina2110.12882} and other similar notions which may also be referred to as categorical symmetry.\footnote{In a more general sense, any generalized version of symmetry relating to category theory may be called categorical symmetry.}

However, the physical meaning of holographic categorical symmetry remains mysterious.
\TL{There are even other names for the same notion, such as symmetry TFT~\cite{ABE+2112.02092} and topological symmetry~\cite{FMT2209.07471}.} 
Ref.~\cite{CW2203.03596,CW2205.06244} proposed to view holographic categorical symmetry as the transparent patch operators. Based on the idea of \emph{topological Wick rotation}~\cite{KZ2011.02859}, Refs.~\cite{KWZ2108.08835,kong2022categories} proposed to view holographic categorical symmetry (or the background category of the enriched category description\cite{XZ2205.09656,Kong_2018} of gapped phase) as the sectors of non-local symmetric operators.
Moreover, in the context of topological order, one can condense the excitations in the bulk topological order to obtain a boundary theory, or fuse defects in the bulk with the  boundary theory. Thus, by topological Wick rotation, the algebras and defects in the holographic categorical symmetry (corresponding to the topological order in the bulk)  should play an important role in classifying phases and phase transitions (corresponding to the boundary theory). Such point of view has been emphasized in Refs.~\cite{KLW+2003.08898,KLW+2005.14178,CW2205.06244}.

Our formulation of quantum currents clarifies the confusion around various concepts regarding holographic categorical symmetry. We give a rigorous definition, Definition~\ref{def.qc}, for what operators qualify as quantum currents. We also give a definition, Definition~\ref{def.condensed}, for what quantum currents are \TL{superconducting}. These definitions are tested in concrete one dimensional fixed-point lattice models. We prove that the \TL{superconducting} quantum currents in the fixed-point model indeed form a Lagrangian algebra in the Drinfeld center, \TL{corresponding to a maximal anyon condensation in one higher dimension}, which also determines the fixed-point defects (or excitations) in the fixed-point model. Therefore, we established the correspondence between quantum currents and the holographic categorical symmetry; holographic categorical symmetry is about the transport property of the physical system. We also want to emphasize that the quantum currents can be measured, observed and understood, in the same dimension of the physical system; no fictional one dimensional higher bulk is required. 

The paper is organized as follows: in Section~\ref{sec:pre} we give some necessary preliminary notions and fix our notations; in Section~\ref{sec:symgraph} we explain basic techniques on how to manipulate symmetric operators, viewing them as graphs or tensor networks; in Section~\ref{sec:qcurrent} we motivate the definition for quantum current and introduce some related notions such as the charge transported by symmetric operators and the condensation of quantum currents; 
in Section~\ref{sec:renorm} we give a rigorous treatment to the renormalization process in 1+1D lattice models; finally in Section~\ref{sec:model} we show that in the fixed-point lattice models \TL{superconducting} quantum currents form a Lagrangian algebra and give a series of examples. We summarize main contents in this paper in Table~\ref{tab:summ}.

\newpage
\begin{longtable}[c]{ |m{.28\textwidth}|m{.36\textwidth}|m{.28\textwidth}| } 
  \hline
\makecell{Physical quantity}
&\makecell{Mathematical description}
&\makecell{Graphical representation} \\ 
  \hline
  \makecell{Symmetry charge/\\ Hilbert space on\\local region $\BK$}  & 
  \makecell{$(\cH_\BK,\rho^\BK)\in \cC:=\Rep G$}
  &\makecell{$
   \itk{\filldraw (0,1.3) -- node[right]{$\cH_\BK$} (0,0);}$}
   \\ 
  \hline
  \makecell{General symmetric\\operator/intertwiner} & \makecell{$\Hom(\cH_{\BK_1},\cH_{\BK_2})$ consists of\\ $f\in \Hom_\Ve(\cH_{\BK_1},\cH_{\BK_2})$ \\ such that $\forall g\in G$,\\$\rho^{\BK_2}_g  f (\rho^{\BK_1}_g)^{-1}=f$}&  \makecell{\begin{tikzpicture}[baseline=(current bounding box.center),>={Triangle[open,width=0pt 20]},yscale=0.6]\node[rectangle, draw] (l) at (0,0) {$f$};
      \filldraw (0,-2) -- node[right]{$\cH_{\BK_1}$} (l) -- node[right]{$\cH_{\BK_2}$} (0,2);
       \end{tikzpicture}}
    \\ 
  \hline
  \makecell{Symmetric operator on\\bipartite system $\BA$ and $\BB$} & \makecell{$O\in \Hom(\cH_\BA\ot \cH_\BB,\cH_\BA\ot\cH_\BB)$} & \makecell{\itk{
    \node[rectangle, minimum width=4em, draw] (f) at (0,0) {$O$};
   \node (v1) at (-0.5,1) {$\cH_\BA$};
   \node (vn) at (0.5,1) {$\cH_\BB$};
   \node (w1) at (-0.5,-1) {$\cH_\BA$};
   \node (wm) at (0.5,-1) {$\cH_\BB$};
   \draw
   (v1) -- (v1 |- f.north)
   (vn) -- (vn |- f.north)
   (w1) -- (w1 |- f.south)
   (wm) -- (wm |- f.south);}} \\
  \hline
  \makecell{Transformation on $O$ \\to extract transported\\symmetry charge} & \makecell{$O\overset{\epsilon}{\to}\bar O=\bar r \bar l\overset{\epsilon^{-1}}{\to} O=rl$\\(Definition \ref{def.transported})} & \makecell{$\bar O=$\begin{tikzpicture}[baseline=(current bounding box.center),>={Triangle[open,width=0pt 20]},scale=0.8]
      \node[rectangle, draw] (e) at (0,-1) {$\bar l$};
      \node[rectangle, draw] (m) at (0,0) {$\bar r$};
      \draw (1,1)node[above]{$\cH_\BB^*$} -- (m)-- (-1,1)node[above]{$\cH_\BB$}; 
      \draw (e) -- (0,-0.5)node[right]{$\Ima \bar O$}--(m);
      \draw (-1,-2)node[below]{$\cH_\BA^*$} -- (e) -- (1,-2)node[below]{$\cH_\BA$};
      \end{tikzpicture}} \\ 
  \hline
  \makecell{Transported symmetry\\charge from region\\$\BA$ to $\BB$} & \makecell{$\curr{\BA}{\BB}=\Ima \bar O$}  & \makecell{$O=$\itk{
   \node[rectangle, draw] (l) at (0,1) {$l$};
   \node[rectangle, draw] (r) at (2,2) {$r$};
   \draw (0,0)node[below]{$\cH_\BA$}--(l)--(0,3)node[above]{$\cH_\BA$};
   \draw (2,0)node[below]{$\cH_\BB$}--(r)--(2,3)node[above]{$\cH_\BB$};
   \draw (l)--node[above]{$\curr{\BA}{\BB}$} (r);}} \\ 
  \hline
  \makecell{Quantum current} & \makecell{$(Q,\beta_{Q,-})\in Z_1(\cC)$\\(Definition \ref{def.qc}) \\ \\Simple objects\\ in $Z_1(\Rep G)$: $(C_x,\tau)$\\(Appendix~\ref{sec:dcenrepg})} & \makecell{\itk{
      \node[rectangle,draw] (l) at (-2,-1) {$l$};
      \node[rectangle,draw] (m) at (0,0) {$\beta_{Q,\cH_\BM}$};
      \node[rectangle,draw] (r) at (2,1) {$r$};
      \draw (-2,-2)node[below]{$\cH_s$} -- (l) -- (-2,2)node[above]{$\cH_s$}
      (0,-2)node[below]{$\cH_\BM$} -- (m) -- (0,2)node[above]{$\cH_\BM$}
      (2,-2)node[below]{$\cH_t$} -- (r) -- (2,2)node[above]{$\cH_t$}
      (l)--node[above]{$Q$} (m) -- node[above]{$Q$} (r);
   }}  \\ 
  \hline
  \makecell{Translation invariant\\commuting-projector\\fixed-point model\\$(\Z,A,-m^\dag m)$} & \makecell{\emph{Hilbert space}: $\cH_i=A,\forall i\in\Z$,\\ where $(A,m,\eta)$ is a\\Frobenius algebra in $\cC$\\ \\ \emph{Commuting-projector}\\ \emph{Hamiltonian}: $H=-\sum_i (m^\dag m)_i$\\ \\ \emph{Ground state subspace}: $A$} &  \makecell{$(m^\dag m)_i$=\itk{
      \draw (1,1)node[right]{$A$} -- (0,0)-- (-1,1)node[left]{$A$}; 
      \draw (0,-1)node[below]{$m$} -- (0,-0.5)node[right]{$A$}  --(0,0)node[above]{$m^\dag$} ;
      \draw (-1,-2)node[left]{$A$} -- (0,-1) -- (1,-2)node[right]{$A$};}}\\ 
  \hline

    \makecell{Half-infinite chain model\\with boundary condition\\$(\N,A,-m^\dag m, M,-\rho^\dag\rho)$} & \makecell{\emph{Boundary Hilbert space}:\\$\cH_0=M$, where\\$(M,\rho)$ is a right $A$-module in $\cC$\\ \\\emph{Hamiltonian}:\\ $H=-(\rho^\dag\rho)_0-\sum_{i\geq 1} (m^\dag m)_i$\\ \\\emph{Ground state subspace}: $M$\\ \\\emph{Boundary change}: $A$-module map\\ \\\emph{All boundary conditions}: $\cC_A$} &\makecell  {$(\rho^\dag\rho)_0$=\itk{
      \draw (-1,-1)node[left]{$\rho$} -- (0,-2)node[right]{$A$}; 
      \draw (-1,1)node[left]{$M$} -- (-1,-2)node[left]{$M$};
      \draw (0,1)node[right]{$A$} -- (-1,0)node[left]{$\rho^\dag$};}}\\ 
    \hline
    \makecell{Finite chain model with\\two-side boundary \\conditions\\$(\BL=\{0,...,J\},A,-m^\dag m,$\\$ M,-\rho^\dag\rho,N,-\lambda^\dag\lambda)$} & \makecell{\emph{Boundary Hilbert spaces}:\\$\cH_0=M$, $\cH_J=N$, where\\$(M,\rho)$ is a right $A$-module and\\$(N,\lambda)$ is a left $A$-module in $\cC$\\ \\\emph{Hamiltonian}: $H=-(\rho^\dag\rho)_0$\\$-(\lambda^\dag\lambda)_{J-1}-\sum_{1\leq i\leq J-2} (m^\dag m)_i$\\ \\\emph{Ground state subspace}:\\$\Ima P=M\otr[A] N$} & \makecell{$P$=\itk{
      \draw (1,1)node[right]{$N$} -- (1,-2)node[right]{$N$}; 
      \draw (-1,1)node[left]{$M$} -- (-1,-2)node[left]{$M$};
      \draw (-1,0)node[left]{$\rho$} -- (0,-.5)node[below]{$A$} -- (1,-1)node[right]{$\lambda^\dag$};}} \\ 
  \hline
  \makecell{Fixed-point defects/\\Excitations} & \makecell{\emph{Defect Hilbert space}:\\$A$-$A'$-bimodule $B$ in $\cC$\\ \\ \emph{Defect change}: $A$-$A'$-bimodule map\\ \\\emph{Ground state subspace}\\\emph{of two defects ${}_{A''}B'_A$ and $_A B_{A'}$}: \\ $\Ima P'=B'\otr[A]B$\\  
  \\
  \emph{All defects}: $_A\cC_{A'}\cong\Fun_\cC(\cC_A,\cC_A')$
  \\
  \emph{Excitations}: $_A\cC_A\cong$\\$\Fun_\cC(\cC_A,\cC_A)^{\text{rev}}\cong Z_1(\cC)_{[A,A]}$} &\makecell {$P'$=\itk{
      \draw (1,1)node[right]{$B$} -- (1,-2)node[right]{$B$}; 
      \draw (-1,1)node[left]{$B'$} -- (-1,-2)node[left]{$B'$};
      \draw (-1,0)node[left]{$\rho'$} -- (0,-.5)node[below]{$A$} -- (1,-1)node[right]{$\lambda^\dag$};}} \\ 
   \hline
  \makecell{\TL{Superconducting}\\quantum currents} & \makecell{Internal hom $[A,A]\in Z_1(\cC)$,\\which is also a Lagrangian\\algebra in $Z_1(\cC)$} &  \\ 
  \hline
  \makecell{Holographic \\ categorical symmetry} &\makecell {$Z_0(^{\cC}\cC_A)\cong ^{Z_1(\cC)}{\text{Fun}_{\cC}(\cC_A,\cC_A)}$} &  \\ 
  \hline
  \makecell{Universal model\\ $(\Z,\Fun(G),$\\$ -(\iota_A\ot \iota_A) m^\dag m (\iota_A^\dag\ot\iota_A^\dag))$\\for $\cC=\Rep G$\\with trivial $\omega_2$} & \makecell{Frobenius algebras in $\Rep G$\\are classified by $(H\subset G,\omega_2)$,\\where $\omega_2\in H^2(H,U(1))$\\ \\For trivial $\omega_2$, $A:=\Fun(G/H)$,\\$\iota_A: A\to \Fun(G)$\\(Subsection \ref{sec.universal})\\ \\Realizing 1+1D spontaneous\\ symmetry breaking phases\\ with unbroken subgroup $H$\\for different choices of $H$} &\makecell{ \begin{tikzpicture}[baseline=(current bounding box.center),>={Triangle[open,width=0pt 20]},yscale=0.8]
      \node[draw] (e1) at (-1,-1) {$\iota_A^\dag$};
      \node[draw](e2) at (1,-1) {$\iota_A^\dag$};
      \node[draw](g) at (-1,2) {$\iota_A$};
      \node[draw](h) at (1,2) {$\iota_A$};
      \draw
      (-1,-2)node[below]{$\Fun(G)$} -- (e1) -- (-.5,-.5)node[above]{$A$} -- (0,0)node[below]{$m$} -- (0,.5)node[right]{$A$} -- (0,1)node[above]{$m^\dag$} -- (-.5,1.5)node[below]{$A$} -- (g) -- (-1,3)node[above]{$\Fun(G)$}
      (1,-2)node[below]{$\Fun(G)$} -- (e2) -- (.5,-.5)node[above]{$A$} -- (0,0) -- (0,1) -- (.5,1.5)node[below]{$A$} -- (h) -- (1,3)node[above]{$\Fun(G)$};
   \end{tikzpicture}} \\ \hline
    \makecell{\JR{Universal model}\\for $\cC=\Rep G$\\with nontrivial $\omega_2$}& \makecell{For nontrivial $\omega_2$, $A$ is given by\\Example \ref{eg.twistedgroupalg}\\ \\\JR{Realizing generic 1+1D} \\$G$-symmetric phases labeled by\\ $(H,\omega_2)$ for different choices \\of $H$ and $\omega_2$\\ ($H$ is the unbroken subgroup\\ and $\omega_2$ labels symmetry protected\\ topological order under $H$)} &
    \\ \hline
    \caption{A summary of contents.}
    \label{tab:summ}
\end{longtable}

\section{Preliminaries and notations}\label{sec:pre}
We first review the group representation category and fix our notation for graphical calculus.
\begin{definition}
    Let $G$ be a compact group. The \emph{group representation category} is the functor category $\Fun(BG, \Ve)=:\Rep G$, where $BG$ is the category with one object $\bullet$ and $BG(\bullet,\bullet)=G$, and $\Ve$ is the category of finite dimensional vector spaces over $\C$. Spelling it out, 
    \begin{itemize}
        \item An object in $\Rep G$  is a pair $(V,\rho)$, called a group representation, where $V$ is a vector space and $\rho: G\to GL(V)$ is a group homomorphism, i.e. there are $g$-indexed invertible linear maps on $V$, denoted by $\rho_g$, such that \begin{equation} \rho_{gh}=\rho_g\rho_h. \end{equation} $\rho_g$ is referred to as the group action or symmetry action. An object in $\Rep G$ is physically referred to as a ($G$-)symmetry charge.

We may use simply $V$ to denote a representation $(V,\rho)$, and write the group action $\rho_g(a)$ as $g\triangleright a$ or $ga$, when the explicit form of $\rho$ is not important for the discussion.

        \item A morphism between two representations $(V,\rho)$ and $(W,\tau)$ is a linear map $f:V\to W$ that commutes with group actions, $f\rho_g=\tau_g f$ for all $g\in G$. The space of morphisms from $(V,\rho)$ to $(W,\tau)$ is denoted by $\Hom((V,\rho),(W,\tau))$. In particular, the endomorphism space $\Hom((V,\rho),(V,\rho))$ is exactly the subspace of symmetric operators on $V$, which satisfy $\rho_g f\rho_{g^{-1}}=f$. The following terms 
        \begin{itemize}
            \item morphism in $\Rep G$, 
            \item intertwiner, intertwining operator
            \item invariant tensor,
            \item symmetric tensor,\footnote{In this paper ``symmetric'' always mean invariant under group action, and never mean invariant under permutation of tensor indices.}
            \item symmetric operator,
        \end{itemize}
        will be used interchangeably.
    \end{itemize}
\end{definition}

Graphically, an object is represented by a line and a morphism is represented by a node between two lines
\begin{align}
   f:(V,\rho)\to(W,\tau) & =
   \itk{
      \filldraw (0,-2) -- node[right]{$(V,\rho)$} (l) -- node[right]{$(W,\tau)$} (0,2);
       \node[rectangle, draw] (l) at (0,0) {$f$};
   }.
\end{align}
Composition of morphisms is done from bottom to top
\begin{align}
   gf = \itk{
       \node[rectangle, draw] (l) at (0,1) {$f$};
        \node[rectangle, draw] (r) at (0,2) {$g$};
      \filldraw (0,0) -- (l) -- (r) -- (0,3);
   }\quad .
\end{align}

The representation category enjoys additional nice structures, one of which is the tensor product. Given two representations $(V,\rho)$ and $(W,\tau)$, their tensor product
\begin{equation} (V,\rho)\ot(W,\tau):=(V\ot W, \rho\ot \tau) \end{equation} is again a representation with action given by
\begin{align}
   (\rho\ot\tau)_g=\rho_g\ot \tau_g.
   \label{tensor}
\end{align}
Tensor product is graphically represented by juxtaposing lines
\begin{align}
   (V,\rho)\ot (W,\tau) = \itk{
      \draw (0,0) -- node[left]{$(V,\rho)$} (0,2);
      \draw (1,0) -- node[right]{$(W,\tau)$} (1,2);
   }.
\end{align}
There is always the trivial representation $\one:=(\C,\id)$, $\id_g=\id_\C$ for all $g\in G$. $\one$ is the unit of tensor product. For any representation $(V,\rho)$, there is a dual representation $(V^*,\rho^\star)$\footnote{Dual group action and dual object, dual morphism are distinct. We denote dual group action of $\rho$ as $\rho^\star$, and dual object of $V$, dual morphism of $f$ as $V^*$, $f^*$ respectively.} where the underlying vector space is the dual vector space $V^*:=\Hom_\Ve(V,\C)$\footnote{When confusion is possible, subscripts are added to clarify $\Hom(-,-)$ in different categories.}, and the group action is $\rho^\star_g:=-\circ\rho_{g^{-1}}=(\rho_{g^{-1}})^*$.  More concretely
\begin{align}
    \rho^\star_g: V^*&\to V^*,\nonumber\\
    \varphi&\mapsto \varphi\rho_{g^{-1}},
    \label{dualrep}
\end{align}
graphically represented as
\begin{align}
    \itk{
      \filldraw (0,-1.5) -- node[right]{$V^*$} (r) -- node[right]{$V^*$} (0,1.5);
       \node[rectangle, draw] (r) at (0,0) {$\rho^\star_g$};
   }
   =
    \itk{
      \filldraw (0,-1.5) -- node[right]{$V^*$} (r) -- node[right]{$V^*$} (0,1.5);
       \node[rectangle, draw] (r) at (0,0) {$(\rho_{g^{-1}})^*$};
   }
   =
   \itk{
       \draw (-1,-.5) -- (l) -- (-1,.5);
       \node[right] at (-1,-.65) {$V$};
       \node[right] at (-1,.65) {$V$};
       \node[rectangle, draw] (l) at (-1,0) {$\rho_{g^{-1}}$};
       \draw (-1,1.5 |-1,.5) arc(0:180:0.5);
       \draw (-2,.5) -- (-2,-1.5)node[left]{$V^*$};
       \draw (-1,.5 |-1,-.5) arc(180:360:0.5);
       \draw (0,-.5) -- (0,1.5)node[right]{$V^*$};
   }.
\end{align}
If one choose a basis of $V$ and the corresponding dual basis of $V^*$, then the matrix representation of $g$ on $V^*$ is the transpose of the matrix representation of $g^{-1}$ on $V$. We can check that the above defined $(V^*,\rho^\star)$ is indeed the dual object of $(V,\rho)$ in $\Rep G$: The pairing between $V^*$ and $V$
\begin{align}
    V^*\ot  V&\to \C,\nonumber\\
    \varphi\ot v&\mapsto \varphi(v) 
\end{align}
is symmetric: \begin{equation}
    \rho^\star_g \varphi(\rho_g (v))=\varphi\circ \rho_{g^{-1}} \circ \rho_g(v)= \varphi(v).
\end{equation} The copairing
\begin{align}
    \C&\to V\ot V^*,\nonumber\\
    1&\mapsto \sum_a a\ot \delta_a,
\end{align}
where $\{a\}$ is a basis of $V$ and $\{\delta_a\}$ the corresponding dual basis $\delta_a(b)=\delta_{a,b}$, is also symmetric \begin{equation}\sum_a \rho_g(a)\ot \rho_g^\star (\delta_a)=\sum_a \rho_g(a)\ot (\delta_a\circ \rho_{g^{-1}})=\sum_a \rho_g(a)\ot \delta_{\rho_g(a)} =\sum_a a\ot \delta_a.\end{equation}
Therefore, the pairing and copairing both remain as morphisms in $\Rep G$ and exhibit $(V^*,\rho^\star)$ as the dual object of $(V,\rho)$.

Another structure is the direct sum. Given two representations $(V,\rho)$ and $(W,\tau)$, their direct sum $(V,\rho)\oplus (W,\tau):=(V\oplus W, \rho \oplus \tau)$ is again a representation with action given by
\begin{align}
   (\rho\oplus\tau)_g=\rho_g\oplus\tau_g\in \End_\Ve(V)\oplus \End_\Ve(W)\subset\End_\Ve(V\oplus W,V\oplus W).
\end{align}

An isomorphism is a morphism invertible under composition. Two representations are isomorphic $V\cong W$ when there is an isomorphism between them; in other words, $V$ and $W$ differ by a basis change which commutes with group actions. A nonzero representation is irreducible (or simple) if it is not isomorphic to a direct sum of two nonzero representations. For a compact group $G$, all finite dimensional representations are completely reducible, i.e. isomorphic to a direct sum of irreducible representations. For reader's convenience, we review the categorical formulation of direct sum here
\begin{definition}
\label{def.ds}
    In a category whose hom-sets form abelian groups and composition is bilinear,
    the \emph{direct sum} of two objects $A,B$, if exists, is an object $Y$ together with two pairs of morphisms $p_A:Y\to A,$ $q_A:A\to Y,$ $ p_B:Y\to B,$ $q_B:B\to Y$ satisfying
    \begin{align}
        p_Aq_A=\id_A, \quad &p_Bq_B=\id_B,\nonumber\\
        p_Aq_B=0, \quad &p_Bq_A=0,\nonumber\\
        q_Ap_A+q_Bp_B&=\id_Y. 
    \end{align}
    We refer to $p_A,p_B$ as projections\footnote{A projection may also refer to an operator that squares to itself. $q_Ap_A$ and $q_Bp_B$ are projections in this sense. In this paper we use both meanings of projection and the precise meaning should be clear from the context. As the two meanings are closely related and both standard, we avoid inventing a new name.} and $q_A,q_B$ as embeddings.
\end{definition}
\begin{remark}
    Such an object $Y$ is simultaneously a product and coproduct of $A$ and $B$, and by the universal property of limit, is unique up to unique isomorphism. Thus it is fine to talk about \emph{the} direct sum and denote it by $A\oplus B$. 
\end{remark}
$\Rep G$ is also a unitary category with unitary structure given by the usual Hermitian conjugate.
\begin{remark}
    In a unitary category such as $\Rep G$, it is always possible to choose $q_i=p_i^\dag$.
\end{remark}

Now suppose that
\begin{align}
   V\ot W\cong \oplus_i X_i,
\end{align}
where $X_i$ are irreps (irreducible representations). In more elementary words, the above means that after a change of basis of $V\ot W$, the group actions all become block-diagonal. We depict the projection map $p_i$ from $V\ot W$ to $X_i$ by
\begin{align}
   p_i: V\ot W\to X_i=
   \itk{
      \draw[-<] (0,2)node[right]{$X_i$}-- (0,1);
      \draw (1,0)node[right]{$W$} -- (0,1) -- (-1,0)node[left]{$V$} ;
   },
   \label{projection}
\end{align}
and the embedding map by
\begin{align}
   q_i=p_i^\dagger: X_i\to V\ot W= 
   \itk{
      \draw (1,1)node[right]{$W$} -- (0,0) -- (-1,1)node[left]{$V$}; 
      \draw[-<] (0,-1)node[right]{$X_i$}  --(0,0) ;
   }.
   \label{embedding}
\end{align}
The normalization is taken to be
\begin{align}
   p_j p_i^\dagger=\delta_{ij}\id_{X_i},\quad \sum_{i}p_i^\dagger p_i=\id_{V\otimes W},
\end{align}
so that $p_i,p_i^\dag$ exhibit $V\ot W$ as a direct sum of $X_i$'s.
\medskip

Two fundamental constructions will be useful later and we fix the notation here:
\begin{definition}
     Given a set $S$, the vector space of finite \emph{formal} linear combinations of elements in $S$, is called the \emph{free vector space} on $S$, and denoted by $C(S)$.
\end{definition}
\begin{definition}
    Given a subset $T$ of a vector space $V$, the subspace of all linear combinations of vectors in $T$, is called the \emph{space spanned by} $T$, and denoted by $\langle T\rangle$.
\end{definition}
\begin{remark}
 $S$ is automatically a basis of $C(S)$. $\langle T\rangle$ is the smallest subspace of $V$ containing $T$, and $T$ is not necessarily a basis of $\langle T\rangle$.
\end{remark}
The following notion is also useful in later analysis:
\begin{definition}\label{def.piso}
Given two finite dimensional Hilbert spaces $V$ and $W$, a linear map $U:V\to W$ is called \emph{isometric} if
      \begin{equation}
       \forall v_1,v_2\in V, \ \la v_1|v_2 \ra
        =\la Uv_1|Uv_2 \ra,\end{equation}
    which is equivalent to $U^{\dagger}U=\id_V$. On the other hand, $U:V\to W$ is called \emph{partially isometric} if any one of the following equivalent conditions holds (denote by $\ker U^\perp$ the orthogonal complement of $\ker U$ in $V$ and by $\Ima U$ the image of $U$):
    \begin{enumerate}
        \item The restriction $U: \ker U^\perp \to W$ is isometric;
        \item The restriction $U^\dag: \Ima U\to V$ is isometric;
        \item The restriction $U: \ker U^\perp \to \Ima U$ is unitary;
        \item The restriction $U^\dag:\Ima U \to \ker U^\perp$ is unitary;
        \item $U^\dag U$ is a projection, $(U^\dag U)^2=U^\dag U$;
        \item $U U^\dag$ is a projection, $(U U^\dag)^2=U U^\dag$.
    \end{enumerate} 
\end{definition}
\begin{remark}
    For technical simplicity, in this paper we mainly work with the example $\Rep G$ whose objects and morphisms have underlying vector spaces and linear maps that can be calculated concretely. We would like to emphasize that the graphical calculus techniques here and below remain valid even if we consider a more general unitary fusion category (UFC). For a general fusion category, we continue to interpret objects as symmetry charges and morphisms as symmetric operators; however, we lose access to underlying vectors and linear maps, unless we are willing to deal with (weak) Hopf algebras and their (co-)modules.  
\end{remark}

\begin{convention}
\emph{Throughout this paper, it is understood that $\cC=\Rep G$, but when we write $\cC$, we are making statements applicable to general unitary fusion categories, and when we write $\Rep G$ we are making statements specific to $\Rep G$. Note that we also need to assume that $G$ is finite for $\Rep G$ to be a unitary fusion category.}
\end{convention}

\TL{\begin{remark}
    $\cC$ is local \cite{KLW+2005.14178} or anomaly-free if there exists a fiber functor $\cC\to \Ve$, in which case one can reconstruct $\cC$ as the representation category of a Hopf algebra, and consider such Hopf algebra as the global symmetry of the system. The fiber functor $\cC\to\Ve$ allows us to work still with vector spaces and linear maps. When there does not exist a fiber functor, the fusion category is a representation category of a weak Hopf algebra and describes an anomalous symmetry. It is anomalous in the sense that the local tensor product structure of the lattice system is no longer the usual one. Given two representations over a weak Hopf algebra, the usual vector space tensor product of them is no longer a representation. Instead, one needs to take the relative tensor product over a subalgebra of the weak Hopf algebra. 
\end{remark}}

\begin{definition}
Let $(\cA,\ot,\alpha)$ be a monoidal category, its \emph{Drinfeld center} $Z_1(\cA)$ is the braided monoidal category defined as follows:
\begin{enumerate}
   \item An object of $Z_1(\cA)$ is a pair $(X,\beta_{X,-})$ where $X\in \cA$ is an object in $\cA$ and $\beta_{X,V}:X\ot V\to  V\ot X$ is a collection of isomorphisms for each $V\in \cA$ such that \begin{enumerate}
      \item $\beta_{X,-}$ is natural: for any $f:V\to W$, 
      \begin{align}\label{Drinfeld1}(f\ot \id_X)\beta_{X,V}=\beta_{X,W}(\id_X\ot f).\end{align}
      \item $\beta_{X,-}$ satisfies the hexagon equation \begin{equation}\label{Drinfeld2}
         \begin{tikzcd}
            & X\ot(V\ot W) 
            \rar{\beta_{X,V\ot W}}
            & (V\ot W)\ot X 
            \drar{\alpha_{V,W,X}}
            &\\
            (X\ot V)\ot W
            \urar{\alpha_{X,V,W}}
            \drar{\beta_{X,V}\ot\id_W}
            &&& V\ot (W\ot X)
            \\ & (V\ot X)\ot W
            \rar{\alpha_{V,X,W}}
            & V\ot (X\ot W)
            \urar{\id_V\ot\beta_{X,W}}
            &
         \end{tikzcd}
      \end{equation}
   \end{enumerate}
   \item A morphism $g$ from $(X,\beta_{X,-})$ to $(Y,\beta_{Y,-})$\footnote{We follow the usual convention and abuse the same notation $\beta_{X,-}$ and $\beta_{Y,-}$ for different pairs $(X,\beta_{X,-})$ and $(Y,\beta_{Y,-})$. The reader is reminded that the pairs should be understood as a whole; the half-braiding $\beta_{X,-}$ depends on the pair $(X,\beta)$ instead of only the object $X$. Indeed, the same object $X$ may be equipped with different half-braidings to form different objects in the Drinfeld center.} is a morphism $g:X\to Y$ in $\cA$ satisfying $\beta_{Y,V}(g\ot\id_V)=(\id_V\ot g)\beta_{X,V}$ for any $V\in\cA$.
   \item The tensor product of $(X,\beta_{X,-})$ and $(Y,\beta_{Y,-})$ is given by $(X\ot Y,\beta_{X\ot Y,-})$ where \begin{equation} \beta_{X\ot Y,V}=\alpha_{V,X,Y}(\beta_{X,V}\ot\id_Y)\alpha^{-1}_{X,V,Y}(\id_X\ot \beta_{Y,V})\alpha_{X,Y,V}. \end{equation}
   \item The braiding is $c_{(X,\beta_{X,-}),(Y,\beta_{Y,-})}=\beta_{X,Y}$.
   \end{enumerate}
\end{definition}

\begin{example}[Permutation group $S_3$]
\label{S3}
$S_3$ is the simplest non-Abelian group with 6 elements:
\begin{equation} S_3=\{1,a,ab,ab^2,b,b^2|a^2=1,b^3=1,aba=b^2\}, \end{equation}
where $a,b$ are called two generators of $S_3$. $S_3$ has three irreducible representations $\lambda_0,\lambda_1,\tau$ listed below:
\begin{center}
\begin{tabular}{|c|c|c|c|c|c|c|}
\hline
 & $1$ & $a$ & $ab=b^2 a$ & $ab^2=ba$ & $b$ & $b^2$ \\ \hline
$\lambda_0$ & 1 & 1 & 1 & 1 & 1 & 1 \\ \hline
$\lambda_1$ & 1 & $-1$ & $-1$ & $-1$ & 1 & 1 \\ \hline
$\Lambda$ & $\begin{pmatrix}
1 & 0 \\
0 & 1 
\end{pmatrix}$ & $\begin{pmatrix}
0 & 1 \\
1 & 0 
\end{pmatrix}$ & $\begin{pmatrix}
0 & \omega^* \\
\omega & 0 
\end{pmatrix}$ & $\begin{pmatrix}
0 & \omega \\
\omega^* & 0 
\end{pmatrix}$ & $\begin{pmatrix}
\omega & 0 \\
0 & \omega^*
\end{pmatrix}$ & $\begin{pmatrix}
\omega^* & 0 \\
0 & \omega
\end{pmatrix}$ \\ \hline
\end{tabular},
\end{center}
where $\omega=e^{2\pi i/3}$. Let $\irr(\cC)$ denote for the set of isomorphism classes of simple objects in semisimple category $\cC$. We have $\irr(\Rep S_3)=\{\lambda_0,\lambda_1,\Lambda\}$. The dual representation $\Lambda^*$ of $\Lambda$ is listed as
\begin{center}
\begin{tabular}{|c|c|c|c|c|c|c|}
\hline
& $1$ & $a$ & $ab=b^2 a$ & $ab^2=ba$ & $b$ & $b^2$ \\ \hline
$\Lambda^*$ & $\begin{pmatrix}
1 & 0 \\
0 & 1 
\end{pmatrix}$ & $\begin{pmatrix}
0 & 1 \\
1 & 0 
\end{pmatrix}$ & $\begin{pmatrix}
0 & \omega \\
\omega^* & 0 
\end{pmatrix}$ & $\begin{pmatrix}
0 & \omega^* \\
\omega & 0 
\end{pmatrix}$ & $\begin{pmatrix}
\omega^* & 0 \\
0 & \omega
\end{pmatrix}$ & $\begin{pmatrix}
\omega & 0 \\
0 & \omega^*
\end{pmatrix}$ \\ \hline
\end{tabular},
\end{center}
Denote the basis of $\Lambda$ as $\{\bm 0,\bm 1\}$. Correspondingly $\Lambda^*$ has dual basis $\{\delta_{\bm 0},\delta_{\bm 1}\}$. Then $\Lambda^*$ is isomorphic to $\Lambda$ through the isomorphism $\bm 0 \mapsto \delta_{\bm 1},
    \bm 1 \mapsto \delta_{\bm 0}$.

We also list the data in $Z_1(\Rep S_3)$ \TL{(See the calculation of $Z_1(\Rep G)$ for general $G$ in Appendix~\ref{sec:dcenrepg})}:
\begin{center}
\begin{tabular}{|c|c|c|c|}
\hline
Centralizer subgroups & $N_1\cong S_3$ & \makecell{$N_a=\{1,a\} \cong N_{ab}=\{1,ab\}$\\$ \cong N_{ab^2}=\{1,ab^2\} \cong \mathbb{Z}_2$} & \makecell{$N_b=N_{b^2}$\\$=\{1,b,b^2\} \cong \mathbb{Z}_3$} \\ \hline
Conjugacy classes &  $C_1\cong G/{N_1}\cong \{1\}$ & \makecell{$C_a\cong C_{ab}\cong C_{ab^2}$\\$\cong G/{N_a}\cong \{a,ab,ab^2\}$} & \makecell{$C_b \cong C_{b^2}\cong G/{N_b}$\\$\cong \{b,b^2\}$} \\ \hline
\makecell{Representation of\\ centralizer subgroup} & $\Rep S_3$ & \makecell{$\Rep \mathbb{Z}_2$\\$\irr(\Rep\Z_2)=\{+,-\}$} & \makecell{$\Rep \mathbb{Z}_3$\\$\irr(\Rep\Z_3)=\{1,\omega,\omega^2\}$} \\ \hline
\end{tabular},
\end{center}
Therefore, there are 8 simple objects in $Z_1(\Rep S_3)$ labelled as:
\begin{align}
    \irr(Z_1(\Rep S_3))=\{
    (C_1,\lambda_0),(C_1,\lambda_1),(C_1,\Lambda),(C_a,+),(C_a,-),(C_b,1),(C_b,\omega),(C_b,\omega^2)\}.
\end{align}
\end{example}

\begin{example}[Quaternion group $Q_8$]
\label{Q8}
$Q_8$ is a non-Abelian group with 8 elements:
\begin{equation} Q_8=\{\pm 1,\pm i, \pm j,\pm k|i^2=j^2=k^2=ijk=-1,ikj=1\}. \end{equation}
$Q_8$ has five irreducible representations $\gamma_0,\gamma_1,\gamma_2,\gamma_3,\Gamma$ listed below:
\begin{center}
\begin{tabular}{|c|c|c|c|c|}
 \hline
 & $\pm 1$ & $\pm i$ & $\pm j$ & $\pm k$ \\ \hline
$\gamma_0$ & 1 & 1 & 1 & 1 \\ \hline
$\gamma_1$ & 1 & 1 & $-1$ & $-1$ \\ \hline
$\gamma_2$ & 1 & $-1$ & 1 & $-1$ \\ \hline
$\gamma_3$ & 1 & $-1$ & $-1$ & 1 \\ \hline
$\Gamma$ & $\pm I$ & $\mp i\sigma_x$ & $\mp i\sigma_y$ & $\mp i\sigma_z$ \\ \hline
\end{tabular},
\end{center}
where $\sigma_x,\sigma_y,\sigma_z$ are Pauli matrices. The dual representation $\Gamma^*$ of $\Gamma$ is listed as
\begin{center}
\begin{tabular}{|c|c|c|c|c|}
 \hline
 & $\pm 1$ & $\pm i$ & $\pm j$ & $\pm k$ \\ \hline
$\Gamma^*$ & $\pm I$ & $\pm i\sigma_x$ & $\mp i\sigma_y$ & $\pm i\sigma_z$ \\ \hline
\end{tabular}.
\end{center}
Denote the basis of $\Gamma$ as $\{|0\ra,|1\ra\}$. Correspondingly $\Gamma^*$ has dual basis $\{\delta_{|0\ra},\delta_{|1\ra}\}$. Then $\Gamma^*$ is isomorphic to $ \Gamma$ through the isomorphism $|0\ra \mapsto \delta_{|1\ra},
    |1\ra \mapsto -\delta_{|0\ra}$.
\end{example}

\begin{example}[Special unitary group $SU(2)$]\label{su2}
    $SU(2)$ is a non-Abelian Lie group with infinite elements:
    \begin{equation} SU(2)=\left\{\begin{pmatrix}\alpha & -\ov\beta \\\beta & \ov{\alpha} \end{pmatrix}\bigg|\alpha,\beta\in\C,|\alpha|^2+|\beta|^2=1\right\}, \end{equation}
    where $\ov\alpha$ denotes the complex conjugate of $\alpha$.
    $SU(2)$ has infinite numbers of irreps labelled by $l$, which are non-negative integers and half-integers. The dimension of irrep labelled by $l$ is $2l+1$. We list generators for irreps $l=0,1/2,1$ below:
    \begin{center}
    \begin{tabular}{|c|c|c|c|}
    \hline
    & $J_x$ & $J_y$ & $J_z$ \\ \hline
    $l=0$ & 0 & 0 & 0 \\ \hline
    $l=\frac{1}{2}$ & $\frac12\sigma_x$ & $\frac12\sigma_y$ & $\frac12\sigma_z$ \\ \hline
    $l=1$ & $\begin{pmatrix}0 & 0 & 0 \\ 0 & 0 & -1 \\ 0 & 1 & 0 \end{pmatrix}$ & $\begin{pmatrix}0 & 0 & 1 \\ 0 & 0 & 0 \\ -1 & 0 & 0 \end{pmatrix}$ & $\begin{pmatrix}0 & -1 & 0 \\ 1 & 0 & 0 \\ 0 & 0 & 0 \end{pmatrix}$ \\ \hline
    \end{tabular}.
    \end{center}
    {For each $l$, group elements in $SU(2)$ represented by $l$ are $\{\ee^{\ii\theta (n_xJ_x+n_yJ_y +n_z J_z)} |\theta,n_x,n_y,n_z\in \R,n_x^2+n_y^2+n_z^2=1\}$.} The dual representation $l=\frac{1}{2}^*$ of $l=\frac1{2}$ is listed as
    \begin{center}
    \begin{tabular}{|c|c|c|c|}
    \hline
    & $J_x$ & $J_y$ & $J_z$ \\ \hline
    $l=\frac{1}{2}^*$ & $-\frac12\sigma_x$ & $\frac12\sigma_y$ & $-\frac12\sigma_z$ \\ \hline
    \end{tabular}.
    \end{center}
    Denote the basis of $l=\frac{1}{2}$ as $\{|\frac12,\frac12\ra,|\frac12,-\frac12\ra\}$. Correspondingly $l=\frac12^*$ has dual basis $\{\delta_{|\frac12,\frac12\ra},\delta_{|\frac12,-\frac12\ra}\}$. Then $\frac12^*$ is isomorphic to $ \frac12$ through the {isomorphism $|\frac12,\frac12\ra \mapsto \delta_{|\frac12,-\frac12\ra},
    |\frac12,-\frac12\ra \mapsto -\delta_{|\frac12,\frac12\ra}$}.
\end{example}

\begin{definition}\label{def.onsite}
    Fix a set $\BL$ (whose elements are viewed as lattice sites) and a group $G$. \emph{A quantum system with onsite symmetry} $G$, on $\BL$, or just a system for short, consists of the following data
    \begin{enumerate}
        \item For each subset $\BK\subset \BL$, there is a Hilbert space $\cH_\BK$ which carries a group representation $(\cH_\BK,\rho^\BK)$;
        \item A Hermitian operator (the total Hamiltonian) $H$ on $\cH_\BL$,
    \end{enumerate}
    such that
    \begin{enumerate}
        \item For any two disjoint subsets $\BK_1$ and $\BK_2$, the representation associated to their disjoint union is the tensor product of those associated to $\BK_1$ and $\BK_2$
        \begin{equation} (\cH_{\BK_1\coprod  \BK_2},\rho^{ \BK_1\coprod         \BK_2})=(\cH_{\BK_1},\rho^{ \BK_1})\otimes(\cH_{\BK_2},\rho^{\BK_2}); \end{equation}
        \item The total Hamiltonian has the form
        \begin{equation} H=\sum_{\BK\subset \BL} H_\BK, \end{equation}
        where $H_ \BK=\tilde H_\BK\ot \id_{\cH_{\BL\backslash \BK}}$ and $\tilde H_\BK$ is a symmetric operator on $\cH_\BK$, i.e., $ \rho_g^\BK  \tilde H_\BK (\rho_g^\BK)^{-1}=\tilde H_\BK,$ $\forall g\in G.$
    \end{enumerate}
    We denote such a quantum system by $(\BL,\cH_\BK,H)$.
\end{definition}
\begin{remark}
    The empty set $\emptyset\subset \BL$ is necessarily associated with the trivial group representation. For the singleton subset $\{i\}\subset \BL$ ($i$ is just a lattice site), we will simply write the associated representation as $(\cH_i,\rho^i)$. It is clear that 
    \begin{align}
        \cH_\BK&=\otr[i\in \BK] \cH_i,\\
        \rho^\BK_g&=\otr[i\in \BK]\rho^i_g.
    \end{align}
    An operator on $\cH_\BL$ of the form $O_\BK=\tilde O_\BK\ot \id_{\cH_{\BL\backslash \BK}}$ is called \emph{supported on $\BK$}. When no confusion arises, we will abuse notation and do not distinguish $O_\BK$ from $\tilde O_\BK$.
\end{remark}

\section{Symmetric operators as graphs}\label{sec:symgraph}
In this section we set up the formulation that any symmetric operator in a system with onsite symmetry $G$ can be represented by graphical calculus in $\Rep G$. In the language of tensor network, any symmetric operator can be represented by a $G$-invariant tensor network.

Each local Hilbert space can be decomposed to irreducible representations.
\begin{definition}
    A \emph{charge decomposition} consists of the following choices:
\begin{enumerate}
   \item A representative $X_a$ for each isomorphic class $a$ of irreps, $a\in \irr(\Rep G)$.
   \item For each local Hilbert space or more generally each representation $V$, the projection maps to and the embedding maps from the representative irreps: $p_V^{a;n}:V\to X_a, q^V_{a;n}:X_a\to V$ where $n$ goes from 1 to the multiplicity of $X_a$ in $V$.
   \item In particular, for each tensor product of two representative irreps, the projection maps to and the embedding maps from the representative irreps: \begin{equation} p_{ab}^{c;n}:X_a\ot X_b \to X_c, q_{c;n}^{ab}:X_c\to X_a\ot X_b. \end{equation}
\end{enumerate}
\end{definition}

Tree graphs decorated by $X_a$ and $p,q$'s in a charge decomposition serve as bases of intertwiner spaces. 
We may draw any graph in $\Rep G$ and interpret it as an intertwiner between several representations. More specifically, let $V_1,\dots,V_n$ and $W_1,\dots,W_m$ be representations, an intertwiner in $f\in \Hom(V_1\ot\dots\ot V_n, W_1\ot\dots\ot W_m)$ can be represented by the graph:
\begin{equation} f=
\itk{ 
   \node[rectangle, minimum width=12em, draw] (f) at (0,0) {$f$};
   \node (v1) at (-2,2) {$W_1$};
   \node (vn) at (2,2) {$W_m$};
   \node (w1) at (-2,-2) {$V_1$};
   \node (wm) at (2,-2) {$V_n$};
   \draw
   (v1) -- (v1 |- f.north)
   (vn) -- (vn |- f.north)
   (w1) -- (w1 |- f.south)
   (wm) -- (wm |- f.south);
   \draw[dash pattern= on 3pt off 12pt, very thick] (-1.88,1)--(1.88,1) (-1.88,-1)--(1.88,-1);
}.
\end{equation}
With a chosen charge decomposition, the intertwiner above can be expanded in terms of ``basis'' graphs. As an illustrative example, consider the intertwiner space $\Hom(V_1\ot V_2,W_1\ot W_2)$ with four external legs $V_1, V_2, W_1, W_2$. One first decompose all external legs to irreps, and then fuse the irreps one by one in a chosen order. One choice of basis graphs is
\begin{equation}
   \itk{
      \node (v1) at (-2,-2) {$V_1$};
      \node (v2) at (2,-2) {$V_2$};
      \node (w1) at (-2,2) {$W_1$};
      \node (w2) at (2,2) {$W_2$};
      \draw[->] (v1)--(-2,-1)node[left]{$p_{V_1}^{a;i}$};
      \draw[->](v2)--(2,-1)node[right]{$p_{V_2}^{b;j}$};
      \draw[->](w1)--(-2,1)node[left]{$q^{W_1}_{d;k}$};
      \draw[->](w2)--(2,1)node[right]{$q^{W_2}_{e;l}$};
      \draw (-2,-1)--node[below]{$X_a$} (0,-0.5) 
      (2,-1)--node[below]{$X_b$} (0,-0.5)
      (-2,1)--node[above]{$X_d$} (0,0.5)
      (2,1)--node[above]{$X_e$} (0,0.5);
      \draw[-<] (0,0)node[right]{$X_t$}--(0,0.5)node[above]{$q_{t;n}^{de}$};
      \draw[-<] (0,0)--(0,-0.5)node[below]{$p^{t;m}_{ab}$};
   }
\end{equation}
where $a,b,t,d,e,i,j,k,l,m,n$ runs over all possible values. Another choice is
\begin{equation}
   \itk{
      \node (v1) at (-2,-2.5) {$V_1$};
      \node (v2) at (2,-2.5) {$V_2$};
      \node (w1) at (-2,2.5) {$W_1$};
      \node (w2) at (2,2.5) {$W_2$};
      \draw[->] (v1)--(-2,-1.5)node[left]{$p_{V_1}^{a;i}$};
      \draw[->](v2)--(2,-1.5)node[right]{$p_{V_2}^{b;j}$};
      \draw[->](w1)--(-2,1.5)node[left]{$q^{W_1}_{d;k}$};
      \draw[->](w2)--(2,1.5)node[right]{$q^{W_2}_{e;l}$};
      \draw[-<] (-2,-1.5)--node[right]{$X_a$} (-2,-0.5) ;
      \draw[-<] (2,1.5)--node[left]{$X_e$} (2,0.5);
      \draw(2,-1.5)--node[left]{$X_b$} (2,0.5)node[right]{$p^{e;m}_{cb}$};
      \draw(-2,1.5)--node[right]{$X_d$} (-2,-0.5)node[left]{$q_{a;n}^{dc}$};
      \draw (-2,-0.5)--node[above]{$X_c$}(2,0.5);
   }
\end{equation}
where $a,b,c,d,e,i,j,k,l,m,n$ runs over all possible values. The only remaining third choice is left as an exercise for the reader.

In general, a basis of the intertwiner space $\Hom(V_1\ot\dots\ot V_n, W_1\ot\dots\ot W_m)$ can be obtained by decorating a tree graph, which is bivalent between external and internal legs and trivalent between internal legs:
\begin{itemize}
   \item The external legs are fixed and decorated by $V_1,\dots,V_n,W_1,\dots,W_m$;
   \item The internal legs are decorated by representative irreps;
   \item The vertices are decorated by projection and embedding maps from a chosen charge decomposition.
\end{itemize}
Different orders to fuse internal legs lead to different tree graphs and thus different sets of basis intertwiners. Basis intertwiners on different tree graphs are related to each other by the F-moves. 

After fixing basis graphs, every symmetric operator has a unique linear expansion and concrete calculation is possible. However, for the purpose to read out the total symmetry charge being transported by symmetric operators, we are going to use a slightly more effective method to rewrite the symmetric operators. 
\begin{definition}\label{def.imagedecomp}
Consider an intertwiner $f\in \Hom(V,W)$. An \emph{image decomposition} $(\bar l,\bar r)$ of $f$ is a factorization of $f$ through its image
\begin{equation}\label{eq.imdecomp}
    f=V\xrightarrow{\bar l}\Ima f \xrightarrow{\bar r} W,
\end{equation}
where $\Ima f$ is the image of $f$ (Definition \ref{image}). Graphically,
\begin{align}
   \itk{
     \node[rectangle, draw] (a) at (0,1.5) {$f$};
      \filldraw (0,0)node[right]{$V$} -- (a) -- (0,3)node[right]{$W$};
   }\quad
   =\quad \itk{
      \filldraw (0,0)node[right]{$V$} -- (l) -- (0,1.8)node[right]{$\Ima f$} -- (r) -- (0,3.6)node[right]{$W$};
       \node[rectangle, draw] (l) at (0,1) {$\bar l$};
        \node[rectangle, draw] (r) at (0,2.6) {$\bar r$};
   } .
\end{align}

In more concrete words, we can think the image decomposition as a singular value decomposition performed on symmetric tensors, which we call \emph{symmetric singular value decomposition} (SSVD). Explicitly, there are two steps in an SSVD of $f$:
\begin{enumerate}
    \item Direct sum decomposition of $V$ and $W$. Suppose the direct sum decomposition of $V,W$ is given by the projection maps $p_V^{a;n},p_W^{a;m},n=1,\dots,n_a,m=1,\dots,m_a$, then $f$ can be represented by a block diagonal matrix, whose blocks correspond to irreps:
    \begin{align}
    M^a_{mn}:=p_W^{a;m}f(p_V^{a;n})^\dagger,\quad f=\sum_{a,mn} (p_W^{a;m})^\dag M^a_{mn} p_V^{a;n}.
    \label{eq.Mamn}
    \end{align}
    
    \item The usual SVD performed on the $m_a\times n_a$ matrices $M^a_{mn}$ for all representative irreps $X_a$, leads to a SSVD of $f$:
    \begin{align}
    \underset{a}{\bigoplus}M^a=\underset{a}{\bigoplus} (w^a)^\dag\Sigma^a v^a,
    \end{align}
    where $v^a$ and $w^a$ are $n_a\times n_a$ and $m_a\times m_a$ unitary matrices, and $\Sigma^a$ is a $m_a\times n_a$ rectangular diagonal matrix with non-negative real entries.
    The number of non-zero singular values of $M^a$ tells us how many copies of $X_a$ are in $\Ima f$. Graphically, for each isomorphism class $a$ of irreps,
    \begin{align}
        \itk{
        \node[rectangle, draw] (m) at (0,2) {$M^a_{mn}$};
      \draw[-<] (m)-- (0,1)node[right]{$X_a$} -- (0,.5);
      \draw (0,.5)node[left]{$ p^{a;n}_V$} -- (0,0)node[right]{$V$};
      \draw[-<] (m) -- (0,3)node[right]{$X_a$} -- (0,3.5);
      \draw (0,3.5)node[left]{$ (p^{a;m}_W)^\dag$} -- (0,4)node[right]{$W$};
        }=\sum_k
        \itk{
        \node[rectangle, draw] (m) at (0,2) {$\Sigma^a_{kk}$};
      \draw[-<] (m)-- (0,1)node[right]{$X_a$} -- (0,.5);
      \draw (0,.5)node[left]{$v^a_{kn} p^{a;n}_V$} -- (0,0)node[right]{$V$};
      \draw[-<] (m) -- (0,3)node[right]{$X_a$} -- (0,3.5);
      \draw (0,3.5)node[left]{$\ov{w^a_{km}} (p^{a;m}_W)^\dag$} -- (0,4)node[right]{$W$};
        }.
    \end{align}
    And therefore,
    \begin{align}f=\sum_{a,mn}
        \itk{
        \node[rectangle, draw] (m) at (0,2) {$M^a_{mn}$};
      \draw[-<] (m)-- (0,1)node[right]{$X_a$} -- (0,.5);
      \draw (0,.5)node[left]{$ p^{a;n}_V$} -- (0,0)node[right]{$V$};
      \draw[-<] (m) -- (0,3)node[right]{$X_a$} -- (0,3.5);
      \draw (0,3.5)node[left]{$ (p^{a;m}_W)^\dag$} -- (0,4)node[right]{$W$};
        }=\sum_{a,k}
        \itk{
        \node[rectangle, draw] (m) at (0,2) {$\Sigma^a_{kk}$};
      \draw[-<] (m)-- (0,1)node[right]{$X_a$} -- (0,.5);
      \draw (0,.5)node[left]{$v^{a;k}_V$} -- (0,0)node[right]{$V$};
      \draw[-<] (m) -- (0,3)node[right]{$X_a$} -- (0,3.5);
      \draw (0,3.5)node[left]{$(w^{a;k}_W)^\dag$} -- (0,4)node[right]{$W$};
        },
    \end{align}
    where $v^{a;k}_V:=\sum_n v^a_{kn} p^{a;n}_V $ and $w^{a;k}_W:=\sum_m w^a_{km} p^{a;m}_W$ are another two sets of projections of the direct sum decomposition of $V,W$.
\end{enumerate}
Here it is easy to see that the ambiguity of $\bar l$ and $\bar r$ arises from choices of block diagonal unitary matrices in the SSVD and how one separate the non-zero singular values of $M^a$.
\end{definition}

\begin{remark}
    In the factorization $f=\bar r\bar l$, $\bar r$ is a monomorphism. And in any abelian category (in this paper all categories used are moreover semisimple), $\bar l$ is an epimorphism, proof of which is omitted.
\end{remark}
\begin{remark}
    The pair $(\bar l,\bar r)$ is clearly not unique. Such ambiguity is related to the sectors of quantum currents and morphisms between quantum currents. We will come to this point later.
\end{remark}

\begin{remark}
\label{cyclic}
    We introduce a trial-and-error method to compute the image, which is more straightforward (if succeed). Given two representations $(V,\rho), (W,\tau)$, note that we can endow the operator space $\Hom_\Ve(V,W)$ with a natural group action by post-composing $\tau_g$. For an intertwiner $f:V\to W$, we can then consider the cyclic sub-representation of $\Hom_\Ve(V,W)$ generated by $f$, $C(G)f:=\la \tau_g f,g \in G\ra$. Pick any $v\in V$, there is an intertwiner:
    \begin{align}
       \xi_v: C(G)f &\to W ,
        \nonumber\\
        O&\mapsto O(v).
    \end{align}
    $O\in C(G)f$ has the form $O=\sum_{h\in G} c_h \tau_h f=\sum_{h\in G}c_h f\rho_h$, with which it is easy to check that $\xi_v$ is symmetric.
    If $C(G)f$ happens to be isomorphic to $\Ima f$ for some $v$, the computation is completed. For this method to work, the necessary and sufficient conditions are
    \begin{enumerate}
        \item $\Ima f \cong C(G)f$ is a cyclic sub-representation of $W$. $v$ should be chosen as a preimage of a cyclic vector in $\Ima f$. In this case $\xi_v$ automatically maps onto $\Ima f$.
        \item $\xi_v$ must also be injective, which means that $\ker \xi_v=0$, i.e., $\xi_v(O)=O(v)=0$ implies that $O$ is the zero operator $O(w)=0,\ \forall w\in V$.
    \end{enumerate} 
    There are simple cases that these conditions are satisfied, for example, $V$ is the regular representation $C(G)$ of Abelian group $G$, where one can take $v=e$ the identity element of $G$.
    However, in practice, the second condition is difficult to check, and also we can not know in advance whether $\Ima f$ is a cyclic representation or not. Such method can only be used with trial-and-error. 
\end{remark}

\begin{example}
    Let $G=S_3$ (Example \ref{S3}). Consider the direct sum decomposition 
    $\Lambda\ot \Lambda \cong \Lambda\oplus \lambda_0\oplus \lambda_1$. We take the basis of $\Lambda$ to be $\{\bm 0,\bm 1\}$, and the tensor basis of $\Lambda\ot \Lambda$ to be $\{\bm 0\bm 0,\bm 0\bm 1,\bm 1\bm 0,\bm 1\bm 1\}$ (similarly for later examples involving 2-dimensional representations). Then the basis change between $\Lambda\ot \Lambda$ and $\Lambda\oplus \lambda_0\oplus \lambda_1$ is 
    \begin{gather}
    p_{\Lambda\ot\Lambda}^{\Lambda}=
    \begin{pmatrix}
        1 &0&0&0\\
        0 &0&0&1
    \end{pmatrix},\ 
    p_{\Lambda\ot\Lambda}^{\lambda_0}=
    \begin{pmatrix}
        0 &\frac{1}{\sqrt{2}}&\frac{1}{\sqrt{2}}&0
    \end{pmatrix},\ 
    p_{\Lambda\ot\Lambda}^{\lambda_1}=
    \begin{pmatrix}
        0 &\frac{1}{\sqrt{2}}&-\frac{1}{\sqrt{2}}&0
    \end{pmatrix}.
    \end{gather}
    Consider an intertwiner $f\in \text{Hom}(\Lambda\ot \Lambda,\Lambda\ot\Lambda)$ taking the following form:
    \begin{align}
        f&=\begin{pmatrix}
        1 & 0 & 0 & 0 \\
        0 & 0 & 0 & 0 \\
        0 & 0 & 0 & 0 \\
        0 & 0 & 0 & 1
        \end{pmatrix}
        =
        \underbrace{
        \begin{pmatrix}
        1 & 0 & 0 & 0 \\
        0 & 0 & \frac{1}{\sqrt{2}} & \frac{1}{\sqrt{2}} \\
        0 & 0 & \frac{1}{\sqrt{2}} & -\frac{1}{\sqrt{2}} \\
        0 & 1 & 0 & 0 
        \end{pmatrix}}_{\makecell{\small\text{Changing basis from }\\\small\Lambda\oplus \lambda_0\oplus \lambda_1\text{ to }\Lambda\ot\Lambda
        }}
        \underbrace{
        \begin{pmatrix}
        1 & 0 & 0 & 0 \\
        0 & 1 & 0 & 0 \\
        0 & 0 & 0 & 0 \\
        0 & 0 & 0 & 0
        \end{pmatrix}}_{\makecell{\small M^\Lambda\oplus M^{\lambda_0}\oplus M^{\lambda_1}}}
        \underbrace{
        \begin{pmatrix}
        1 & 0 & 0 & 0 \\
        0 & 0 & 0 & 1 \\
        0 & \frac{1}{\sqrt{2}} & \frac{1}{\sqrt{2}} & 0 \\
        0 & \frac{1}{\sqrt{2}} & -\frac{1}{\sqrt{2}} & 0 
        \end{pmatrix}}_{\makecell{\small\text{Changing basis from }\\\small\Lambda\ot\Lambda\text{ to }\Lambda\oplus \lambda_0\oplus \lambda_1
        }}
        \nonumber\\
        &=
        \begin{pmatrix}
        p^\Lambda_{\Lambda\ot \Lambda}\\
        p^{\lambda_0}_{\Lambda\ot \Lambda}\\
        p^{\lambda_1}_{\Lambda\ot \Lambda}
        \end{pmatrix}^\dag
        \begin{pmatrix}
        1 & 0 & 0 & 0 \\
        0 & 1 & 0 & 0 \\
        0 & 0 & 0 & 0 \\
        0 & 0 & 0 & 0
        \end{pmatrix}
        \begin{pmatrix}
        p^\Lambda_{\Lambda\ot \Lambda}\\
        p^{\lambda_0}_{\Lambda\ot \Lambda}\\
        p^{\lambda_1}_{\Lambda\ot \Lambda}
        \end{pmatrix}
        =(
        p_{\Lambda\ot\Lambda}^{\Lambda}
        )^\dag
        p_{\Lambda\ot\Lambda}^{\Lambda}
        =M^\Lambda(
        p_{\Lambda\ot\Lambda}^{\Lambda}
        )^\dag
        p_{\Lambda\ot\Lambda}^{\Lambda},
    \end{align}
    where the similarity transformation can be thought as changing to the basis with fixed symmetry charge, i.e., changing $f\in\End(\Lambda\ot\Lambda)$ to $\underset{a}{\bigoplus}M^a\in \End(\Lambda\oplus \lambda_0\oplus \lambda_1)$. Since $M^\Lambda=1$, there is only one non-zero singular value in $M^{\Lambda}$,\footnote{Note that the matrix $M^\Lambda$ is not indexed by the basis of $\Lambda$. By Definition \eqref{eq.Mamn}, $M^\Lambda$ is a $1\times 1$ matrix.} we see $\Ima f\cong \Lambda$. Graphically, 
    \begin{align}
    f=\begin{tikzpicture}[baseline=(current bounding box.center),>={Triangle[open,width=0pt 20]},scale=0.9]
      \node[rectangle, draw] (e) at (0,-1) {$\bar l$};
      \node[rectangle, draw] (m) at (0,0) {$\bar r$};
      \draw (1,1)node[right]{$\Lambda$} -- (m)-- (-1,1)node[left]{$\Lambda$}; 
      \draw (e) -- (0,-0.5)node[right]{$\Lambda$}--(m);
      \draw (-1,-2)node[left]{$\Lambda$} -- (e) -- (1,-2)node[right]{$\Lambda$}; 
      \end{tikzpicture},
    \end{align}
    where by requiring $\bar l$ and $\bar r$ to be intertwiners, the image decomposition is solved as \begin{equation}
        \bar l=\begin{pmatrix}
        0 & 0 & 0 & c_1\\
        c_1 & 0& 0& 0
    \end{pmatrix},\quad \bar r=\begin{pmatrix}
        0 & c_2\\
        0& 0\\
        0&0\\
        c_2 &0
    \end{pmatrix},\end{equation} where $c_1,c_2$ are non-zero complex numbers such that $c_1 c_2=1$. A different choice of $(\bar l,\bar r)$ correspond to changing $c_1$ and $c_2$ to $c'_1$ and $c'_2$ such that $c'_1 c'_2=1$. Here since $\dim \Hom(\Lambda\ot \Lambda,\Lambda)=1$, choices of $(\bar l,\bar r)$ only differ by a scalar.

    Similarly, a general intertwiner $f\in \text{Hom}(\Lambda\ot \Lambda,\Lambda\ot\Lambda)$ reads
    \begin{align}
        f
        &=c_1 (
    p_{\Lambda\ot\Lambda}^{\Lambda}
        )^\dag
    p_{\Lambda\ot\Lambda}^{\Lambda}
    +c_2 (
    p_{\Lambda\ot\Lambda}^{\lambda_0}
        )^\dag
    p_{\Lambda\ot\Lambda}^{\lambda_0}+c_3 (
    p_{\Lambda\ot\Lambda}^{\lambda_1}
        )^\dag
    p_{\Lambda\ot\Lambda}^{\lambda_1}\nonumber\\
        &
        =
        \begin{pmatrix}
        p^\Lambda_{\Lambda\ot \Lambda}\\
        p^{\lambda_0}_{\Lambda\ot \Lambda}\\
        p^{\lambda_1}_{\Lambda\ot \Lambda}
        \end{pmatrix}^\dag
        \begin{pmatrix}
        c_1 & 0 & 0 & 0 \\
        0 & c_1 & 0 & 0 \\
        0 & 0 & c_2 & 0 \\
        0 & 0 & 0 & c_3
        \end{pmatrix}
        \begin{pmatrix}
        p^\Lambda_{\Lambda\ot \Lambda}\\
        p^{\lambda_0}_{\Lambda\ot \Lambda}\\
        p^{\lambda_1}_{\Lambda\ot \Lambda}
        \end{pmatrix}\nonumber\\
        &=
        \begin{pmatrix}
        1 & 0 & 0 & 0 \\
        0 & 0 & \frac{1}{\sqrt{2}} & \frac{1}{\sqrt{2}} \\
        0 & 0 & \frac{1}{\sqrt{2}} & -\frac{1}{\sqrt{2}} \\
        0 & 1 & 0 & 0 
        \end{pmatrix}
        \begin{pmatrix}
        c_1 & 0 & 0 & 0 \\
        0 & c_1 & 0 & 0 \\
        0 & 0 & c_2 & 0 \\
        0 & 0 & 0 & c_3
        \end{pmatrix}
        \begin{pmatrix}
        1 & 0 & 0 & 0 \\
        0 & 0 & 0 & 1 \\
        0 & \frac{1}{\sqrt{2}} & \frac{1}{\sqrt{2}} & 0 \\
        0 & \frac{1}{\sqrt{2}} & -\frac{1}{\sqrt{2}} & 0 
        \end{pmatrix}.
        \nonumber\\&=
        \begin{pmatrix}
        c_1 & 0 & 0 & 0 \\
        0 & \frac{c_2+c_3}{{2}} & \frac{c_2-c_3}{{2}} & 0 \\
        0 & \frac{c_2-c_3}{{2}} & \frac{c_2+c_3}{{2}} & 0 \\
        0 & 0 & 0 & c_1
        \end{pmatrix}
    \end{align}
\end{example}

\begin{example}
\label{q81}
    Let $G=Q_8$ (Example \ref{Q8}). Consider the direct sum decomposition $\Gamma^*\ot \Gamma \cong \Gamma\ot \Gamma^*$ $\cong\gamma_0\oplus \gamma_1\oplus\gamma_2\oplus\gamma_3$, and the basis change between them is\footnote{Denote the basis of $\Gamma$ to be $\{|0\ra,|1\ra\}$ and the dual basis of $\Gamma^*$ to be $\{\delta_{|0\ra},\delta_{|1\ra}\}$. If we choose the isomorphism between $\Gamma^*\ot \Gamma$ and $\Gamma\ot\Gamma^*$ to be $\delta_{|0\ra} |0\ra \mapsto |0\ra\delta_{|0\ra}$, $\delta_{|0\ra} |1\ra\mapsto |0\ra\delta_{|1\ra}$, $\delta_{|1\ra} |0\ra\mapsto |1\ra\delta_{|0\ra}$, $\delta_{|1\ra} |1\ra\mapsto |1\ra\delta_{|1\ra}$, the matrix transformations changing their basis to $\gamma_0\oplus \gamma_1\oplus\gamma_2\oplus\gamma_3$ happen to be the same.}
    \begin{gather}
    p_{\Gamma^*\ot\Gamma}^{\gamma_0}=
    \begin{pmatrix}
        \frac{1}{\sqrt{2}} & 0 & 0 & \frac{1}{\sqrt{2}}
    \end{pmatrix},\ 
    p_{\Gamma^*\ot\Gamma}^{\gamma_1}=
    \begin{pmatrix}
        0 &\frac{1}{\sqrt{2}}&\frac{1}{\sqrt{2}}&0
    \end{pmatrix},\nonumber\\
    p_{\Gamma^*\ot\Gamma}^{\gamma_2}=
    \begin{pmatrix}
        0 &\frac{1}{\sqrt{2}}&-\frac{1}{\sqrt{2}}&0
    \end{pmatrix},\ 
    p_{\Gamma^*\ot\Gamma}^{\gamma_3}=
    \begin{pmatrix}
        \frac{1}{\sqrt{2}} & 0 & 0 & -\frac{1}{\sqrt{2}}
    \end{pmatrix};\\
   p_{\Gamma\ot\Gamma^*}^{\gamma_0}=
    \begin{pmatrix}
        \frac{1}{\sqrt{2}} & 0 & 0 & \frac{1}{\sqrt{2}}
    \end{pmatrix},\ 
    p_{\Gamma\ot\Gamma^*}^{\gamma_1}=
    \begin{pmatrix}
        0 &\frac{1}{\sqrt{2}}&\frac{1}{\sqrt{2}}&0
    \end{pmatrix},\nonumber\\
    p_{\Gamma\ot\Gamma^*}^{\gamma_2}=
    \begin{pmatrix}
        0 &\frac{1}{\sqrt{2}}&-\frac{1}{\sqrt{2}}&0
    \end{pmatrix},\ 
    p_{\Gamma\ot\Gamma^*}^{\gamma_3}=
    \begin{pmatrix}
        \frac{1}{\sqrt{2}} & 0 & 0 & -\frac{1}{\sqrt{2}}
    \end{pmatrix}.
    \end{gather}
    Consider an intertwiner $f\in \text{Hom}(\Gamma^*\ot \Gamma,\Gamma\ot\Gamma^*)$ taking the following form:
    \begin{align}
        f&=\begin{pmatrix}
        0 & 0 & 0 & 0 \\
        0 & 1 & 1 & 0 \\
        0 & 1 & 1 & 0 \\
        0 & 0 & 0 & 0
        \end{pmatrix}
        =
        \begin{pmatrix}
        \frac{1}{\sqrt{2}} & 0 & 0 & \frac{1}{\sqrt{2}} \\
        0 & \frac{1}{\sqrt{2}} & \frac{1}{\sqrt{2}} & 0 \\
        0 & \frac{1}{\sqrt{2}} & -\frac{1}{\sqrt{2}} & 0 \\
        \frac{1}{\sqrt{2}} & 0 & 0 & -\frac{1}{\sqrt{2}}
        \end{pmatrix}
        \begin{pmatrix}
        0 & 0 & 0 & 0 \\
        0 & 2 & 0 & 0 \\
        0 & 0 & 0 & 0 \\
        0 & 0 & 0 & 0
        \end{pmatrix}
        \begin{pmatrix}
        \frac{1}{\sqrt{2}} & 0 & 0 & \frac{1}{\sqrt{2}} \\
        0 & \frac{1}{\sqrt{2}} & \frac{1}{\sqrt{2}} & 0 \\
        0 & \frac{1}{\sqrt{2}} & -\frac{1}{\sqrt{2}} & 0 \\
        \frac{1}{\sqrt{2}} & 0 & 0 & -\frac{1}{\sqrt{2}}
        \end{pmatrix}
        \nonumber\\
        &
        = 
        \begin{pmatrix}
        p_{\Gamma\ot\Gamma^*}^{\gamma_0}\\
        p_{\Gamma\ot\Gamma^*}^{\gamma_1}\\
        p_{\Gamma\ot\Gamma^*}^{\gamma_2}\\
        p_{\Gamma\ot\Gamma^*}^{\gamma_3}
        \end{pmatrix}^\dag
        \begin{pmatrix}
        0 & 0 & 0 & 0 \\
        0 & 2 & 0 & 0 \\
        0 & 0 & 0 & 0 \\
        0 & 0 & 0 & 0
        \end{pmatrix}
        \begin{pmatrix}
        p_{\Gamma\ot\Gamma^*}^{\gamma_0}\\
        p_{\Gamma\ot\Gamma^*}^{\gamma_1}\\
        p_{\Gamma\ot\Gamma^*}^{\gamma_2}\\
        p_{\Gamma\ot\Gamma^*}^{\gamma_3}
        \end{pmatrix}
        =2(
    p_{\Gamma\ot\Gamma^*}^{\gamma_1}
        )^\dag
    p_{\Gamma^*\ot\Gamma}^{\gamma_1},
    \end{align}
    where the similarity transformation can be thought as changing to the basis with fixed symmetry charge in $\text{Hom}(\gamma_0\oplus \gamma_1\oplus\gamma_2\oplus\gamma_3,\gamma_0\oplus \gamma_1\oplus\gamma_2\oplus\gamma_3)$. Since there is only one non-zero singular value in $M^{\gamma_1}$, we see $\Ima f\cong \gamma_1$. Graphically, 
    \begin{align}
    f=\begin{tikzpicture}[baseline=(current bounding box.center),>={Triangle[open,width=0pt 20]},scale=0.9]
      \node[rectangle, draw] (e) at (0,-1) {$\bar l$};
      \node[rectangle, draw] (m) at (0,0) {$\bar r$};
      \draw (1,1)node[right]{$\Gamma^*$} -- (m)-- (-1,1)node[left]{$\Gamma$}; 
      \draw (e) -- (0,-0.5)node[right]{$\gamma_1$}  --(m);
      \draw (-1,-2)node[left]{$\Gamma^*$} -- (e) -- (1,-2)node[right]{$\Gamma$}; 
      \end{tikzpicture},
    \end{align}
    where the image decomposition is solved as \begin{equation}
    \bar l=c_1 \begin{pmatrix}
        0 & 1 & 1 & 0
    \end{pmatrix},\quad \bar r=c_2\begin{pmatrix}
        0\\
        1\\
        1\\
        0
    \end{pmatrix},\end{equation} where $c_1,c_2$ are non-zero complex numbers such that $c_1 c_2=1$. 
\end{example}

\section{Quantum current}\label{sec:qcurrent}
In this section we motivate the definition of quantum current. We would like to define a quantum current as \emph{a collection of symmetric operators that can transport symmetry charge all over the space}. This physical idea will be put into precise mathematical definition later. For concreteness we will first focus on lattice system in one spatial dimension with onsite symmetry. We will show that a quantum current carries two important quantities: one is the symmetry charge being transported, and the other is the \emph{half-braiding} that determines how the current extends over the space without leaving things behind along its path. Together, a quantum current is identified with an object in the Drinfeld center $Z_1(\cC)$. 

\subsection{Every symmetric operator carries a symmetry charge}
To begin with, let's consider a bipartite system,
\begin{align}
   \cH= \cH_\BA\ot \cH_\BB,
\end{align}
with symmetry actions $\rho^i_g\in GL(\cH_i), i=\BA,\BB$ and $\rho_g=\rho^\BA_g\ot \rho^\BB_g=:\rho^\BA_g\rho^\BB_g$ where we abuse the notation $\rho^\BA_g=\rho^\BA_g\ot \id_\BB$, $\rho^\BB_g=\id_\BA\ot \rho^\BB_g$. A symmetric operator acting on the total space, is by definition an operator $O\in\Hom_\Ve(\cH,\cH)$ that commutes with symmetry actions
\begin{align}   
   \rho_g  O \rho_g^{-1}=O, \quad \forall g\in G.
   \label{symop}
\end{align}
In general, $O$ does not commute with ``partial'' group actions $\rho^\BA$ or $\rho^\BB$. Indeed, it can transport symmetry charge between $\BA$ and $\BB$. We are tempted to represent $O$ by a trivalent tree graph 
\begin{align}
O=\itk{
   \node[rectangle, draw] (l) at (0,1) {$l$};
   \node[rectangle, draw] (r) at (2,2) {$r$};
   \draw (0,0)node[below]{$\cH_\BA$}--(l)--(0,3)node[above]{$\cH_\BA$};
   \draw (2,0)node[below]{$\cH_\BB$}--(r)--(2,3)node[above]{$\cH_\BB$};
   \draw (l)--node[above]{$X$} (r);
}. \label{eq.tcX}
\end{align}
Then we interpret $X$ as the symmetry charge transported by $O$ from subsystem $\BA$ to subsystem $\BB$ and the intertwiners $l,r$ as describing how the charge $X$ leaves $\BA$ and arrives as $\BB$.
However, a large enough representation $X$ can always do the job to represent $O$. We need to find the smallest $X$.

Note that there is a natural isomorphism
$\epsilon$ by ``bending legs'':
\begin{align}
    \Hom(\cH_\BA\ot \cH_\BB,\cH_\BA\ot\cH_\BB)& \overset{\epsilon}{\cong} \Hom(\cH_\BA^*\ot \cH_\BA,\cH_\BB\ot \cH_\BB^*)
    \nonumber\\
    \itk{
    \node[rectangle, minimum width=4em, draw] (f) at (0,0) {$O$};
   \node (v1) at (-0.5,1) {$\cH_\BA$};
   \node (vn) at (0.5,1) {$\cH_\BB$};
   \node (w1) at (-0.5,-1) {$\cH_\BA$};
   \node (wm) at (0.5,-1) {$\cH_\BB$};
   \draw
   (v1) -- (v1 |- f.north)
   (vn) -- (vn |- f.north)
   (w1) -- (w1 |- f.south)
   (wm) -- (wm |- f.south);
    }& \overset{\epsilon}{\mapsto}
    \itk{
   \node[rectangle, minimum width=4em, draw] (f) at (0,0) {$O$};
   \node (vn) at (0.5,1) {$\cH_\BB$};
   \node (w1) at (-0.5,-1) {$\cH_\BA$};
   \node (w1d) at (-1.5,-1) {$\cH_\BA^*$};
   \node (vnd) at (1.5,1) {$\cH_\BB^*$};
   \draw
   (vn) -- (vn |- f.north) 
   (vn |- f.south) arc(180:360:0.5) -- (vnd)
   (w1) -- (w1 |- f.south)
   (w1 |- f.north) arc(0:180:0.5) -- (w1d);
    }=: \bar O. \label{eq.bending}
\end{align}
The smallest $X$ is nothing but $\Ima \bar O$.
\begin{remark}
\label{rmk.rot}
    If we choose bases of $\cH_\BA$ and $\cH_\BB$, $O$ can be represented by a tensor with four indices: $O_{ab}^{a'b'}$. Under the isomorphism $\epsilon$, while choosing the corresponding dual bases of $\cH_\BA^*$ and $\cH_\BB^*$, $\bar O$ can be represented by the tensor $\bar O_{a'a}^{b'b}=O_{ab}^{a'b'}$. In short, bending the legs on the graph corresponds to rotating the positions of indices of the tensors (in the corresponding bases and dual bases).
Readers familiar with tensor network may find that the above is just doing SSVD for the tensor $O$ while viewing two $\cH_\BA$ legs $a',a$ as input and two $\cH_\BB$ legs $b',b$ as output.
\end{remark}
A more physical way to understand the above is to note that the dual group representation is the anti symmetry charge. Thus, $\cH_\BA^* \ot \cH_\BA$ may be thought as calculating the initial charge minus the final charge in $\BA$, i.e., the charge decrease in $\BA$, while $\cH_\BB\ot \cH_\BB^*$ may be thought as calculating the final charge minus the initial charge in $\BB$, i.e., the charge increase in $\BB$. The result $\Ima \bar O$ is the transported charge. We conclude the above discussion in the following definition.
\begin{definition}
\label{def.transported}
    Given a symmetric operator $O$ acting on a bipartite system $\cH_\BA\ot \cH_\BB$, \emph{the symmetry charge transported by} $O$ from $\BA$ to $\BB$ is $\Ima \bar O$, denoted by $\curr{\BA}{\BB}:=\Ima \bar O$, where $\bar O$ is given by the ``bending leg'' isomorphism \eqref{eq.bending}.
    
Now we have 
\begin{align}
    O=\itk{
       \node[rectangle, draw] (l) at (0,1) {$l$};
       \node[rectangle, draw] (r) at (2,2) {$r$};
       \draw (0,0)node[below]{$\cH_\BA$}--(l)--(0,3)node[above]{$\cH_\BA$};
       \draw (2,0)node[below]{$\cH_\BB$}--(r)--(2,3)node[above]{$\cH_\BB$};
       \draw (l)--node[above]{$\curr{\BA}{\BB}$} (r);
    }
\label{eq.tc}
\end{align}
where $O=(\id_{\cH_\BA}\ot r)(l\ot \id_{\cH_\BB})$, and $(l,r)$ is a pair of intertwiners coming from the image decomposition $(\bar l,\bar r)$ of $\bar O$. Explicitly, given a symmetric operator $O\in\Hom (\cH_\BA\ot \cH_\BB,\cH_\BA\ot \cH_\BB)$, the symmetry charge transported by $O$ is extracted by the following transformations:
\begin{align}
\label{process}
    \itk{
    \node[rectangle, minimum width=4em, draw] (f) at (0,0) {$O$};
   \node (v1) at (-0.5,1) {$\cH_\BA$};
   \node (vn) at (0.5,1) {$\cH_\BB$};
   \node (w1) at (-0.5,-1) {$\cH_\BA$};
   \node (wm) at (0.5,-1) {$\cH_\BB$};
   \draw
   (v1) -- (v1 |- f.north)
   (vn) -- (vn |- f.north)
   (w1) -- (w1 |- f.south)
   (wm) -- (wm |- f.south);}
    &\overset{\epsilon}{\mapsto} 
    \itk{
   \node[rectangle, minimum width=4em, draw] (f) at (0,0) {$O$};
   \node (vn) at (0.5,1) {$\cH_\BB$};
   \node (w1) at (-0.5,-1) {$\cH_\BA$};
   \node (w1d) at (-1.5,-1) {$\cH_\BA^*$};
   \node (vnd) at (1.5,1) {$\cH_\BB^*$};
   \draw
   (vn) -- (vn |- f.north) 
   (vn |- f.south) arc(180:360:0.5) -- (vnd)
   (w1) -- (w1 |- f.south)
   (w1 |- f.north) arc(0:180:0.5) -- (w1d);}
   =:\bar O
    \overset{\bar O=\bar r\bar l}{=\joinrel=}
    \begin{tikzpicture}[baseline=(current bounding box.center),>={Triangle[open,width=0pt 20]},scale=0.8]
      \node[rectangle, draw] (e) at (0,-1.2) {$\bar l$};
      \node[rectangle, draw] (m) at (0,0.2) {$\bar r$};
      \draw (1,1)node[above]{$\cH_\BB^*$} -- (m)-- (-1,1)node[above]{$\cH_\BB$}; 
      \draw (e) -- (0,-0.5)node[right]{$\Ima \bar O=:\curr{\BA}{\BB}$}--(m);
      \draw (-1,-2)node[below]{$\cH_\BA^*$} -- (e) -- (1,-2)node[below]{$\cH_\BA$};
      \end{tikzpicture}
    \overset{\epsilon^{-1}}{\mapsto}
    \begin{tikzpicture}[baseline=(current bounding box.center),>={Triangle[open,width=0pt 20]},scale=0.8]
       \node[rectangle, draw] (l) at (0,1) {$l$};
       \node[rectangle, draw] (r) at (2,2) {$r$};
       \draw (0,0)node[below]{$\cH_\BA$}--(l)--(0,3)node[above]{$\cH_\BA$};
       \draw (2,0)node[below]{$\cH_\BB$}--(r)--(2,3)node[above]{$\cH_\BB$};
       \draw (l)--node[above]{$\curr{\BA}{\BB}$} (r);
    \end{tikzpicture},
\end{align}
where $\epsilon$ is the natural isomorphism \eqref{eq.bending}, and there are further natural isomorphism \\$\Hom(\cH_\BA,\cH_\BA\ot \curr{\BA}{\BB}) \cong$ $ \Hom(\cH_\BA^*\ot \cH_\BA,\curr{\BA}{\BB})$ giving a bijection between $\bar l$ and $l$, and \\$\Hom(\curr{\BA}{\BB}\ot\cH_\BB,\cH_\BB) \cong \Hom(\curr{\BA}{\BB},\cH_\BB\ot\cH_\BB^*)$ giving a bijection between $\bar r$ and $r$. These three steps are
\begin{enumerate}
    \item Natural isomorphism $\epsilon$: We apply the rotation of symmetric tensor $\bar O_{a'a}^{b'b}=O_{ab}^{a'b'}$; 
    \item Image decomposition $\bar O=\bar r\bar l$: SSVD in Definition \ref{def.imagedecomp};
    \item Natural isomorphism $\epsilon^{-1}$: We apply the inverse rotation of tensors $l_{b}^{ac}=\bar l_{ab}^{c}$ and $r_{ca}^{b}=\bar r_{c}^{ba}$. 
\end{enumerate}
\end{definition}
\begin{remark}
 The symmetry charge $\curr{\BB}{\BA}$ transported by $O$ from $\BB$ to $\BA$ is similarly defined, and one have $\curr{\BB}{\BA}=(\curr{\BA}{\BB})^*$; the total symmetry charge is conserved.
\end{remark}

\begin{example}
    Let $G=S_3$ (Example \ref{S3}). Consider a symmetric operator $O\in\Hom(\Lambda\ot\Lambda,\Lambda\ot\Lambda)$ taking the following form:
    \begin{align}
        \itk{
    \node[rectangle, minimum width=4em, draw] (f) at (0,0) {$O$};
   \node (v1) at (-0.5,1) {$\Lambda$};
   \node (vn) at (0.5,1) {$\Lambda$};
   \node (w1) at (-0.5,-1) {$\Lambda$};
   \node (wm) at (0.5,-1) {$\Lambda$};
   \draw
   (v1) -- (v1 |- f.north)
   (vn) -- (vn |- f.north)
   (w1) -- (w1 |- f.south)
   (wm) -- (wm |- f.south);}=\frac{1}{2}(\sigma^\BA_x\sigma^\BB_x+\sigma^\BA_y \sigma^\BB_y)=
        \begin{pmatrix}
        0 & 0 & 0 & 0 \\
        0 & 0 & 1 & 0 \\
        0 & 1 & 0 & 0 \\
        0 & 0 & 0 & 0
        \end{pmatrix}.
    \end{align}
    We go through the process in Diagram (\ref{process}):
    \begin{enumerate}
        \item After rotation of symmetric tensor, we have $\bar O^{01}_{10}=O^{10}_{01}=1$ and $\bar O^{10}_{01}=O^{01}_{10}=1$. 
        \item Consider the direct sum decomposition 
    $\Lambda^*\ot \Lambda \cong\Lambda\ot \Lambda^*\cong \Lambda\oplus \lambda_0\oplus \lambda_1$ and the basis change between them is
    \begin{gather}
    p_{\Lambda^*\ot\Lambda}^{\Lambda}=
    \begin{pmatrix}
        0 &1&0&0\\
        0 &0&1&0
    \end{pmatrix},\ 
    p_{\Lambda^*\ot\Lambda}^{\lambda_0}=
    \begin{pmatrix}
        \frac{1}{\sqrt{2}}&0&0&\frac{1}{\sqrt{2}}
    \end{pmatrix},\ 
    p_{\Lambda^*\ot\Lambda}^{\lambda_1}=
    \begin{pmatrix}
        \frac{1}{\sqrt{2}}&0&0&-\frac{1}{\sqrt{2}}
    \end{pmatrix};\nonumber\\
    p_{\Lambda\ot\Lambda^*}^{\Lambda}=
    \begin{pmatrix}
        0 &0&1&0\\
        0 &1&0&0
    \end{pmatrix},\ 
    p_{\Lambda\ot\Lambda^*}^{\lambda_0}=
    \begin{pmatrix}
        \frac{1}{\sqrt{2}}&0&0&\frac{1}{\sqrt{2}}
    \end{pmatrix},\ 
    p_{\Lambda\ot\Lambda^*}^{\lambda_1}=
    \begin{pmatrix}
        \frac{1}{\sqrt{2}}&0&0&-\frac{1}{\sqrt{2}}
    \end{pmatrix}.
    \end{gather}
    Then we have
    \begin{align}
        \bar O&=\begin{pmatrix}
        0 & 0 & 0 & 0 \\
        0 & 0 & 1 & 0 \\
        0 & 1 & 0 & 0 \\
        0 & 0 & 0 & 0
        \end{pmatrix}
        =
        \begin{pmatrix}
        \frac{1}{\sqrt{2}} & 0 & 0 & \frac{1}{\sqrt{2}} \\
        0 & 0 & 1 & 0 \\
        0 & 1 & 0 & 0 \\
        \frac{1}{\sqrt{2}} & 0 & 0 & -\frac{1}{\sqrt{2}}
        \end{pmatrix}
        \begin{pmatrix}
        0 & 0 & 0 & 0 \\
        0 & 1 & 0 & 0 \\
        0 & 0 & 1 & 0 \\
        0 & 0 & 0 & 0
        \end{pmatrix}
        \begin{pmatrix}
        \frac{1}{\sqrt{2}} & 0 & 0 & \frac{1}{\sqrt{2}} \\
        0 & 1 & 0 & 0 \\
        0 & 0 & 1 & 0 \\
        \frac{1}{\sqrt{2}} & 0 & 0 & -\frac{1}{\sqrt{2}}
        \end{pmatrix}
        \nonumber\\
        &
        = 
        \begin{pmatrix}
        p_{\Lambda\ot\Lambda^*}^{\lambda_0}\\
        p_{\Lambda\ot\Lambda^*}^{\Lambda}\\
        p_{\Lambda\ot\Lambda^*}^{\lambda_1}\
        \end{pmatrix}^\dag
        \begin{pmatrix}
        0 & 0 & 0 & 0 \\
        0 & 1 & 0 & 0 \\
        0 & 0 & 1 & 0 \\
        0 & 0 & 0 & 0
        \end{pmatrix}
        \begin{pmatrix}
        p_{\Lambda^*\ot\Lambda}^{\lambda_0}\\
        p_{\Lambda^*\ot\Lambda}^{\Lambda}\\
        p_{\Lambda^*\ot\Lambda}^{\lambda_1}\
        \end{pmatrix}
        =(
    p_{\Lambda\ot\Lambda^*}^{\Lambda}
        )^\dag
    p_{\Lambda^*\ot\Lambda}^{\Lambda},
    \end{align}
    where we find $\curr{\BA}{\BB}=\Ima \bar O=\Lambda$. And we solve $\bar l$ and $\bar r$ as the following form:
    \begin{align}
        \bar O=\underbrace{c_1\begin{pmatrix}
        0 & 0 \\ 0 & 1 \\ 1 & 0 \\ 0 & 0
        \end{pmatrix}}_{\bar r}
        \underbrace{c_2\begin{pmatrix}
        0 & 1 & 0 & 0 \\ 0 & 0 & 1 & 0
        \end{pmatrix}}_{\bar l}
        =\begin{tikzpicture}[baseline=(current bounding box.center),>={Triangle[open,width=0pt 20]},scale=0.9]
      \node[rectangle, draw] (e) at (0,-1) {$\bar l$};
      \node[rectangle, draw] (m) at (0,0) {$\bar r$};
      \draw (1,1)node[right]{$\Lambda^*$} -- (m)-- (-1,1)node[left]{$\Lambda$}; 
      \draw (e) -- (0,-0.5)node[right]{$\Lambda$}  --(m);
      \draw (-1,-2)node[left]{$\Lambda^*$} -- (e) -- (1,-2)node[right]{$\Lambda$}; 
      \end{tikzpicture},
    \end{align}
    where $c_1$ and $c_2$ are non-zero complex numbers such that $c_1 c_2=1$.
    \item After inverse rotation of symmetric tensors, we find $l^{00}_1=\bar l^0_{01}=1$, $l^{11}_0=\bar l^1_{10}=1$, $r^0_{11}=\bar r^{01}_1=1$, $r^1_{00}=\bar r^{10}_0=1$, i.e.,
    \begin{align}
        O= 
        \underbrace{
        \begin{pmatrix}
        I\ot c_1\begin{pmatrix}
        0 & 0 & 0 & 1 \\ 1 & 0 & 0 & 0
        \end{pmatrix}\end{pmatrix}}_{\id_{\Lambda}\ot r}
        \underbrace{
        \begin{pmatrix}
        c_2\begin{pmatrix}
        0 & 1 \\ 0 & 0 \\ 0 & 0 \\ 1 & 0
        \end{pmatrix}\ot I
        \end{pmatrix}}_{l\ot \id_{\Lambda}}
        = \itk{
        \node[rectangle, draw] (l) at (0,1) {$l$};
       \node[rectangle, draw] (r) at (2,2) {$r$};
       \draw (0,0)node[below]{$\Lambda$}--(l)--(0,3)node[above]{$\Lambda$};
       \draw (2,0)node[below]{$\Lambda$}--(r)--(2,3)node[above]{$\Lambda$};
       \draw (l)--node[above]{$\Lambda$} (r);
        }.
    \end{align}
    \end{enumerate}
\end{example}

\begin{example}
    Let $G=Q_8$ (Example \ref{Q8}). Consider a symmetric operator $O\in \text{Hom}(\Gamma\ot \Gamma,\Gamma\ot\Gamma)$ taking the following form:
    \begin{align}
        \itk{
    \node[rectangle, minimum width=4em, draw] (f) at (0,0) {$O$};
   \node (v1) at (-0.5,1) {$\Gamma$};
   \node (vn) at (0.5,1) {$\Gamma$};
   \node (w1) at (-0.5,-1) {$\Gamma$};
   \node (wm) at (0.5,-1) {$\Gamma$};
   \draw
   (v1) -- (v1 |- f.north)
   (vn) -- (vn |- f.north)
   (w1) -- (w1 |- f.south)
   (wm) -- (wm |- f.south);}&=\sigma^\BA_x\ \sigma^\BB_x=
        \begin{pmatrix}
        0 & 0 & 0 & 1 \\
        0 & 0 & 1 & 0 \\
        0 & 1 & 0 & 0 \\
        1 & 0 & 0 & 0
        \end{pmatrix}
        \nonumber\\
        &\overset{\epsilon}{\mapsto} 
        \underbrace{
        \begin{pmatrix}
        0 & 0 & 0 & 0 \\
        0 & 1 & 1 & 0 \\
        0 & 1 & 1 & 0 \\
        0 & 0 & 0 & 0
        \end{pmatrix}}_{\bar O}
        =
        \underbrace{c_1\begin{pmatrix}
        0 & 1 & 1 & 0
        \end{pmatrix}}_{\bar r}
        \underbrace{c_2\begin{pmatrix}
        0\\
        1\\
        1\\
        0
        \end{pmatrix}}_{\bar l}
        =\begin{tikzpicture}[baseline=(current bounding box.center),>={Triangle[open,width=0pt 20]},scale=0.9]
      \node[rectangle, draw] (e) at (0,-1) {$\bar l$};
      \node[rectangle, draw] (m) at (0,0) {$\bar r$};
      \draw (1,1)node[right]{$\Gamma^*$} -- (m)-- (-1,1)node[left]{$\Gamma$}; 
      \draw (e) -- (0,-0.5)node[right]{$\gamma_1$}  --(m);
      \draw (-1,-2)node[left]{$\Gamma^*$} -- (e) -- (1,-2)node[right]{$\Gamma$}; 
      \end{tikzpicture}
        \nonumber\\
        &\overset{\epsilon^{-1}}{\mapsto}
        \underbrace{
        \begin{pmatrix}
        I\ot c_1 \begin{pmatrix}
        0 & 1 \\ 1 & 0
        \end{pmatrix}\end{pmatrix}}_{\id_{\Gamma}\ot r}
        \underbrace{
        \begin{pmatrix}
        c_2\begin{pmatrix}
        0 & 1 \\ 1 & 0
        \end{pmatrix}\ot I
        \end{pmatrix}}_{l\ot \id_{\Gamma}}=
        \itk{
        \node[rectangle, draw] (l) at (0,1) {$l$};
       \node[rectangle, draw] (r) at (2,2) {$r$};
       \draw (0,0)node[below]{$\Gamma$}--(l)--(0,3)node[above]{$\Gamma$};
       \draw (2,0)node[below]{$\Gamma$}--(r)--(2,3)node[above]{$\Gamma$};
       \draw (l)--node[above]{$\gamma_1$} (r);
        },
    \end{align}
    where we go through the process in Diagram (\ref{process}) and apply the result in Example \ref{q81} (recall that $\curr{\BA}{\BB}=\Ima\bar O=\gamma_1$, $c_1 c_2=1$).
\end{example}

\begin{example}
    Let $G=SU(2)$ (Example \ref{su2}). Suppose that each subsystem $\BA,\BB$ carries a spin $\frac{1}{2}$. Consider a symmetric operator $O\in\Hom(\frac{1}{2}\ot\frac{1}{2},\frac{1}{2}\ot\frac{1}{2})$ in the form of Heisenberg interaction:
    \begin{align}
        \itk{
    \node[rectangle, minimum width=4em, draw] (f) at (0,0) {$O$};
   \node (v1) at (-0.5,1) {$\frac12$};
   \node (vn) at (0.5,1) {$\frac12$};
   \node (w1) at (-0.5,-1) {$\frac12$};
   \node (wm) at (0.5,-1) {$\frac12$};
   \draw
   (v1) -- (v1 |- f.north)
   (vn) -- (vn |- f.north)
   (w1) -- (w1 |- f.south)
   (wm) -- (wm |- f.south);}=\frac13\vec{\sigma}^\BA \cdot\vec {\sigma}^\BB=\frac13
        \begin{pmatrix}
        1 & 0 & 0 & 0 \\
        0 & -1 & 2 & 0 \\
        0 & 2 & -1 & 0 \\
        0 & 0 & 0 & 1
        \end{pmatrix}.
    \end{align}
     We go through the first two steps in Diagram (\ref{process}):
    \begin{enumerate}
        \item After rotation of symmetric tensor, we have $\bar O^{00}_{00}=\bar O^{11}_{11}=\frac13$, $\bar O^{11}_{00}=\bar O^{00}_{11}=-\frac13$ and $\bar O^{01}_{10}=\bar O^{10}_{01}=\frac23$.
        \item Consider the direct sum decomposition 
    $\frac12^*\ot \frac12 \cong\frac12\ot \frac12^*\cong 1\oplus 0$, and the basis change between them is
    \begin{gather}
    p_{\frac12^*\ot\frac12}^{1}=
    \begin{pmatrix}
        0&0&-1&0\\
        \frac1{\sqrt{2}}&0&0&-\frac1{\sqrt{2}}\\
        0&1&0&0
    \end{pmatrix},\ 
    p_{\frac12^*\ot\frac12}^{0}=
    \begin{pmatrix}
        -\frac{1}{\sqrt{2}}&0&0&-\frac{1}{\sqrt{2}}
    \end{pmatrix};\nonumber\\
    p_{\frac12\ot\frac12^*}^{1}=\begin{pmatrix}
        0&-1&0&0\\
        \frac1{\sqrt{2}}&0&0&-\frac1{\sqrt{2}}\\
        0&0&1&0
    \end{pmatrix},\ 
    p_{\frac12\ot\frac12^*}^{0}=
    \begin{pmatrix}
        \frac{1}{\sqrt{2}}&0&0&\frac{1}{\sqrt{2}}
    \end{pmatrix}.
    \end{gather}
    Then we have
    \begin{align}
        \bar O&=\frac13\begin{pmatrix}
        1 & 0 & 0 & -1 \\
        0 & 0 & 2 & 0 \\
        0 & 2 & 0 & 0 \\
        -1 & 0 & 0 & 1
        \end{pmatrix}
        =
        \begin{pmatrix}
        0 & \frac{1}{\sqrt{2}} & 0 & \frac{1}{\sqrt{2}} \\
        -1 & 0 & 0 & 0 \\
        0 & 0 & 1 & 0 \\
        0 & -\frac{1}{\sqrt{2}} & 0 & \frac{1}{\sqrt{2}}
        \end{pmatrix}\frac13
        \begin{pmatrix}
        2 & 0 & 0 & 0 \\
        0 & 2 & 0 & 0 \\
        0 & 0 & 2 & 0 \\
        0 & 0 & 0 & 0
        \end{pmatrix}
        \begin{pmatrix}
        0 & 0 & -1 & 0 \\
        \frac{1}{\sqrt{2}} & 0 & 0 & -\frac{1}{\sqrt{2}} \\
        0 & 1 & 0 & 0 \\
        -\frac{1}{\sqrt{2}} & 0 & 0 & -\frac{1}{\sqrt{2}}
        \end{pmatrix}
        \nonumber\\
        &
        = 
        \begin{pmatrix}
        p_{\frac12\ot\frac12^*}^{1}\\
        p_{\frac12\ot\frac12^*}^{0}\
        \end{pmatrix}^\dag
        \frac13\begin{pmatrix}
        2 & 0 & 0 & 0 \\
        0 & 2 & 0 & 0 \\
        0 & 0 & 2 & 0 \\
        0 & 0 & 0 & 0
        \end{pmatrix}
        \begin{pmatrix}
        p_{\frac12^*\ot\frac12}^{1}\\
        p_{\frac12^*\ot\frac12}^{0}\
        \end{pmatrix}
        =\frac23(
    p_{\frac12\ot\frac12^*}^{1}
        )^\dag
    p_{\frac12^*\ot\frac12}^{1},
    \end{align}
    where we find $\curr{\BA}{\BB}=\Ima \bar O=1$. 
    \end{enumerate}
    At this stage, we find that $O$ transports an angular momentum of spin 1.
    \begin{align}
        \frac13\vec{\sigma}^\BA \cdot\vec {\sigma}^\BB = \itk{
        \node[rectangle, draw] (l) at (0,1) {$l$};
       \node[rectangle, draw] (r) at (2,2) {$r$};
       \draw (0,0)node[below]{$\frac{1}{2}$}--(l)--(0,3)node[above]{$\frac{1}{2}$};
       \draw (2,0)node[below]{$\frac{1}{2}$}--(r)--(2,3)node[above]{$\frac{1}{2}$};
       \draw (l)--node[above]{$1$} (r);
        }.
    \end{align} Denote the basis of spin $\frac12$ as $\{|\frac12,\frac12\ra,|\frac12,-\frac12\ra\}$ and the basis of spin $1$ as $\{|1,1\ra,|1,0\ra,|1,-1\ra\}$. One possible choice of $l,r$ can be obtained from the Clebsch-Gordan coefficients:
    \begin{align} 
    l\left|\frac12,\frac12\right\ra &=
    \sqrt{\frac23} \left|\frac12,-\frac12\right\ra |1,1\ra
    -\sqrt{\frac13}\left|\frac12,\frac12\right\ra|1,0\ra,\nonumber\\
    l\left|\frac12,-\frac12\right\ra &= \sqrt{\frac13} \left|\frac12,-\frac12\right\ra |1,0\ra
    -\sqrt{\frac23}\left|\frac12,\frac12\right\ra|1,-1\ra;\nonumber\\
    r|1,1\ra \left|\frac12,-\frac12\right\ra =& \sqrt{\frac23}
    \left|\frac12,\frac12\right\ra,\ 
    r|1,0\ra \left|\frac12,\frac12\right\ra = -\sqrt{\frac13}
    \left|\frac12,\frac12\right\ra,\nonumber\\
    r|1,0\ra \left|\frac12,-\frac12\right\ra =& \sqrt{\frac13}
    \left|\frac12,-\frac12\right\ra,\ 
    r|1,-1\ra \left|\frac12,\frac12\right\ra = -\sqrt{\frac23}
    \left|\frac12,-\frac12\right\ra.
    \end{align} 
    Other choices of $l,r$ are again only up to a scalar.
\end{example}

\begin{prop}
\label{prop.cyclic}
    When the conditions in Remark~\ref{cyclic} are satisfied, we can simply define the space $\curr{\BA}{\BB}$ to be the operator space spanned by $\rho^\BB_g O \rho^\BB_{g^{-1}}, g\in G$:
\begin{align}
   \curr{\BA}{\BB}=\langle\rho^\BB_g O \rho^\BB_{g^{-1}}, g\in G\rangle= \{X\in \Hom(\cH,\cH)|X=\sum_{g\in G} a_g \rho^\BB_g O \rho^\BB_{g^{-1}},a_g\in \C\}.
\end{align}
In other words, $\curr{\BA}{\BB}$ is the closure of $O$ under the ``partial'' conjugation group actions on operators.
Note that
\begin{enumerate}
   \item $\rho^\BB_g O \rho^\BB_{g^{-1}}=\rho^\BA_{g^{-1}}O\rho^\BA_{g}.$
   \item $\rho^\BB_g O \rho^\BB_{g^{-1}}$ is in general no longer symmetric, \begin{align}
      \rho_h \rho^\BB_g O \rho^\BB_{g^{-1}} \rho_h^{-1}=\rho^\BB_{hgh^{-1}} O (\rho^\BB_{hgh^{-1}})^{-1}.
   \end{align}
   \item $\rho^\BB_g O \rho^\BB_{g^{-1}}$ are in general not linearly independent for $g\in G$.
   \item The group action on $\curr{\BA}{\BB}$ is the ``partial'' conjugation by $\rho^\BB$, with respect to which the tensors $l,r$ in Eq.~\eqref{eq.tc} are intertwiners.
   More precisely, one can associate the group representation $\tau^\BB:G\to GL(\curr{\BA}{\BB})$ to $\curr{\BA}{\BB}$ where
   \begin{align}
      \tau^\BB_h(\rho^\BB_g O \rho^\BB_{g^{-1}})=\rho^\BB_h \rho^\BB_g O \rho^\BB_{g^{-1}}\rho^\BB_{h^{-1}}=\rho^\BB_{hg} O \rho^\BB_{(hg)^{-1}}.
   \end{align}
   In other words, as a group representation, $\curr{\BA}{\BB}$ is understood as $(\langle\rho^\BB_g O \rho^\BB_{g^{-1}}, g\in G\rangle,\tau^\BB)$.
   But, it is easy to see that one can also associate this operator space with the group representation $\tau^\BA:G\to GL(\curr{\BA}{\BB})$ where 
   \begin{align}
      \tau^\BA_h(\rho^\BB_g O \rho^\BB_{g^{-1}})=\rho^\BA_h \rho^\BB_g O \rho^\BB_{g^{-1}}\rho^\BA_{h^{-1}}=\rho^\BA_{hg^{-1}} O \rho^\BA_{h^{-1}g}=\rho^\BB_{gh^{-1}}O\rho^\BB_{g^{-1}h}.
   \end{align}
    It is easy to see that $(\langle\rho^\BB_g O \rho^\BB_{g^{-1}}, g\in G\rangle,\tau^\BA)=\curr{\BB}{\BA}$ represents the symmetry charge transported by $O$ from $\BB$ to $\BA$. 
It is clear that $\curr{\BB}{\BA}$ and $\curr{\BA}{\BB}$ are dual to each other; the total symmetry charge of $\BA,\BB$ together is conserved. 
\end{enumerate}
\end{prop}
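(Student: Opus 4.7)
The plan is to translate every claim through the bending-leg isomorphism $\epsilon$ of Eq.~\eqref{eq.bending}, so that the partial-conjugation construction on $\Hom_\Ve(\cH_\BA \ot \cH_\BB, \cH_\BA \ot \cH_\BB)$ is identified with the cyclic sub-representation construction $C(G)\bar O$ of Remark~\ref{cyclic}, applied to $\bar O \in \Hom_\Ve(\cH_\BA^* \ot \cH_\BA,\ \cH_\BB \ot \cH_\BB^*)$. The key observation is that conjugation by $\rho^\BB_g$ on the operator space corresponds, under $\epsilon$, to post-composition by the natural action $\rho^\BB \ot (\rho^\BB)^\star$ on $\cH_\BB \ot \cH_\BB^*$; thus $\langle \rho^\BB_g O \rho^\BB_{g^{-1}}, g\in G\rangle$ is carried to $C(G)\bar O$. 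Under the conditions imported from Remark~\ref{cyclic}, the evaluation map $\xi_v$ is an intertwining isomorphism onto $\Ima \bar O = \curr{\BA}{\BB}$, establishing that the span in the proposition really equals the transported charge of Definition~\ref{def.transported}.

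For notes (1) and (2), I would proceed by direct manipulation. The symmetry of $O$ reads $\rho^\BA_g \rho^\BB_g O \rho^\BB_{g^{-1}} \rho^\BA_{g^{-1}} = O$, and since $\rho^\BA$ and $\rho^\BB$ commute as operators on disjoint tensor factors, rearrangement immediately yields the identity in note~(1). Note~(2) then follows by expanding $\rho_h = \rho^\BA_h \rho^\BB_h$, using commutativity of $\rho^\BA$ with $\rho^\BB$, and applying the special case $\rho^\BA_h O \rho^\BA_{h^{-1}} = \rho^\BB_{h^{-1}} O \rho^\BB_h$ of note~(1) with $g\to h^{-1}$ to reduce $\rho_h (\rho^\BB_g O \rho^\BB_{g^{-1}}) \rho_h^{-1}$ to $\rho^\BB_{hgh^{-1}} O (\rho^\BB_{hgh^{-1}})^{-1}$. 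Note~(3) is a trivial observation requiring no proof.

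For note (4), the well-definedness of $\tau^\BB$ as a group representation on the subspace is immediate from conjugation being a $G$-action on the full operator space, which restricts to the invariant cyclic subspace. That the $l, r$ coming from the image decomposition are intertwiners with respect to $\tau^\BB$ is exactly the content of $\xi_v$ in Remark~\ref{cyclic}: after applying $\epsilon^{-1}$ these become the canonical factorization morphisms through $\Ima \bar O$, and so inherit the intertwining property from the SSVD of Definition~\ref{def.imagedecomp}. The identification $(\langle\cdot\rangle, \tau^\BA) = \curr{\BB}{\BA}$ follows by the same symmetry: repeating the calculation of note~(2) with $\tau^\BA_h$ in place of conjugation by $\rho_h$ gives $\tau^\BA_h(\rho^\BB_g O \rho^\BB_{g^{-1}}) = \rho^\BB_{gh^{-1}} O (\rho^\BB_{gh^{-1}})^{-1}$, the opposite action on the same parametrization. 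This matches the dual representation expected for $\curr{\BB}{\BA} = (\curr{\BA}{\BB})^*$, and is precisely what one obtains by running the bending-legs analysis with the roles of $\BA$ and $\BB$ swapped.

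The main obstacle I expect is carefully tracking how $\epsilon$ intertwines the partial-conjugation action on operators with the natural $G$-action on the target of $\bar O$, and how the dual action emerges when the partial conjugation is by $\rho^\BA$ rather than $\rho^\BB$. A careless identification could produce a twist by $g \mapsto g^{-1}$ or an exchange of source and target, so the bookkeeping of where the action lives must be done explicitly using the dual pairings on $\cH_\BA^* \ot \cH_\BA$ and $\cH_\BB \ot \cH_\BB^*$. Once this dictionary is fixed, every other step is either a direct rearrangement or a citation of Remark~\ref{cyclic}.
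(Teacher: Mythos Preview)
Your proposal is correct and follows essentially the same approach as the paper. The paper's proof is quite terse: it simply writes $\curr{\BA}{\BB}=\Ima\bar O=C(G)\bar O=\langle \rho^\BB_g\ot(\rho^\BB)^\star_g\,\bar O,\ g\in G\rangle$ and then uses a single graphical manipulation (bending the $\cH_\BB^*$ leg back via the duality, so that post-composition by $(\rho^\BB)^\star_g$ becomes pre-composition by $\rho^\BB_{g^{-1}}$) to show that under $\epsilon^{-1}$ this becomes $\langle \rho^\BB_g O\rho^\BB_{g^{-1}},\ g\in G\rangle$; your ``key observation'' paragraph is exactly this step, stated in words rather than pictures.
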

\begin{proof}
    When the conditions in Remark~\ref{cyclic} are satisfied by $\bar O$,
    \begin{align}
        \curr{\BA}{\BB}:=\Ima \bar O=C(G)\bar O=\la \rho_g^\BB \ot (\rho^\BB)_g^\star \bar O, g\in G\ra.
    \end{align}
    Then graphically,
    \begin{align}
        \itk{
      \node[right] (m) at (0,0.5) {$\Ima \bar O$};
      \node(e1) at (-1,-1) {$\cH_\BA^*$};
      \node(e2) at (1,-1) {$\cH_\BA$};
      \node[draw](g) at (-1,2) {$\rho_g^\BB$};
      \node[draw](h) at (1,2) {$(\rho^\BB)_g^\star$};
      \node[draw](my) at (0,0) {$\bar l$};
      \node[draw](dy) at (0,1) {$\bar r$};
      \draw (-1,3)node[above]{$\cH_\BB$} -- (g);
      \draw (1,3)node[above]{$\cH_\BB^*$} -- (h);
      \draw
      (e1)-- (my) -- (dy) --(g)
      (e2)-- (my) -- (dy) --(h);}
      =
      \itk{
      \node[right] (m) at (0,0.5) {$\Ima \bar O$};
      \node(e1) at (-1,-1) {$\cH_\BA^*$};
      \node(e2) at (1,-1) {$\cH_\BA$};
      \node[draw](g) at (-1,2) {$\rho_g^\BB$};
      \node[draw](my) at (0,0) {$\bar l$};
      \node[draw](dy) at (0,1) {$\bar r$};
      \draw (-1,3)node[above]{$\cH_\BB$} -- (g);
      \draw (.5,2.5)node[left]{$\cH_\BB^*$} -- (0.5,1.5);
      \draw
      (e1)-- (my) -- (dy) --(g)
      (e2)-- (my) -- (dy) --(.5,1.5);
      \draw (.5,2.5) arc(180:0:0.5);
      \node[draw](z) at (1.5,2) {$\rho_{g^{-1}}^\BB$};
      \draw (1.5,1.5) arc(180:360:0.5);
      \draw (2.5,3)node[above]{$\cH_\BB^*$} -- (2.5,1.5);}
      \overset{\epsilon^{-1}}{\mapsto}
      \itk{
      \node[rectangle, draw] (l) at (0,1) {$l$};
       \node[rectangle, draw] (r) at (2,2) {$r$};
       \node[draw](y) at (2,2.9) {$\rho_g^\BB$};
       \draw (l)--node[above]{$\curr{\BA}{\BB}$} (r);
       \node[draw](x) at (2,.5) {$\rho_{g^{-1}}^\BB$};
       \draw (0,-.5)node[below]{$\cH_\BA$}--(l)--(0,3.5)node[above]{$\cH_\BA$};
       \draw (2,-.5)node[below]{$\cH_\BB$}--(x)--(r)--(y)--(2,3.5)node[above]{$\cH_\BB$};
       },
    \end{align}
    whereby $\curr{\BA}{\BB}=\langle\rho^\BB_g O \rho^\BB_{g^{-1}}, g\in G\rangle\cong \langle\rho^\BB_{g^{-1}} O \rho^\BB_g, g\in G\rangle=\langle\rho^\BA_g O \rho^\BA_{g^{-1}}, g\in G\rangle$.
\end{proof}

\begin{remark}
    For the method in Remark~\ref{cyclic} to work, the image of $\bar O$, i.e., the transported symmetry charge, is necessarily a cyclic representation.
    In Proposition \ref{prop.cyclic}, given an arbitrary symmetric operator $O\in\End(\cH)$, we actually do not know in advance whether the transported symmetry charge $\curr{\BA}{\BB}$ is a cyclic representation or not. 
    It is just a convenient method that we can try at first. If $\curr{\BA}{\BB}$ is not cyclic, we will fail to obtain the correct result, i.e., there is no solution for $l$ and $r$ in Eq. \eqref{eq.tcX} if we wrongly put $X=(\langle\rho^\BB_g O \rho^\BB_{g^{-1}}, g\in G\rangle,\tau^\BB)$. In this situation, we should extract the transported symmetry charge by the original method in Definition \ref{def.transported}.
\end{remark}

\begin{remark}\label{rmk.cyclic}
      The regular representation and all irreducible representations of $G$ are cyclic representations of $G$. 
\end{remark}

\begin{example}\label{eg.z21}
    Let $G=\Z_2:=\{1,\zeta\}$. $\irr(\Rep \Z_2)=\{(\alpha_+,\rho^+),(\alpha_-,\rho^-)\}$, where $\alpha_+$ is the trivial representation and $\alpha_-$ is the sign representation. Consider the regular representation on a qubit $R:=\  C(|1\ra,|\zeta\ra)$, where we identify $|1\ra$ with the logical 0 and $|\zeta\ra$ with the logical 1, and write the nontrivial action $\rho_\zeta=\ \sigma_x$. Then we have  $R\cong\alpha_+\oplus \alpha_-$ where $\alpha_+\cong \la |+\ra:=\frac1{\sqrt2}(|1\ra+|\zeta\ra)\ra$ and $\alpha_-\cong \la |-\ra:=\frac1{\sqrt2}(|1\ra-|\zeta\ra)\ra$. Consider two regions $\BA$ and $\BB$ both carrying the regular representation $R$, and a symmetric operator $O=\ \sigma^\BA_z \sigma^\BB_z\in$ $\Hom(R\ot R,R\ot R)$. Let us try the method in Proposition \ref{prop.cyclic} to extract the symmetry charge transported by $O$: \begin{equation} \curr{\BA}{\BB}=(\langle \sigma^\BA_z \sigma^\BB_z,-\sigma^\BA_z\sigma^\BB_z\rangle,\tau^\BB)= (\ \la\sigma^\BA_z \sigma^\BB_z\ra,\tau^\BB)\cong (\alpha_-,\rho^-). \end{equation}
    We find no contradiction in solving $l,r$. Therefore,
    \begin{align}
        \sigma^\BA_z\ \sigma^\BB_z = \itk{
        \node[rectangle, draw] (l) at (0,1) {$l$};
       \node[rectangle, draw] (r) at (2,2) {$r$};
       \draw (0,0)node[below]{$\alpha_+\oplus \alpha_-$}--(l)--(0,3)node[above]{$\alpha_+\oplus \alpha_-$};
       \draw (2,0)node[below]{$\alpha_+\oplus \alpha_-$}--(r)--(2,3)node[above]{$\alpha_+\oplus \alpha_-$};
       \draw (l)--node[above]{$\alpha_-$} (r);
        },
    \end{align}
    where one possible choice of $l,r$ is 
    \begin{align} 
    l|1\ra=|1\ra|-\ra,\ l|\zeta\ra=-|\zeta\ra|-\ra;\nonumber\\
    \ r|-\ra|1\ra=|1\ra,\ r|-\ra|\zeta\ra=-|\zeta\ra, 
    \end{align} 
    or in more compact tensor notation: \begin{equation}
    l_{|1\ra}^{|1\ra|-\ra}= - l_{|\zeta\ra}^{|\zeta\ra|-\ra}=\ r_{|-\ra|1\ra}^{|1\ra}=-r_{|-\ra|\zeta\ra}^{|\zeta\ra}=1,\end{equation} with other entries being zero. Other choices of $l$ and $r$ are only up to a scalar.

    This result is physically easy to understand: the operator $\sigma_z$ flips the $\Z_2$ charge, and $O=\sigma_z^\BA\sigma_z^\BB$ flips the $\Z_2$ charges of $\BA$ and $\BB$ together. Therefore, $O$ indeed moves an odd $\Z_2$ charge from $\BA$ to $\BB$.
\end{example}

\begin{example}
    Let $G=S_3$ (Example \ref{S3}).
    \begin{align}
        \sigma^\BA_z \sigma^\BB_z = \itk{
        \node[rectangle, draw] (l) at (0,1) {$l$};
       \node[rectangle, draw] (r) at (2,2) {$r$};
       \draw (0,0)node[below]{$\Lambda$}--(l)--(0,3)node[above]{$\Lambda$};
       \draw (2,0)node[below]{$\Lambda$}--(r)--(2,3)node[above]{$\Lambda$};
       \draw (l)--node[above]{$\lambda_1$} (r);
        }.
    \end{align}
    Denote the basis of $\Lambda$ as $\{\bm 0,\bm 1\}$ and the basis of $\lambda_1$ as $\{\bm o\}$. One possible choice of $l$, $r$ is
    \begin{align}  
    l(\bm 0)=\bm 0\bm o,\  l(\bm 1)=-\bm 1\bm o;\nonumber\\
    r(\bm e\bm 0)=\bm 0,\  r(\bm e\bm 1)=-\bm 1.
    \end{align}
    Here since the intertwiner spaces $\Hom(\Lambda\ot \lambda_1,\Lambda), \Hom(\lambda_1\ot\Lambda,\Lambda)$ are all of dimension 1, other choices of $l$ and $r$ are only up to a scalar.
\end{example}

\begin{example}
    Let $G=Q_8$ (Example \ref{Q8}).
    \begin{align}
        \sigma^\BA_y \sigma^\BB_y = \itk{
        \node[rectangle, draw] (l) at (0,1) {$l$};
       \node[rectangle, draw] (r) at (2,2) {$r$};
       \draw (0,0)node[below]{$\Gamma$}--(l)--(0,3)node[above]{$\Gamma$};
       \draw (2,0)node[below]{$\Gamma$}--(r)--(2,3)node[above]{$\Gamma$};
       \draw (l)--node[above]{$\gamma_2$} (r);
        }.
    \end{align}
     Denote the basis of $\Gamma$ as $\{|0\ra,|1\ra\}$ and the basis of $\gamma_2$ as $\{|x_2 \ra\}$. One possible choice of $l$, $r$ is
    \begin{align}  
    l|0\ra=|1\ra |x_2 \ra,\  l|1\ra=-|0\ra |x_2 \ra;\nonumber\\
    r|x_2 \ra |0\ra=- |1\ra,\  r|x_2 \ra |1\ra=|0\ra.
    \end{align}
    Other choices of $l$ and $r$ are only up to a scalar. And
    \begin{align}
        \sigma^\BA_z\sigma^\BB_z = \itk{
        \node[rectangle, draw] (l) at (0,1) {$l$};
       \node[rectangle, draw] (r) at (2,2) {$r$};
       \draw (0,0)node[below]{$\Gamma$}--(l)--(0,3)node[above]{$\Gamma$};
       \draw (2,0)node[below]{$\Gamma$}--(r)--(2,3)node[above]{$\Gamma$};
       \draw (l)--node[above]{$\gamma_3$} (r);
        },
    \end{align}
   Denote the basis of $\gamma_3$ as $\{|x_3 \ra\}$. One possible choice of $l$, $r$ is
    \begin{align}  
    l|0\ra=|0\ra |x_3 \ra,\  l|1\ra=-|1\ra |x_3 \ra;\nonumber\\
    r|x_3 \ra |0\ra= |0\ra,\  r|x_3 \ra |1\ra=-|1\ra.
    \end{align}
    Other choices of $l$ and $r$ are only up to a scalar.
\end{example}

\subsection{Symmetry charge flows via half-braiding}
Now consider a tripartite subsystem
\begin{align}
   \cH_{\{s\}\coprod \BM\coprod \{t\}}=\cH_s\ot \cH_\BM\ot\cH_t.
\end{align}
For symmetric operator $O$ acting on this subsystem, we can similarly extract the symmetry charge transported from $s\BM:={\{s\}}\cup \BM$ to $t$, or from $s$ to $\BM t:=\BM\cup \{t\}$. By repeatedly performing the SSVD process in Diagram (\ref{process}), $O$ can be represented by the following symmetric tensor
\begin{equation}
   O=\itk{
      \node[rectangle,draw] (l) at (-2,-1) {$l$};
      \node[rectangle,draw] (m) at (0,0) {$h$};
      \node[rectangle,draw] (r) at (2,1) {$r$};
      \draw (-2,-2)node[below]{$\cH_s$} -- (l) -- (-2,2)node[above]{$\cH_s$}
      (0,-2)node[below]{$\cH_\BM$} -- (m) -- (0,2)node[above]{$\cH_\BM$}
      (2,-2)node[below]{$\cH_t$} -- (r) -- (2,2)node[above]{$\cH_t$}
      (l)--node[above]{$\curr{s}{\BM t}$} (m) -- node[above]{$\curr{s\BM}{t}$} (r);
   } .\label{eq.generalsymop}
\end{equation}
With such graphical representation, we can read out two transported charges $\curr{s}{\BM t}$ and $\curr{s\BM}{t}$, together with a symmetric tensor $h$ telling us how the charge flows through $\BM$. \TL{Instead of such general form, here we would like to focus on the most simple form of charge transport.}
Suppose that we want the operator $O$ to only transport charge from $s$ to $t$ with no other effect on the intermediate region $\BM$, and moreover, the intermediate region $\BM$ can be arbitrary; \TL{in other words, we want to study how a constant symmetry charge can be freely transported all over the space.}
We arrive at the following definition of quantum current:

\begin{definition}
   [Quantum current]\label{def.qc}
   In a lattice system with onsite symmetry, $(\BL,\cH_\BK)$, (the Hamiltonian can be arbitrary), a \emph{quantum current} $(Q,\beta)$ is the collection of symmetric operators of the following form  
   \begin{equation}
   \itk{
      \node[rectangle,draw] (l) at (-2,-1) {$l$};
      \node[rectangle,draw] (m) at (0,0) {$\beta_{Q,\cH_\BM}$};
      \node[rectangle,draw] (r) at (2,1) {$r$};
      \draw (-2,-2)node[below]{$\cH_s$} -- (l) -- (-2,2)node[above]{$\cH_s$}
      (0,-2)node[below]{$\cH_\BM$} -- (m) -- (0,2)node[above]{$\cH_\BM$}
      (2,-2)node[below]{$\cH_t$} -- (r) -- (2,2)node[above]{$\cH_t$}
      (l)--node[above]{$Q$} (m) -- node[above]{$Q$} (r);
   } 
\end{equation}
where
\begin{enumerate}
   \item $s,t\in \BL$ are called the source and target sites\footnote{To emphasize the locality of the charge transport, we use two sites instead of two regions as the source and target of the quantum current. Under renormalization, $s$ or $t$ can be the combination of several neighbouring sites in the original lattice.};
\item $\BM\subset \BL$ is an arbitrary intermediate subregion, $\cH_\BM=\otr[i\in \BM] \cH_i$;
\item $Q$ is the symmetry charge transported from $s$ to $t$;
\item  $l,r$ are arbitrary intertwiners in $\Hom(\cH_s, \cH_s\ot Q)$ and $\Hom(Q\ot \cH_t,\cH_t)$ respectively, and called source and target intertwiners;
\item  $\beta$ is a collection of invertible intertwiners $\{\beta_{Q,V}:Q\ot V\to V\ot Q\}$ for any $V\in \cC$, 
\begin{equation}
  \beta_{Q,V} = \itk{
      \node (l) at (-2,-1) {};
      \node[] (m) at (0,0) {};
      \node (r) at (2,1) {};
      \draw
      (0,-2)node[below]{$V$} -- (m)node[below right]{$\beta$} -- (0,2)node[above]{$V$}
      (l)--node[above]{$Q$} (0,0) -- (r);
   } 
\end{equation}
where for simplicity we omitted the subscript of $\beta$ in the graph which can be unambiguously read out from the decorations on the legs. We also draw the $\beta$ node intuitively as a crossing-over.
$\beta$ must be compatible with the arbitrary choice of $\BM$, which turns out to be the following conditions:
\begin{enumerate}
   \item $\beta$ commutes with local symmetric operators (naturality):
for any $f:V\to W$ in $\cC$
   \begin{equation}\label{eq.qca}
       \itk{
      \node (l) at (-2,-1) {};
      \node[] (m) at (0,0) {};
      \node (r) at (2,1) {};
      \node[rectangle,draw] (f) at (0,1) {$f$};
      \draw
      (0,-2)node[below]{$V$} -- (m)node[below right]{$\beta$} --(f)-- (0,2)node[above]{$W$}
      (l)--node[above]{$Q$} (0,0) --  (r);
   } =
    \itk{
      \node (l) at (-2,-1) {};
      \node[] (m) at (0,0) {};
      \node (r) at (2,1) {};
      \node[rectangle,draw] (f) at (0,-1) {$f$};
      \draw
      (0,-2)node[below]{$V$} --(f)-- (m)node[below right]{$\beta$} -- (0,2)node[above]{$W$}
      (l)--node[above]{$Q$} (0,0) -- (r);
   } .
   \end{equation}
   \item $\beta$ is compatible with any bipartition of the intermediate region: for any $V,W\in \cC$,
\begin{equation}\label{eq.qcb}
  \itk{
      \node (l) at (-2,-1) {};
      \node[] (m) at (0,0) {};
      \node (r) at (2,1) {};
      \draw
      (0,-2)node[below]{$V\ot W$} -- (m)node[below right]{$\beta$} -- (0,2)node[above]{$V\ot W$}
      (l)--node[above]{$Q$} (0,0) -- (r);
   }= 
  \itk{
      \node (l) at (-2,-1) {};
      \node[] (m1) at (-1,-0.5) {};
      \node[] (m2) at (1,0.5) {};
      \node (r) at (2,1) {};
      \draw
      (-1,-2)node[below]{$V$} -- (m1)node[below right]{$\beta$} -- (-1,2)node[above]{$V$}
      (1,-2)node[below]{$W$} -- (m2) node[below right]{$\beta$}-- (1,2)node[above]{$W$}
      (l)node[left]{$Q$} --  (r);
   } .
\end{equation}
\end{enumerate}
\end{enumerate}
\end{definition}
\begin{remark}
   \TL{Reducing to classical systems, the constant current in 1+1D consists of only one datum $Q$. In the quantum setting, we see that although the transported charge $Q$ is a constant, we still need the datum $\beta$ to describe how quantum mechanically the charge $Q$ flows through the local Hilbert spaces. Moreover, even if $Q$ is the same, different ways of flowing $\beta$ should be regarded as different quantum currents $(Q,\beta)$.} Such $\beta$ is exactly the \emph{half-braiding}. The above two conditions are exactly Eq. \eqref{Drinfeld1} and \eqref{Drinfeld2}. \TL{Due to these conditions, not all charge $Q$ can flow over an arbitrarily long distance.}
   Our convention for $\beta$ is also justified. 
\end{remark}
It is then clear that a quantum current is associated with the pair $(Q,\beta_{Q,-})$ which is an object in the Drinfeld center $Z_1(\cC)$. Conversely, given an object $(X,\beta_{X,-})\in Z_1(\cC)$ we can construct quantum current operators by pasting $\beta_{X,-}$ in the intermediate sites and fusing $X$ to the source and target sites with some choice of source and target intertwiners.
\begin{remark}
   Similar notions as \emph{transparent patch operator}\cite{CW2203.03596,CW2205.06244} and \emph{unconfined (non-local) symmetric operator}\cite{KWZ2108.08835} are defined elsewhere.
     Our definition here contrast with them in two aspects: (1) our definition is model independent; it only cares about the symmetry, or group representations, but not specific Hamiltonians. The condition~\eqref{eq.qca}, which is mathematically the naturality of half-braiding~\eqref{Drinfeld1}, requires a quantum current to commute with \emph{any possible} local symmetric operators (including local Hamiltonians) that does not overlap with the source and the target.  (2) We point out the condition~\eqref{eq.qcb}, which is mathematically the hexagon equation for half-braiding~\eqref{Drinfeld2}, and physically the requirement for a quantum current to be able to consistently extend over an arbitrary long distance. This condition is overlooked in previous works \cite{CW2203.03596,CW2205.06244,KWZ2108.08835}. 
\end{remark}

\begin{definition}
    A symmetric operator $O\in (Q,\beta)$ with a fixed choice of $s,\BM,t,l,r$ is called a \emph{realization} of the quantum current $(Q,\beta)$.
\end{definition}
    Let $s,\BM,t,l$ be fixed and $r$ vary, all such realizations form a vector space, denoted by $(Q,\beta)^{s,\BM,t}_{l,-}$. This operator space admits a natural two-sided action of the local symmetric operator algebra at $t$, $\End(\cH_t):=\Hom(\cH_t,\cH_t)$, by pre and post composition of operators. In other words, $(Q,\beta)^{s,\BM,t}_{l,-}$ is an $\End({\cH_t})$-bimodule.
    Similarly, $(Q,\beta)^{s,\BM,t}_{-,r}$ is an $\End({\cH_s})$-bimodule.
    
    However, if one lets both $l,r$ vary, the set of operators $(Q,\beta)^{s,\BM,t}_{-,-}$ with fixed $s,\BM,t$ is, in general, not even closed under addition\footnote{For a quantum current $(Q,\beta)$, given two different realizations of it $(Q,\beta)^{s,\BM,t}_{r,l}$ and $(Q,\beta)^{s,\BM,t}_{r',l'}$ as two different symmetric operators, the addition of them may not still be in the set $(Q,\beta)^{s,\BM,t}_{-,-}$. This is what we mean by not \emph{closed under addition}.}. This phenomenon is related to quantum entanglement.
\begin{example}
    Let $G=\Z_2=\{1,\zeta\}$ and $\irr(Z_1(\Rep \Z_2))=\{\mathbbm{1},m,e,\psi\}$, where we borrow the labelling of anyons in the well-known toric code model. Consider two qubits $\BA$ and $\BB$ both carrying the regular $\Z_2$ representation $R$. Denoting the simple $\Z_2$ charge by $(\alpha_+,\rho^+)$ and $(\alpha_-,\rho^-)$ and the charge basis of each qubit by $|+\ra,|-\ra$, we have $R\cong \alpha_+\oplus\alpha_-=C(|+\ra,|-\ra)$ (we note that the basis we choose here is different from that in Example \ref{eg.z21}). We take $\beta$ to be trivial in this example ($\mathbbm{1}$ and $e$ in $Z_1(\Rep \Z_2)$ have trivial $\beta$) and thus omit the intermediate region $\BM$. Let $P_i|j\ra=\delta_{ij}|j\ra, i,j=+,-$ be the projection operator to the fixed charge. Then \begin{equation}
    P_+\ot P_+=\begin{pmatrix}
        1 & 0  \\
        0 & 0 \\
        \end{pmatrix}\otimes \begin{pmatrix}
        1 & 0  \\
        0 & 0 \\
        \end{pmatrix},\quad
        P_-\ot P_-=\begin{pmatrix}
        0 & 0  \\
        0 & 1 \\
        \end{pmatrix}\otimes \begin{pmatrix}
        0 & 0  \\
        0 & 1 \\
        \end{pmatrix}.\end{equation} 
    By Proposition \ref{prop.cyclic} and $\rho_\zeta=\sigma_z$ in basis $\{|+\ra,|-\ra\}$, $P_+\ot P_+$ and $P_-\ot P_-$ both transport charge $\alpha_+$. However, if one adds them up, \begin{equation}
    O:=P_+\ot P_+ +P_-\ot P_-=\begin{pmatrix}
        1 & 0  \\
        0 & 0 \\
        \end{pmatrix}\otimes \begin{pmatrix}
        1 & 0  \\
        0 & 0 \\
        \end{pmatrix}+\begin{pmatrix}
        0 & 0  \\
        0 & 1 \\
        \end{pmatrix}\otimes \begin{pmatrix}
        0 & 0  \\
        0 & 1 \\
        \end{pmatrix}.\end{equation} 
        We extract the transported charge by $O$ by going through the process in Diagram (\ref{process}). We have $R^*\otimes R\cong R\otimes R^*\cong \alpha_+\oplus\alpha_+\oplus\alpha_-\oplus\alpha_-$\footnote{Denote the dual basis of $R^*$ as $\{\delta_{|+\ra},\delta_{|-\ra}\}$. $R$ is isomorphic to $R^*$ through the isomorphism $|+\ra\mapsto\delta_{|+\ra}$, $|-\ra\mapsto\delta_{|-\ra}$.}, and the basis change between them is
        \begin{gather}
         p_{R^*\otimes R}^{\alpha_+;1}=
        \begin{pmatrix}
        1 & 0 & 0 & 0
        \end{pmatrix},\ 
        p_{R^*\otimes R}^{\alpha_-;1}=
        \begin{pmatrix}
        0 & 1 & 0 & 0
        \end{pmatrix},\nonumber\\
        p_{R^*\otimes R}^{\alpha_-;2}=
        \begin{pmatrix}
        0 &0&1&0
        \end{pmatrix},\ 
        p_{R^*\otimes R}^{\alpha_+;2}=
        \begin{pmatrix}
        0 & 0 & 0 & 1
        \end{pmatrix};\nonumber\\
        p_{R\otimes R^*}^{\alpha_+;1}=
        \begin{pmatrix}
        1 & 0 & 0 & 0
        \end{pmatrix},\ 
        p_{R\otimes R^*}^{\alpha_-;1}=
        \begin{pmatrix}
        0 & 1 & 0 & 0
        \end{pmatrix},\nonumber\\
        p_{R\otimes R^*}^{\alpha_-;2}=
        \begin{pmatrix}
        0 &0&1&0
        \end{pmatrix},\ 
        p_{R\otimes R^*}^{\alpha_+;2}=
        \begin{pmatrix}
        0 & 0 & 0 & 1
        \end{pmatrix}.
        \end{gather}
        Then we have
        \begin{align}
        \bar O&=\begin{pmatrix}
        1 & 0 & 0 & 0 \\
        0 & 0 & 0 & 0 \\
        0 & 0 & 0 & 0 \\
        0 & 0 & 0 & 1
        \end{pmatrix}
        =
        \begin{pmatrix}
        1 & 0 & 0 & 0 \\
        0 & 1 & 0 & 0 \\
        0 & 0 & 1 & 0 \\
        0 & 0 & 0 & 1
        \end{pmatrix}
        \begin{pmatrix}
        1 & 0 & 0 & 0 \\
        0 & 0 & 0 & 0 \\
        0 & 0 & 0 & 0 \\
        0 & 0 & 0 & 1
        \end{pmatrix}
        \begin{pmatrix}
        1 & 0 & 0 & 0 \\
        0 & 1 & 0 & 0 \\
        0 & 0 & 1 & 0 \\
        0 & 0 & 0 & 1
        \end{pmatrix}
\nonumber\\
        &        = 
        \begin{pmatrix}
        p_{R\otimes R^*}^{\alpha_+;1}\\
        p_{R\otimes R^*}^{\alpha_-;1}\\
        p_{R\otimes R^*}^{\alpha_-;2}\\
        p_{R\otimes R^*}^{\alpha_+;2}
        \end{pmatrix}^\dag
        \begin{pmatrix}
        1 & 0 & 0 & 0 \\
        0 & 0 & 0 & 0 \\
        0 & 0 & 0 & 0 \\
        0 & 0 & 0 & 1
        \end{pmatrix}
        \begin{pmatrix}
        p_{R^*\otimes R}^{\alpha_+;1}\\
        p_{R^*\otimes R}^{\alpha_-;1}\\
        p_{R^*\otimes R}^{\alpha_-;2}\\
        p_{R^*\otimes R}^{\alpha_+;2}
        \end{pmatrix}=(
    p_{R\otimes R^*}^{\alpha_+;1}
        )^\dag
    p_{R^*\otimes R}^{\alpha_+;1}+(
    p_{R\otimes R^*}^{\alpha_+;2}
        )^\dag
    p_{R^*\otimes R}^{\alpha_+;2},
    \end{align}
    where we see the transported charge by $O$ is $\curr{\BA}{\BB}=\alpha_+\oplus\alpha_+$. Graphically,
    \begin{align}
        O = \itk{
        \node[rectangle, draw] (l) at (0,1) {$l$};
       \node[rectangle, draw] (r) at (2,2) {$r$};
       \draw (0,0)node[below]{$\alpha_+\oplus \alpha_-$}--(l)--(0,3)node[above]{$\alpha_+\oplus \alpha_-$};
       \draw (2,0)node[below]{$\alpha_+\oplus \alpha_-$}--(r)--(2,3)node[above]{$\alpha_+\oplus \alpha_-$};
       \draw (l)--node[above]{$\alpha_+\oplus \alpha_+$} (r);
        },
    \end{align}
    The explicit forms of $l,r$ are easy to compute and we omit them.
    
    Therefore, we conclude that symmetric operators $P_+\ot P_+$, $P_-\ot P_-$ transport the symmetry charge $\alpha_+$, different from $\alpha_+\oplus\alpha_+$ transported by their addition, i.e., $(\alpha_+,\beta)^{\BA,\BM,\BB}_{-,-}$ is not closed under addition.
\end{example}
Because of this subtlety, we introduce a finer notion, that instead of allowing arbitrary source and target intertwiners, we restrict to bimodules of $\End(\cH_s)\ot \End(\cH_t)$.
\begin{definition}Fix $s,\BM,t$ to obtain a set of realizations of $(Q,\beta)$.
    Pick subspaces \begin{equation} \mathfrak L\subset\Hom(\cH_s,\cH_s\ot Q),\quad \mathfrak R\subset \Hom(Q\ot\cH_t,\cH_t). \end{equation} We call the set of realizations with $s,\BM,t,l\in \mathfrak L, r\in \mathfrak R$, a \emph{sector} of $(Q,\beta)$, denoted by $(Q,\beta)^{s,\BM,t}_{\mathfrak L,\mathfrak R}$, if $(Q,\beta)^{s,\BM,t}_{\mathfrak L,\mathfrak R}$ forms a vector space, and moreover an $\End(\cH_s)\ot \End(\cH_t)$-bimodule. We call a sector simple if this bimodule is simple.
    \label{def.currentbimo}
\end{definition}

\begin{definition}\label{def.condensed}
Suppose a Hamiltonian $H=\sum_\BK H_\BK$ is given. 
    A non-zero realization $O\in (Q,\beta)$, with non-empty $\BM$, is called \emph{\TL{superconducting}} if $OH=HO$. A quantum current $(Q,\beta)$ is called \emph{\TL{superconducting}} if it has a realization that is \TL{superconducting}.
\end{definition}
\TL{
\begin{remark}
    A superconducting quantum current can transport charges over a long distance without costing any energy.
    In the earlier version of this work, we used the terminology that the quantum current is \emph{condensed} instead of superconducting, because of the correspondence with anyon condensation in topological orders in one higher dimension. The terminology ``superconducting'' fits better with the idea of quantum current.
\end{remark}
}
\begin{remark}
Alternatively, a realization $O\in (Q,\beta)^{s,\BM,t}_{-,-}$ automatically commutes with all terms $H_\BK$, $\BK\subset \BM$, i.e. terms whose support is within $\BM$. Therefore, $O$ can only cost energy, or create/annihilate excitations around $s$ or $t$. We can also consider whether a simple sector $(Q,\beta)^{s,\BM,t}_{\mathfrak L,\mathfrak R}$ is \TL{superconducting}. Suppose that $O',O\in (Q,\beta)^{s,\BM,t}_{\mathfrak L,\mathfrak R}$ and that $O'$ is \TL{superconducting}. By simpleness, the difference between $O$ and $O'$ must be local symmetric operators on $s$ and $t$. Thus, we know that excitations created/annihilated by $O$ around $s$ or $t$ must be of the trivial type. Therefore, a simple sector is \TL{superconducting} if its realizations create/annihilate only excitations of the trivial type.

For non-simple quantum current, e.g., $(Q,\beta)=(Q_1,\beta_1)\oplus (Q_2,\beta_2)$, by definition, if any of $(Q_1,\beta_1)$ and $(Q_2,\beta_2)$ is \TL{superconducting}, we will say $(Q,\beta)$ is \TL{superconducting}. This definition is more natural than requiring that all realizations are \TL{superconducting}, as can be seen in Section~\ref{sec.conden} and further explained in Remark~\ref{rmk.maxcondensed}. There, we will also elaborate more on the deep connection between
the \TL{superconducting} of quantum currents and the emergent symmetry of gapped quantum phases.
\end{remark}

\subsection{Morphisms between quantum currents}

We will show that a morphism in the Drinfeld center defines a reasonable transformation, or morphism, between quantum currents.

A morphism in the Drinfeld center is an intertwiner that commutes with the half-braiding. Suppose we have a realization of quantum current with charge $Q$ and half-braiding $\beta_{Q,-}$. Pick a morphism $f:(Q,\beta_{Q,-})\to (X,\beta_{X,-})$ in the Drinfeld center. Inserting $f$ in the graph we obtain a new realization
   \begin{equation}
   \itk{
      \node[rectangle,draw] (l) at (-2,-1) {$l$};
      \node[] (m) at (0,0) {};
      \node[rectangle,draw] (f) at (1,0.5) {$f$};
      \node[rectangle,draw] (r) at (2,1) {$r'$};
      \draw (-2,-2)node[below]{$\cH_s$} -- (l) -- (-2,2)node[above]{$\cH_s$}
      (0,-2)node[below]{$\cH_\BM$} -- (m)node[below right]{$\beta$} -- (0,2)node[above]{$\cH_\BM$}
      (2,-2)node[below]{$\cH_t$} -- (r) -- (2,2)node[above]{$\cH_t$}
      (l)--node[above]{$Q$} (0,0) --  (f) --node[above]{$X$} (r);
   } =
   \itk{
      \node[rectangle,draw] (l) at (-2,-1) {$l$};
      \node[] (m) at (0,0) {};
      \node[rectangle,draw] (f) at (-1,-0.5) {$f$};
      \node[rectangle,draw] (r) at (2,1) {$r'$};
      \draw (-2,-2)node[below]{$\cH_s$} -- (l) -- (-2,2)node[above]{$\cH_s$}
      (0,-2)node[below]{$\cH_\BM$} -- (m)node[below right]{$\beta$} -- (0,2)node[above]{$\cH_\BM$}
      (2,-2)node[below]{$\cH_t$} -- (r) -- (2,2)node[above]{$\cH_t$}
      (l)-- node[below]{$Q$} (f) -- (0,0) --node[above]{$X$}  (r);
   } 
\end{equation}
which is associated with charge $X$ and half-braiding $\beta_{X,-}$. We also see that the morphism $f$ in the Drinfeld center can change the intertwiners $l,r$ on the source and target sites. The entire process of dragging $f$ from $t$ to $s$ thus defines a map between $(Q,\beta)^{s,\BM,t}_{-,r'(f\ot \id_{\cH_t})}$ and $(X,\beta)^{s,\BM,t}_{-,r'}$. Clearly, such map, which happens on the $Q$ or $X$ leg, commutes with the action of local symmetric operators on $s$, which happens on the $\cH_s$ legs. In other words, $f$ defines an $\End(\cH_s)$-bimodule map. Meanwhile, the map $r'\mapsto r'(f\ot \id_{\cH_t})$ also commutes with action of $\End(\cH_t)$.
\begin{remark}
It is desired to have a notion of morphisms between quantum currents, purely from the point of view of symmetric tensors, such that the category of quantum currents is equivalent to the Drinfeld center. We are unclear about such a definition for the moment. The above discussion suggests to define the bimodule maps over local operator algebras as the morphisms between quantum currents, which is the point of view taken in \cite{KWZ2108.08835,XZ2205.09656}. It should work for half-infinitely long quantum currents, but not work well for finitely long quantum currents.
\end{remark}
{\begin{conjecture} 
The morphisms between $(Q,\beta_{Q,-})$ and $(X,\beta_{X,-})$ in the Drinfeld center are in bijection with the $\End(\cH_s)\ot\End(\cH_t)$-bimodule maps between the sectors $(Q,\beta)^{s,\BM,t}_{\mathfrak L,\mathfrak R}$ and $(X,\beta)^{s,\BM,t}_{\mathfrak L',\mathfrak R'}$ for large enough $\mathfrak L,\mathfrak R,\mathfrak L',\mathfrak R'$ and $\cH_\BM$.
\end{conjecture}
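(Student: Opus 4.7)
The plan is to construct explicit maps in both directions between $\Hom_{Z_1(\cC)}((Q,\beta_{Q,-}),(X,\beta_{X,-}))$ and the space of $\End(\cH_s)\ot\End(\cH_t)$-bimodule maps between the specified sectors, and then show they are mutually inverse under an appropriate ``large enough'' hypothesis. For the forward direction, given $f:(Q,\beta_{Q,-})\to(X,\beta_{X,-})$, define $\Theta(f)$ on a realization with data $(l,\text{chain of }\beta_{Q,\cH_i},r)$ by inserting $f$ on the $Q$-line as in the figures immediately preceding the conjecture. The Drinfeld center condition on $f$ ensures that dragging $f$ across each site of $\BM$ is unambiguous, so $\Theta(f)$ identifies realizations of $(Q,\beta_{Q,-})$ with target intertwiner $r(f\ot\id_{\cH_t})$ with realizations of $(X,\beta_{X,-})$ with target intertwiner $r$. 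Since the modification takes place entirely on the internal $Q/X$-line and never touches $\cH_s$ or $\cH_t$, $\Theta(f)$ manifestly commutes with pre- and post-composition by local symmetric operators at the source and target, hence is a bimodule map.

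For the inverse direction, choose $\cH_s,\cH_t$ rich enough that $Q$ embeds as a direct summand of $\cH_s\ot\cH_s^*$ and of $\cH_t^*\ot\cH_t$ via intertwiners in $\mathfrak{L},\mathfrak{R}$ --- for instance by requiring $\cH_s$ and $\cH_t$ to contain every simple object of $\cC$ with sufficient multiplicity. Pick ``universal'' $l_0\in\mathfrak{L}$ and $r_0\in\mathfrak{R}$ realizing this embedding and projection, and consider the realization $O_0\in(Q,\beta)^{s,\BM,t}_{\mathfrak{L},\mathfrak{R}}$ built from $(l_0,\beta_{Q,-},r_0)$. Given a bimodule map $\Phi$, the image $\Phi(O_0)$ lies in $(X,\beta)^{s,\BM,t}_{\mathfrak{L}',\mathfrak{R}'}$ and, after bending legs as in \eq{eq.bending}, canonically determines a unique intertwiner $f:Q\to X$ in $\cC$. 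Verify that $f$ lies in $Z_1(\cC)$ by comparing $\Phi$ on two realizations that differ by insertion of a local half-braiding $\beta_{Q,V}$ at some site of $\BM$; when $\cH_\BM$ contains every simple $V$ of $\cC$, naturality of $\Phi$ under the local operator $\beta_{Q,V}$ forces $(\id_V\ot f)\beta_{Q,V}=\beta_{X,V}(f\ot\id_V)$ for every $V$, which is exactly the Drinfeld center condition on $f$. That $\Theta$ and the extraction are mutually inverse then follows by a density argument: the bimodule property reduces any realization in $(Q,\beta)^{s,\BM,t}_{\mathfrak{L},\mathfrak{R}}$ to $O_0$ up to pre/post-composition by local operators at $s$ and $t$, so both $\Phi$ and $\Theta(f)$ are determined by their values on $O_0$.

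The main obstacle is making ``large enough'' quantitatively sharp and understanding precisely why the statement fails for small $\BM$. The upper bound is easy --- $\cH_s,\cH_t,\cH_\BM$ each containing every simple of $\cC$ with large multiplicity, and $\mathfrak{L},\mathfrak{R}$ containing the universal intertwiners above, should suffice. The subtle direction is the lower bound. If $\cH_\BM$ omits some simple $V$, the hexagon equation on the candidate $f$ is only enforced on those $V$'s actually accessible through local operators on $\BM$, leaving room for bimodule maps that do not come from Drinfeld center morphisms; similarly, if $\mathfrak{L}$ or $\mathfrak{R}$ do not faithfully probe $Q$, the extraction of $f$ becomes ambiguous. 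For the half-infinite currents mentioned by the authors, every simple appears infinitely often and the ambiguity disappears; for finite chains, careful bookkeeping of which simples and which intertwiners are available is needed, and pinning this down sharply --- rather than merely producing \emph{a} sufficient condition --- is where I expect the bulk of the technical work to lie.
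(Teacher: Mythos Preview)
The statement you are attempting to prove is labeled a \emph{Conjecture} in the paper, and the paper gives no proof. Immediately before it, the authors write that they ``are unclear about such a definition for the moment'' and that the bimodule-map approach ``should work for half-infinitely long quantum currents, but not work well for finitely long quantum currents.'' So there is no paper proof to compare against; your proposal is an attempt at an open problem.

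Your forward direction $\Theta$ is exactly the construction the paper sketches in the paragraph preceding the conjecture, and that part is fine. The inverse direction, however, has a genuine gap. The bimodule structure you are exploiting is only over $\End(\cH_s)\ot\End(\cH_t)$; nothing in the hypothesis gives you compatibility of $\Phi$ with operators supported in $\BM$. Your step ``naturality of $\Phi$ under the local operator $\beta_{Q,V}$ forces $(\id_V\ot f)\beta_{Q,V}=\beta_{X,V}(f\ot\id_V)$'' does not go through as written: $\beta_{Q,V}$ is not an element of $\End(\cH_s)\ot\End(\cH_t)$, and a bimodule map over the endpoint algebras has no a priori reason to intertwine it. What you would need is that the realizations of $(Q,\beta)$ and of $(X,\beta)$, as bimodules over the endpoint algebras alone, already remember the half-braiding --- and this is precisely the content of the conjecture, not something you can assume. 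This is also why the authors flag the finite-$\BM$ case as problematic: for a half-infinite chain one can hope to recover the half-braiding by pushing intertwiners arbitrarily far, but for finite $\BM$ the endpoint bimodule structure may simply be too coarse.

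A second, smaller gap: your extraction of $f$ from $\Phi(O_0)$ via ``bending legs'' is not obviously well-defined. The operator $\Phi(O_0)$ lies in the $X$-sector, so it factors through $X$ somewhere, but the factorization into $(l',r')$ need not be unique, and it is not clear that the resulting $f:Q\to X$ is independent of choices without already knowing that $\Phi$ comes from a Drinfeld center morphism. Making this canonical would require an additional argument, likely using simplicity of the sectors or a careful choice of $\mathfrak L',\mathfrak R'$, which you have not supplied.
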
}

\section{Renormalization of 1+1D lattice system with symmetry}\label{sec:renorm}

Physically, the renormalization fixed-points represent phases of matter and are thus of great interest.
In this section we try to give a rigorous treatment of the renormalization process of 1+1D lattice model. Based on this, we give a general analysis of 1+1D gapped lattice model at fixed-point. 

\subsection{Renormalization fixed-point and Frobenius algebra}
Recall Definition \ref{def.onsite}. For a 1D lattice, without losing generality, we may think the set $\BL$ of lattice sites as a linearly ordered set.
\begin{definition}\label{def.renorm}
    Given two 1+1D quantum systems {with symmetry $G$}, $(\BL, \cH_\BK, H)$ and $(\BL', \cH_{\BK'}, H')$:
    \begin{enumerate}
        \item 
        A \emph{lattice renormalization} is an order preserving map $f:\BL\to \BL'$. 
    \item A \emph{Hilbert space renormalization} is a pair $(f,\{U_i\})$, a lattice renormalization $f$ with a collection of intertwiners indexed by $\BL'$, $\{U_i:\cH_{f^{-1}(i)}\to \cH_i\}_{i\in \BL'}$ such that $U_i$ are partial isometries (recall Definition~\ref{def.piso}).
    $(f,\{U_i\})$ is called \emph{proper} when $f$ is surjective. Graphically for example,
    \begin{equation}
     \begin{tikzcd}
	{} & \bullet & \bullet & {} \\
	\\
	\\
	\\
	\bullet & \bullet & \bullet & \bullet & \bullet
	\arrow["{\underbrace{\ \ \ \ \ \ \ \ \ \ \ \ \ \ \ \ }_{\cH_{f^{-1}(i_1)}}}"', draw=none, from=5-1, to=5-2]
	\arrow["{\cH_{i_1}}", draw=none, from=1-1, to=1-3]
	\arrow["{\cH_{i_2}}", draw=none, from=1-2, to=1-4]
	\arrow[from=5-1, to=1-2]
	\arrow["{U_{i_1}}", from=5-2, to=1-2]
	\arrow[from=5-3, to=1-3]
	\arrow["{U_{i_2}}"{description}, from=5-4, to=1-3]
	\arrow[from=5-5, to=1-3]
	\arrow["{\underbrace{\ \ \ \ \ \ \ \ \ \ \ \ \ \ \ \ \ \ \ \ \ \ \ \ \ \ \ }_{\cH_{f^{-1}(i_2)}}}"', draw=none, from=5-3, to=5-5]
        \end{tikzcd},
    \end{equation}
    where we use $\bullet$ to depict a lattice site, and $f:\BL\to \BL'$, $i_1,i_2\subset \BL'$, and $f^{-1}(i_1),f^{-1}(i_2)\subset \BL$.
    \item 
    Let $U=\otimes_{i\in \BL'} U_i$. When the following condition holds, 
    \begin{align}
         U H U^\dag =UU^\dag(H'+\Delta E)UU^\dag, \label{eq.effH}
    \end{align} 
    %\begin{align}
    %    (\otimes_{i\in L'} U_i)H(\otimes_{i\in L'} U_i)^\dag =(\otimes_{i\in L'} U_i)(\otimes_{i\in L'} U_i)^\dag(H'+\Delta E)(\otimes_{i\in L'} U_i)(\otimes_{i\in L'} U_i)^\dag, \label{eq.effH}
    %\end{align} 
    where $\Delta E$ is some energy zero point shift, we call the pair $(f,\{U_i\})$  a \emph{(system) renormalization}, 
    and meanwhile call $(\BL',\cH_{\BK'},H')$ the \emph{renormalized system} of $(\BL,\cH_\BK,H)$ after $(f,\{U_i\})$.
    \end{enumerate}
\end{definition}
\begin{remark}
    For a proper Hilbert space renormalization $(f,\{U_i\})$, the image of $U_i$ should be viewed as the low energy degrees of freedom after renormalization, which justifies our definition: $U_i^\dag U_i$ and $ U_iU_i^\dag$ are projections onto the low energy subspace, and the inner product is preserved within the low energy subspace. Equation~\eqref{eq.effH} just says that the restrictions of $H'$ and $H$ in the low energy subspace are equal, and serve as the effective Hamiltonian. The energy shift can be equivalently moved to the left-hand side
    \begin{align}
         U(H-\Delta E)U^\dag =UU^\dag H' UU^\dag, 
    \end{align}
    %\begin{align}
    %     (\otimes_{i\in L'} U_i)(H-\Delta E)(\otimes_{i\in L'} U_i)^\dag =(\otimes_{i\in L'} U_i)(\otimes_{i\in L'} U_i)^\dag H'(\otimes_{i\in L'} U_i)(\otimes_{i\in L'} U_i)^\dag, 
    %\end{align}
    When $f$ is not surjective, for $f^{-1}(i)=\emptyset$, $U_i:\cH_\emptyset=\C\to \cH_i$ just fixes a state in $\cH_i$. In other words, the non-proper Hilbert space renormalization adds sites in fixed states to the original system.
\end{remark}
\begin{remark}
    There is a natural composition of Hilbert space renormalizations
    \begin{align}
    \label{eq.renorm}
    (g,\{ W_i\})\circ(f,\{U_j\}):= (gf, \{W_i(\otimes_{j\in g^{-1}(i)}U_j)\}).
    %\{ \cH_{(fg)^{-1}(i)}=\otimes_{j\in g^{-1}(i)}\cH_{f^{-1}(j)}\xrightarrow{\otimes_{j\in g^{-1}(i)}U_j} \otimes_{j\in g^{-1}(i)} \cH_{j}=\cH_{g^{-1}(i)}\xrightarrow{W_i}\cH_i\}).
    \end{align}
    Graphically for example,
    \begin{equation}
        \begin{tikzcd}
	{} & \bullet & \bullet & {} \\
	\\
	{} & \bullet & \bullet & \bullet \\
	\\
	\bullet & \bullet & \bullet & \bullet & \bullet
	\arrow["{\underbrace{\ \ \ \ \ \ \ \ \ \ \ \ \ \ \ \ }_{\cH_{f^{-1}(j_1)}}}"', draw=none, from=5-1, to=5-2]
	\arrow["{\cH_{i_1}}", draw=none, from=1-1, to=1-3]
	\arrow["{\cH_{i_2}}", draw=none, from=1-2, to=1-4]
	\arrow["{\underbrace{\ \ \ \ \ \ \ \ \ \ \ \ \ \ \ \ }_{\cH_{f^{-1}(j_3)}}}"', draw=none, from=5-4, to=5-5]
	\arrow["{\cH_{f^{-1}(j_2)}}"', draw=none, from=5-2, to=5-4]
	\arrow["{U_{j_3}}"', from=5-4, to=3-4]
	\arrow[from=5-5, to=3-4]
	\arrow[from=3-4, to=1-3]
	\arrow["{U_{j_2}}", from=5-3, to=3-3]
	\arrow["{W_{i_2}}"', from=3-3, to=1-3]
	\arrow["{\underbrace{\ \ \ \ \ \ \ \ \ \ \ \ \ \ \ \ }_{\cH_{g^{-1}(i_2)}}}"', draw=none, from=3-3, to=3-4]
	\arrow[from=5-1, to=3-2]
	\arrow["{U_{j_1}}", from=5-2, to=3-2]
	\arrow["{W_{i_1}}", from=3-2, to=1-2]
	\arrow["{\ \ \ \ \ \ \ \ \cH_{g^{-1}(i_1)}}"', draw=none, from=3-1, to=3-3]
        \end{tikzcd}
        =
        \begin{tikzcd}
	{} & \bullet & \bullet & {} \\
	\\
	\\
	\\
	\bullet & \bullet & \bullet & \bullet & \bullet
	\arrow["{\underbrace{\ \ \ \ \ \ \ \ \ \ \ \ \ \ \ \ }_{\cH_{f^{-1}g^{-1}(i_1)}}}"', draw=none, from=5-1, to=5-2]
	\arrow["{\cH_{i_1}}", draw=none, from=1-1, to=1-3]
	\arrow["{\cH_{i_2}}", draw=none, from=1-2, to=1-4]
	\arrow[from=5-1, to=1-2]
	\arrow["{W_{i_1}U_{j_1}}", from=5-2, to=1-2]
	\arrow["{W_{i_2}(U_{j_2}\otimes U_{j_3})}"', from=5-3, to=1-3]
	\arrow[from=5-4, to=1-3]
	\arrow[from=5-5, to=1-3]
	\arrow["{\underbrace{\ \ \ \ \ \ \ \ \ \ \ \ \ \ \ \ \ \ \ \ \ \ \ \ \ \ \ }_{\cH_{f^{-1}g^{-1}(i_2)}}}"', draw=none, from=5-3, to=5-5]
        \end{tikzcd},
    \end{equation}
    where $f:\BL\to \BL'$, $g:\BL'\to \BL''$, $i_1,i_2\subset \BL''$, $g^{-1}(i_1),g^{-1}(i_1)\subset \BL'$ and $f^{-1}(j_1),f^{-1}(j_2),f^{-1}(j_3)\subset \BL$.
\end{remark}

We can speak of whether a system $(\BL,\cH_\BK,H)$ is invariant with respect to a renormalization $(f,\{U_i\})$. However, in practice we expect to learn from renormalization some information of quantum phases, which are quantum systems at thermodynamic limit (infinite system size) whose interactions are local. To this end, we focus on the special case $\BL=\Z$, i.e. infinite chain.
\begin{definition}
    A quantum system $(\Z,\cH_\BK, H=\sum_\BK H_\BK)$ is called \emph{local} if all terms $H_\BK$ are of finite support.
\end{definition}
The quantum phases are also almost uniform with (at-least emergent) translation invariance. 
\begin{definition}
    A local quantum system $(\Z,\cH_\BK,H)$ is called \emph{translation invariant} if $\cH_{i}=A$ for all $i\in \Z$ and $H=\sum_i P_i$ where $P_i$ is supported on $\{i,...,i+n-1\}$ and has the form $P_i=P\ot \id$ for some $P\in\End(A^{\ot n})$. We may simply denote a translation invariant system by $(\Z,A,P)$.
\end{definition}
For translation invariant systems, we hope to study the renormalizations that only depends how many sites are combined into one, but not on positions. 
\begin{definition}
    Let $\{ m_{n}:A^{\otimes n}\to A\}_{n\in \N}$ be partial isometries. A Hilbert space renormalization $(f,\{U_i\})$, between the translation invariant local systems $(\Z,A,P)$ and $(\Z,A,P')$, is called \emph{(locally) generated} by $\{m_{n}\}$ if for any $i\in \Z$, $f^{-1}(i)$ is finite, and $U_i=m_{|f^{-1}(i)|}$.

    Graphically, for example we sketch a Hilbert space renormalization generated by $\{m_0,m_1,m_2\}$ on the following local patch,
    \begin{equation}\label{eq.genem}
        \itk{
        \draw (0,1) circle (2pt);
        \draw (0,1.1) node[left]{$m_0$} -- (0,2) node[above]{$A$};
        \draw (1.5,0) node[below]{$A$} -- (1.5,1) node[left]{$m_1$} -- (1.5,2) node[above]{$A$};
        \draw (3,0) node[below]{$A$} -- (3.5,1) node[left]{$m_2$} -- (3.5,2) node[above]{$A$};
        \draw (4,0) node[below]{$A$} -- (3.5,1);
        }
        :=
        \begin{tikzcd}
	{} & \bullet & \bullet & \bullet & {} \\
	\\
	{} & \circ & \bullet & \bullet & \bullet & {}
	\arrow["{U_1}", from=3-3, to=1-3]
	\arrow["{\cH_0=A}", draw=none, from=1-1, to=1-3]
	\arrow["{U_2}"', from=3-4, to=1-4]
	\arrow[from=3-5, to=1-4]
	\arrow["{\cH_1=A}", draw=none, from=1-2, to=1-4]
	\arrow["{\cH_2=A}", draw=none, from=1-3, to=1-5]
	\arrow["{U_0}", from=3-2, to=1-2]
	\arrow["{\mathbb{C}}"', draw=none, from=3-1, to=3-3]
	\arrow["{\cH_{f^{-1}(1)}=A}"', draw=none, from=3-2, to=3-4]
	\arrow["{\underbrace{\ \ \ \ \ \ \ \ \ \ \ \ \ \ \ \ }_{\cH_{f^{-1}(2)}=A\ot A}}"', draw=none, from=3-3, to=3-6],
        \end{tikzcd}
    \end{equation}
    where we use $\circ$ to depict the empty subset of the lattice whose associated Hilbert space is $\C$.
\end{definition}
Now we are ready to define renormalization fixed-point.
\begin{definition}
    We call {$\{m_{n}\}$} a {\emph{fixed-point local generator (of renormalization)}, if $m_1=\id_A$ and for any two $(f,\{U_j\})$ and $(g,\{W_i\})$ generated by $\{m_{n}\}$, their composition $(gf, W_i(\otimes_{j\in g^{-1}(i)}U_j))$ is still generated by $\{m_{n}\}$. } 
\end{definition}
\begin{theorem}\label{thm.Fro}
    Given a fixed-point local generator $\{m_{n}\}$, $m:=m_2$ is an associative binary operation on $A$, and $\eta:=m_0$ is the unit of $m$. $(A,m,\eta)$ is an isometric associative unital algebra, $(A,m^\dag,\eta^\dag)$ is an associative unital coalgebra, and $(A,m,\eta,m^\dag,\eta^\dag)$ is a Frobenius algebra. Note that here algebra all means algebra object in the unitary fusion category $\cC$.
\end{theorem}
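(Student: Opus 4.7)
The plan is to extract each piece of the Frobenius algebra structure from the closure-under-composition condition defining a fixed-point local generator, with the Frobenius axiom being the main obstacle.

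First, I would derive associativity of $m$ directly from compositional closure. Consider the lattice $\BL=\{1,2,3\}$ renormalized to a single target site via an intermediate two-site lattice $\{a,b\}$. Two order-preserving factorizations are possible: $f^{-1}(a)=\{1,2\},\ f^{-1}(b)=\{3\}$, giving composite $m_2(m_2\otimes m_1)=m(m\otimes\id_A)$; and $f^{-1}(a)=\{1\},\ f^{-1}(b)=\{2,3\}$, giving $m_2(m_1\otimes m_2)=m(\id_A\otimes m)$. The fixed-point condition forces both composites to be generated by $\{m_n\}$, so both equal $m_3$; hence $m(m\otimes\id_A)=m(\id_A\otimes m)$. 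The unit axiom is obtained analogously by factoring a single-site lattice through $\{a,b\}$ with one empty preimage, forcing $m(\eta\otimes\id_A)=m(\id_A\otimes\eta)=m_1=\id_A$. Each $m_n$ is an intertwiner by Definition~\ref{def.renorm}, so $(A,m,\eta)$ is an associative unital algebra object in $\cC$.

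The partial isometry (``isometric'') property of $m$ is built into the definition of Hilbert space renormalization. Taking daggers of the associativity and unit identities immediately yields coassociativity $(m^\dag\otimes\id_A)m^\dag=(\id_A\otimes m^\dag)m^\dag$ and counit $(\eta^\dag\otimes\id_A)m^\dag=(\id_A\otimes\eta^\dag)m^\dag=\id_A$, so $(A,m^\dag,\eta^\dag)$ is a coassociative counital coalgebra in $\cC$.

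The main obstacle is the Frobenius condition $(m\otimes\id_A)(\id_A\otimes m^\dag)=m^\dag m=(\id_A\otimes m)(m^\dag\otimes\id_A)$, which does not follow from the compositional identities alone. I would first upgrade the partial isometry to $mm^\dag=\id_A$: this must hold for a genuine fixed-point, because otherwise iterating the renormalization would strictly shrink the local Hilbert space, contradicting stability. With $mm^\dag=\id_A$, associativity, and the unit axiom in hand, $(A,m,\eta)$ is a Q-system in the unitary fusion category $\cC$, for which the Frobenius relation is standard. Concretely, using associativity,
\[
m\circ\bigl[(m\otimes\id_A)(\id_A\otimes m^\dag)\bigr]=m(\id_A\otimes m)(\id_A\otimes m^\dag)=m(\id_A\otimes mm^\dag)=m,
\]
while likewise $m\circ(m^\dag m)=(mm^\dag)m=m$. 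Both $(m\otimes\id_A)(\id_A\otimes m^\dag)$ and $m^\dag m$ are self-adjoint $A$-bimodule endomorphisms of $A\otimes A$ annihilating $\ker m$ and acting as the identity on $\operatorname{Im}m^\dag$; by uniqueness of such a projection in the semisimple category $\cC$ they must coincide. The mirrored argument, using $m(m\otimes\id_A)=m(\id_A\otimes m)$ symmetrically, handles the other Frobenius equation, completing the proof.
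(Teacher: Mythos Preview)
Your derivation of associativity and the unit axiom from compositional closure matches the paper's. However, there are two genuine gaps.

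First, the justification for $mm^\dag=\id_A$ via ``otherwise iteration would shrink the Hilbert space'' is not an argument. The paper's reasoning is direct: the unit axiom $m(\id_A\otimes\eta)=\id_A$ forces $m$ to be an epimorphism, so $\Ima m=A$; a partial isometry with full image is a coisometry, i.e.\ $mm^\dag=\id_A$.

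Second, and more seriously, your Frobenius argument does not go through. The operator $T:=(m\otimes\id_A)(\id_A\otimes m^\dag)$ is \emph{not} known to be self-adjoint: its adjoint is $(\id_A\otimes m)(m^\dag\otimes\id_A)$, which is precisely the \emph{other} outer term in the Frobenius identity, so asserting $T^\dag=T$ is circular. Moreover, the relations you correctly derive, $m\circ T=m$ and $T\circ m^\dag=m^\dag$, only say that $T$ restricts to the identity on $\Ima m^\dag$ and \emph{preserves} $\ker m$; they do not say $T$ annihilates $\ker m$. So even granting self-adjointness, $T$ could equal $\id_{\Ima m^\dag}\oplus T'$ for an arbitrary self-adjoint $T'$ on $\ker m$, and there is no ``uniqueness of projection'' forcing $T'=0$.

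The paper closes this gap by a positivity trick. Set $\alpha:=(\id_A\otimes m)(m^\dag\otimes\id_A)-m^\dag m$. Using associativity, coassociativity and $mm^\dag=\id_A$, one computes
\[
\alpha^\dag\alpha=(m\otimes\id_A)(\id_A\otimes m^\dag)(\id_A\otimes m)(m^\dag\otimes\id_A)-m^\dag m,
\]
and then checks that $(m\otimes\id_A)(\id_A\otimes\alpha^\dag\alpha)(m^\dag\otimes\id_A)=0$. But this expression equals $\bigl[(\id_A\otimes\alpha)(m^\dag\otimes\id_A)\bigr]^\dag\bigl[(\id_A\otimes\alpha)(m^\dag\otimes\id_A)\bigr]$, so positivity in the unitary category forces $(\id_A\otimes\alpha)(m^\dag\otimes\id_A)=0$. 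Capping with $\eta^\dag$ on the first leg then gives $\alpha=0$. The mirrored identity follows by symmetry (or by taking adjoints). This is the standard Q-system argument; your sketch gestures at it but does not supply the key positivity step.
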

\begin{proof} 
    Recall the composition of Hilbert space renormalization in Eq. \eqref{eq.renorm}, we have the following equations on renormalizations:
    \begin{equation}
        \begin{tikzcd}
	& {\underset{\bullet}{A}} \\
	{\overset{\bullet}{A}} & {\overset{\bullet}{A}} \\
	{\overset{\circ}{\C}} & {\overset{\bullet}{A}}
	\arrow[from=2-2, to=1-2]
	\arrow[from=3-2, to=2-2]
	\arrow[from=2-1, to=1-2]
	\arrow[from=3-1, to=2-1]
    \end{tikzcd}
    \ \ \ \ =\ \ \ \ 
    \begin{tikzcd}
	{\underset{\bullet}{A}} \\
	{\textcolor{white}{\overset{\bullet}{A}}} \\
	{\overset{\bullet}{A}}
	\arrow[from=3-1, to=1-1]
    \end{tikzcd}
    \ \ \ \ =\ \ \ \ 
   \begin{tikzcd}
	{\underset{\bullet}{A}} \\
	{\overset{\bullet}{A}} & {\overset{\bullet}{A}} \\
	{\overset{\bullet}{A}} & {\overset{\circ}{\C}}
	\arrow[from=3-2, to=2-2]
	\arrow[from=3-1, to=2-1]
	\arrow[from=2-1, to=1-1]
	\arrow[from=2-2, to=1-1]
    \end{tikzcd},
    \end{equation}
    \begin{equation}
    \begin{tikzcd}
	& {\underset{\bullet}{A}} \\
	{\overset{\bullet}{A}} & {\overset{\bullet}{A}} \\
	{\overset{\bullet}{A}} & {\overset{\bullet}{A}} & {\overset{\bullet}{A}}
	\arrow[from=3-1, to=2-1]
	\arrow[from=2-1, to=1-2]
	\arrow[from=3-2, to=2-2]
	\arrow[from=2-2, to=1-2]
	\arrow[from=3-3, to=2-2]
    \end{tikzcd}
    =
   \begin{tikzcd}
	& {\underset{\bullet}{A}} \\
	& {\textcolor{white}{\overset{\bullet}{A}}} \\
	{\overset{\bullet}{A}} & {\overset{\bullet}{A}} & {\overset{\bullet}{A}}
	\arrow[from=3-1, to=1-2]
	\arrow[from=3-2, to=1-2]
	\arrow[from=3-3, to=1-2]
    \end{tikzcd}
    =
    \begin{tikzcd}
	& {\underset{\bullet}{A}} \\
	& {\overset{\bullet}{A}} & {\overset{\bullet}{A}} \\
	{\overset{\bullet}{A}} & {\overset{\bullet}{A}} & {\overset{\bullet}{A}}
	\arrow[from=3-2, to=2-2]
	\arrow[from=2-2, to=1-2]
	\arrow[from=2-3, to=1-2]
	\arrow[from=3-3, to=2-3]
	\arrow[from=3-1, to=2-2]
    \end{tikzcd}.
    \end{equation} 
    The above renormalizations are generated by $\{m_0,m_1,m_2\}$ as shown in Diagram \eqref{eq.genem}. Therefore, we obtain
    \begin{gather}
        \label{Eq.algunit} m_2(m_0\ot m_1)=m_1=\id_A= m_2(m_1\ot m_0),\\
        m_2(m_1\ot m_2)=m_3= m_2(m_2\ot m_1),
    \end{gather}
    which means that $(A,m=m_2,\eta=m_0)$ is an unital associative algebra. Taking Hermitian conjugate one has that $(A,m^\dag,\eta^\dag)$ is a unital associative coalgebra. By definition $m_n$ are partial isometries, while unitality implies that $m$ is an epimorphism\footnote{$f_1m=f_2m\implies f_1m(\id_A\ot \eta)=f_2m(\id_A\ot \eta)\implies f_1=f_2$} and $\Ima m=A$, thus 
    \begin{equation}\label{eq.mmdag}
       \begin{tikzpicture}[baseline=(current bounding box.center),>={Triangle[open,width=0pt 20]},scale=0.6]
      \draw (0,2)node[below]{$m$} --(0,3)node[above]{$A$} ;
      \draw (-1,1)node[left]{$A$} -- (0,2) -- (1,1)node[right]{$A$};
      \draw (1,1)node[right]{$A$} -- (0,0)-- (-1,1)node[left]{$A$}; 
      \draw (0,-1)node[below]{$A$} -- (0,0)node[above]{$m^\dag$} ;\end{tikzpicture}
      =
      \begin{tikzpicture}[baseline=(current bounding box.center),>={Triangle[open,width=0pt 20]},scale=0.6]
      \draw (0,3)node[above]{$A$} -- (0,-1)node[below]{$A$} ;\end{tikzpicture},
    \end{equation}
    i.e., $mm^\dag=\id_A$. To show the Frobenius condition
    \begin{equation}\label{eq.Fro}
    \itk{
      \draw (1,1)node[right]{$A$} -- (1,-2)node[right]{$A$}; 
      \draw (-1,1)node[left]{$A$} -- (-1,-2)node[left]{$A$};
      \draw (1,0)node[right]{$m$} -- (-1,-1)node[left]{$m^\dag$};}
      =
        \itk{
      \draw (1,1)node[right]{$A$} -- (0,0)-- (-1,1)node[left]{$A$}; 
      \draw (0,-1)node[below]{$m$} -- (0,-0.5)node[right]{$A$}  --(0,0)node[above]{$m^\dag$} ;
      \draw (-1,-2)node[left]{$A$} -- (0,-1) -- (1,-2)node[right]{$A$};}
      =
      \itk{
      \draw (1,1)node[right]{$A$} -- (1,-2)node[right]{$A$}; 
      \draw (-1,1)node[left]{$A$} -- (-1,-2)node[left]{$A$};
      \draw (-1,0)node[left]{$m$} -- (1,-1)node[right]{$m^\dag$};},
    \end{equation}
    i.e., $ (\id_A\ot m)(m^\dag\ot \id_A)=m^\dag m=(m\ot \id_A)(\id_A \ot m^\dag)$, we introduce an auxiliary operator $\alpha=(\id_A\ot m)(m^\dag\ot \id_A)-m^\dag m$ and prove $\alpha=0$. Direct calculation shows that \begin{equation} \alpha^\dag \alpha=(m\ot \id_A)(\id_A \ot m^\dag)(\id_A\ot m)(m^\dag\ot \id_A)-m^\dag m,  \end{equation} and \begin{equation}
    0=(m\ot \id_A)(\id_A\ot \alpha^\dag\alpha)(m^\dag\ot\id_A)=((\id_A\ot \alpha)(m^\dag\ot\id_A))^\dag (\id_A\ot \alpha)(m^\dag\ot\id_A).\end{equation}
    Then by positive definiteness we conclude $(\id_A\ot \alpha)(m^\dag\ot\id_A)=0$ and thus \begin{equation} \alpha=(\eta^\dag\ot \id_{A\ot A})(\id_A\ot \alpha)(m^\dag\ot\id_A)=0. \end{equation}
\end{proof}
\begin{corollary}\label{coro.iso}
    For $n\geq 3$, $m_n$ is just the $n$-ary multiplication, generated by (any sequence of) the binary multiplication $m=m_2$:
    \begin{align}
    m_n&=m(m\ot\id_A)(m\ot\id_{A\ot A})\cdots(m\ot \id_{A^{\ot (n-2)}})
    \nonumber\\
    &=m(m\ot\id_A)(m\ot\id_{A\ot A})\cdots(\id_A\ot m\ot \id_{A^{\ot (n-3)}})
    \nonumber\\&=\cdots 
    \nonumber\\&=m(\id_A\ot m)(\id_{A\ot A}\ot m)\cdots(\id_{A^{\ot (n-2)}}\ot m).
    \end{align} Therefore, for any $n\geq 1$, $m_n$ is an epimorphism and $\Ima m_n=A$. In particular, we have $m_n m_n^\dag=\id_A$.
\end{corollary}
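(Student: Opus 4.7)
The plan is to derive all of the stated equalities from the fixed-point property of the local generator, using associativity of $m$, and then deduce the epimorphism and partial-isometry consequences. Concretely, for any $n\geq 3$, consider the Hilbert space renormalization that first groups sites $1,2$ into a single site (using $m_2=m$, with $m_1=\id_A$ on all remaining sites) and then groups the resulting $n-1$ sites into one (using $m_{n-1}$). By definition, both constituent renormalizations are generated by $\{m_k\}$, so by the fixed-point hypothesis the composition is also generated by $\{m_k\}$. That composition groups $n$ sites into one in a single step, and therefore equals the renormalization whose generator is $m_n$. This yields the recursion
\begin{equation}
    m_n = m_{n-1}\,(m\otimes \id_{A^{\otimes (n-2)}}).
\end{equation}
Applying the same argument while choosing the initial two adjacent sites to be grouped at any other position gives $m_n = m_{n-1}(\id_{A^{\otimes k}}\otimes m\otimes \id_{A^{\otimes (n-2-k)}})$ for any valid $k$.

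Next, I would run an induction on $n$. The base case $n=2$ is trivial. For the inductive step, substitute the inductive hypothesis for $m_{n-1}$ into the recursion above and simplify using associativity of $m$ proved in Theorem~\ref{thm.Fro}. Each of the listed expressions in the statement corresponds to a particular order of applying the binary multiplications, and each such order can be obtained by repeatedly applying the recursion with a suitable choice of which pair of adjacent sites to merge first; associativity guarantees all these expressions are equal. This covers the first assertion.

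For the epimorphism claim, recall from the proof of Theorem~\ref{thm.Fro} (specifically Equation~\eqref{eq.mmdag}) that $mm^\dag=\id_A$, which makes $m$ a split epimorphism. Tensoring with identities preserves this, and composing split epimorphisms produces split epimorphisms. By the formulas just established, $m_n$ is a finite composition of such maps, hence $m_n$ is a (split) epimorphism and $\Ima m_n = A$. Finally, since $m_n$ is a partial isometry by hypothesis, condition (6) of Definition~\ref{def.piso} tells us $m_n m_n^\dag$ is a projection, and the image of this projection equals $\Ima m_n = A$; a projection onto the whole space is the identity, so $m_n m_n^\dag = \id_A$.

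The main obstacle is merely bookkeeping: one must verify that the various intermediate renormalizations invoked really are generated by $\{m_k\}$ in the precise sense of the definition (i.e., that every preimage $f^{-1}(i)$ is handled by the appropriate $m_{|f^{-1}(i)|}$), and that the composition formula~\eqref{eq.renorm} delivers exactly the iterated binary multiplications one expects. Aside from this combinatorial check, the proof is a direct application of the algebra structure established in Theorem~\ref{thm.Fro}.
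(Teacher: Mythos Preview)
Your proposal is correct and follows the same line the paper implicitly takes: the paper states this as a corollary without a separate proof, relying on the composition-of-renormalizations identities already derived in the proof of Theorem~\ref{thm.Fro} (in particular $m_2(m_1\ot m_2)=m_3=m_2(m_2\ot m_1)$ and $mm^\dag=\id_A$), which you have correctly generalized by induction and then used together with the partial-isometry hypothesis to conclude $m_n m_n^\dag=\id_A$.
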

\begin{definition}\label{def.fixedpoint}
    $(\Z,A,P)$ is called a (translation invariant local) \emph{fixed-point} lattice model with respect to a fixed-point local generator $\{m_{n}\}$ if any \emph{proper} Hilbert space renormalization generated by $\{m_n\}$ is a system renormalization between $(\Z,A,P)$ and itself. {Spelling it out (recall Definition \ref{def.renorm} and Corollary \ref{coro.iso}), $(\Z,A,P)$ is invariant under any renormalization $(f:\Z\to \Z,\{U_i=m_{|f^{-1}(i)|}\})$, where $f$ is surjective, $UU^\dag=\ot_i U_iU^\dag_i=\id$ and Eq.~\eqref{eq.effH} becomes}
    \begin{align}
         U H U^\dag =H+\Delta E.
    \end{align} 
\end{definition}
In the following a Frobenius algebra is always assumed to be isometric.

\begin{remark}
    For readers familiar with the abstract notions, we have established a connection between the renormalization process in 1+1D and the operad theory. We believe that renormalization fixed-points always have certain (weakly) associative algebraic structures.
\end{remark}

\subsection{Commuting projector Hamiltonian and ground states}
\begin{theorem}
    Given a Frobenius algebra $(A,m,\eta)$ in $\cC$,  $(\Z,A,-m^\dag m)$ is a fixed-point lattice model. Moreover, $m^\dag m$ is a projector and commutes with each other when acting on different sites, i.e., $(\Z,A,-m^\dag m)$ is a commuting projector fixed-point lattice model\footnote{The commuting-projector model is automatically gapped.}.
\end{theorem}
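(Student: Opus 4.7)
The plan is to split the theorem into three claims: (i) $m^\dag m$ is idempotent, (ii) distinct $(m^\dag m)_i$ commute, and (iii) $(\Z,A,-m^\dag m)$ is invariant under every proper renormalization generated by $\{m_n\}$. Claim (i) is immediate from Equation~\eqref{eq.mmdag}: since $mm^\dag=\id_A$, one has $(m^\dag m)^2=m^\dag(mm^\dag)m=m^\dag m$. For (ii), only the overlapping pair $(m^\dag m)_i,(m^\dag m)_{i+1}$ requires work, as non-adjacent terms act on disjoint tensor factors. I would expand both orderings of the product on $A^{\ot 3}$, recognize the middle zig-zag $(m\ot \id)(\id\ot m^\dag)$ (or its mirror image) as $m^\dag m$ via the Frobenius relation~\eqref{eq.Fro}, and then use coassociativity $(m^\dag\ot\id)m^\dag=m_3^\dag$ together with associativity $m(\id\ot m)=m_3$ to show that both $(m^\dag m)_i(m^\dag m)_{i+1}$ and $(m^\dag m)_{i+1}(m^\dag m)_i$ collapse to the same expression $m_3^\dag m_3$.

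For (iii), I would take a proper renormalization $(f,\{U_i=m_{|f^{-1}(i)|}\})$ with $U=\bigotimes_i U_i$, and decompose $UHU^\dag=-\sum_i U(m^\dag m)_i U^\dag$ bond by bond, according to whether bond $(i,i+1)$ lies inside a single block $f^{-1}(k)$ or straddles two adjacent blocks. For an internal bond, I write $m_{n_k}$ with its last binary multiplication $m$ placed precisely at the positions of that bond; the relation $m\cdot m^\dag m=(mm^\dag)m=m$ yields $m_{n_k}(m^\dag m)_{i,i+1}=m_{n_k}$, and combined with $m_{n_k}m_{n_k}^\dag=\id_A$ from Corollary~\ref{coro.iso}, the conjugate $U(m^\dag m)_i U^\dag$ collapses to $\id_{\cH_{\BL'}}$, contributing only to the energy shift $\Delta E$.

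For a bond straddling blocks $k$ and $k+1$ (with $f(i)=k,\, f(i+1)=k+1$), I factor $m_{n_k}=m(m_{n_k-1}\ot\id_A)$ and $m_{n_{k+1}}=m(\id_A\ot m_{n_{k+1}-1})$, uniformly in $n_k,n_{k+1}\ge 1$ using $m_0=\eta$. The bulk pieces $m_{n_k-1},m_{n_{k+1}-1}$ commute past the straddling $(m^\dag m)_{i,i+1}$ and cancel against their adjoints via $m_n m_n^\dag=\id_A$, leaving the residual operator $(m\ot m)(\id_A\ot m^\dag m\ot\id_A)(m^\dag \ot m^\dag)$ on the two coarse sites. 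The hard part will be identifying this four-legged residual as $m^\dag m$; my plan is to apply Frobenius stepwise, first reducing $(\id\ot m\ot\id)(m^\dag\ot m^\dag)$ to $(m^\dag m\ot\id)(\id\ot m^\dag)$ and then to $m_3^\dag m$ via coassociativity, and then using the dual identity $(m\ot\id)m_3^\dag=(mm^\dag\ot\id)m^\dag=m^\dag$ together with a symmetric step on the right leg to collapse the whole expression to $m^\dag m$. Combining the two cases, $UHU^\dag=-\sum_k(m^\dag m)_k+\Delta E$ on the coarse lattice, which verifies Definition~\ref{def.fixedpoint}.
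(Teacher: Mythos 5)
Your plan is sound and both (i) and (ii) are essentially what the paper does; (i) is identical, and (ii) is a mild variant (the paper passes through two intermediate zig-zag diagrams via Frobenius rather than identifying both orderings as $m_3^\dag m_3$, but your identification is correct and arguably cleaner). Where you diverge substantively is (iii): the paper verifies invariance only for the one-block-at-a-time renormalizations $f_{j,n}$, computes $n=1,2$ explicitly, and appeals to ``similar calculations'' for larger $n$, relying on composition of renormalizations; your bond-by-bond decomposition of $UHU^\dag$ into internal and straddling bonds handles a general proper renormalization directly and makes the mechanism of $\Delta E$ transparent. This is a genuinely more systematic argument and worth having.

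There is, however, a gap in your ``uniform'' treatment of the straddling case. You factor $m_{n_k}=m(m_{n_k-1}\ot\id_A)$ with the convention $m_0=\eta$, and then cancel the bulk pieces ``via $m_n m_n^\dag=\id_A$.'' That identity (Corollary~\ref{coro.iso}) holds for $n\ge 1$; for $n=0$ one has $m_0 m_0^\dag=\eta\eta^\dag$, which is a rank-one projection onto $\Ima\eta$, not $\id_A$. So for $n_k=1$ (equivalently $n_{k+1}=1$) the cancellation step fails, and the actual conjugated operator is not $(m\ot m)(\id\ot m^\dag m\ot\id)(m^\dag\ot m^\dag)$ but rather the simpler $(\id_A\ot m)(m^\dag m\ot\id_A)(\id_A\ot m^\dag)$ (or its mirror). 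Both expressions do equal $m^\dag m$ -- for the latter this is a short computation using associativity, $mm^\dag=\id_A$, and Frobenius -- so your conclusion is unharmed, but the intermediate claim ``the residual is uniformly $(m\ot m)(\id\ot m^\dag m\ot\id)(m^\dag\ot m^\dag)$'' is literally false when either block is a singleton. You should either treat $n_k=1$ (where $U_k=\id_A$ and there is simply nothing to factor on that side) as a separate, easier case, or restrict the cancellation argument to $n_k,n_{k+1}\ge 2$ and observe the degenerate cases reduce trivially. With that repair the proof is complete. Your reduction of the four-legged residual to $m^\dag m$ via Frobenius, coassociativity, and $mm^\dag=\id_A$ is correct as stated.
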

\begin{proof}
    From the Frobenius condition \eqref{eq.Fro} and the associativity of $m^\dag$, we have
    \begin{equation}
        \begin{tikzpicture}[baseline=(current bounding box.center),>={Triangle[open,width=0pt 20]},scale=0.6]
      \draw (0,0) -- (-1,1) -- (-1,3)node[above]{$A$}; 
      \draw (0,-1)node[below]{$m$} -- (0,-0.5)node[right]{$A$} -- (0,0)node[above]{$m^\dag$} ;
      \draw (-1,-2)node[below]{$A$} -- (0,-1) -- (1,-2)node[below]{$A$};
      \draw (2,3)node[above]{$A$} -- (1,2) -- (0,3)node[above]{$A$}; 
      \draw (1,1)node[below]{$m$} -- (1,1.5)node[right]{$A$} -- (1,2)node[above]{$m^\dag$} ;
      \draw (0,0) -- (.5,.5) -- (1,1) -- (2,0) -- (2,-2)node[below]{$A$};
      \end{tikzpicture}
      =
      \begin{tikzpicture}[baseline=(current bounding box.center),>={Triangle[open,width=0pt 20]},scale=0.6]
      \draw (-1,-2)node[below]{$A$} --  (-1,3)node[above]{$A$}; 
      \draw (1,-2)node[below]{$A$} --  (1,3)node[above]{$A$}; 
      \draw (3,-2)node[below]{$A$} --  (3,3)node[above]{$A$}; 
      \draw (-1,0)node[left]{$m$} -- (1,-1)node[right]{$m^\dag$};
      \draw (3,2)node[right]{$m$} -- (1,1)node[left]{$m^\dag$};
     \end{tikzpicture}
     =
      \begin{tikzpicture}[baseline=(current bounding box.center),>={Triangle[open,width=0pt 20]},scale=0.6]
      \draw (-1,-2)node[below]{$A$} --  (-1,3)node[above]{$A$}; 
      \draw (1,-2)node[below]{$A$} --  (1,3)node[above]{$A$}; 
      \draw (3,-2)node[below]{$A$} --  (3,3)node[above]{$A$}; 
      \draw (-1,2)node[left]{$m$} -- (1,1)node[right]{$m^\dag$};
      \draw (3,0)node[right]{$m$} -- (1,-1)node[left]{$m^\dag$};
     \end{tikzpicture}
     =
     \begin{tikzpicture}[baseline=(current bounding box.center),>={Triangle[open,width=0pt 20]},scale=0.6]
      \draw (1,3)node[above]{$A$} -- (0,2) -- (-1,3)node[above]{$A$}; 
      \draw (0,1)node[below]{$m$} -- (0,1.5)node[right]{$A$} -- (0,2)node[above]{$m^\dag$} ;
      \draw (-1,-2)node[below]{$A$} -- (-1,0) -- (0,1) -- (1,0);
      \draw (2,3)node[above]{$A$} -- (2,1) -- (1,0) -- (0,1); 
      \draw (1,-1)node[below]{$m$} -- (1,-.5)node[right]{$A$} -- (1,0)node[above]{$m^\dag$} ;
      \draw (0,-2)node[below]{$A$} -- (.5,-1.5) -- (1,-1) -- (2,-2)node[below]{$A$};
      \end{tikzpicture},
     \end{equation}
     i.e., $(\id_A\ot m^\dag m)(m^\dag m\ot\id_A)=(m^\dag m\ot\id_A)(\id_A\ot m^\dag m)$, which implies $m^\dag m$ on neighboring sites commutes. And from Eq. \eqref{eq.mmdag}, we have
     \begin{equation}
     \begin{tikzpicture}[baseline=(current bounding box.center),>={Triangle[open,width=0pt 20]},scale=0.6]
     \draw (1,4)node[right]{$A$} -- (0,3)-- (-1,4)node[left]{$A$}; 
      \draw (0,2)node[below]{$m$} -- (0,2.5)node[right]{$A$}  --(0,3)node[above]{$m^\dag$} ;
      \draw (-1,1)node[left]{$A$} -- (0,2) -- (1,1)node[right]{$A$};
      \draw (1,1)node[right]{$A$} -- (0,0)-- (-1,1)node[left]{$A$}; 
      \draw (0,-1)node[below]{$m$} -- (0,-0.5)node[right]{$A$}  --(0,0)node[above]{$m^\dag$} ;
      \draw (-1,-2)node[left]{$A$} -- (0,-1) -- (1,-2)node[right]{$A$};
     \end{tikzpicture}
     =
     \begin{tikzpicture}[baseline=(current bounding box.center),>={Triangle[open,width=0pt 20]},scale=0.6]
      \draw (1,1)node[right]{$A$} -- (0,0)-- (-1,1)node[left]{$A$}; 
      \draw (0,-1)node[below]{$m$} -- (0,-0.5)node[right]{$A$}  --(0,0)node[above]{$m^\dag$} ;
      \draw (-1,-2)node[left]{$A$} -- (0,-1) -- (1,-2)node[right]{$A$};
     \end{tikzpicture},
     \end{equation}
     i.e., $m^\dag m m^\dag m=m^\dag m$, which implies that $m^\dag m$ is a projector.
     
     Next, we show $(\Z,A,-m^\dag m)$ is a fixed-point model. Recall Definitions \ref{def.fixedpoint} and \ref{def.renorm}. Let $j\in\Z$ be a site. Consider a special lattice renormalization 
     \begin{equation} f_{j,n} (i)=\begin{cases}
        i, i<j,\\
        j, j\leq i<j+n,\\
        i-(n-1), i\geq j+n.
     \end{cases} \end{equation}
     $f^{-1}_{j,n} (j)=\{j,j+1,...,j+n-1\}$, $U_{j}=m_n$ and $U_{i}=\id_A,\ \forall i\neq j$. $(\Z,A,-m^\dag m)$ is a fixed-point model with respect to fixed-point local generators $\{m_n\}$ if for any $j\in\Z$ and $n\geq 1$ we have
     \begin{equation}\label{proof.fixed}
     m_n H m_n^\dag= H+\Delta E,
     \end{equation}
     where the Hamiltonian is $H=-\sum_i (m^\dag m)_i$.
     We check for
     \begin{enumerate}
         \item $n=1$. Obviously Eq.~\eqref{proof.fixed} is satisfied for any $j\in\Z$ and $\Delta E=0$.
         \item $n=2$ and $m=m_2$. We compute each term $(m^\dag m)_i\ot\id$ in the Hamiltonian wrapped by $m_2, m_2^\dag$\footnote{We distinguish $m$ and $m_2$ here in order to emphasize $m^\dag m$ is a term in Hamiltonian while $m_2$ is a Hilbert space renormalization local generator.}. For term $i=j-1$, by Frobenius condition \eqref{eq.Fro},
         \begin{equation}
         m_2 (m^\dag m)_{j-1} m_2^\dag=
         \begin{tikzpicture}[baseline=(current bounding box.center),>={Triangle[open,width=0pt 20]},scale=0.6]
      \draw (1,1)node[right]{$A$} -- (0,0)-- (-1,1)node[left]{$A$}; 
      \draw (0,-1)node[below]{$m$} -- (0,-0.5)node[right]{$A$}  --(0,0)node[above]{$m^\dag$} ;
      \draw (-1,-2)node[left]{$A$} -- (0,-1) -- (1,-2)node[right]{$A$};
      \draw (3,-2) -- (3,1);
      \draw (1,-2)node[right]{$A$} -- (2,-3)node[above]{$m_2^\dag$}-- (3,-2)node[right]{$A$};
      \draw (2,-3) -- (2,-4)node[right]{$A$};
      \draw (1,1)node[right]{$A$} -- (2,2)node[below]{$m_2$}-- (3,1)node[right]{$A$};
      \draw (2,2) -- (2,3)node[right]{$A$};
     \end{tikzpicture}
     =
     \begin{tikzpicture}[baseline=(current bounding box.center),>={Triangle[open,width=0pt 20]},scale=0.6]
      \draw (1,1)node[right]{$A$} -- (0,0)-- (-1,1)node[left]{$A$}; 
      \draw (0,-1)node[below]{$m$} -- (0,-0.5)node[right]{$A$}  --(0,0)node[above]{$m^\dag$} ;
      \draw (-1,-2)node[left]{$A$} -- (0,-1) -- (1,-2)node[right]{$A$};
     \end{tikzpicture}=(m^\dag m)_{j-1}.
         \end{equation}
     For term $i=j$, 
         \begin{equation}
         m_2 (m^\dag m)_{j} m_2^\dag=
         \begin{tikzpicture}[baseline=(current bounding box.center),>={Triangle[open,width=0pt 20]},scale=0.6]
     \draw (0,-3) -- (0,-4)node[right]{$A$};
     \draw (1,-2)node[right]{$A$} -- (0,-3)node[above]{$m_2^\dag$}-- (-1,-2)node[left]{$A$}; 
      \draw (0,2)node[below]{$m_2$} -- (0,3)node[right]{$A$};
      \draw (-1,1)node[left]{$A$} -- (0,2) -- (1,1)node[right]{$A$};
      \draw (1,1)node[right]{$A$} -- (0,0)-- (-1,1)node[left]{$A$}; 
      \draw (0,-1)node[below]{$m$} -- (0,-0.5)node[right]{$A$}  --(0,0)node[above]{$m^\dag$} ;
      \draw (-1,-2)node[left]{$A$} -- (0,-1) -- (1,-2)node[right]{$A$};
     \end{tikzpicture}=\quad
     \begin{tikzpicture}[baseline=(current bounding box.center),>={Triangle[open,width=0pt 20]},scale=0.6]
     \draw (0,1)node[right]{$A$} -- (0,-2)node[right]{$A$};
     \end{tikzpicture}=(\id_A)_{j}.
         \end{equation}
         For term $i=j+1$,
         \begin{equation}
         m_2 (m^\dag m)_{j+1} m_2^\dag=
         \begin{tikzpicture}[baseline=(current bounding box.center),>={Triangle[open,width=0pt 20]},scale=0.6]
      \draw (1,1)node[right]{$A$} -- (0,0)-- (-1,1)node[left]{$A$}; 
      \draw (0,-1)node[below]{$m$} -- (0,-0.5)node[right]{$A$}  --(0,0)node[above]{$m^\dag$} ;
      \draw (-1,-2)node[left]{$A$} -- (0,-1) -- (1,-2)node[right]{$A$};
      \draw (-3,-2) -- (-3,1);
      \draw (-1,-2)node[left]{$A$} -- (-2,-3)node[above]{$m_2^\dag$}-- (-3,-2)node[left]{$A$};
      \draw (-2,-3) -- (-2,-4)node[left]{$A$};
      \draw (-1,1)node[left]{$A$} -- (-2,2)node[below]{$m_2$}-- (-3,1)node[left]{$A$};
      \draw (-2,2) -- (-2,3)node[left]{$A$};
     \end{tikzpicture}
     =
     \begin{tikzpicture}[baseline=(current bounding box.center),>={Triangle[open,width=0pt 20]},scale=0.6]
      \draw (1,1)node[right]{$A$} -- (0,0)-- (-1,1)node[left]{$A$}; 
      \draw (0,-1)node[below]{$m$} -- (0,-0.5)node[right]{$A$}  --(0,0)node[above]{$m^\dag$} ;
      \draw (-1,-2)node[left]{$A$} -- (0,-1) -- (1,-2)node[right]{$A$};
     \end{tikzpicture}=(m^\dag m)_{j}.
         \end{equation}
     All other terms obviously commute with $m_2$. Therefore, $m_2 H m_2^\dag=H-1$.
     \item $n\geq 3$. By similar calculations, we have $m_n H m_n^\dag=H-(n-1)$.
     \end{enumerate}
\end{proof}
\begin{remark}
    Based on this example we explain why in Definition~\ref{def.fixedpoint} we only require the fixed-point to be invariant under \emph{proper} renormalizations. Consider the non-proper renormalization $m_0$, which adds an extra site (in the state of the unit of the algebra) to the model. We may try to include more interactions, such as $m_n^\dag m_n$. Straightforward calculation shows that the interaction $m^\dag_n m_n$ overlapping with this extra site gets renormalized to the interaction $m^\dag_{n-1} m_{n-1}$. Moreover, to preserve translation invariance, we necessarily needs to add extra sites by non-proper renormalizations in a translation invariant way. Such  fixed-point in a stronger sense is possible to be defined, but greatly complicates the analysis. We thus focus on the nearest neighbour interactions which is at fixed-point in a weaker sense (only invariant under proper renormalizations).
\end{remark}

\begin{theorem}\label{thm.infigs}
    The ground state subspace (of total Hilbert space) of the commuting projector fixed-point lattice model $(\Z,A,-m^\dag m)$ (an infinite chain model with translation invariance) is $A$.
\end{theorem}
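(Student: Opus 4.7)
Since $m^\dag m$ is a projector and the $(m^\dag m)_i$'s on different bonds commute (as just proved in the preceding theorem), the ground state subspace of $H=-\sum_i (m^\dag m)_i$ equals the simultaneous $+1$-eigenspace $\bigcap_i \Ima((m^\dag m)_i)$. The task therefore reduces to identifying this intersection with $A$.

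The first step is to handle a finite segment $A^{\otimes n}$ with Hamiltonian $-\sum_{i=1}^{n-1}(m^\dag m)_i$ and show that the ground state subspace on this segment equals $\Ima(m_n^\dag)\cong A$. The inclusion $\Ima(m_n^\dag)\subseteq \bigcap_i\Ima((m^\dag m)_i)$ follows from the identity
\begin{equation}
(m^\dag m)_i\, m_n^\dag = m_n^\dag, \qquad 1\leq i\leq n-1,
\end{equation}
which I would prove by using Corollary~\ref{coro.iso} together with the coassociativity of $m^\dag$ (dual to associativity of $m$) to rewrite $m_n^\dag$ with an $m^\dag$ factor producing sites $(i,i+1)$, and then collapsing $m^\dag m\, m^\dag = m^\dag$ via $mm^\dag=\id_A$ from Eq.~\eqref{eq.mmdag}. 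For the reverse inclusion I would establish the operator identity
\begin{equation}
\prod_{i=1}^{n-1}(m^\dag m)_i \;=\; m_n^\dag m_n
\end{equation}
as endomorphisms of $A^{\otimes n}$, by induction on $n$: starting from $m^\dag m=m_2^\dag m_2$, the inductive step ``absorbs'' the rightmost bond projector into $m_{n-1}^\dag m_{n-1}\ot \id_A$ using the Frobenius relation \eqref{eq.Fro}. Because $m_n m_n^\dag=\id_A$ (Corollary~\ref{coro.iso}), $m_n^\dag m_n$ is a projector with image $m_n^\dag(A)\cong A$, giving the finite-chain conclusion.

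To pass to the infinite chain, I would argue that any ground state restricts on each finite window $[a,a+n-1]$ to an element of $\Ima(m_n^\dag)\cong A$, and that the label $m_n(|\psi\rangle|_{\text{window}})\in A$ is independent of the chosen window: enlarging the window from $n$ to $n+1$ via the factorization $m_{n+1}^\dag=(\id^{\ot(n-1)}\ot m^\dag)\,m_n^\dag$ reduces consistency to $mm^\dag=\id_A$. Conversely, any $a\in A$ produces a compatible family $\{m_n^\dag(a)\}_n$ of window-wise ground states and hence a ground state of the infinite chain, yielding the canonical identification of the ground state subspace with $A$.

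The main obstacle is the operator identity $\prod_i (m^\dag m)_i = m_n^\dag m_n$: diagrammatically one ``zig-zags'' a horizontal chain of bond projectors into a single Y-shaped $m_n^\dag m_n$ by repeated application of the Frobenius move \eqref{eq.Fro}, which is transparent in pictures but requires careful bookkeeping in the inductive step. Once this identity is in place, both the finite-chain identification and the passage to the thermodynamic limit follow in a routine manner from the ingredients listed above.
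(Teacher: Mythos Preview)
Your approach is essentially the same as the paper's: both identify the ground state subspace with $\Ima\prod_i(m^\dag m)_i$ and then use the Frobenius relation \eqref{eq.Fro} to collapse the chain of bond projectors into the single ``tree'' $m_n^\dag m_n$, whose image is $A$ by $m_n m_n^\dag=\id_A$. The paper performs this collapse in one graphical move rather than by induction, and it treats the infinite product rather informally; your explicit finite-segment analysis followed by the window-consistency argument for the thermodynamic limit is a welcome clarification of what the paper leaves implicit.
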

\begin{proof}
   The Hamiltonian is $H=-\sum_i (m^\dag m)_i$, where $(m^\dag m)_i$ is supported on $\{i,i+1\}$ and has the form $(m^\dag m)_i=m^\dag m\ot\id
    $ for $m^\dag m\in\End(A\ot A)$.
    Since all terms of $H$ are commuting projections, the ground states of $H$ are given by the common eigenstates of $(m^\dag m)_i$ for all $i$ with eigenvalue $+1$, i.e., the ground state subspace is
    \begin{align}
    \{ |\psi\ra\in \cH|\forall i,(m^\dag m)_{i}|\psi\ra=|\psi\ra\}=\{ |\psi\ra\in \cH|\prod_i (m^\dag m)_{i}|\psi\ra=|\psi\ra\}
    =\Ima \prod_i (m^\dag m)_{i},
    \end{align}
    where we note that the equation holds as eigenvalues of a projector can only be $0$ or $+1$. Graphically, from the Frobenius condition \eqref{eq.Fro}, we have
    \begin{equation}
       \prod_i (m^\dag m)_{i}= \begin{tikzpicture}[baseline=(current bounding box.center),>={Triangle[open,width=0pt 20]},scale=0.6]
      \draw (0,0) -- (-1,1) -- (-1,3.5)node[left]{$...$} --(-1,5)node[above]{$A$}; 
      \draw (0,-1)node[below]{$m$} -- (0,-0.5)node[right]{$A$} node[left]{$...\quad$} -- (0,0)node[above]{$m^\dag$} ;
      \draw (-1,-2)node[below]{$A$} -- (0,-1) -- (1,-2)node[below]{$A$};
      \draw (2,3) -- (1,2) -- (0,3) -- (0,5)node[above]{$A$}; 
      \draw (1,1)node[below]{$m$} -- (1,1.5)node[right]{$A$} -- (1,2)node[above]{$m^\dag$} ;
      \draw (0,0) -- (.5,.5) -- (1,1) -- (2,0) -- (2,-2)node[below]{$A$};
      \draw (3,5)node[above]{$A$} -- (2,4) -- (1,5)node[above]{$A$}; 
      \draw (2,3)node[below]{$m$} -- (2,3.5)node[right]{$A$} node[right]{$\quad ...$} -- (2,4)node[above]{$m^\dag$} ;
      \draw (1,2) -- (1.5,2.5) -- (2,3) -- (3,2) --(3,-0.5)node[right]{$...$} -- (3,-2)node[below]{$A$};
      \end{tikzpicture}
      =
       \begin{tikzpicture}[baseline=(current bounding box.center),>={Triangle[open,width=0pt 20]},scale=0.6]
      \draw (-1,-2)node[below]{$A$} -- (0,-1) -- (1,-2)node[below]{$A$};
      \draw (0,-1)node[below]{$m$} node[left]{$...\quad$} -- (1,0) -- (3,-2)node[below]{$A$};
      \draw (1,0)node[below]{$m$} -- (2,1) --(4,-1)node[right]{$\quad ...$}-- (5,-2)node[below]{$A$};
      \draw (2,1)node[below]{$m$} -- (2,1.5)node[right]{$A$} -- (2,2)node[above]{$m^\dag$};
      \draw (3,3)node[above]{$m^\dag$} -- (2,2) -- (0,4)node[left]{$...\quad$}-- (-1,5)node[above]{$A$};
      \draw (4,4)node[above]{$m^\dag$} node[right]{$\quad ...$} -- (3,3) -- (1,5)node[above]{$A$};
      \draw (5,5)node[above]{$A$} -- (4,4) -- (3,5)node[above]{$A$};
      \end{tikzpicture},
    \end{equation}
    where we see $\Ima \prod_i (m^\dag m)_{i}=A$ (we can check it satisfies Definition \ref{image}). 
\end{proof}

\subsection{Fixed-point boundary conditions}
Now we extend the above discussion to include boundaries of 1D lattice. Without losing generality, let's consider the special case that the lattice is $\BL=\N$.
\begin{definition}
    A system $(\N, \cH_\BK, H)$ is called a translation invariant local system \emph{with boundary condition} $(M,Q)$ if 
    \begin{itemize}
        \item $\cH_0=M$, $\cH_i=A$ for all $i\geq 1$;
        \item $H=Q+\sum_{i\geq 1}P_i$ where $P_i$ supported in $\{ i,\dots,i+n-1\}$, $P_i=P\ot \id$ for some $P\in \End(A^{\ot n})$, and $0$ is in the finite support of $Q$.
    \end{itemize} We may simply denote this system by $(\N,A,P,M,Q)$.
\end{definition}
\begin{definition}\label{def.bcg}
    Let $\{\rho_n:M\ot A^{\otimes n}\to M, m_{n}:A^{\otimes n}\to A\}_{n\in \N}$ be partial isometries. A Hilbert space renormalization $(f,\{U_i\})$, between the systems $(\N,A,P,M,Q)$ and $(\N,A,P',M,Q')$, is called \emph{(locally) generated} by $\{\rho_n, m_{n}\}$ if for any $i\in\N$, $f^{-1}(i)$ is finite, $f^{-1}(0)$ is nonempty, $U_0=\rho_{|f^{-1}(0)|-1},$ and $U_i=m_{|f^{-1}(i)|}$ for $i\geq 1$.

     Graphically, for example we sketch a Hilbert space renormalization generated by $\{\rho_1,m_2\}$ on the following local patch,
    \begin{equation}
        \itk{
        \draw (0,0) node[below]{$M$} -- (0,1) node[left]{$\rho_1$} -- (0,2) node[above]{$M$};
        \draw (1,0) node[below]{$A$} -- (0,1);
        \draw (2,0) node[below]{$A$} -- (2.5,1) node[left]{$m_2$} -- (2.5,2) node[above]{$A$};
        \draw (3,0) node[below]{$A$} -- (2.5,1);
        }
        :=
        \begin{tikzcd}
	{} & \bullet & {} & \bullet & {} \\
	\\
	{} & \bullet & \bullet & \bullet & \bullet & {}
	\arrow["{U_1}"', from=3-4, to=1-4]
	\arrow[from=3-5, to=1-4]
	\arrow["{U_0}"', from=3-2, to=1-2]
	\arrow[draw=none, from=3-1, to=3-3]
	\arrow["{\underbrace{\ \ \ \ \ \ \ \ \ \ \ \ \ \ \ \ }_{A\otimes A}}"', draw=none, from=3-3, to=3-6]
	\arrow[from=3-3, to=1-2]
	\arrow["{\underbrace{\ \ \ \ \ \ \ \ \ \ \ \ \ \ \ \ }_{M\otimes A}}"', draw=none, from=3-1, to=3-4]
	\arrow["{\cH_0=M}"', draw=none, from=1-3, to=1-1]
	\arrow["{\cH_1=A}", draw=none, from=1-3, to=1-5].
        \end{tikzcd}
    \end{equation}
\end{definition}

\begin{definition}\label{def.bcfixed}
    We call $\{\rho_n, m_{n}\}$ a \emph{fixed-point local generator (with boundary)}, if \begin{equation} \rho_0=\id_M,\quad m_1=\id_A \end{equation} and for any two $(f,\{U_j\})$ and $(g,\{W_i\})$ generated by $\{\rho_n,m_{n}\}$, their composition %$(gf, W_i(\otimes_{j\in g^{-1}(i)}U_j))$
    is still generated by $\{\rho_n, m_{n}\}$. $(\N,A,P,M,Q)$ is called a (translation invariant local) \emph{fixed-point} lattice model with respect to a fixed-point local generator $\{\rho_n,m_{n}\}$ if any proper Hilbert space renormalization generated by $\{\rho_n, m_n\}$ is a system renormalization between $(\N,A,P,M,Q)$ and itself. 
\end{definition}
\begin{theorem}
    Given a fixed-point local generator $\{\rho_n, m_{n}\}$ and $(A,m:=m_2,\eta:=m_0)$ is a Frobenius algebra as before, then $(M,\rho:=\rho_1)$ is a right $A$-module (Definition \ref{def.Amodule}).
\end{theorem}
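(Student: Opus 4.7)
The plan is to mirror the proof of Theorem~\ref{thm.Fro} in the boundary setting. Recall that a right $A$-module $(M,\rho)$ requires two axioms: the unit law $\rho(\id_M\ot\eta)=\id_M$ and associativity $\rho(\rho\ot\id_A)=\rho(\id_M\ot m)$. I will extract both from the closure of the fixed-point local generator $\{\rho_n,m_n\}$ under composition, exactly as associativity of $m$ and unitality of $\eta$ were extracted from $\{m_n\}$ in Theorem~\ref{thm.Fro}.

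First, for the unit law, I would consider two Hilbert space renormalizations on a local patch $M\ot A$. One collapses the boundary site by $\rho_1$, while the other first inserts a $\C$-slot between them filled by $m_0=\eta$ and then composes: acting trivially via $\rho_0=\id_M$ on $M$ and then merging by $\rho_1$. Both are generated by $\{\rho_n,m_n\}$, and since the composite is itself forced to be generated by $\{\rho_n,m_n\}$, comparing the associated boundary map with $\rho_0=\id_M$ gives
\begin{equation*}
\rho_1(\id_M\ot m_0)=\rho_0=\id_M,
\end{equation*}
which is precisely $\rho(\id_M\ot\eta)=\id_M$.

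For associativity, I would consider the renormalization that merges the three leftmost sites $M\ot A\ot A$ into the boundary site $M$. By definition this is $\rho_2$. However, by the closure property, this single renormalization can also be realized as two successive proper renormalizations in two different orders: either $\rho_1\circ(\rho_1\ot\id_A)$ (first absorb the first bulk site into the boundary, then absorb the next) or $\rho_1\circ(\id_M\ot m_2)$ (first fuse the two bulk sites into one via $m=m_2$, then absorb into the boundary). Both must equal $\rho_2$, giving
\begin{equation*}
\rho_1(\rho_1\ot\id_A)=\rho_2=\rho_1(\id_M\ot m),
\end{equation*}
which is the right $A$-action axiom.

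The main obstacle, if any, is purely bookkeeping: one must verify that these particular groupings of lattice sites are indeed allowed proper Hilbert space renormalizations generated by $\{\rho_n,m_n\}$ in the sense of Definition~\ref{def.bcg}, and that the composition rule \eqref{eq.renorm} correctly identifies the resulting boundary intertwiner with $\rho_n$ for the appropriate $n=|f^{-1}(0)|-1$. This is entirely analogous to the bulk case and needs no new idea. As a byproduct, the same argument shows that $\rho_n=\rho(\rho\ot\id_A)\cdots(\rho\ot\id_{A^{\ot(n-1)}})$ is the $n$-fold iterated action, so $\rho_n$ is automatically an epimorphism with $\rho_n\rho_n^\dag=\id_M$ (parallel to Corollary~\ref{coro.iso}), matching the partial-isometry assumption; this observation will be useful in the subsequent analysis of boundary Hamiltonians.
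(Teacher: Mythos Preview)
Your proposal is correct and follows exactly the same approach as the paper, which simply states that by composition of Hilbert space renormalizations one obtains $\rho(\rho\ot\id_A)=\rho(\id_M\ot m)$ and $\rho(\id_M\ot\eta)=\id_M$, mirroring Theorem~\ref{thm.Fro}. Your elaboration of the two compositions yielding $\rho_2$ and of the unit law via $\rho_1(\rho_0\ot m_0)=\rho_0$ is precisely what the paper leaves implicit; the extra remark about $\rho_n\rho_n^\dag=\id_M$ is a welcome addition parallel to Corollary~\ref{coro.iso}.
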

\begin{proof}
     Similar to the proof of Theorem \ref{thm.Fro}, by composition of Hilbert space renormalization, we have $\rho(\rho\ot\id_A)=\rho(\id_M\ot m)$ and $\rho(\id_M\ot \eta)=\id_M$.       
\end{proof}
\begin{remark}
    Similarly, one can consider a boundary condition on the right side of the chain, which is given by a left $A$-module.
\end{remark}
\begin{theorem}\label{thm.rlm}
Let $(A,m,\eta)$ be a Frobenius algebra, $(M,\rho:M\ot A\to M)$ a right $A$-module and $(N,\lambda:A\ot N\to N)$ a left $A$-module, where $\rho$ and $\lambda$ are both partially isometric $\rho\rho^\dag=\id_M,\lambda\lambda^\dag=\id_N$. We have the followings:
\begin{enumerate}
    \item[(a)] $(\rho\ot \id_A)(\id_M\ot m^\dag)=\rho^\dag\rho=(\id_M\ot m)(\rho^\dag \ot \id_A)$, graphically,
     \begin{equation}
     \itk{
      \draw (1,1)node[right]{$A$} -- (1,-2)node[right]{$A$}; 
      \draw (-1,1)node[left]{$M$} -- (-1,-2)node[left]{$M$};
      \draw (-1,0)node[left]{$\rho$} -- (0,-.5)node[below]{$A$} -- (1,-1)node[right]{$m^\dag$};}
      =
        \itk{
      \draw (-1,-1)node[left]{$\rho$} -- (0,-2)node[right]{$A$}; 
      \draw (-1,1)node[left]{$M$} -- (-1,-2)node[left]{$M$};
      \draw (0,1)node[right]{$A$} -- (-1,0)node[left]{$\rho^\dag$};}
      =
      \itk{
      \draw (1,1)node[right]{$A$} -- (1,-2)node[right]{$A$}; 
      \draw (-1,1)node[left]{$M$} -- (-1,-2)node[left]{$M$};
      \draw (1,0)node[right]{$m$} -- (0,-.5)node[below]{$A$} -- (-1,-1)node[left]{$\rho^\dag$};}.
    \end{equation}
    \item[(b)] $(m\ot \id_N)(\id_A\ot \lambda^\dag)=\lambda^\dag\lambda=(\id_A\ot \lambda)(m^\dag \ot \id_N)$, graphically,
     \begin{equation}
     \itk{
      \draw (1,1)node[right]{$N$} -- (1,-2)node[right]{$N$}; 
      \draw (-1,1)node[left]{$A$} -- (-1,-2)node[left]{$A$};
      \draw (-1,0)node[left]{$m$} -- (0,-.5)node[below]{$A$} -- (1,-1)node[right]{$\lambda^\dag$};}
      =
        \itk{
      \draw (0,-1)node[right]{$\lambda$} -- (-1,-2)node[left]{$A$}; 
      \draw (0,1)node[right]{$N$} -- (0,-2)node[right]{$N$};
      \draw (-1,1)node[left]{$A$} -- (0,0)node[right]{$\lambda^\dag$};}
      =
      \itk{
      \draw (1,1)node[right]{$N$} -- (1,-2)node[right]{$N$}; 
      \draw (-1,1)node[left]{$A$} -- (-1,-2)node[left]{$A$};
      \draw (1,0)node[right]{$\lambda$} -- (0,-.5)node[below]{$A$} -- (-1,-1)node[left]{$m^\dag$};}.
    \end{equation}
    \item[(c)] $(\rho\ot \id_N)(\id_M\ot \lambda^\dag)=(\id_M\ot \lambda)(\rho^\dag \ot \id_N) $\\ $=(\rho\ot \lambda)(\id_M\ot m^\dag\eta\ot \id_N)=(\id_M\ot \eta^\dag m \ot \id_N)(\rho^\dag \ot \lambda^\dag)$, graphically,
     \begin{equation}
     \itk{
      \draw (1,1)node[right]{$N$} -- (1,-2)node[right]{$N$}; 
      \draw (-1,1)node[left]{$M$} -- (-1,-2)node[left]{$M$};
      \draw (-1,0)node[left]{$\rho$} -- (0,-.5)node[below]{$A$} -- (1,-1)node[right]{$\lambda^\dag$};}
      =
      \itk{
      \draw (1,1)node[right]{$N$} -- (1,-2)node[right]{$N$}; 
      \draw (-1,1)node[left]{$M$} -- (-1,-2)node[left]{$M$};
      \draw (1,0)node[right]{$\lambda$} -- (0,-.5)node[below]{$A$} -- (-1,-1)node[left]{$\rho^\dag$};}
      =
      \itk{
      \draw (1,1)node[right]{$N$} -- (1,-2)node[right]{$N$}; 
      \draw (-1,1)node[left]{$M$} -- (-1,-2)node[left]{$M$};
      \draw (-1,0)node[left]{$\rho$} -- (-.5,-.25)node[below]{$A$} -- (0,-.5)node[above]{$m^\dag$} -- (.5,-.25)node[below]{$A$} -- (1,0)node[right]{$\lambda$};
      \draw (0,-.5) -- (0,-1)node[right]{$A$} -- (0,-1.45);
      \draw (0,-1.5) circle (2pt) node[right]{$\eta$};}
      =
      \itk{
      \draw (1,1)node[right]{$N$} -- (1,-2)node[right]{$N$}; 
      \draw (-1,1)node[left]{$M$} -- (-1,-2)node[left]{$M$};
      \draw (-1,-1)node[left]{$\rho^\dag$} -- (-.5,-.75)node[below]{$A$} -- (0,-.5)node[below]{$m$} -- (.5,-.75)node[below]{$A$} -- (1,-1)node[right]{$\lambda\dag$};
      \draw (0,-.5) -- (0,0)node[right]{$A$} -- (0,.45);
      \draw (0,.5) circle (2pt) node[right]{$\eta\dag$};}.
    \end{equation}
    \item[(d)] Denote by $P=(\rho\ot \id_N)(\id_M\ot \lambda^\dag):M\ot N\to M\ot N$ the morphism in (c). $P$ is a Hermitian projection $P^\dag=P$ and $P^2=P$. 
    \item[(e)] Due to (d), one can take the isometric image decomposition of $P$, i.e., $r: M\ot N\to \Ima P$ such that $r^\dag r=P$ and $r r^\dag=\id_{\Ima P}$. Graphically,
    \begin{equation}
    P=\itk{
      \draw (1,1)node[right]{$N$} -- (1,-2)node[right]{$N$}; 
      \draw (-1,1)node[left]{$M$} -- (-1,-2)node[left]{$M$};
      \draw (-1,0)node[left]{$\rho$} -- (0,-.5)node[below]{$A$} -- (1,-1)node[right]{$\lambda^\dag$};}
      =\itk{
      \node[rectangle, draw] (e) at (0,-1) {$r$};
      \node[rectangle, draw] (m) at (0,0) {$r^\dag$};
      \draw (1,1)node[right]{$N$} -- (m)-- (-1,1)node[left]{$M$}; 
      \draw (e) -- (0,-0.5)node[right]{$\Ima P$}--(m);
      \draw (-1,-2)node[left]{$M$} -- (e) -- (1,-2)node[right]{$N$}; 
      }.\end{equation}
    $r$ exhibits $\Ima P$ as the relative tensor product $M\otr[A] N$.
\end{enumerate}
\end{theorem}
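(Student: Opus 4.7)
My plan is to prove (a) and (b) first via an auxiliary-operator trick analogous to Theorem~\ref{thm.Fro}, then deduce (c), (d), (e) from them.

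For (a), set $\alpha:=(\rho\ot\id_A)(\id_M\ot m^\dag)-\rho^\dag\rho$ and denote $\phi:=(\rho\ot\id_A)(\id_M\ot m^\dag)$. Using module associativity $\rho(\rho\ot\id_A)=\rho(\id_M\ot m)$, its adjoint $(\rho^\dag\ot\id_A)\rho^\dag=(\id_M\ot m^\dag)\rho^\dag$, and the partial isometries $\rho\rho^\dag=\id_M$, $mm^\dag=\id_A$, cross-term cancellation yields $\alpha^\dag\alpha = \phi^\dag\phi - \rho^\dag\rho$, where $\phi^\dag\phi=(\id_M\ot m)(\rho^\dag\rho\ot\id_A)(\id_M\ot m^\dag)$. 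The key identity I aim to establish is $(\id_M\ot m)(\alpha^\dag\alpha\ot\id_A)(\id_M\ot m^\dag)=0$: applying $(\id_M\ot m)(-\ot\id_A)(\id_M\ot m^\dag)$ to $\phi^\dag\phi$ returns $\phi^\dag\phi$ itself after rearranging using associativity of $m$, coassociativity of $m^\dag$, the disjointness of slot supports of $\rho^\dag\rho$ versus $m$ on the outermost leg, and the collapse $mm^\dag=\id_A$; applied to $\rho^\dag\rho$ it also gives $\phi^\dag\phi$ by definition, so the difference vanishes. Since this quantity equals $\bigl((\alpha\ot\id_A)(\id_M\ot m^\dag)\bigr)^\dag(\alpha\ot\id_A)(\id_M\ot m^\dag)$, unitarity forces $(\alpha\ot\id_A)(\id_M\ot m^\dag)=0$. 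Finally the counit identity $(\id_A\ot\eta^\dag)m^\dag=\id_A$ gives $\alpha=(\id_M\ot\id_A\ot\eta^\dag)(\alpha\ot\id_A)(\id_M\ot m^\dag)=0$. Part (b) is the mirror argument for the left $A$-module $N$.

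For (c), post-composing (a) with $\id_M\ot\eta$ and using $\rho(\id_M\ot\eta)=\id_M$ yields $\rho^\dag=(\rho\ot\id_A)(\id_M\ot m^\dag\eta)$, and symmetrically $\lambda^\dag=(\id_A\ot\lambda)(m^\dag\eta\ot\id_N)$. Substituting these into the first and second expressions of (c) immediately returns the third, $(\rho\ot\lambda)(\id_M\ot m^\dag\eta\ot\id_N)$; the fourth follows from the Frobenius snake identity $(\eta^\dag m\ot\id_A)(\id_A\ot m^\dag\eta)=\id_A$ applied between the $\rho^\dag$ and $\lambda^\dag$ factors. For (d), Hermiticity $P^\dag=P$ is immediate from (c); idempotency $P^2=P$ comes from expanding $P^2$, observing that the middle factors $(\id_M\ot\lambda^\dag)(\rho\ot\id_N)$ act on disjoint slots and so just produce a nested $(\id_A\ot\lambda^\dag)\lambda^\dag$, which then collapses via the dual left-module associativity $(\id_A\ot\lambda^\dag)\lambda^\dag=(m^\dag\ot\id_N)\lambda^\dag$ together with $mm^\dag=\id_A$. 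For (e), since $P$ is a Hermitian projection in the unitary semisimple category $\cC$, the standard image decomposition produces an isometry $r:M\ot N\to\Ima P$ with $r^\dag r=P$ and $rr^\dag=\id_{\Ima P}$. Identifying $\Ima P=M\otr[A] N$ then reduces to checking (i) $P\circ(\rho\ot\id_N)=P\circ(\id_M\ot\lambda)$ (direct using (a), (b) and the Frobenius relation on $A$) and (ii) any $f$ with $f(\rho\ot\id_N)=f(\id_M\ot\lambda)$ satisfies $fP=f$ (using $\lambda\lambda^\dag=\id_N$), which are precisely the universal properties of the coequalizer defining the relative tensor product over $A$.

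The main obstacle is part (a), specifically spotting the correct rearrangement to verify $(\id_M\ot m)(\alpha^\dag\alpha\ot\id_A)(\id_M\ot m^\dag)=0$. Although each individual move is elementary, the bookkeeping requires carefully tracking slot supports through a five-fold nested composition and interleaving module associativity with the algebra/coalgebra associativity of $A$. Once this identity is in hand, positivity plus the counit argument conclude (a), and the remaining parts are essentially routine graphical manipulations with the Frobenius copairing $m^\dag\eta$.
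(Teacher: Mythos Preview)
Your proposal is correct and follows essentially the same approach as the paper: the paper explicitly says that (a) and (b) are ``Frobenius-like conditions'' whose proof is ``similar to the proof of Theorem~\ref{thm.Fro}'', which is precisely the auxiliary-operator/positivity trick you carry out in detail, and your treatment of (c)--(e) matches the paper's sketch (including the coequalizer verification in (e) via $fP=f$ using $\lambda\lambda^\dag=\id_N$). One small wording slip: in deriving $\rho^\dag=(\rho\ot\id_A)(\id_M\ot m^\dag\eta)$ you should say \emph{pre}-composing (a) with $\id_M\ot\eta$, not post-composing, but the computation itself is right.
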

\begin{proof}
        (a)(b) are Frobenius-like conditions and the proof is similar to the proof of Theorem \ref{thm.Fro}.
        (c) is an easy consequence of (a)(b).
        (d) $P$ is clearly Hermitian and $P^2=P$ is straightforward to verify.
        We elaborate on (e) the universal property of $M\otr[A]N$ (Definitions \ref{def.rela} and \ref{def.coeq}) satisfied by $r:M\ot N\to \Ima P$. First, one can verify that
        \begin{equation}  P(\rho\ot \id_N)=P(\id_M\ot\lambda), \end{equation}
        and together with $r=rr^\dag r=rP$ we know that 
        \begin{equation}  r(\rho\ot \id_N)=r(\id_M\ot\lambda). \end{equation}
        Now suppose $f:L\ot R\to X$ is a morphism satisfying
        \begin{equation}  f(\rho\ot \id_N)=f(\id_M\ot\lambda). \end{equation}
        We need to show that there exists a unique $\bar f:\Ima P\to X$ such that $\bar f r=f$. Note that
        \begin{equation}  fr^\dag r=fP=f(\rho\ot \id_N)(\id_M\ot \lambda^\dag)=f(\id_M\ot\lambda)(\id_M\ot \lambda^\dag)=f. \end{equation}
        The existence is guaranteed and we only need to prove the uniqueness. This is easy, since for any $\bar f$ satisfying $\bar f r=f$, one must have $\bar f=\bar f rr^\dag=fr^\dag$. Therefore, $(\Ima P,r)$ is the coequalizer:
        \begin{equation}
        \begin{tikzcd}
	   {M\ot A\ot N} && {M\ot N} & {\Ima P=M\otr[A]N}
	   \arrow["{\rho\ot\id_N}", shift left=1, from=1-1, to=1-3]
	   \arrow["{\id_M \ot \lambda}"', shift right=1, from=1-1, to=1-3]
	   \arrow["{r}",from=1-3, to=1-4]
        \end{tikzcd}.
        \end{equation}  
\end{proof}
\begin{corollary}
    $(\N,A,-m^\dag m, M,-\rho^\dag\rho)$ (a half-infinite chain model) is a commuting projector fixed-point lattice model, and the ground state subspace is exactly $M$.
\end{corollary}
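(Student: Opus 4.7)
The plan is to follow closely the strategy used for the bulk model $(\Z,A,-m^\dagger m)$, treating the boundary term $-\rho^\dagger\rho$ as an additional mutually commuting projection, and then repeat the telescoping argument for the ground space. There are three things to establish: that every local term is a projection and adjacent terms commute, that the model is invariant under any proper renormalization generated by $\{\rho_n,m_n\}$, and that the ground state subspace is exactly $M$.

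First, I would show $\rho^\dagger\rho$ is a projection on $M\otimes A$: since $\rho\rho^\dagger=\id_M$ by the partial isometry assumption, $(\rho^\dagger\rho)^2=\rho^\dagger(\rho\rho^\dagger)\rho=\rho^\dagger\rho$, and Hermiticity is automatic. The bulk terms $(m^\dagger m)_i$ for $i\geq 1$ and their mutual commutativity are already established in the proof for the infinite chain. The only new commutation to check is between $\rho^\dagger\rho$ on $\{0,1\}$ and $(m^\dagger m)_1$ on $\{1,2\}$; this follows from the Frobenius-like identity of Theorem~\ref{thm.rlm}(a): rewriting $\rho^\dagger\rho=(\rho\otimes\id_A)(\id_M\otimes m^\dagger)$ and using the dual form from (a) in the opposite order, together with associativity of $m$, lets one slide the two compositions past each other in direct parallel with the bulk argument.

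Next, for the fixed-point condition, I would treat an arbitrary proper renormalization $(f,\{U_i\})$ generated by $\{\rho_n,m_n\}$ and verify $UHU^\dagger=H+\Delta E$ term by term. The bulk mergers were already handled in the previous theorem; the genuinely new cases are the boundary merger (conjugation of $\rho^\dagger\rho$ by $\rho_n$) and the mixed cases where a bulk $(m^\dagger m)_i$ straddles the boundary merging region. Using $\rho\rho^\dagger=\id_M$, the right $A$-module associativity $\rho(\rho\otimes\id_A)=\rho(\id_M\otimes m)$, $mm^\dagger=\id_A$, and Theorem~\ref{thm.rlm}(a), each such conjugation collapses to either the original boundary term $\rho^\dagger\rho$, a bulk $(m^\dagger m)_i$ at the relabelled site, or an identity whose contribution is absorbed into $\Delta E$, exactly mirroring the three-case analysis in the bulk proof.

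Finally, since every local term is a commuting projection, the ground state subspace equals the image of the infinite product
\begin{equation}
P_{\mathrm{gs}}=(\rho^\dagger\rho)_0\,\prod_{i\geq 1}(m^\dagger m)_i.
\end{equation}
The telescoping used in Theorem~\ref{thm.infigs} collapses $\prod_{i\geq 1}(m^\dagger m)_i$ into a single tree of multiplications followed by comultiplications fusing through one internal $A$-line. Composing with $\rho^\dagger\rho$ and applying Theorem~\ref{thm.rlm}(a) repeatedly absorbs this internal $A$-line into the boundary, reducing the whole product to a single $\rho^\dagger$ out of $M$ followed by a single $\rho$ back into $M$ (with the bulk and boundary $A$-legs splitting off via $m_n^\dagger$ on top and combining via $m_n$ on the bottom). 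Since $\rho\rho^\dagger=\id_M$, the resulting operator factors through $M$, so $\Ima P_{\mathrm{gs}}\cong M$. The main obstacle is the careful diagrammatic bookkeeping in the fixed-point case analysis and the explicit reduction of the infinite product, but both should follow mechanically from the Frobenius identity of Theorem~\ref{thm.Fro} and the module Frobenius-like identities of Theorem~\ref{thm.rlm}.
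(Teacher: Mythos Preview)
Your proposal is correct and follows essentially the same approach as the paper. The paper's own proof is extremely terse (three sentences), simply citing Theorem~\ref{thm.rlm}(a) for the commuting-projector and fixed-point properties and saying ``similarly to the infinite chain case'' the ground state subspace is $\Ima(\rho^\dagger\rho)_0\prod_i(m^\dagger m)_i=M$; you have faithfully reconstructed the details the paper leaves implicit.
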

\begin{proof} 
    The Hamiltonian is $H=-(\rho^\dag\rho)_0-\sum_{i\geq 1} (m^\dag m)_i$. The local terms commuting with each other and the model being at fixed-point is due to Theorem~\ref{thm.rlm}(a).  Similarly to the infinite chain case in Theorem \ref{thm.infigs}, the ground state subspace is $\Ima (\rho^\dag\rho)_0 \prod_i (m^\dag m)_i =M$.
\end{proof}
\begin{corollary} \label{coro.rela}
Given an isometric algebra $(A,m,\eta)$ and right and left modules $(M,\rho)$ and $(N,\lambda)$ as in the theorem. One can define a fixed-point\footnote{There is no rigorous fixed-point for finite-size models. Here the fixed-point is in the sense that we borrow the local terms in infinite-size fixed-point models to define a finite-size model.} lattice system on a finite open chain whose bulk sites have local Hilbert space $A$ with bulk local Hamiltonian terms $-m^\dag m$, and boundary sites have Hilbert spaces $M$ and $N$ with boundary Hamiltonian $-\rho^\dag\rho $ and $-\lambda^\dag\lambda$. The ground state subspace of this system is $M\otr[A]N$.
\end{corollary}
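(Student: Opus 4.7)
The plan is to mimic the argument used for the infinite and half-infinite chains. First I would set up the finite system on $\BL=\{0,1,\dots,J\}$ with $\cH_0=M$, $\cH_J=N$, $\cH_i=A$ for $1\le i\le J-1$, and Hamiltonian
\[
   H=-(\rho^\dag\rho)_0-\sum_{1\le i\le J-2}(m^\dag m)_i-(\lambda^\dag\lambda)_{J-1},
\]
and verify that each summand is a Hermitian projection and that they pairwise commute. The projection property of $m^\dag m$ and the commutation of neighbouring $m^\dag m$ terms were already established in the infinite chain analysis; Theorem~\ref{thm.rlm}(d) gives the projection property of $\rho^\dag\rho$ and $\lambda^\dag\lambda$ (applied to the module datum), and Theorem~\ref{thm.rlm}(a)(b) give the commutation of the boundary terms with the adjacent bulk $m^\dag m$. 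All remaining pairs commute trivially, since their supports are disjoint.

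Because the Hamiltonian is a sum of commuting projections, the ground state subspace is their common $+1$ eigenspace, which equals the image of the ordered product
\[
   GS=\Ima\Bigl((\rho^\dag\rho)_0\,\prod_{1\le i\le J-2}(m^\dag m)_i\,(\lambda^\dag\lambda)_{J-1}\Bigr).
\]
The main task is to simplify this product to the projector $P=(\rho\ot\id_N)(\id_M\ot\lambda^\dag)$ of Theorem~\ref{thm.rlm}(d). I would do this by induction on $J$: using the Frobenius condition \eqref{eq.Fro} together with the module-Frobenius identities of Theorem~\ref{thm.rlm}(a)(b), each interior bulk projector $(m^\dag m)_i$ can be absorbed into its neighbour by the standard ``arch'' manipulation already used in the proof of Theorem~\ref{thm.infigs}. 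Absorbing from the left into $(\rho^\dag\rho)_0$ and from the right into $(\lambda^\dag\lambda)_{J-1}$ collapses all the $A$-sites, and a final application of Theorem~\ref{thm.rlm}(c) yields the operator $P$ on $M\ot N$ (tensored with identities on the collapsed intermediate sites, which are removed after identifying $\Ima m_n=A$).

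Finally, Theorem~\ref{thm.rlm}(e) identifies $\Ima P$ with the relative tensor product $M\otr[A]N$ via the coequalizer universal property. Combining these gives $GS\cong M\otr[A]N$, proving the claim. I expect the main obstacle to be purely notational rather than conceptual: organizing the repeated graphical reductions into a clean induction, and being careful that the ``collapsing'' of intermediate sites is done by the isometries $m_n$ (Corollary~\ref{coro.iso}) so that the isomorphism $\Ima P\cong M\otr[A]N$ is canonical.
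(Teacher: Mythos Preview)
Your proposal is correct and follows essentially the same approach as the paper: set up the finite chain, identify the ground state subspace as the image of the product of commuting projectors, reduce that product to $P=(\rho\ot\id_N)(\id_M\ot\lambda^\dag)$ via repeated Frobenius/module-Frobenius identities, and invoke Theorem~\ref{thm.rlm}(e) to identify $\Ima P$ with $M\otr[A]N$. In fact your sketch is more explicit than the paper's proof, which simply states ``by repeatedly applying the Frobenius conditions one can show'' the reduction; your induction on $J$ and the reference to Theorem~\ref{thm.rlm}(a)(b)(c) make that step concrete. One small note: the projection property of $\rho^\dag\rho$ and $\lambda^\dag\lambda$ follows directly from the partial-isometry hypotheses $\rho\rho^\dag=\id_M$, $\lambda\lambda^\dag=\id_N$, so you need not appeal to part~(d) for that.
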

\begin{proof}
    Consider a finite chain model with two-side boundary conditions denoted as \\\noindent$(\BL=\{0,...,J\},A,-m^\dag m,M,-\rho^\dag\rho,N,-\lambda^\dag\lambda)$, where $(M,\rho)$ is a right $A$-module and $(N,\lambda)$ is a left $A$-module. Boundary Hilbert spaces are $\cH_0=M$ and $\cH_J=N$. And the Hamiltonian is $H=-(\rho^\dag\rho)_0-(\lambda^\dag\lambda)_{J-1}-\sum_{1\leq i\leq J-2} (m^\dag m)_i$. Similarly to Theorem~\ref{thm.rlm}(a), the ground state subspace is given by $\Ima (\rho^\dag\rho)_0 \prod_i (m^\dag m)_i ( \lambda^\dag\lambda)_{J-1}$. By repeatedly applying the Frobenius conditions one can show that \begin{equation} \Ima (\rho^\dag\rho)_0 \prod_i (m^\dag m)_i ( \lambda^\dag\lambda)_{J-1}=\Ima (\rho\ot \id_N)(\id_M\ot \lambda^\dag)=\Ima P=M\otr[A] N. \end{equation}
\end{proof}
We may fix the bulk and consider the boundary change:
\begin{definition}
    A \emph{boundary change} is a system renormalization $(f, \{U_i\})$ between $(\N,A,P,M,Q)$ and $(\N,A,P,M',Q')$ such that $f=\id_\N$, and $U_i=\id_A$ for $i\geq 1$. In other words, a boundary change is a partial isometry $U_0:M\to M'$ such that $U_0 U_0^\dag (Q'+\Delta E)U_0U_0^\dag=U_0QU_0^\dag$. Given two fixed-point local generators $\{\rho_n:M\ot A^{\ot n}\to M,m_n\}$ and  $\{\rho'_n:M'\ot A^{\ot n}\to M',m_n\}$, a boundary change is called \emph{between fixed-points} if
    \begin{align}
        \itk{
      \draw (0,0) node[below]{$M$} -- (l) -- (0,2)node[left]{$\rho'_n$} -- (0,3) node[above]{$M'$};
       \node[rectangle, draw] (l) at (0,1) {$U_0$};
       \draw (0,2) -- (2,0)node[below]{$A^{\ot n}$};
   }
   =
   \itk{
      \draw (0,0) node[below]{$M$} -- (0,1)node[left]{$\rho_n$} -- (r) -- (0,3) node[above]{$M'$};
       \node[rectangle, draw] (r) at (0,2) {$U_0$};
       \draw (0,1) -- (1,0)node[below]{$A^{\ot n}$};
   }.
    \end{align}
\end{definition}
\begin{theorem}
    A boundary change $U_0:M\to M'$ between fixed-points is an $A$-module map. 
\end{theorem}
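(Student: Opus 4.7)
The plan is to read off the $A$-module map condition directly from the defining diagrammatic equation of a boundary change between fixed-points, specialized to $n = 1$. Recall that a boundary change $U_0 : M \to M'$ between fixed-points is required to satisfy, for every $n \in \mathbb{N}$,
\begin{equation*}
  U_0 \circ \rho_n \;=\; \rho'_n \circ (U_0 \otimes \id_{A^{\otimes n}}),
\end{equation*}
where $\{\rho_n\}$ and $\{\rho'_n\}$ are the fixed-point generators on $M$ and $M'$ respectively. By the convention fixed right before Theorem~\ref{thm.rlm} (and in parallel with Corollary~\ref{coro.iso} for $m_n$), one has $\rho_1 = \rho$ and $\rho'_1 = \rho'$.

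The key step is therefore to specialize the equation above to $n = 1$. This yields
\begin{equation*}
  U_0 \circ \rho \;=\; \rho' \circ (U_0 \otimes \id_A),
\end{equation*}
which is precisely the definition of an $A$-module map between the right $A$-modules $(M,\rho)$ and $(M',\rho')$ (cf.\ Definition~\ref{def.Amodule} referenced in Theorem~\ref{thm.rlm}). Since this is the entirety of the $A$-module map condition, the theorem follows immediately.

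There is really no obstacle in this argument; the content of the theorem is that the fixed-point compatibility already imposes the $A$-module map condition as one of its special cases. For completeness, I would note in the write-up that the higher-$n$ compatibilities, together with the fact that $\rho_n$ is determined by $\rho$ and $m$ in the way $m_n$ is determined by $m$ in Corollary~\ref{coro.iso} (i.e.\ $\rho_n = \rho(\id_M \otimes m_{n-1})$ for $n \ge 1$), are consistent with and in fact implied by the $n = 1$ condition combined with the associativity structure on $A$; but this observation is not needed for the theorem as stated. The proof is thus essentially a single-line specialization.
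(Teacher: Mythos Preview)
Your proof is correct and is exactly the argument the paper has in mind: the paper states this theorem without proof, since the $A$-module map condition is precisely the $n=1$ instance of the defining compatibility for a boundary change between fixed-points. One minor remark: the relevant definition to cite for an $A$-module map is Definition~\ref{def.modulemap}, not Definition~\ref{def.Amodule}.
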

Now we consider fixing an Frobenius algebra $A$ and collect all possible fixed-point boundaries (i.e., all right $A$-modules, and $A$-module maps) to form a category, denoted by $\cC_A$ (Definition \ref{algmodulecat}).
\begin{remark}\label{droppiso}
 In the above analysis, the action $\rho=\rho_1$ and the boundary change $U_0$ are assumed to be partially isometric. By Proposition~\ref{prop.surjsim}, any $A$-module is isomorphic to a sub-module of a free module $M\ot A$ for some $M\in \cC$. The action of the free module $M\ot A$ is just $\id_M\ot m$, which is partially isometric, $(\id_M\ot m)(\id_M\ot m)^\dag=\id_{M\ot A}$. Therefore, any $A$-module is isomorphic to one with an partially isometric action. Also since $\cC_A$ is semisimple and unitary, partially isometric $A$-module maps and general $A$-module maps differ by at most a scaling. Thus when considering all possible boundary conditions, we can safely drop the requirement for the actions and module maps to be partial isometries.
\end{remark}
 For any object $V\in \cC$ and right $A$-module $M$, $V\ot M$ has a natural structure of right $A$-module (Remark \ref{CAismodule}). Thus $\cC_A$ is automatically a left $\cC$-module category. Such categorical action may be physically understood as enlarging the lattice of the system $(\N,A,P,M,Q)$ from $\N $ to, say, $\N \cup \{-1\}$, and putting a representation $V$ on site $-1$, $\cH_{-1}=V$, and then renormalizing $\{-1,0\}$ to $\{0\}$ so that the renormalized system is $(\N,A,P,V\ot M,Q')$.

Conversely, we may ask for a pair of boundaries, $M,M'$, what is the difference between them, or in vague terms, whether there is a $V$ so that $V\ot M\sim M'$. The precise answer is the \emph{internal hom}, denoted by $[M,M']$, which is an object in $\cC$ defined by the following adjunction
\begin{align}
    \Hom_{\cC_A}(V\ot M,M')\cong \Hom_\cC(V,[M,M']),\label{eq.internalhom}
\end{align}
or in more precise terms, $[M,M']$ is the object representing the functor $\Hom_{\cC_A}(-\ot M,M')$. We have a natural morphism $\ev_{M,M'}:[M,M']\ot M\to M'$, referred to as the evaluation, whose image under the adjunction is $\id_{[M,M']}$:
\begin{align}
    \Hom_{\cC_A}([M,M']\ot M,M')&\cong \Hom_\cC([M,M'],[M,M']),\nonumber\\
    \ev_{M,M'} &\leftrightarrow \id_{[M,M']}.
\end{align}
$\ev_{M,M'}$ tells us how to renormalize  $M$ to $M'$ by adding the representation $[M,M']$. It is the universal answer to the difference between $M$ and $M'$: possible renormalizations $V\ot M\to M'$ are in natural bijections with intertwiners $V\to [M,M']$; as long as there is an intertwiner $V\to [M,M']$, we can do the renormalization \begin{equation} V\ot M\to [M,M']\ot M\xrightarrow[]{\ev_{M,M'}}M'. \end{equation}
\begin{remark}\label{canon}
    Using the internal hom, $\cC_A$ can be promoted to the category enriched over $\cC$, denoted by $^\cC \cC_A$. We will not elaborate on details of enriched categories. Technically, by the canonical construction~\cite{MP1701.00567,KYZZ2104.03121}, we will not distinguish the pair $(\cC,\cM)$ where $\cM$ is a $\cC$-module, from the enriched category $^\cC \cM$.
\end{remark}

\begin{example}
    If we take $A=\one$ the tensor unit in $\cC$ ($A=\C$ the trivial representation in $\Rep G$), we have $\cC_\one=\cC$. When viewing $\cC$ itself as $\cC$-module, we have a self enriched category $^\cC \cC$ where the internal hom is $[V,W]=W\ot V^*$. As we have explained before, tensor product is the ``addition'' of symmetry charge, and the dual representation is the anti charge, thus the (internal) hom is the ``subtraction'' of symmetry charge. For vector spaces, we have $\Hom_\Ve(V,W)=W\ot V^*$; the $\Hom_\cC$ in $\cC$ can be viewed as an internal hom in $\Ve$. We may write such intuition as
    \begin{equation} V\ot W\sim ``V+W",\quad [V,W]\sim\Hom(V,W)\sim ``W-V". \end{equation}
    With such intuition, the internal hom adjunction Eq.\eqref{eq.internalhom} is ``translated'' to
    \begin{equation} \Hom_\cC(V\ot X,Y)\cong \Hom_\cC(V, [X,Y])\ \sim\ ``Y-(X+V) = (Y-X)-V".  \end{equation}
    This ``translation'' is a good way to understand the idea behind the internal hom adjuction. 
    
    However, there are different Hom's (in different categories) involved. Note that for the self enriched category $^\cC \cC$,
    \begin{equation} \Hom_\cC(W,[V,[X,Y])\cong \Hom_\cC(W\ot V,[X,Y])\cong \Hom_\cC(W\ot V\ot X,Y)\cong \Hom_\cC(W,[V\ot X,Y]).  \end{equation}
    By Yoneda Lemma we know that when using only internal hom, we have, rigorously
    \begin{equation} [V,[X,Y]]\cong [V\ot X,Y]. \end{equation}
    If we consider $\cC=\Rep G$ as a usual category, $\Hom_\cC(V,W)\neq W\ot V^*$ does not subtract symmetry charge. To find the current and determine the local conservation of charge, we have to compute the difference of symmetry charge, and thus enriched category and internal hom is a must.
\end{example}

\begin{example}\label{eg.z1enri}
    The Drinfeld center $Z_1(\cC)$ has a natural action on $\cC$, by forgetting the half-braiding and then taking the tensor product. Thus we can also talk about the enriched category $^{Z_1(\cC)}\cC$ and the interal hom $[-,-]_{Z_1(\cC)}$ in $Z_1(\cC)$. In this case, the internal hom not only computes the charge difference, but also how the charge flows (i.e. the half-braiding). In other words, internal hom computes the quantum current. With such interpretation, we call the adjunction
    \begin{equation}
    \Hom_\cC(Q\ot X,Y)\cong \Hom_{Z_1(\cC)}((Q,\beta),[X,Y]_{Z_1(\cC)}), \label{eq.lc} \end{equation}
    where $(Q,\beta)\in Z_1(\cC)$ is a quantum current, $X,Y\in\cC$, as the quantum version of equation of local conservation. It is computable in finite semisimple categories: Just let $(Q,\beta)$ run over simple objects, one has
    \begin{align}
    [X,Y]_{Z_1(\cC)}&\cong \oplus_{i\in\irr(Z_1(\cC))} \Hom(i,[X,Y]_{Z_1(\cC)})\otimes i \nonumber\\
    &\cong \oplus_{i\in\irr(Z_1(\cC))}\Hom_\cC(i\ot X,Y)\otimes i.
    \end{align}
    Note that here we made use of the action of $\Ve$ on any semisimple category, that given an $n$-dimensional vector space $V\cong \C^{\oplus n}\in \Ve$, $V\ot i\cong \C^{\oplus n}\ot i\cong  (\C\ot i)^{\oplus n}\cong i^{\oplus n}.$
\end{example}

\subsection{Fixed-point defects}
We can further consider the fixed-point defects between two fixed-point models given by two Frobenius algebras $A$ and $A'$. In particular, excitations are viewed as defects in the same model (i.e., between $A$ and $A$).
By similar arguments as our previous discussions on fixed-point boundaries, we have
\begin{definition}
    A system $(\Z, \cH_\BK, H)$ is called a translation invariant local system \emph{with fixed-point defect} $(B,D)$ if 
    \begin{itemize}
        \item $\cH_0=B$, $\cH_i=A$ for $i<0$ and $\cH_i=A'$ for $i > 0$;
        \item $H=D+\sum_{i\leq (-n)}P_i+\sum_{i\geq 1}P'_i$ where $0$ is in the finite support of $D$, $P_i$ are supported on $\{ i,\dots,i+n-1\}$, $P_i=P\ot \id$, for some $P\in \End(A^{\ot n})$, and similar for $P'_i$.
    \end{itemize} We may simply denote this system by $(\Z,A,P,A',P',B,D)$.
\end{definition}
Let $\{\rho_{k;n}:A^{\otimes k} \ot B\ot (A')^{\otimes n}\to B, m_{n}:A^{\otimes n}\to A,m'_{n}:(A')^{\otimes n}\to A'\}_{k,n\in \N}$ be partial isometries. A fixed-point local generator $\{\rho_{k;n}, m_{n},m'_n\}$ is defined similarly as Definitions \ref{def.bcg} and \ref{def.bcfixed}.
\begin{theorem}
    Given a fixed-point local generator $\{\rho_{k;n}, m_{n}\}$, $(A,m:=m_2,\eta:=m_0)$ and $(A',m':=m'_2,\eta':=m'_0)$ are Frobenius algebras as before, and $(B,\rho:=\rho_{0;1},\lambda:=\rho_{1;0})$ is an $A$-$A'$-bimodule (Definition \ref{def.bimo}). We may also denote the bimodule by $_A B_{A'}$ for clarity.
\end{theorem}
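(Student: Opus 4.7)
The plan is to mirror the proof of Theorem~\ref{thm.Fro} but now track a third family of partial isometries $\rho_{k;n}$ that acts on the defect site. Since the hypotheses on $\{m_n\}$ and $\{m'_n\}$ alone are exactly those of the earlier theorem, I would first invoke Theorem~\ref{thm.Fro} twice to conclude that $(A, m, \eta)$ and $(A', m', \eta')$ are Frobenius algebras. It remains to verify that $(B, \lambda = \rho_{1;0}, \rho = \rho_{0;1})$ is an $A$-$A'$-bimodule, i.e. that $\lambda$ is a left $A$-action, $\rho$ is a right $A'$-action, and the two actions commute.

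Each bimodule axiom will be read off from a composition of two Hilbert space renormalizations generated by $\{\rho_{k;n}, m_n, m'_n\}$. By assumption the composition is again generated by the same data, and is applied to an infinite system that is already a fixed point; so it must be equal to another composition with the same target lattice. For left unitality $\lambda(\eta \otimes \id_B) = \id_B$ and right unitality $\rho(\id_B \otimes \eta') = \id_B$, I would compare the identity renormalization at the defect site with the renormalization that first inserts an empty site through $m_0 = \eta$ (or $m'_0 = \eta'$) on a site adjacent to $0$ and then merges it in through $\rho_{1;0}$ (resp.\ $\rho_{0;1}$); this is the direct analogue of equation \eqref{Eq.algunit}. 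For left associativity $\lambda(m \otimes \id_B) = \lambda(\id_A \otimes \lambda)$ and right associativity $\rho(\rho \otimes \id_{A'}) = \rho(\id_B \otimes m')$, I would compare the two orders in which three adjacent sites on the same side of the defect can be merged into the defect site; both orders must equal $\rho_{2;0}$ on the left and $\rho_{0;2}$ on the right, so these follow from $\rho_{2;0} = \lambda(m \otimes \id_B) = \lambda(\id_A \otimes \lambda)$ and symmetrically on the right.

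The substantive new content is the middle-four-interchange compatibility
\begin{equation}
\lambda(\id_A \otimes \rho) \;=\; \rho(\lambda \otimes \id_{A'}) ,
\end{equation}
which expresses that the left and right actions commute. This is obtained by computing $\rho_{1;1}: A \otimes B \otimes A' \to B$ in two different ways: first merge $A \otimes B$ to $B$ via $\lambda$ and then absorb $A'$ via $\rho$, versus first merge $B \otimes A'$ to $B$ via $\rho$ and then absorb $A$ via $\lambda$. Since both represent the same composed renormalization generated by $\{\rho_{k;n}, m_n, m'_n\}$ applied to the three neighbouring sites $\{-1, 0, 1\}$, they must agree, giving the identity above. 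Together with the left-unital left-associative structure and the right-unital right-associative structure, this is precisely the defining data of an $A$-$A'$-bimodule in $\cC$ (Definition~\ref{def.bimo}).

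The main obstacle is essentially notational bookkeeping: one must be careful that each renormalization composition used is literally generated by $\{\rho_{k;n}, m_n, m'_n\}$ in the sense of Definition~\ref{def.bcg} extended to two-sided generators, so that the fixed-point property forces the two sides of each equation to coincide. Once the correct pair of generating compositions is displayed for each axiom (unitality, associativity on each side, and the bimodule compatibility), the verification is immediate and parallels the earlier arguments. No new analytic input is needed beyond the partial-isometry condition, which was already used in Theorem~\ref{thm.Fro} to deduce the Frobenius relation on each side.
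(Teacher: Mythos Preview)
Your proposal is correct and follows essentially the same approach as the paper: invoke Theorem~\ref{thm.Fro} for the two Frobenius structures, and then read off each bimodule axiom (left/right unitality, left/right associativity, and the commuting-actions compatibility $\lambda(\id_A\otimes\rho)=\rho(\lambda\otimes\id_{A'})$) from the fixed-point closure of the generator under composition of Hilbert space renormalizations. The paper's own proof is a one-line version that simply lists these five identities without spelling out the compositions, so your elaboration is a faithful expansion of it.
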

\begin{proof}
     Similar to the proof of Theorem \ref{thm.Fro}, by composition of Hilbert space renormalization, we have $\rho(\rho\ot\id_A)=\rho(\id_B\ot m)$, $\rho(\id_B\ot\eta)=\id_B$, $\lambda(\id_A\ot\lambda)=\lambda(m\ot\id_B)$, $\lambda(\eta\ot\id_B)=\id_B$ and $\rho(\lambda\ot\id_A)=\lambda(\id_A\ot \rho)$.       
\end{proof}

\begin{theorem}
    $(\Z,A,-m^\dag m, A',-(m')^\dag m', B,-( \lambda^\dag\lambda)_{-1}-(\rho^\dag\rho)_0)$ is a commuting-projector fixed-point model with fixed-point defect $B$. The ground state subspace of $(\Z,A,-m^\dag m,$ $ A',$ $-(m')^\dag m',$ $ B,-( \lambda^\dag\lambda)_{-1}-(\rho^\dag\rho)_0)$ is exactly $B$. 
\end{theorem}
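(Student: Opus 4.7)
The plan is to verify the three claims in the theorem in turn: (i) the local Hamiltonian terms are commuting projections; (ii) the model is a fixed-point with respect to a suitable local generator built from $m$, $m'$, $\rho$, $\lambda$; (iii) the ground state subspace is exactly $B$. The structure closely parallels the proof of the boundary case (the half-infinite chain with one $A$-module $M$), but now one has to handle the two algebras meeting at the single defect site with its bimodule structure.

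For (i), the bulk projection property of $m^\dagger m$ and $(m')^\dagger m'$ and the mutual commutation of neighbouring bulk terms have already been established in the proof of the translation invariant fixed-point theorem, applied separately on the left and right half-chains. The only new checks concern the two boundary terms $(\lambda^\dagger\lambda)_{-1}$ and $(\rho^\dagger\rho)_{0}$. I would first show each is a projection by the same Frobenius-like identities as in Theorem \ref{thm.rlm}(a),(b), now applied to the left $A$-action $\lambda$ and the right $A'$-action $\rho$ on the bimodule $B$: these identities give $(\lambda^\dagger\lambda)^2 = \lambda^\dagger\lambda$ and similarly for $\rho^\dagger\rho$. Commutation of $(\lambda^\dagger\lambda)_{-1}$ with the bulk $A$-term $(m^\dagger m)_{-2}$ on sites $\{-2,-1\}$ follows from the same graphical manipulation used in the half-infinite case (sliding an $m$ across a $\lambda$ using the module-Frobenius identity). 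Commutation between $(\lambda^\dagger\lambda)_{-1}$ and $(\rho^\dagger\rho)_0$ on their common site $0$ is the new ingredient: here one uses the bimodule compatibility $\rho(\lambda\otimes \id_{A'}) = \lambda(\id_A\otimes \rho)$ together with the Frobenius-like identity in Theorem \ref{thm.rlm}(c), applied with $(M,N)\leftrightarrow (B,B)$ and the two different module actions playing the roles of $\rho$ and $\lambda$. The remaining pairs are supported on disjoint sites and commute trivially.

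For (ii), I would define a local generator $\{m_n, m'_n, \rho_{k;n}\}$ where $\rho_{k;n}: A^{\otimes k}\otimes B\otimes (A')^{\otimes n}\to B$ is built from iterated $\lambda$ and $\rho$ actions (well defined by bimodule associativity). Then, mirroring the calculation done for the bulk fixed-point theorem, I would check that conjugating $H$ by any proper Hilbert space renormalization generated by $\{m_n,m'_n,\rho_{k;n}\}$ reproduces $H$ up to an additive constant. The verification splits into three cases depending on whether the renormalized group of sites lies entirely in the $A$-region, entirely in the $A'$-region, or contains the defect site $0$; the first two cases are the already-proven bulk statement, and the third uses only the $A$-module and $A'$-module Frobenius identities from Theorem \ref{thm.rlm}(a),(b) to push bulk $m^\dagger m$ and $(m')^\dagger m'$ terms across the defect.

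For (iii), since all local terms are commuting projections the ground state subspace is the image of the product
\begin{equation}
P_{\mathrm{tot}} \;=\; \prod_{i\leq -2}(m^\dagger m)_i \cdot (\lambda^\dagger\lambda)_{-1}\cdot(\rho^\dagger\rho)_0\cdot\prod_{i\geq 1}((m')^\dagger m')_i.
\end{equation}
Applying the same Frobenius-collapse argument used in Theorem \ref{thm.infigs} and its half-infinite corollary, the left factor telescopes to a single left-action $\lambda$ followed by $\lambda^\dagger$ acting on $B$, which by the module-Frobenius identity equals $\id_B$ on the $B$ leg (after tracing out the collapsed $A$-legs against $\eta,\eta^\dagger$). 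The same happens on the right, yielding $\Ima P_{\mathrm{tot}} = B$.

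The main obstacle I expect is purely bookkeeping: keeping track of which graphical identity (Frobenius for $A$, for $A'$, left-module, right-module, or bimodule compatibility) is being used at which site, particularly at the defect where four types of local tensors meet. Conceptually nothing new is needed beyond Theorem \ref{thm.rlm} and its analogue for left modules; the content is that the bimodule axioms make those identities apply simultaneously on both sides of the defect, so the entire infinite product collapses cleanly to $\id_B$.
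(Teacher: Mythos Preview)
Your approach is correct and matches the paper's own proof, which is extremely terse and simply says that by Theorem~\ref{thm.rlm}(a) and (b) everything ``can be proved similarly as before''. Your proposal is essentially a fleshed-out version of that one-line argument.

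One small imprecision: for the commutation of $(\lambda^\dag\lambda)_{-1}$ and $(\rho^\dag\rho)_0$ at the defect you cite Theorem~\ref{thm.rlm}(c) ``with $(M,N)\leftrightarrow(B,B)$''. Part (c) concerns the operator on a tensor product $M\otimes N$ of a right and a left module over the \emph{same} algebra, which is not the situation here --- you have a single bimodule site with actions of two different algebras. What you actually need (and what you also list) is (a) for $B$ as right $A'$-module, (b) for $B$ as left $A$-module, and the bimodule axiom $\rho(\lambda\otimes\id_{A'})=\lambda(\id_A\otimes\rho)$. Concretely, rewriting both projectors via (a),(b) one gets
\[
(\lambda^\dag\lambda\otimes\id_{A'})(\id_A\otimes\rho^\dag\rho)
=(\id_A\otimes\lambda\otimes\id_{A'})(\id_{A\otimes A}\otimes\rho\otimes\id_{A'})(m^\dag\otimes\id_B\otimes(m')^\dag),
\]
and the other order gives the same expression with $\lambda$ and $\rho$ swapped in the first two factors; equality is then precisely the bimodule condition. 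So drop the reference to (c) and keep the rest.
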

\begin{proof}
    The Hamiltonian is $H=-( \lambda^\dag\lambda)_{-1}-(\rho^\dag\rho)_0-\sum_{i< -1} (m^\dag m)_i-\sum_{i>0} ((m')^\dag m')_i$. By Theorem~\ref{thm.rlm}(a) and (b), the local terms commuting with each other, the model being at fixed-point, and the ground state subspace being $\Ima ( \lambda^\dag\lambda)_{-1}(\rho^\dag\rho)_0 \prod_{i< -1} (m^\dag m)_i\prod_{i>0}((m')^\dag m')_i =B$ can all be proved similarly as before.
\end{proof}

\begin{theorem}
    Moreover, consider three Frobenius algebras $A'',A,A'$ and two bimodules $_{A''} B'_A$, $_A B_{A'}$. One can construct a fixed-point model with two fixed-point defects $B'$ and $B$. The local terms of the Hamiltonian away from the fixed-point defects are given by the multiplication of the algebras $-(m'')^\dag m'',-m^\dag m,-(m')^\dag m'$, to the left of $B'$, in between $B',B$ and to the right of $B$, respectively. The local terms around $B$ is still given by $-\lambda^\dag\lambda-\rho^\dag\rho$ and similar for $B'$. The ground state subspace of this system is $B'\otr[A]B$.
\end{theorem}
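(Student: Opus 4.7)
The plan is to adapt the proof of Corollary~\ref{coro.rela} to the present setting with two defects. First I would verify that all local terms of $H$ are commuting Hermitian projectors. The bulk terms $m^\dag m$, $(m')^\dag m'$, $(m'')^\dag m''$ are projectors by the same Frobenius argument as in Theorem~\ref{thm.Fro}, and any two such terms on neighbouring sites commute by the Frobenius condition~\eqref{eq.Fro}. The defect terms $(\lambda')^\dag\lambda'$ and $(\rho')^\dag\rho'$ at $B'$, and $\lambda^\dag\lambda$ and $\rho^\dag\rho$ at $B$, are projectors by Theorem~\ref{thm.rlm}(a)(b). The commutation of a defect term with an adjacent bulk term follows again from Theorem~\ref{thm.rlm}(a)(b) (the Frobenius-like relations between a module action and the algebra multiplication), while the two defect terms sharing the same defect site commute by the argument of Theorem~\ref{thm.rlm}(c), which uses the bimodule compatibility $\rho(\lambda\otimes\id)=\lambda(\id\otimes\rho)$. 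Consequently the ground state subspace equals the image of the single operator $P_{\mathrm{tot}}$ obtained as the product of all of these local projectors.

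Next I would collapse the three bulk regions, exactly as in the proof of Theorem~\ref{thm.infigs} for the infinite chain and Corollary~\ref{coro.rela} for the half-infinite chain. By repeated use of~\eqref{eq.Fro}, the product of the $(m'')^\dag m''$ terms to the left of $B'$, together with $(\lambda')^\dag\lambda'$, reduces to an operator whose image on the left half is $B'$, with the remaining $A''$-sites locked into a fixed ``tree'' state; symmetrically the $(m')^\dag m'$ terms together with $\rho^\dag\rho$ collapse on the right to an operator with image $B$. For the middle $A$-bulk between $B'$ and $B$, the $m^\dag m$ terms together with $(\rho')^\dag\rho'$ on its left end and $\lambda^\dag\lambda$ on its right end collapse, via Theorem~\ref{thm.rlm}(a)(b) and associativity, to the single two-site projector
\begin{equation}
P'=(\rho'\otimes\id_B)(\id_{B'}\otimes\lambda^\dag)
\end{equation}
acting directly on $B'\otimes B$, which is precisely the $P$ of Theorem~\ref{thm.rlm}(d) applied to $B'$ viewed as a right $A$-module and $B$ viewed as a left $A$-module.

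Combining these three collapses, $P_{\mathrm{tot}}$ reduces to a product of the three effective projectors above, whose joint image is canonically identified with $\Ima P'\subseteq B'\otimes B$, and this equals $B'\otr[A]B$ by Theorem~\ref{thm.rlm}(e). Since the left $A''$-action on $B'$ and the right $A'$-action on $B$ are never touched during the collapses, the identification respects the $A''$-$A'$-bimodule structure. I expect the main obstacle to be bookkeeping in the collapse of the \emph{middle} $A$-bulk: unlike the half-infinite case of Corollary~\ref{coro.rela}, this segment has two active ends (sliding in $\rho'$ from the left and $\lambda^\dag$ from the right), and one must check that the two sides can be reduced consistently to the same $P'$ regardless of the order chosen. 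This is a purely graphical verification using the two Frobenius-like identities of Theorem~\ref{thm.rlm}(a)(b), and I would present it as a short induction on the length of the intermediate $A$-bulk to keep the argument clean.
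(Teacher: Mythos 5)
Your proposal is correct and follows essentially the same route as the paper: reduce to the operator $P'=(\rho'\ot\id_B)(\id_{B'}\ot\lambda^\dag)$ via the Frobenius-like relations of Theorem~\ref{thm.rlm} and then invoke Theorem~\ref{thm.rlm}(e); the paper's proof simply states this reduction by analogy to Corollary~\ref{coro.rela} and notes that the corollary is the special case $A''=A'=\one$. One small imprecision in your write-up: the product of local projectors on $B'\ot A^{\ot k}\ot B$ does not literally \emph{collapse to} the two-site operator $P'$ acting on $B'\ot B$; rather its image is canonically isomorphic to $\Ima P'$ (the intermediate $A$-sites are locked into a fixed state, as in Theorem~\ref{thm.infigs}). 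Also, your anticipated ``main obstacle'' --- consistency of the order of reductions --- is automatically dispatched by the commutativity of the local projectors you already established in your first step, so it is not a genuine obstacle.
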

\begin{proof}
    Similarly to Corollary \ref{coro.rela}, the ground state subspace is given by the image of the following intertwiner
    \begin{equation}
    P'=\itk{
      \draw (1,1)node[right]{$B$} -- (1,-2)node[right]{$B$}; 
      \draw (-1,1)node[left]{$B'$} -- (-1,-2)node[left]{$B'$};
      \draw (-1,0)node[left]{$\rho'$} -- (0,-.5)node[below]{$A$} -- (1,-1)node[right]{$\lambda^\dag$};},\end{equation}
      which is $\Ima P'=B'\otr[A]B$. In fact, Corollary~\ref{coro.rela} is a special case by taking $A''=A'=\one$.
\end{proof}

\begin{definition}
    A \emph{defect change} is a system renormalization $(f, \{U_i\})$ between $(\Z,A,P,A',P',B,D)$ and $(\Z,A,P,A',P',B',D')$ such that $f=\id_\Z$, and $U_i=\id_A$ or $U_i=\id_{A'}$ for $i\neq 0$. Given two fixed-point local generators $\{\rho_{k;n}:A^{\otimes k} \ot B\ot (A')^{\otimes n}\to B, m_{n},m'_n\}$ and  $\{\rho'_{k;n}:A^{\otimes k} \ot B'\ot (A')^{\otimes n}\to B', m_{n},m'_n\}$, a defect change is called \emph{between fixed-points} if
    \begin{align}
        \itk{
      \draw (0,0) node[below]{$B$} -- (l) -- (0,2)node[right]{$\rho'_{k;n}$} -- (0,3) node[above]{$B'$};
       \node[rectangle, draw] (l) at (0,1) {$U_0$};
       \draw (0,2) -- (-2,0)node[below]{$A^{\ot k}$};
       \draw (0,2) -- (2,0)node[below]{$(A')^{\ot n}$};
   }
   =
   \itk{
      \draw (0,0) node[below]{$B$} -- (0,1)node[right]{$\rho_{k;n}$} -- (r) -- (0,3) node[above]{$B'$};
       \node[rectangle, draw] (r) at (0,2) {$U_0$};
       \draw (0,1) -- (-1,0)node[below]{$A^{\ot k}$};
       \draw (0,1) -- (1,0)node[below]{$(A')^{\ot n}$};
   }.
    \end{align}
\end{definition}
\begin{theorem}
    A defect change between fixed-points is an $A$-$A'$-bimodule map.
\end{theorem}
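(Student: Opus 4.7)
The plan is to specialize the compatibility condition in the definition of ``defect change between fixed-points'' to the cases $(k,n)=(1,0)$ and $(k,n)=(0,1)$, which respectively yield the left $A$-action and right $A'$-action compatibility required for $U_0$ to be an $A$-$A'$-bimodule map. This parallels the earlier (unstated) proof that a boundary change between fixed-points is an $A$-module map, with the only new ingredient being that there are now two actions rather than one.

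First I would recall from the preceding theorem that the bimodule structures are $\lambda=\rho_{1;0}$ and $\rho=\rho_{0;1}$ on $B$, and $\lambda'=\rho'_{1;0}$ and $\rho'=\rho'_{0;1}$ on $B'$. Then I would apply the compatibility diagram
\begin{align}
   U_0 \circ \rho_{k;n} \;=\; \rho'_{k;n} \circ (\id_{A^{\otimes k}}\otimes U_0 \otimes \id_{(A')^{\otimes n}})
\end{align}
with $(k,n)=(1,0)$ to obtain $U_0\circ \lambda = \lambda'\circ(\id_A\otimes U_0)$, the left $A$-linearity; and with $(k,n)=(0,1)$ to obtain $U_0\circ \rho = \rho'\circ(U_0\otimes \id_{A'})$, the right $A'$-linearity. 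Graphically these are the two special cases
\begin{align}
\itk{
      \draw (0,0) node[below]{$B$} -- (l) -- (0,2)node[right]{$\lambda'$} -- (0,3) node[above]{$B'$};
       \node[rectangle, draw] (l) at (0,1) {$U_0$};
       \draw (0,2) -- (-1.2,0)node[below]{$A$};
   }=\itk{
      \draw (0,0) node[below]{$B$} -- (0,1)node[right]{$\lambda$} -- (r) -- (0,3) node[above]{$B'$};
       \node[rectangle, draw] (r) at (0,2) {$U_0$};
       \draw (0,1) -- (-1.2,0)node[below]{$A$};
   },\qquad
\itk{
      \draw (0,0) node[below]{$B$} -- (l) -- (0,2)node[left]{$\rho'$} -- (0,3) node[above]{$B'$};
       \node[rectangle, draw] (l) at (0,1) {$U_0$};
       \draw (0,2) -- (1.2,0)node[below]{$A'$};
   }=\itk{
      \draw (0,0) node[below]{$B$} -- (0,1)node[left]{$\rho$} -- (r) -- (0,3) node[above]{$B'$};
       \node[rectangle, draw] (r) at (0,2) {$U_0$};
       \draw (0,1) -- (1.2,0)node[below]{$A'$};
   }
\end{align}
of the compatibility hypothesis.

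Finally, I would note that the $A$-$A'$-bimodule compatibility $\rho\circ(\lambda\otimes\id_{A'})=\lambda\circ(\id_A\otimes\rho)$ holds in both $B$ and $B'$ by the preceding theorem, so no further condition on $U_0$ is needed: an $A$-linear and $A'$-linear map between $A$-$A'$-bimodules is automatically a bimodule map. There is essentially no obstacle here; the only thing to be careful about is matching the two-parameter notation $\rho_{k;n}$ of the fixed-point local generator to the left/right action notation $\lambda,\rho$ used to define a bimodule, and ensuring that the $(1,0)$ and $(0,1)$ specializations of the compatibility condition make sense (which they do, since both $k=0$ and $n=0$ are allowed in $\mathbb{N}$ and the maps $\rho_{1;0}$ and $\rho_{0;1}$ are exactly the generating left and right $A$/$A'$-actions defined in the previous theorem).
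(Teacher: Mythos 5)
Your proof is correct and is almost certainly the argument the authors intended: the paper states this theorem (and the analogous boundary-change theorem) without printing a proof, and the natural route is exactly yours — specialize the fixed-point compatibility of $U_0$ with the generators $\rho_{k;n}$, $\rho'_{k;n}$ to $(k,n)=(1,0)$ and $(0,1)$, read off left $A$-linearity and right $A'$-linearity, and observe that a morphism of $A$-$A'$-bimodules satisfying both is by definition a bimodule map.
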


\begin{theorem}
    The fixed-point defects and defect change between fixed-point models of the Frobenius algebras $A$, $A'$, form the category of $A$-$A'$-bimodules and bimodule maps, in $\cC$, which is denoted by $_A\cC_{A'}$. For similar reasons as explained in Remark~\ref{droppiso}, we do not require actions and bimodule maps in $_A\cC_{A'}$ to be partial isometries.
    In particular, the excitations in the fixed-point model $(\Z,A,-m^\dag m)$ form the category $_A\cC_A$, where the fusion of excitations is given by $\otr[A]$.
\end{theorem}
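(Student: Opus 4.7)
The plan is to assemble the statement from the preceding theorems, which have already done the object-level and morphism-level work, and then add two pieces: the relaxation of the partial-isometry requirement, and the identification of the fusion of excitations with the relative tensor product.

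First, I would verify that fixed-point defects and defect changes between fixed-points genuinely form a category. Objects: by the earlier theorem, every fixed-point local generator $\{\rho_{k;n},m_n,m'_n\}$ equips the defect Hilbert space $B$ with the structure of an $A$-$A'$-bimodule $({}_A B_{A'},\rho,\lambda)$, and conversely, given any bimodule one can build a fixed-point defect using $-\lambda^\dag\lambda-\rho^\dag\rho$ as in the previous construction. Morphisms: the earlier theorem shows a defect change between fixed-points is an $A$-$A'$-bimodule map. Composition is just composition of linear maps, which preserves the bimodule equations, and $\id_B$ is a defect change of $B$ to itself; hence these assemble into a category naturally identified with $_A\cC_{A'}$.

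Second, I would address the relaxation of the partial-isometry hypothesis on actions $\rho,\lambda$ and on defect-change maps, mirroring Remark~\ref{droppiso}. The key inputs are Proposition~\ref{prop.surjsim} applied on both sides (so that any $A$-$A'$-bimodule embeds as a sub-bimodule of a free bimodule $A\ot V\ot A'$ whose canonical left and right actions $m\ot\id\ot\id$ and $\id\ot\id\ot m'$ are automatically partially isometric), together with the unitarity and semisimplicity of $_A\cC_{A'}$, which forces partially isometric bimodule maps and general bimodule maps to differ by at most scalar renormalizations on simple summands. Thus restricting to partial isometries loses no data at the level of isomorphism classes, and the category of fixed-point defects up to isomorphism coincides with $_A\cC_{A'}$ with no partial-isometry restriction.

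Third, specializing to $A'=A$, defects become excitations of $(\Z,A,-m^\dag m)$, and the category of such excitations is $_A\cC_A$ with composition of defect changes. To identify the tensor product (fusion), I would appeal to the two-defect model constructed in the preceding theorem. Fusing two adjacent excitations ${}_A B'_A$ and ${}_A B_A$ physically corresponds to placing them next to each other and taking the emergent low-energy degree of freedom between them. That low-energy subspace is exactly $\Ima P'=B'\otr[A]B$, where $P'=(\rho'\ot\id_B)(\id_{B'}\ot\lambda^\dag)$ is the projector computed earlier, whose image is the coequalizer presentation of the relative tensor product. So the fused defect is the bimodule $B'\otr[A]B$, endowed with its canonical left $A$-action inherited from $B'$ and right $A$-action inherited from $B$. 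Associativity and unitality of this fusion (with unit the regular bimodule ${}_A A_A$, realized by a trivial domain wall) follow from standard properties of relative tensor product of bimodules in the unitary fusion category $\cC$, so the excitation category is $(_A\cC_A,\otr[A],A)$ as claimed.

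The only step that is not essentially bookkeeping is identifying the fused Hilbert space with $B'\otr[A]B$ rather than merely with $\Ima P'$; this is handled by the universal property of the coequalizer established in Theorem~\ref{thm.rlm}(e), which I expect to be the main (and essentially only) technical obstacle, since once that identification is in hand, functoriality of $\otr[A]$ in defect-change morphisms and compatibility with composition of bimodule maps is immediate from general enriched-category considerations (compare Remark~\ref{canon}).
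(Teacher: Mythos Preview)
Your proposal is correct and matches the paper's approach: the paper states this theorem without proof, treating it as a direct assembly of the immediately preceding results (fixed-point local generators give bimodules; defect changes between fixed-points are bimodule maps; the two-defect ground state is $B'\otr[A]B$), together with the partial-isometry relaxation argument already spelled out in Remark~\ref{droppiso}. Your write-up simply makes explicit the bookkeeping the paper leaves implicit, and correctly identifies Theorem~\ref{thm.rlm}(e) as the technical input for the fusion identification.
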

Thus, although we build our model using symmetric operators in $\cC$, after renormalization of $A$, we find that the category of excitations becomes $_A\cC_A$. As symmetry is reflected by how excitations are ``added up'' or ``fused'', we should say that there is a new emergent symmetry $_A\cC_A$ at low energy.

Observe that in this model $(\Z,A,-m^\dag m)$, the Hamiltonian $m^\dag m$ is a symmetric operator in $\cC$, while the emergent symmetry at fixed-point is no longer $\cC$. Such phenomenon is in line with the usual spontaneous symmetry breaking (SSB) where the Hamiltonian has symmetry but the ground state breaks symmetry. Therefore, we consider our models as generalized spontaneous symmetry breaking phases; we call $_A \cC_A$ connected to $\cC$ by (generalized) SSB, i.e., by a model whose Hamiltonian has symmetry $\cC$ but the ground state subspace is $A$ and excitations have symmetry $_A\cC_A$. The ground state breaks symmetry unless $A$ is a \TL{Morita} trivial algebra. 
\begin{example}
    The usual complete SSB is achieved by taking $\cC=\Rep G$, $A=\Fun(G)$.  In this case, $\cC_A\cong \Ve$, $_A\cC_A\cong\Ve_G$.
\end{example}

Physically, we expect that $_A\cC_A$ should have the same quantum currents as $\cC$. Indeed,
\begin{theorem}[EGNO15~\cite{EGNO15}]
As a unitary fusion category $_A\cC_A\cong \Fun_\cC(\cC_A,\cC_A)^{\text{rev}}$, where $\Fun_\cC(\cC_A,\cC_A)$ is the category of $\cC$-module functors, and $^\text{rev}$  means reversing the tensor product. $_A\cC_A$ is (categorically) Morita equivalent to $\cC$; they have the same Drinfeld center $Z_1(_A\cC_A)\cong Z_1(\cC)$. Moreover, any unitary fusion category that is Morita equivalent to $\cC$, is equivalent to $_{A'}\cC_{A'}$ for some algebra $A'\in \cC$.
\end{theorem}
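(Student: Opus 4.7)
The plan is to construct the equivalence $F:{}_A\cC_A \to \Fun_\cC(\cC_A,\cC_A)^{\text{rev}}$ explicitly by sending a bimodule ${}_AB_A$ to the functor $F(B) = -\otr[A]B : \cC_A\to \cC_A$. First I would verify that $-\otr[A] B$ is indeed a $\cC$-module functor: given $M\in\cC_A$ and $V\in\cC$, the associator $\alpha_{V,M,B}:(V\ot M)\otr[A]B \cong V\ot (M\otr[A]B)$ supplies the $\cC$-module structure, and this follows from the universal property of the relative tensor product since the $A$-action on $V\ot M$ comes through the right factor $M$. On morphisms, $F$ sends a bimodule map $f:B\to B'$ to the natural transformation induced by $\id_M\otr[A]f$.

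Next I would show $F$ is monoidal into the reversed category. The key observation is the associativity isomorphism $(M\otr[A]B)\otr[A]B' \cong M\otr[A](B\otr[A]B')$, which as a natural transformation in $M$ identifies $F(B')\circ F(B) \cong F(B\otr[A]B')$. Reading this in $\Fun_\cC(\cC_A,\cC_A)$, composition $F(B')\circ F(B)$ is the image of $B\otr[A]B'$, which is the opposite order from $\otr[A]$ on bimodules — hence the $^{\text{rev}}$ on the target. The unit is matched by $A$ itself via $M\otr[A]A\cong M$. To see that $F$ is essentially surjective, I would use that a $\cC$-module functor $\Phi:\cC_A\to \cC_A$ is determined by the right $A$-module structure of $\Phi(A)$; moreover $\Phi(A)$ inherits a \emph{left} $A$-action from $A=\mathrm{End}_A(A)^{\mathrm{op}}$ acting on $\Phi$ by naturality, so $\Phi(A)\in {}_A\cC_A$ and $\Phi \cong -\otr[A]\Phi(A)$. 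Full faithfulness reduces to $\Hom_{A|A}(B,B')\cong \mathrm{Nat}_\cC(-\otr[A]B,-\otr[A]B')$, which follows by evaluating at $M=A$ and using the bimodule structure on $A\otr[A]B\cong B$. Unitarity of the equivalence follows since $\cC_A$ and $_A\cC_A$ are unitary whenever $\cC$ is and $A$ is a Frobenius algebra, allowing dagger-preserving natural transformations.

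For the statement $Z_1({}_A\cC_A)\cong Z_1(\cC)$, I would invoke the general principle that the Drinfeld center is a Morita invariant. Concretely, for any $\cC$-module category $\cM$, one has $\Fun_\cC(\cM,\cM)^{\text{rev}}$ Morita equivalent to $\cC$ and $Z_1(\Fun_\cC(\cM,\cM)^{\text{rev}})\cong Z_1(\cC)$; this can be proved by identifying both centers with the category of $\cC$-bimodule endofunctors of $\cM$, or equivalently with the category of pairs $(X,\gamma)$ where $X\in \cC$ and $\gamma$ is a half-braiding with respect to the $\cC$-action on $\cM$. Applied to $\cM=\cC_A$ this yields $Z_1({}_A\cC_A)\cong Z_1(\cC)$.

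The last claim — that every UFC Morita equivalent to $\cC$ arises as ${}_{A'}\cC_{A'}$ — I would prove by choosing the Morita equivalence itself: if $\cD$ is Morita equivalent to $\cC$, then there exists an indecomposable $\cC$-module category $\cM$ with $\cD\simeq \Fun_\cC(\cM,\cM)^{\text{rev}}$. Every indecomposable semisimple module category over a UFC is equivalent to $\cC_{A'}$ for some connected separable (here Frobenius) algebra $A'\in\cC$: pick a nonzero object $M_0\in\cM$, set $A':=[M_0,M_0]$ to be the internal hom, which naturally carries an algebra structure from composition of evaluations, and then $\cM\simeq \cC_{A'}$ via $M\mapsto [M_0,M]$. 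Combined with the first part, $\cD \simeq \Fun_\cC(\cC_{A'},\cC_{A'})^{\text{rev}}\simeq {}_{A'}\cC_{A'}$. The main obstacle in a fully self-contained treatment would be establishing the essential surjectivity on module categories, i.e., reconstructing the module category from an internal endomorphism algebra and verifying that $A'$ can be chosen so that ${}_{A'}\cC_{A'}$ is unitary (equivalently, $A'$ can be upgraded to a special symmetric Frobenius algebra); this is precisely the content of the Ostrik–EGNO classification, which we are invoking as a black box.
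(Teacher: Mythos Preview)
The paper does not supply its own proof of this theorem; it is quoted as a result from \cite{EGNO15} (see also Remark~\ref{rmk.bimofunc}, which records the functor $M\mapsto -\otr[A]M$ without argument). Your outline is correct and is precisely the standard EGNO/Ostrik argument the citation points to, so there is nothing to compare against in the paper itself.

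One small remark: in your essential surjectivity step you write that the left $A$-action on $\Phi(A)$ comes from ``$A=\mathrm{End}_A(A)^{\mathrm{op}}$ acting on $\Phi$ by naturality''. More precisely, it comes from the $\cC$-module structure of $\Phi$ applied to the free-module action $V\ot A\in\cC_A$ together with evaluation $\Phi(A\ot A)\cong A\ot\Phi(A)\to\Phi(A)$; naturality alone (as an ordinary functor) only gives you a map of the endomorphism algebra in $\Ve$, not the internal left $A$-action in $\cC$. This is a phrasing issue rather than a gap, since you are already assuming $\Phi$ is a $\cC$-module functor.
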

\begin{corollary}
    Connection by spontaneous symmetry breaking is just Morita equivalence. 
\end{corollary}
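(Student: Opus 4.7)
The plan is to unpack the definition of ``connection by spontaneous symmetry breaking'' and show it coincides exactly with the characterization of Morita equivalence provided by the cited theorem. Concretely, two unitary fusion categories $\cC$ and $\cD$ are connected by (generalized) SSB precisely when there exists a translation invariant fixed-point lattice model with bulk symmetry $\cC$, whose ground state subspace is some Frobenius algebra $A\in\cC$ and whose category of excitations is $\cD$; by the analysis preceding the corollary, this excitation category is identified with $_A\cC_A$. Hence ``connected by SSB'' is, by construction, the same as ``of the form $_A\cC_A$ for some Frobenius algebra $A\in\cC$.''

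First I would handle the ``only if'' direction. Suppose $\cD$ is connected to $\cC$ by SSB through some fixed-point model $(\Z,A,-m^\dagger m)$. Then by the identification just recalled, $\cD\cong{}_A\cC_A$. The cited theorem (EGNO15) asserts directly that $_A\cC_A$ is Morita equivalent to $\cC$, so $\cD$ is Morita equivalent to $\cC$.

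For the ``if'' direction, suppose $\cD$ is Morita equivalent to $\cC$. The second half of the cited theorem guarantees the existence of an algebra $A'\in\cC$ such that $\cD\cong{}_{A'}\cC_{A'}$ as unitary fusion categories. One should observe that such an $A'$ may be chosen to be a (special) Frobenius algebra: since $\cC$ is unitary and semisimple, every indecomposable semisimple $\cC$-module category is of the form $\cC_{A'}$ for a Frobenius algebra $A'$, and Morita equivalences are precisely realized by such module categories $\cC_{A'}$ via $_{A'}\cC_{A'}\cong\Fun_\cC(\cC_{A'},\cC_{A'})^{\text{rev}}$. Build the fixed-point lattice model $(\Z,A',-(m')^\dagger m')$ associated to this Frobenius algebra $A'$ as in Theorem~\ref{thm.Fro} and the surrounding discussion. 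The ground state subspace of this model is $A'$, and the category of excitations is $_{A'}\cC_{A'}\cong\cD$. By definition, this exhibits $\cD$ as connected to $\cC$ by SSB.

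The only non-trivial step is verifying that in the ``if'' direction the algebra $A'$ furnished by EGNO15 can be taken to be a (special) isometric Frobenius algebra in the sense required by Theorem~\ref{thm.Fro}, so that it produces an honest fixed-point commuting-projector model. This is where the unitarity of $\cC$ enters: in a unitary fusion category every connected separable algebra admits a canonical special Frobenius structure, and every indecomposable semisimple module category arises from such an algebra. Once this is in hand, the corollary follows immediately by combining the two directions, since the defining property of SSB used in the paper is exactly ``$\cD={}_A\cC_A$ for some Frobenius algebra $A\in\cC$,'' which by EGNO15 is synonymous with ``$\cD$ is Morita equivalent to $\cC$.''
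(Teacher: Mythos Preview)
Your proposal is correct and matches the paper's approach exactly: the paper states this corollary immediately after the EGNO15 theorem with no separate proof, treating it as an immediate consequence of the two assertions in that theorem (that $_A\cC_A$ is Morita equivalent to $\cC$, and that every Morita-equivalent UFC arises as $_{A'}\cC_{A'}$). Your write-up simply spells out both directions and flags the Frobenius/isometric upgrade of $A'$, which the paper leaves implicit; this is consistent with the paper's standing conventions (unitary $\cC$, Remark~\ref{droppiso}, and the discussion around Theorem~\ref{thm.Fro}) rather than a departure from them.
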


\section{The fixed-point model}\label{sec:model}
The Levin-Wen or string-net models~\cite{LW0404617} in 2+1D can be understood as taking a gapped boundary condition (boundary excitations are described by a UFC) as input, and producing a lattice model for the bulk topological order~\cite{KK1104.5047,LW1311.1784}, which exhibits boundary-bulk correspondence. In the last section we have figured out the gapped fixed-point of 1+1D lattice model with a given symmetry, together with all the possible boundary conditions, as well as excitations.
 Below we will further analyse the fixed-point Hamiltonians given by Frobenius algebras in $\cC$. In these models, we can verify the holographic principle: boundary determines bulk, or, bulk is the center of boundary\cite{KW1405.5858,KWZ1502.01690,KONG201762}, in the enriched setting\cite{KZ1704.01447,KYZZ2104.03121}:
\begin{align}
    Z_0(^\cC \cC_A)=^{Z_1(\cC)}\Fun_\cC(\cC_A,\cC_A). \label{eq.holo1}
\end{align}
Here 
\begin{itemize}
    \item $A$ is a Frobenius algebra in $\cC$;
    \item $\cC_A$ is the category of boundary conditions, which is a $\cC$-module, and by the canonical construction (see Remark~\ref{canon}), the boundary conditions form the enriched category $^\cC \cC_A$; \item $\Fun_\cC(\cC_A,\cC_A)\cong {}_A\cC_A^\text{rev}$ describes excitations in the 1+1D lattice model;
    \item $Z_1(\cC)$, which we identify with quantum currents, are operators that transport the excitations;
    \item $Z_0(^\cC \cC_A)$ is the $E_0$-center of $^\cC \cC_A$ defined in the 2-category of enriched categories \cite{KYZZ2104.03121}.
    \item $\Fun_\cC(\cC_A,\cC_A)$ is the ``relative'' $E_0$-center of $\cC_A$ with respect to $\cC$-action: $\cC\to \Fun(\cC_A,\cC_A)$.
\end{itemize}
$^\cC \cC_A$ is the data describing boundary conditions while $^{Z_1(\cC)}\Fun_\cC(\cC_A,\cC_A)$ is the data describing the bulk. Again by the canonical construction, what we need to verify is that the excitations (i.e., $\Fun_\cC(\cC_A,\cC_A)$) naturally form a \emph{monoidal}\cite{KYZZ2104.03121} module over the quantum currents $Z_1(\cC)$. We further list the holographic categorical symmetry correspondence between our 1+1D model with symmetry and 2+1D topological order with gapped boundaries described by the string-net model, both of which can be characterized by Eq.~\eqref{eq.holo1} that taking the $E_0$-center of an enriched category, in Table~\ref{tab:cat}.

\begin{table}[ht]
    \centering
\begin{tabular}{|c|c|c|}
\hline
$\cC$ & \makecell{A 1+1D gapped quantum\\system with symmetry\\described by UFC $\cC$} & \makecell{A 2+1D topological order\\described by string-net model\\with input UFC $\cC$} \\ \hline
$\cC_A$ & \makecell{0+1D fixed-point\\boundary conditions} & \makecell{\JR{A 1+1D gapped boundary}\\of the string-net model} \\ \hline
$Z_1(\cC)$ & Quantum currents & 2+1D bulk excitations \\ \hline
$[A,A]$ & \TL{Superconducting} quantum currents & \makecell{Condensed bulk excitations\\ \JR{on boundary described by $\cC_A$}} \\ \hline
\makecell{$\Fun_\cC(\cC_A,\cC_A)^{\text{rev}}$\\$\cong {}_A\cC_A$\\$\cong Z_1(\cC)_{[A,A]}$} & Fixed-point excitations & 1+1D boundary excitations \\ \hline
\end{tabular}
\caption{Holographic categorical symmetry viewed from $ Z_0(^\cC \cC_A)=^{Z_1(\cC)}\Fun_\cC(\cC_A,\cC_A)$. Here $[A,A]$ is the internal hom of Frobenius algebra $A$ (viewed as the regular $A$-$A$-bimodule), which is a Lagrangian algebra in $Z_1(\cC)$ describing \TL{superconducting} quantum currents. We will introduce it in subsection \ref{sec.conden} below.}
    \label{tab:cat}
\end{table}

\begin{remark}\label{rmk.MCA}
    A finite semisimple left $\cC$-module $\cM$  is equivalent to $\cC_A$ for some algebra $A\in \cC$.~\cite{Ost0111139,EGNO15}
\end{remark}

\begin{remark}
    Similar constructions of 1+1D lattice model can be found in Refs. \cite{Ina2110.12882,LDOV2112.09091,XZ2205.09656}. We motivate the construction from the idea of renormalization and give analysis on excitations and quantum currents.
\end{remark}

\subsection{\TL{Superconducting} quantum currents}\label{sec.conden}
In the following we show that, in the fixed-point model $(\Z,A,-m^\dag m)$ determined by the Frobenius algebra $(A,m,\eta)$, the \TL{superconducting} quantum currents form a Lagrangian algebra in the Drinfeld center $Z_1(\cC)$. Mathematically, the category of modules over the Lagrangian algebra in $Z_1(\cC)$ is automatically a monoidal module over $Z_1(\cC)$. As we will see soon, this category is also exactly the category of excitations. 

Recall Definition~\ref{def.condensed}. Supposing that the quantum current $(Q,\beta)$ is \TL{superconducting} in the model $(\Z,A,-m^\dag m)$, then there exists a realization $O\in (Q,\beta)$ whose target intertwiner commutes with the two terms of the form $m^\dag m$ supported around the target site. That is to say, there is $r\in \Hom(Q\ot A,A)$ such that
\begin{equation}
    \itk{
        \node[draw] (r) at (1,1) {$r$};
        \coordinate (rr) at (3,1) {};
        \node (q) at (0,0) {$Q$};
        \node (a1) at (1,0) {$A$};
        \node (a2) at (3,0) {$A$};
        \coordinate (m) at (2,2) ;
        \coordinate (md) at (2,3);
        \node (a1u) at (1,4) {$A$};
        \node (a2u) at (3,4) {$A$};
        \draw (q)--(r) (a1)--(r)--(m)node[right]{$m$}--(md) node[right]{$m^\dag$}--(a1u) (a2)--(rr)--(m) (md)--(a2u);
    }
    =
    \itk{
        \node[draw] (r) at (1,3) {$r$};
        \coordinate (rr) at (3,3) {};
        \node (q) at (0,2) {$Q$};
        \node (a1) at (1,0) {$A$};
        \node (a2) at (3,0) {$A$};
        \coordinate (m) at (2,1);
        \coordinate (md) at (2,2);
        \node (a1u) at (1,4) {$A$};
        \node (a2u) at (3,4) {$A$};
        \draw (q)--(r) (a1)--(m)node[right]{$m$}-- (md)node[right]{$m^\dag$}--(r)--(a1u) (a2)--(m) (md)--(rr)--(a2u);
    },\label{eq.rcommute1}
\end{equation}
\begin{equation}
    \itk{
        \node[draw] (r) at (3,1) {$r$};
        \coordinate (rr) at (1,1) {};
        \node (b) at (1,0) {};
        \node (q) at (0,-.5) {$Q$};
        \node (a1) at (1,-1) {$A$};
        \node (a2) at (3,-1) {$A$};
        \coordinate (m) at (2,2) ;
        \coordinate (md) at (2,3);
        \node (a1u) at (1,4) {$A$};
        \node (a2u) at (3,4) {$A$};
        \draw (q)--(r) (a1)--(b)node[below right]{$\beta$}--(rr)--(m)node[right]{$m$}--(md) node[right]{$m^\dag$}--(a1u) (a2)--(r)--(m) (md)--(a2u);
    }
    =
    \itk{
        \node[draw] (r) at (3,4) {$r$};
        \coordinate (rr) at (3,3) {};
        \node (q) at (0,2.5) {$Q$};
        \node (a1) at (1,0) {$A$};
        \node (a2) at (3,0) {$A$};
        \coordinate (m) at (2,1);
        \coordinate (md) at (2,2);
        \node (a1u) at (1,5) {$A$};
        \node (a2u) at (3,5) {$A$};
        \node[] (b) at (1,3) {};
        \draw (q)--(r) (a1)--(m)node[right]{$m$}-- (md)node[right]{$m^\dag$}--(b)node[above left]{$\beta$}--(a1u) (a2)--(m) (md)--(rr)--(r)--(a2u);
    }.\label{eq.rcommute2}
\end{equation}
Note that when such $r$ exists, we can choose the source intertwiner $l$ of a realization to be
\begin{equation}
    \itk{
        \node (a1) at (0,0) {$A$};
        \node (a2) at (0,2) {$A$};
        \node[draw] (l) at (0,1) {$l$};
        \node (q) at (1,2) {$Q$};
        \draw (a1)--(l)--(a2) (l)--(q);
    }
    =
    \itk{
        \node (a1) at (0,0) {$A$};
        \node (a2) at (0,3) {$A$};
        \node[draw] (r) at (0,1) {$r^\dag$};
        \node (b) at (0,2) {};
        \node (q) at (1,3) {$Q$};
        \draw (a1)--(r)-- (b)node[right]{$\beta$}-- (a2) (r)to[out=160,in=225] (0,2) -- (q);
    }
\end{equation}
such that this realization commutes with the Hamiltonian terms around its source site. Therefore, we have the following theorem
\begin{theorem}
        the quantum current $(Q,\beta)$ is \TL{superconducting} in $(\Z,A,-m^\dag m)$ if and only if the following equivalent conditions hold:
        \begin{enumerate}
            \item[(1)] There exists non-zero $ r\in \Hom_\cC(Q\ot A,A)$ satisfying Eq. \eqref{eq.rcommute1} and \eqref{eq.rcommute2}. 
            \item[(2)] There exists non-zero $ r\in\Hom_\cC(Q\ot A,A)$ satisfying
            \begin{equation}
            \itk{
                \coordinate (r) at (3,1) {};
                \coordinate (rr) at (1,1) {};
                \node[draw] (b) at (1,0) {$r$};
                \node (q) at (0,-.5) {$Q$};
                \node (a1) at (1,-1) {$A$};
                \node (a2) at (3,-1) {$A$};
                \coordinate (m) at (2,2) ;
                \coordinate (md) at (2,3);
                \draw (q)--(b) (a1)--(b)--(rr)--(m)node[right]{$m$}--(md) node[above]{$A$} (a2)--(r)--(m);
            }=
            \itk{
                \node[draw] (r) at (2,2) {$r$};
                \coordinate (rr) at (1,0) {};
                \coordinate (b) at (3,0) {};
                \node (q) at (0,1) {$Q$};
                \node (a1) at (1,-1) {$A$};
                \node (a2) at (3,-1) {$A$};
                \coordinate (m) at (2,1) ;
                \coordinate (md) at (2,3);
                \draw (q)--(r) (a1)--(rr)--(m)node[right]{$m$}--(r)--(md) node[above]{$A$} (a2)--(b)--(m);
            }=
            \itk{
                \node[draw] (r) at (3,1) {$r$};
                \coordinate (rr) at (1,1) {};
                \node (b) at (1,0) {};
                \node (q) at (0,-.5) {$Q$};
                \node (a1) at (1,-1) {$A$};
                \node (a2) at (3,-1) {$A$};
                \coordinate (m) at (2,2) ;
                \coordinate (md) at (2,3);
                \draw (q)--(r) (a1)--(b)node[below right]{$\beta$}--(rr)--(m)node[right]{$m$}--(md) node[above]{$A$} (a2)--(r)--(m);
            }.
            \end{equation}
            \item[(3)] There exists non-zero  $ r\in \Hom_\cC(Q\ot A,A)$ which is an $A$-$A$-bimodule map, with respect to the free $A$-$A$-bimodule structure on $Q\ot A$.
        \end{enumerate}
\end{theorem}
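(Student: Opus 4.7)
The plan is to prove the theorem in two stages: first, that $(Q,\beta)$ is superconducting if and only if condition (1) holds; then establish the chain of equivalences (1) $\Leftrightarrow$ (2) $\Leftrightarrow$ (3).

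For the superconducting equivalence, the direction (1) $\Rightarrow$ superconducting is essentially completed in the discussion preceding the theorem: given $r$ satisfying \eqref{eq.rcommute1} and \eqref{eq.rcommute2}, one can take the source intertwiner $l$ to be the leg-bent composite of $r^\dag$ and $\beta_{Q,A}$, producing a realization whose source intertwiner obeys the mirror commutation relations near the source site $s$. The remaining Hamiltonian terms commute with this realization either trivially (if supported outside $\{s\}\cup\BM\cup\{t\}$) or by the naturality condition \eqref{eq.qca} of the half-braiding (if supported inside $\BM$). For the converse, a superconducting realization $O$ must in particular commute with the two Hamiltonian terms $(m^\dag m)_{t-1}$ and $(m^\dag m)_t$ flanking the target site $t$; since $O$ restricts to $r$ on $\{t\}$, to $\beta_{Q,A}$ on the neighboring intermediate site $t-1$, and to the identity on $t+1$, these two commutation conditions spell out exactly \eqref{eq.rcommute1} and \eqref{eq.rcommute2}.

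For (1) $\Rightarrow$ (2), I would apply the counit $\eta^\dag:A\to \one$ of the Frobenius coalgebra to the rightmost output leg of \eqref{eq.rcommute1}. The counit axiom $(\id_A\ot\eta^\dag)\circ m^\dag = \id_A$ collapses the factor $m^\dag m$ on each side into a bare $m$: the LHS becomes $m\circ(r\ot\id_A)$ and the RHS becomes $r\circ(\id_Q\ot m)$, which is the right $A$-linearity (first equality) of (2). A symmetric collapse applied to \eqref{eq.rcommute2}, which preserves the $\beta_{Q,A}$ on the intermediate leg throughout the manipulation, produces the second equality of (2). The equivalence (2) $\Leftrightarrow$ (3) then reduces to unpacking definitions: the free $A$-$A$-bimodule structure on $Q\ot A$ has right action $\id_Q\ot m$ and left action obtained by first using $\beta_{Q,A}^{-1}$ to transport the left $A$ across $Q$ and then multiplying. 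With these actions in hand, the two equalities of (2) are precisely the right and left $A$-linearity of $r$.

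For (3) $\Rightarrow$ (1), I would run the counit collapse in reverse: compose the right $A$-linearity $m\circ(r\ot\id_A) = r\circ(\id_Q\ot m)$ with $m^\dag$ on its output $A$, and apply the Frobenius identities $(\id_A\ot m)(m^\dag\ot\id_A) = m^\dag m = (m\ot\id_A)(\id_A\ot m^\dag)$ to reintroduce the $m^\dag m$ factors on both sides of \eqref{eq.rcommute1}. The analogous manipulation using the $\beta$-twisted left $A$-linearity reconstructs \eqref{eq.rcommute2}. The main obstacle I anticipate is bookkeeping the positions and directions of the half-braiding in the string diagrams, and checking in particular that the convention used in the free-bimodule left action matches the $\beta_{Q,A}$ appearing in \eqref{eq.rcommute2} (rather than $\beta_{A,Q}$ or an inverse); once this convention is pinned down, the Frobenius identity and the counit relation make the rest of the diagrammatic manipulations straightforward.
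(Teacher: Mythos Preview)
Your proposal is correct and takes essentially the same approach as the paper. The paper's own proof is extremely terse: it writes out the free bimodule actions on $Q\ot A$ (with the left action using $\beta^{-1}$, confirming your anticipated convention check), spells out what it means for $r$ to be a bimodule map, and then dismisses the equivalence of (1), (2), (3) as ``an easy exercise using the unitality and Frobenius condition of $A$ and the naturality of $\beta$.'' Your counit-collapse for $(1)\Rightarrow(2)$ and Frobenius-reinflation for $(3)\Rightarrow(1)$ are precisely the contents of that exercise; note only that for \eqref{eq.rcommute2} the counit should be applied to the \emph{left} output leg (then naturality of $\beta$ with respect to $\eta^\dag$ eliminates the half-braiding on the RHS, leaving it intact on the LHS), which is the ``symmetric'' collapse you allude to.
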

\begin{proof}
    The left and right actions of $A$ on $Q\ot A$ are respectively
    \begin{equation}\label{eq.bimoleftact}
        \itk{
            \node (a) at (0,0) {$A$};
            \node (q1) at (1,0) {$Q$};
            \node (q2) at (1,3) {$Q$};
            \node (a1) at (2,0) {$A$};
            \node (a2) at (2,3) {$A$};
            \coordinate (m) at (2,2);
            \node (b) at (1,1) {};
            \draw (a)--(b)node[above left]{$\beta^{-1}$}--(m)node[right]{$m$} (q1)--(q2) (a1)--(a2);
        },\qquad \qquad
        \itk{
            \node (a) at (3,1) {$A$};
            \node (q1) at (1,0) {$Q$};
            \node (q2) at (1,3) {$Q$};
            \node (a1) at (2,0) {$A$};
            \node (a2) at (2,3) {$A$};
            \coordinate (m) at (2,2);
            \draw (a)--(m)node[left]{$m$} (q1)--(q2) (a1)--(a2);
        } .
    \end{equation}
    Then, $r$ being a bimodule map simply means that
    \begin{equation}
        \itk{
            \node (a) at (0,0) {$A$};
            \node (q1) at (1,0) {$Q$};
            \coordinate (q2) at (1,2) {};
            \node (a1) at (2,0) {$A$};
            \node (a2) at (2,4) {$A$};
            \coordinate (m) at (2,2);
            \node (b) at (1,1) {};
            \node[draw] (r) at (2,3) {$r$};
            \draw (a)--(b)node[above left]{$\beta^{-1}$}--(m)node[right]{$m$} (q1)--(q2)--(r) (a1)--(r)--(a2);
        }=
        \itk{
            \node (a) at (0,0) {$A$};
            \node (q1) at (1,0) {$Q$};
            \coordinate (q2) at (1,2) {};
            \node (a1) at (2,0) {$A$};
            \node (a2) at (2,3) {$A$};
            \coordinate (m) at (2,2);
            \node[draw] (r) at (2,1) {$r$};
            \draw (a)--(m)node[right]{$m$} (q1)--(r) (a1)--(r)--(a2);
        },
    \end{equation}
        
    \begin{equation}
        \itk{
            \node (a) at (3,1) {$A$};
            \node (q1) at (1,1) {$Q$};
            \node (q2) at (1,4) {$A$};
            \node[draw] (r) at (1,3) {$r$};
            \node (a1) at (2,1) {$A$};
            \coordinate (m) at (2,2);
            \draw (a)--(m)node[right]{$m$}--(r) (q1)--(r)--(q2) (a1)--(m);
        } =
        \itk{
            \node (a) at (3,1) {$A$};
            \node (q1) at (1,1) {$Q$};
            \node (q2) at (1,4) {$A$};
            \node[draw] (r) at (1,2) {$r$};
            \node (a1) at (2,1) {$A$};
            \coordinate (m) at (1,3);
            \draw (a)--(m) (q1)--(r)--(m)node[right]{$m$}--(q2) (a1)--(r);
        }.
    \end{equation}
    Then, the equivalence between (1)(2)(3) is an easy exercise using the unitality and Frobenius condition of $A$ and the naturality of $\beta$.
\end{proof}
\begin{remark}
    This result is physically reasonable, that a realization of quantum current is \TL{superconducting} if and only if its source and target intertwiners in $\cC$ are also morphisms or symmetric operators (i.e., $A$-$A$-bimodule maps) in the new emergent symmetry $_A\cC_A$.
\end{remark}

Note that the free bimodule functor
\begin{align}
   -\ot A: Z_1(\cC) &\to {}_A\cC_A,\nonumber\\
    (Q,\beta) &\mapsto Q\ot A,
\end{align}
with bimodule structures given above, is moreover a central functor~\cite{DMNO1009.2117} which can be identified with $Z_1(\cC)\cong Z_1({}_A\cC_A) \xrightarrow{\text{Forget}} {}_A\cC_A$. Let $[A,-]$ be the right adjoint of $-\ot A$, by results in~\cite{DMNO1009.2117}, $[A,A]$ has a canonical structure of a Lagrangian algebra in $Z_1(\cC)$, and $_A\cC_A$ is identified with the category of right $[A,A]$-modules in $Z_1(\cC)$. We now rephrase the results of~\cite{DMNO1009.2117} in the language of internal hom. 

    Note that given a bimodule $B\in {}_A\cC_A$ and an object $(Q,\beta)\in Z_1(\cC)$, $Q\ot B$ has a natural structure of $A$-$A$-bimodule, defined similarly as that of $Q\ot A$. In other words, $_A\cC_A$ is a left $Z_1(\cC)$-module with module action $(Q,B) \mapsto Q\ot B: Z_1(\cC)\times {_A\cC_A}\to {_A\cC_A}$.  It is not hard to check that this action is moreover a monoidal functor, and thus $_A\cC_A$ is a monoidal module over $Z_1(\cC)$. 
    This justifies our notation of internal hom for the right adjoint of $-\ot A$. We now have the internal hom adjunction for any quantum current $(Q,\beta)$ and $A$-$A$-bimodules $B,B'$:
    \begin{align}\label{eq.biminternal}
        \Hom_{_A\cC_A}(Q\ot B, B')\cong \Hom_{Z_1(\cC)} ((Q,\beta),[B,B']).
    \end{align}
    Be reminded that when computing this internal hom, $B,B'$ are viewed as $A$-$A$-bimodules, instead of objects in $\cC$.\footnote{Do not confuse with Example~\ref{eg.z1enri}. Example~\ref{eg.z1enri} may be thought as a special case with trivial algebra $A=\one$.} Given three bimodules $B,B',B''$, there is a canonical associative composition morphism
   \begin{equation}  [B',B'']\otimes [B,B'] \to [B,B''],  \end{equation} defined as the image of the following morphism under the internal hom adjunction \eqref{eq.biminternal}
   \begin{equation} [B',B'']\otimes [B,B']\otimes B\xrightarrow{\id_{[B',B'']}\ot\ev_{B,B'}} [B',B'']\ot B'\xrightarrow{\ev_{B',B''}} B''. \end{equation} In particular, the composition $[A,A]\ot [A,A]\to [A,A]$ is exactly the multiplication of the Lagrangian algebra $[A,A]$.  
   Since $[A,A]$ is a commutative algebra, a right $[A,A]$-module is automatically an $[A,A]$-$[A,A]$-bimodule. The category of right $[A,A]$-modules $Z_1(\cC)_{[A,A]}$ is thus a monoidal category with monoidal structure $\otr[{[A,A]}]$.
   Finally, the monoidal equivalence between $_A\cC_A$ and $Z_1(\cC)_{[A,A]}$ is given by
   the functors\cite{DMNO1009.2117,Ost0111139}
   \begin{align}
       {_A\cC_A} &\cong Z_1(\cC)_{[A,A]}\nonumber\\
       B &\mapsto [A,B],\\
       M\otr[{[A,A]}] A &\mathrel{\reflectbox{\ensuremath{\mapsto}}} M,
   \end{align}
   where the right action $[A,B]\otimes[A,A]\to [A,B]$ of $[A,A]$ on $[A,B]$ is given by the composition of internal hom,  and the left action of $[A,A]$ on $A$ is given by the evaluation of internal hom $\ev_{A,A}: [A,A]\ot A\to A$. The functor $M\mapsto M\otr[{[A,A]}] A$ is obviously monoidal, and so is its inverse $B\mapsto [A,B]$.

Coming back to the regular $A$-$A$-bimodule $A$, the adjunction $(-\ot A) \dashv [A,-] $ reads
\begin{equation}\label{eq.AAinternal}
    \Hom_{_A\cC_A}(Q\ot A,A) \cong \Hom_{Z_1(\cC)}((Q,\beta),[A,A]),
\end{equation}
by which know that the $A$-$A$-bimodule maps between $Q\ot A$ and $A$ are in natural bijection with the morphisms between $(Q,\beta)$ and $[A,A]$ in $Z_1(\cC)$. Physically, $\Hom_{_A\cC_A}(Q\ot A,A)$ is the ways how $(Q,\beta)$ can be \TL{superconducting}, which is the same as the ways $(Q,\beta)$ can be mapped into $[A,A]$, i.e., $\Hom_{Z_1(\cC)}(Q,[A,A])$. In other words, the quantum current $[A,A]$ provides the universal answer to how an arbitrary quantum current $(Q,\beta)$ can be \TL{superconducting}. Therefore
\begin{theorem}
   The Lagrangian algebra $[A,A]\in Z_1(\cC)$ is the universal quantum current that is \TL{superconducting} in $(\Z,A,-m^\dag m)$. The excitations are related to the \TL{superconducting} quantum currents via
   \begin{align}
   \Fun_\cC(\cC_A,\cC_A)^{\text{rev}}\cong {}_A\cC_A\cong Z_1(\cC)_{[A,A]}.
   \end{align}
   The first equivalence is given by Remark \ref{rmk.bimofunc}.
\end{theorem}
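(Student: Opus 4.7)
The plan is to leverage the adjunction $(-\otimes A)\dashv [A,-]$ and the bimodule-map characterization of superconducting currents established just above the theorem. The main statement splits naturally into two parts: (i) universality of $[A,A]$ among superconducting quantum currents, and (ii) the chain of monoidal equivalences describing excitations.

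For (i), I would begin by recalling the preceding theorem, which says that a quantum current $(Q,\beta)$ is superconducting in $(\Z,A,-m^\dagger m)$ if and only if there exists a nonzero $A$-$A$-bimodule map $r\colon Q\otimes A \to A$, where $Q\otimes A$ carries the free bimodule structure defined in \eqref{eq.bimoleftact}. The space of such maps is by definition $\Hom_{{}_A\cC_A}(Q\otimes A, A)$. Applying the internal hom adjunction \eqref{eq.AAinternal} gives a natural isomorphism
\begin{equation}
\Hom_{{}_A\cC_A}(Q\otimes A,A)\cong \Hom_{Z_1(\cC)}((Q,\beta),[A,A]).
\end{equation}
Thus every way of making $(Q,\beta)$ superconducting corresponds to a morphism $(Q,\beta)\to [A,A]$ in $Z_1(\cC)$, and conversely every such morphism yields a superconducting realization by composition with $\ev_{A,A}$. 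Since this bijection is natural in $(Q,\beta)$, the object $[A,A]$ represents the functor sending a current to its set of superconducting structures, which is exactly the universal property asserted.

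For (ii), the first equivalence $\Fun_\cC(\cC_A,\cC_A)^{\mathrm{rev}}\cong {}_A\cC_A$ is the standard identification of bimodules with $\cC$-module endofunctors, cited as Remark~\ref{rmk.bimofunc}, so I would simply invoke it. The second equivalence ${}_A\cC_A\cong Z_1(\cC)_{[A,A]}$ is the Davydov--M\"uger--Nikshych--Ostrik theorem that was already spelled out in the paragraphs preceding the theorem: the functors $B\mapsto [A,B]$ and $M\mapsto M\otimes_{[A,A]}A$ are mutually inverse monoidal equivalences, where the right $[A,A]$-action on $[A,B]$ is composition of internal homs and the left $[A,A]$-action on $A$ is the evaluation $\ev_{A,A}$. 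I would note that commutativity of $[A,A]$ as a Lagrangian algebra is what makes $Z_1(\cC)_{[A,A]}$ monoidal under $\otimes_{[A,A]}$, and that the identification of the central functor $-\otimes A\colon Z_1(\cC)\to {}_A\cC_A$ with the forgetful functor $Z_1(\cC)\cong Z_1({}_A\cC_A)\to {}_A\cC_A$ guarantees compatibility with braidings.

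The proof is therefore essentially an assembly of previously established adjunctions and equivalences; there is no serious obstacle beyond checking that the bijections are compatible with the monoidal and central structures. The only delicate point, which I would be careful to verify, is that the bimodule structure on $Q\otimes A$ used to define superconducting currents is exactly the one coming from the central functor $-\otimes A$ and involves the half-braiding $\beta^{-1}$ on the left action; without this, the adjunction \eqref{eq.AAinternal} would not land in $Z_1(\cC)$ but merely in $\cC$, and the identification of $[A,A]$ with a Lagrangian algebra in the Drinfeld center would break down.
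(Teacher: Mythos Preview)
Your proposal is correct and matches the paper's approach: the theorem is stated as a summary of the preceding discussion, and the paper's argument is precisely the assembly of the bimodule-map characterization of superconducting currents, the internal-hom adjunction \eqref{eq.AAinternal}, and the DMNO equivalence ${}_A\cC_A\cong Z_1(\cC)_{[A,A]}$ via $B\mapsto [A,B]$ and $M\mapsto M\otimes_{[A,A]}A$. Your care about the bimodule structure on $Q\otimes A$ using $\beta^{-1}$ is well placed and aligns with the paper's emphasis that $-\otimes A$ is a central functor.
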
 

\begin{remark}\label{rmk.maxcondensed}
Consider a simple quantum current $i\in Z_1(\cC)$. If $i$ is \TL{superconducting}, it is necessarily a direct summand of $[A,A]$. Moreover, the number of copies of $i$ in $[A,A]$ is the number of ways how $i$ can be \TL{superconducting}. Therefore, intuitively, $[A,A]$ is the ``maximal'' \TL{superconducting} quantum current, in the sense that $[A,A]$ contains all \TL{superconducting} simple quantum currents. However, $[A,A]$ is not maximal in the literal sense; an arbitrarily large quantum current $(Q,\beta)$ is \TL{superconducting} as long as it shares a simple object with $[A,A]$. Mathematically, we have to define the \TL{superconducting} of quantum currents this way (Definition~\ref{def.condensed}), such that any non-zero $A$-$A$-bimodule map $Q\ot A\to A$ is a way of \TL{superconducting}, and for any non-zero morphism $(Q',\beta')\to (Q,\beta)$, the composition $Q'\ot A\to Q\ot A\to A$ is still a way of \TL{superconducting}. In other words, the notion of \TL{superconducting} needs to be compatible with the composition of morphisms between quantum currents. Then the ways of \TL{superconducting} is computed by the functor $\Hom_{_A\cC_A}(-\ot A,A)$ and we can extract the representing object $[A,A]$. The requirement for all direct summands of $(Q,\beta)$ to be \TL{superconducting}, is equivalently restricting to the subspace spanned by monomorphisms in $\Hom_{Z_1(\cC)}((Q,\beta),[A,A])$, which is quite unnatural.
    
\end{remark}

\begin{remark}
    The Lagrangian algebra $[A,A]$ is the \emph{full center}\cite{Dav0908.1205} of $A$.
\end{remark}

\begin{remark}
Conversely, suppose that $L$ is a Lagrangian algebra in $Z_1(\cC)$. $Z_1(\cC)_{L}$ is then a unitary fusion category, that is Morita equivalent to $\cC$\cite{DMNO1009.2117,ENO0909.3140}, and there must exist an algebra $A\in \cC$ such that $_A\cC_A\cong Z_1(\cC)_{L}$. \TL{In fact, such $A$ can be chosen as an indecomposable sub-algebra of the image of $L$ in $\cC$ under the forgetful functor $Z_1(\cC)\to \cC$.}
\end{remark}

Based on the above analysis, we see that the excitations in the fixed-point model indeed form a monoidal module category over the quantum currents $Z_1(\cC)$. By the canonical construction, we have verified that the excitations are described by the enriched fusion category $^{Z_1(\cC)} \Fun_\cC(\cC_A,\cC_A)$.

\subsection{The universal model}\label{sec.universal}
Let $\cC=\Rep G$. First, we consider Frobenius algebras $A:=\Fun(G/H)$ 
in $\Rep G$ (Proposistion \ref{prop.omega2} and Example \ref{eg.funcoset}), which are $\C$-valued functions on cosets $G/H$. If we just take local Hilberts space to be $A$, the model $(\Z,A,-m^\dag m)$ may be too small to for us to see all possible excitations. Thus, we prefer to take $(\Z,\Fun(G), -(\iota_A\ot \iota_A) m^\dag m (\iota_A^\dag\ot\iota_A^\dag))$, where the local Hilbert space is the large enough vector space $\Fun(G)$, and
\begin{align}
    \iota_A: A &\to \Fun(G)\nonumber\\
     \sum_{gH} a_{gH} gH &\mapsto \sum_{gH} a_{gH}\sum_{x\in gH} x 
\end{align}
is the embedding of $A$ into $\Fun(G)$.  Graphically, the local term of the Hamiltonian is
\begin{equation}
    \begin{tikzpicture}[baseline=(current bounding box.center),>={Triangle[open,width=0pt 20]},yscale=0.8]
      \node[draw] (e1) at (-1,-1) {$\iota_A^\dag$};
      \node[draw](e2) at (1,-1) {$\iota_A^\dag$};
      \node[draw](g) at (-1,2) {$\iota_A$};
      \node[draw](h) at (1,2) {$\iota_A$};
      \draw
      (-1,-2)node[below]{$\Fun(G)$} -- (e1) -- (-.5,-.5)node[above]{$A$} -- (0,0)node[below]{$m$} -- (0,.5)node[right]{$A$} -- (0,1)node[above]{$m^\dag$} -- (-.5,1.5)node[below]{$A$} -- (g) -- (-1,3)node[above]{$\Fun(G)$}
      (1,-2)node[below]{$\Fun(G)$} -- (e2) -- (.5,-.5)node[above]{$A$} -- (0,0) -- (0,1) -- (.5,1.5)node[below]{$A$} -- (h) -- (1,3)node[above]{$\Fun(G)$};
   \end{tikzpicture}.
\end{equation}
It is not hard to check that $(\Z,A,-m^\dag m)$ is the renormalized system of $(\Z,\Fun(G),-(\iota_A\ot \iota_A) m^\dag m (\iota_A^\dag\ot\iota_A^\dag))$ after the Hilbert space renormalization $(\id_\Z, \{U_i=\iota_A^\dag, \forall i\in\Z\})$.

For the Frobenius algebras $(H,\omega_2)$ with nontrivial $\omega_2$ as in Example~\ref{eg.twistedgroupalg}, they can be embedded into two adjacent sites $\Fun(G)\ot \Fun(G)$. In other words, one can similarly define a Hamiltonian on $(\Z,\Fun(G))$ that renormalizes to $(\Z,A,-m^\dag m)$ using the embedding in Example~\ref{eg.twistedgroupalg}. These pairs $(H,\omega_2)$ classify all 1+1D bosonic gapped phases with $G$-symmetry\cite{Chen_2011,Schuch_2011,CGW1103.3323}. \TL{The physical meaning of the data $(H,\omega_2)$ is that the symmetry is spontaneously broken from $G$ to $H$, i.e., $H$ is the unbroken subgroup, and $\omega_2\in H^2(H,U(1))$ labels the SPT (symmetry protected topological) order under the unbroken $H$.} Therefore, the lattice $(\Z,\Fun(G))$ can host all possible $G$-symmetric gapped phases.

Moreover, $\Fun(G)$ as the regular group representation, contains all possible irreps of $G$
\begin{equation} \Fun(G)\cong \oplus_{i\in\irr(\Rep(G))}  i^{\oplus d_i}, \end{equation}
where $d_i$ is the dimension of irrep $i$. The simple fixed-point boundary conditions and fixed-point defects, can be embedded into a suitable free module $i\ot A$ or free bimodule $A\ot i\ot A'$ (see Remark~\ref{separable}) and then embedded into several (at most five) $\Fun(G)$ sites. Therefore, we know in the models constructed on the lattice $(\Z,\Fun(G))$, indeed all possible fixed-point boundary conditions and fixed-point defects/excitations can be seen.

\subsection{Example \texorpdfstring{$G=\Z_2$}{G=Z2}}
Let $G=\Z_2=\{ 1, \zeta \}$ and $\irr(\Rep \Z_2)=\{(\alpha_+,\rho^+),(\alpha_-,\rho^-)\}$. Simple objects in $Z_1(\Rep \Z_2)$ are
    \begin{align}
        (C_1,\alpha_+)=:\mathbbm{1},(C_\zeta,\alpha_+)=:m,(C_1,\alpha_-)=:e,(C_\zeta,\alpha_-)=:\psi,
    \end{align}
    where $\mathbbm{1},m,e,\psi$ are labels of 2+1D toric code anyons (this is a holographic categorical symmetry point of view). We elaborate on all choices of Frobenius algebra $A$ in $\Rep \Z_2$, and compute the corresponding $\cC_A$, ${_A}\cC_A$ and $[A,A]$. We summarize the result in Table~\ref{tab:z2}.\footnote{For Abelian group $G$ we have $\Ve_G\cong \Rep G$. Here we distinguish them in order to be consistent with cases that $G$ is non-Abelian. Here $\Ve_{\Z_2}\cong \Rep \Z_2$, and therefore $Z_1(\Rep \Z_2)_{\mathbbm{1}\oplus e}\cong Z_1(\Rep \Z_2)_{\mathbbm{1}\oplus m}$. This is consistent with the fact that exchanging $e$ and $m$ is a braided auto-equivalence of $Z_1(\Rep \Z_2)$.}
    \begin{table}[ht]
        \centering
\begin{tabular}{|c|c|c|c|c|}
\hline
$A=\Fun(G/H)$ & $\cC_A\cong\Rep H$ & Unbroken symmetry & $_A{\cC}_A$ & $[A,A]$ \\ \hline
$\Fun(\Z_2)$ & $\Ve$ & None & $\Ve_{\Z_2}$ & $\mathbbm{1}\oplus e$ \\ \hline
$\Fun(\Z_2/\Z_2)=\alpha_+$ & $\Rep \Z_2$ & $\Z_2$ & $\Rep \Z_2$ & $\mathbbm{1}\oplus m$ \\ \hline
\end{tabular}
    \caption{Frobenius algebra $A$, $\cC_A$, $_A{\cC}_A$ and Lagrangian algebra $[A,A]$ in $\Rep \Z_2$. Each $H$ is a subgroup of $G$. The ground state subspace is $A$. $\cC_A\cong\Rep H$ is given in Proposition \ref{prop.repH}. And in the column ``unbroken symmetry'' we list the left symmetry if there is a SSB, or $G$ if there is no SSB.}
    \label{tab:z2}
    \end{table}
    
The whole universal model for $G=\Z_2$ can be summarized by the transverse field Ising model with Hamiltonian
\begin{align}
        H=- \sum_i \sigma_z^i \sigma_z^{i+1}-h \sum_i \sigma_x^i,
\end{align}
where the nontrivial global symmetry action is $\prod_i \sigma_x^i$, and $h$ is the field strength. When $|h|<1$, the model is in the $\Z_2$ SSB phase; when $|h|>1$, the model is in the  $\Z_2$ symmetric phase (trivial $\Z_2$ SPT phase). Below are the explicit calculations:

\begin{example}
    Let the Frobenius algebra \begin{equation} A:=\left\langle\begin{matrix}  |1\ra,|\zeta\ra \end{matrix}\right\rangle \cong\Fun(\Z_2). \end{equation}
    The multiplication $m:A\ot A\to A$ in the above basis is (denote $m(a,b)$ by $a\cdot b$) 
    \begin{equation}\label{eq.z2mul}
    | 1\ra\cdot | 1\ra=| 1\ra,| \zeta\ra\cdot | \zeta\ra=| \zeta\ra,| 1\ra\cdot | \zeta\ra=| \zeta\ra\cdot | 1\ra=0. 
    \end{equation}
    The group action of $\Z_2$ on $A$ is \begin{equation} 1|1\ra=|1\ra,1 |\zeta\ra=|\zeta\ra,\zeta |1\ra=|\zeta\ra,\zeta |\zeta\ra=|1\ra, \end{equation}
    We have $A\cong\la |1\ra+|\zeta\ra,|1\ra-|\zeta\ra \ra \cong\alpha_+\oplus\alpha_-$. The embedding $\iota_A$ is trivial, and the universal model is just $(\Z,\Fun(\Z_2), m^\dag m)$. The Hamiltonian is
    \begin{align}
        H=-\sum_i (m^\dag m)_i=-\sum_i \frac{1}{2}(\sigma_z^i \sigma_z^{i+1}+1),
    \end{align}
    where each term assigns lower energy on subspace $\left\langle\begin{matrix} |1\ra|1\ra,|\zeta\ra|\zeta\ra \end{matrix}\right\rangle$. It corresponds to the $\Z_2$ SSB phase up to an energy zero point shift.

    We depict the $\Z_2$ action on $A$ by the following intuitive diagram:
    \begin{equation}
    \begin{tikzcd}
	{|1\ra}  &  {|\zeta\ra} 
	\arrow["\zeta", shift left=1, from=1-1, to=1-2]
	\arrow["\zeta", shift left=1, from=1-2, to=1-1]
    \end{tikzcd},
    \end{equation}
    where the general rule and a more nontrivial example is explained in Convention~\ref{Cayley}
    
    We now compute the simple right $A$-modules and $A$-$A$-bimodules in $\Rep \Z_2$. By Remark~\ref{separable}, we only need to decompose the free (bi)modules. Denote the basis of irreps of $\Z_2$ by
    \begin{gather}
       \alpha_+=\left\langle\begin{matrix}  |+\ra \end{matrix}\right\rangle, \text{where }1|+\ra=\zeta|+\ra=|+\ra,\nonumber\\
       \alpha_-=\left\langle\begin{matrix}  |-\ra \end{matrix}\right\rangle, \text{where }1|-\ra=|-\ra, \zeta|-\ra=-|-\ra.
    \end{gather}
    We check the free right $A$-modules:
    \begin{enumerate}
        \item $\alpha_+\ot A=\left\la\begin{matrix}  |+\ra|1\ra,|+\ra|\zeta\ra \end{matrix}\right\ra$ 
        is simple. In terms of diagram,
        \begin{equation}
        \begin{tikzcd}
	   {|+\ra |1\ra}  &  {|+\ra |\zeta\ra} 
	   \arrow["\zeta", shift left=1, from=1-1, to=1-2]
	   \arrow["\zeta", shift left=1, from=1-2, to=1-1]
        \end{tikzcd}.
        \end{equation}
        The nontrivial $A$-module action of $\alpha_+\ot A$ is
        \begin{align}
            \alpha_+\ot A\ot A&\to \alpha_+\ot A \nonumber\\
            |+\ra|1\ra \ot |1\ra  &\mapsto |+\ra|1\ra, \nonumber\\
            |+\ra|\zeta\ra \ot |\zeta\ra  &\mapsto |+\ra|\zeta\ra.
        \end{align}
        
        \item $\alpha_-\ot A=\left\langle\begin{matrix}  |-\ra|1\ra,|-\ra|\zeta\ra \end{matrix}\right\rangle$. In terms of diagram,
        \begin{equation}
        \begin{tikzcd}
	   {|-\ra |1\ra} &  {-|-\ra |\zeta\ra} 
	   \arrow["\zeta", shift left=1, from=1-1, to=1-2]
	   \arrow["\zeta", shift left=1, from=1-2, to=1-1]
        \end{tikzcd}.
        \end{equation}
         We see $\alpha_-\ot A\cong \alpha_+\ot A$ through the isomorphism $|-\ra |1\ra\mapsto |+\ra |1\ra$, $-|-\ra |\zeta\ra\mapsto |+\ra |\zeta\ra$.
    \end{enumerate}
    We conclude that $(\Rep \Z_2)_A\cong\Ve$.

    Then we check the free $A$-$A$-bimodules:
    \begin{enumerate}
        \item $A\ot\alpha_+\ot A=
        \left\langle\begin{matrix} 
        |1\ra|+\ra|1\ra,|1\ra|+\ra|\zeta\ra,|\zeta\ra|+\ra|1\ra,|\zeta\ra|+\ra|\zeta\ra
        \end{matrix}\right\rangle $ \\  
        $\cong 
        \left\langle\begin{matrix} 
        |1\ra|+\ra|1\ra,|\zeta\ra|+\ra|\zeta\ra 
        \end{matrix}\right\rangle \oplus
        \left\langle\begin{matrix} 
        |1\ra|+\ra|\zeta\ra,|\zeta\ra|+\ra|1\ra 
        \end{matrix}\right\rangle=\bar 1\oplus \bar \zeta$. In terms of diagram,
        \begin{equation}
        \begin{tikzcd}
	   {|1\ra |+\ra |1\ra} & {|\zeta\ra |+\ra |\zeta\ra} 
	   \arrow["\zeta", shift left=1, from=1-1, to=1-2]
	   \arrow["\zeta", shift left=1, from=1-2, to=1-1]
        \end{tikzcd},\ \ 
        \begin{tikzcd}
	   {|1\ra |+\ra |\zeta\ra}  &  {|\zeta\ra |+\ra |1\ra} 
	   \arrow["\zeta", shift left=1, from=1-1, to=1-2]
	   \arrow["\zeta", shift left=1, from=1-2, to=1-1]
        \end{tikzcd}.
        \end{equation}
        The nontrivial $A$-$A$-bimodule actions of $\bar 1$ and $\bar \zeta$ are
        \begin{align}
            A\ot\bar 1\ot A&\to \bar 1 \nonumber\\
            |1\ra \ot |1\ra|+\ra|1\ra \ot |1\ra  &\mapsto |1\ra|+\ra|1\ra, \nonumber\\
            |\zeta\ra \ot |\zeta\ra|+\ra|\zeta\ra \ot |\zeta\ra  &\mapsto |\zeta\ra|+\ra|\zeta\ra ;
        \end{align}
        \begin{align}
            A\ot\bar \zeta\ot A&\to \bar \zeta \nonumber\\
            |1\ra \ot |1\ra|+\ra|\zeta\ra \ot |\zeta\ra  &\mapsto |1\ra|+\ra|\zeta\ra, \nonumber\\
            |\zeta\ra \ot |\zeta\ra|+\ra|1\ra \ot |1\ra  &\mapsto |\zeta\ra|+\ra|1\ra .
        \end{align}
        Therefore, we see $\bar 1$ and $\bar \zeta$ are not isomorphic, and $A\cong\bar 1$ as $A$-$A$-bimodules through the isomorphism $|1\ra|+\ra|1\ra\mapsto |1\ra$, $|\zeta\ra|+\ra|\zeta\ra\mapsto |\zeta\ra$.
        
        \item $A\ot\alpha_-\ot A=
        \left\langle\begin{matrix} 
        |1\ra|-\ra|1\ra,|1\ra|-\ra|\zeta\ra,|\zeta\ra|-\ra|1\ra,|\zeta\ra|-\ra|\zeta\ra
        \end{matrix}\right\rangle $ \\
        $\cong 
        \left\langle\begin{matrix} 
        |1\ra|-\ra|1\ra,|\zeta\ra|-\ra|\zeta\ra 
        \end{matrix}\right\rangle \oplus
        \left\langle\begin{matrix} 
        |1\ra|-\ra|\zeta\ra,|\zeta\ra|-\ra|1\ra 
        \end{matrix}\right\rangle\cong \bar 1\oplus \bar \zeta$. In terms of diagram,
        \begin{equation}
        \begin{tikzcd}
	   {|1\ra |-\ra |1\ra} &  {-|\zeta\ra |-\ra |\zeta\ra} 
	   \arrow["\zeta", shift left=1, from=1-1, to=1-2]
	   \arrow["\zeta", shift left=1, from=1-2, to=1-1]
        \end{tikzcd},\ \ 
        \begin{tikzcd}
	   {|1\ra |-\ra |\zeta\ra}  &  {-|\zeta\ra |-\ra |1\ra}
	   \arrow["\zeta", shift left=1, from=1-1, to=1-2]
	   \arrow["\zeta", shift left=1, from=1-2, to=1-1]
        \end{tikzcd}.
        \end{equation}
    \end{enumerate}

    Next we compute the relative tensor product $\otr[A]$ (Definition \ref{def.rela}) between $A$-$A$-bimodules which serves as the monoidal structure in $_A\cC_A$. Due to the special multiplication of $A$, whenever the $|1\ra,|\zeta\ra$ in the middle does not match, the tensor over $A$ is zero, for example, $|1\ra |+\ra|1\ra  \otr[A] |\zeta\ra|+\ra|1\ra=0$. We thus use the abbreviation $|1\ra|+\ra|\zeta\ra|+\ra|1\ra:=|1\ra|+\ra|\zeta\ra\otr[A]|\zeta\ra|+\ra|1\ra$:
    \begin{enumerate}
        \item $\bar 1\otr[A] \bar 1=\left\langle\begin{matrix} 
        |1\ra|+\ra|1\ra|+\ra|1\ra,|\zeta\ra|+\ra|\zeta\ra |+\ra|\zeta\ra
        \end{matrix}\right\rangle \cong \bar 1.$
        
        \item $\bar 1\otr[A] \bar \zeta=\left\langle\begin{matrix} 
        |1\ra|+\ra|1\ra|+\ra|\zeta\ra,|\zeta\ra|+\ra|\zeta\ra |+\ra|1\ra
        \end{matrix}\right\rangle \cong \bar \zeta\otr[A] \bar 1 \cong \bar \zeta.$
        
        \item $\bar \zeta\otr[A] \bar \zeta=\left\langle\begin{matrix} 
        |1\ra|+\ra|\zeta\ra|+\ra|1\ra,|\zeta\ra|+\ra|1\ra |+\ra|\zeta\ra
        \end{matrix}\right\rangle \cong \bar 1.$
    \end{enumerate}
    Therefore, we conclude that ${_A}(\Rep \Z_2)_A\cong\Ve_{\Z_2}$ as fusion categories.

    In the end we compute the \TL{superconducting} quantum currents. In order to compute the internal hom $[A,A]$ by the adjunction \eqref{eq.AAinternal}, we check the left and right $A$ actions on $Q\ot A$ (Diagram \eqref{eq.bimoleftact}) as an $A$-$A$-bimodule for all simple $Q\in Z_1(\Rep \Z_2)$. The right $A$ actions are all simply the multiplication of $A$. The left $A$ actions for all $Q\ot A$ are
    \begin{enumerate}
        \item For $m\ot A=(C_\zeta,\alpha_+)\ot A$, the half-braiding (recall Eq.~\eqref{eq.regbraid}) is 
        \begin{align}
            \beta_{(C_\zeta,\alpha_+),A}:(C_\zeta,\alpha_+)&\ot A\to A \ot (C_\zeta,\alpha_+)\nonumber\\
            \zeta\ot |1\ra &\mapsto (\zeta |1\ra)\ot |\zeta\ra=|\zeta\ra\ot \zeta, \nonumber\\
            \zeta\ot |\zeta\ra &\mapsto (\zeta |\zeta\ra)\ot |\zeta\ra=|1\ra\ot \zeta.
        \end{align}
        The left action on $(C_\zeta,\alpha_+)\ot A$ is
        \begin{align}
        A\ot (C_\zeta,\alpha_+)\ot A &\overset{\beta^{-1}_{(C_\zeta,\alpha_+),A}\ot\id_A}{\longrightarrow} (C_\zeta,\alpha_+)\ot A\ot A \overset{\id_{(C_\zeta,\alpha_+)}\ot m}{\longrightarrow} (C_\zeta,\alpha_+)\ot A
        \nonumber\\
        |1\ra\ot \zeta\ot |\zeta\ra &\ \ \ \ \ \longmapsto \quad\quad \zeta \ot |\zeta\ra\ot |\zeta\ra \quad\ \ \ \ \longmapsto\quad \zeta\ot |\zeta\ra,
        \nonumber\\
        |\zeta\ra\ot \zeta\ot |1\ra &\ \ \ \ \ \longmapsto \quad\quad \zeta \ot |1\ra\ot |1\ra \quad\ \ \ \ \longmapsto\quad \zeta\ot |1\ra,
        \end{align}
        where all other left actions are zero. We see $(C_\zeta,\alpha_+)\ot A\cong \bar \zeta$ as $A$-$A$-bimodules as they have the same bimodule action.
        
        \item For $e\ot A=(C_1,\alpha_-)\ot A$, the half-braiding is 
        \begin{align}
            \beta_{(C_1,\alpha_-),A}:(C_1,\alpha_-)&\ot A\to A \ot (C_1,\alpha_-)\nonumber\\
            1\ot |1\ra &\mapsto (1 |1\ra)\ot |1\ra=|1\ra\ot 1, \nonumber\\
            1\ot |\zeta\ra &\mapsto (1 |\zeta\ra)\ot |1\ra=|\zeta\ra\ot 1.
        \end{align}
        The left action on $(C_1,\alpha_-)\ot A$ is
        \begin{align}
        A\ot (C_1,\alpha_-)\ot A &\overset{\beta^{-1}_{(C_1,\alpha_-),A}\ot\id_A}{\longrightarrow} (C_1,\alpha_-)\ot A\ot A \overset{\id_{(C_1,\alpha_-)}\ot m}{\longrightarrow} (C_1,\alpha_-)\ot A
        \nonumber\\
        |1\ra\ot 1\ot |1\ra &\ \ \ \ \ \longmapsto \quad\quad 1 \ot |1\ra\ot |1\ra \quad\ \ \ \ \longmapsto\quad 1\ot |1\ra,
        \nonumber\\
        |\zeta\ra\ot 1\ot |\zeta\ra &\ \ \ \ \ \longmapsto \quad\quad 1 \ot |\zeta\ra\ot |\zeta\ra \quad\ \ \ \ \longmapsto\quad 1\ot |\zeta\ra,
        \end{align}
        where all other left actions are zero. We see $(C_1,\alpha_-)\ot A\cong \bar 1\cong A$ as $A$-$A$-bimodules.

        \item $\mathbbm{1}\ot A=(C_1,\alpha_+)\ot A\cong \bar 1\cong A$ as bimodules.
        \item $\psi\ot A=(C_\zeta,\alpha_-)\ot A\cong \bar \zeta$ as bimodules.
    \end{enumerate}
    Therefore using the adjunction \begin{equation} \Hom_{_A(\Rep \Z_2)_A}(Q\ot A, A)\cong \Hom_{Z_1(\Rep \Z_2)}(Q,[A,A]), \end{equation}
    we conclude that
    \begin{equation} [A,A]\cong (C_1,\alpha_+)\oplus(C_1,\alpha_-)=\mathbbm{1}\oplus e, \end{equation}
    which corresponds to condensing $\Z_2$ charges.

    \begin{example}
     Let $A=\alpha_+$ the trivial algebra. The multiplication $m$ of $A$ is trivial, and the embedding $\iota_A:A=\alpha_+ \to \Fun(\Z_2)\cong \alpha_+\oplus\alpha_-$. The universal model is $(\Z,\Fun(\Z_2), -(\iota_A\ot \iota_A)(\iota_A^\dag\ot\iota_A^\dag))$. The Hamiltonian is
    \begin{align}
        H&=\sum_i -(\iota_A\ot \iota_A)_i (\iota_A^\dag\ot\iota_A^\dag)_i=-\sum_i {(\iota_A \iota^\dag_A)_i(\iota_A \iota^\dag_A)_{i+1}}\nonumber\\
        &=-\sum_i \frac{(\sigma_x^i+I)}2\frac{(\sigma_x^{i+1}+I)}2=-\frac14 \sum_i(1+2\sigma_x^i+\sigma_x^i\sigma_x^{i+1}).
    \end{align}
    It is clear that this Hamiltonian describes the same phase as the polarized Ising model (they share the same ground state and the same classification of excitations): \begin{equation} H=-\sum_i\sigma_x^i, \end{equation} and they both correspond to the $\Z_2$ trivial SPT phase.
     
     Moreover, we have $(\Rep \Z_2)_A\cong\Rep \Z_2$, $_A(\Rep \Z_2)_A\cong \Rep \Z_2$ and in this case \begin{equation} [A,A]\cong (C_1,\alpha_+)\oplus (C_\zeta,\alpha_+)=:\mathbbm{1}\oplus m, \end{equation}
     which corresponds to condensing $\Z_2$ fluxes.
 \end{example}
\end{example}

\subsection{Example \texorpdfstring{$G=S_3$}{G=S3}}
Let $G=S_3=\{1,a,b,b^2,ba,b^2a\}$ (we use notations in Example~\ref{S3}). Simple objects in $Z_1(\Rep S_3)$ are 
    \begin{align}
        (C_1=\{1\},\lambda_0), (C_1,\lambda_1), (C_1,\Lambda),\nonumber\\
        (C_b=\{b,b^2\}, 1), (C_b,\omega),(C_b,\omega^2),\nonumber\\
        (C_a=\{a,ba,b^2a\},+),
        (C_a,-)
    \end{align}
    where $1,\omega,\omega^2$ denote irreps of $\Z_3$ and $+,-$ denotes irreps of $\Z_2$. We elaborate on Frobenius algebras $A$ in $\Rep S_3$, and compute the corresponding $\cC_A$, ${_A}\cC_A$ and $[A,A]$. We summarize the result in Table~\ref{tab:s3}.  As the second cohomology group of subgroups ($\{1\}, \Z_2,\Z_3,S_3$) of $S_3$ are all trivial, the four examples exhaust all fixed-point models with $S_3$ symmetry.
\begin{table}[ht]
    \centering
\begin{tabular}{|c|c|c|c|c|}
\hline
$A=\Fun(G/H)$ & $\cC_A\cong\Rep H$ & \makecell{Unbroken\\symmetry} & $_A{\cC}_A$ & $[A,A]$ \\ \hline
$\Fun(S_3)$ & $\Ve$ & None & $\Ve_{S_3}$ & $(C_1,\lambda_0)\oplus (C_1,\lambda_1)\oplus (C_1,\Lambda)\oplus (C_1,\Lambda)$ \\ \hline
$\Fun(S_3/\Z_3)$ & \JR{$\Rep \Z_3$} & $\Z_3$ & $\Ve_{S_3}$ & $(C_1,\lambda_0)\oplus (C_1,\lambda_1)\oplus (C_b,1)\oplus (C_b,1)$ 
 \\ \hline
 $\Fun(S_3/\Z_2)$ & $\Rep \Z_2$ & $\Z_2$ & $\Rep S_3$ & $(C_1,\lambda_0)\oplus(C_1,\Lambda)\oplus (C_a,+)$
\\ \hline
$\Fun(S_3/S_3)=\lambda_0$ & $\Rep S_3$ & $S_3$ & $\Rep S_3$ & $(C_1,\lambda_0)\oplus (C_b,1)\oplus (C_a,+)$ \\ \hline
\end{tabular}
\caption{Frobenius algebra $A$, $\cC_A$, $_A{\cC}_A$ and Lagrangian algebra $[A,A]$ in $\Rep S_3$. Each $H$ is a subgroup of $G$. The ground state subspace is $A$.}
    \label{tab:s3}
\end{table}

Below are the explicit calculations:
\begin{example} 
   Let the Frobenius algebra \begin{equation} A:=\la 1+a, b+ba, b^2+b^2a\ra\cong \Fun(S_3/\Z_2), \end{equation} which is a sub Frobenius algebra of $\Fun(S_3)$, and we omit the Dirac notation in the following. For simplicity of later computation, we denote these three basis vectors in $A$ as \begin{equation}
   x:=1+a, y:=b+ba, z:=b^2+b^2a.\end{equation}
    The multiplication $m:A\ot A\to A$ in the above basis is \begin{equation} x\cdot x=x,y\cdot y=y,z\cdot z=z,x\cdot y=x\cdot z=y\cdot z=y\cdot x=z\cdot x=z\cdot y=0. \end{equation}
    The group action of $S_3$ on $A$ is (we list only $a,b$ as $S_3$ is generated by $a,b$)
    \begin{equation} ax=x,ay=z,az=y, bx=y,by=z,bz=x. \end{equation}
    We have $A\cong \lambda_0\oplus \Lambda$ as $S_3$ representations\footnote{$A=\la x,y,z\ra \cong \la x+y+z \ra \oplus \la x+\omega^2 y+\omega z,x+\omega y+\omega^2 z\ra=\lambda_0\oplus \Lambda$.} and $A$ is cyclic. 
    \begin{convention}\label{Cayley}
    We introduce the following Cayley-like diagram to represent cyclic representations:
        \begin{equation}\label{diag.s31}
            \begin{tikzcd}
            x \arrow["a"', loop, distance=2em, in=215, out=145] \arrow[r, "b"] & y \arrow[r, "b"] \arrow[<->,r, "a"', bend right=49] & z\arrow[ll, "b"', bend right=49]
            \end{tikzcd},
         \end{equation}
        where 
        \begin{itemize}
            \item each node is a vector in the representation;
            \item each node has outgoing arrows labeled by generators of the group, which is $a,b$ of $S_3$ here;
            \item the number of outgoing arrows at each node is equal to the number of generators;
            \item for an arrow labeled by $g$, whose source node is vector $\bm n$, the target node is $g\bm n$.
        \end{itemize}
    \end{convention}
    
    Denote the bases of irreps of $S_3$ by
    \begin{gather}
        \lambda_0=\la \bm e\ra , \text{where }
        a \bm e=\bm e=b\bm e,\nonumber\\
        \lambda_1=\la \bm o\ra, \text{where }
        a \bm o=-\bm o, b\bm o=\bm o,\nonumber\\
        \Lambda=\la \bm0,\bm1\ra , \text{where } a\bm 0=\bm 1,a\bm 1=\bm 0,b\bm 0=\omega \bm 0,b\bm1=\omega^2 \bm 1.
    \end{gather}
    We check the free right $A$-modules:
    \begin{enumerate}
        \item $\lambda_0\ot A$ with basis (we omit tensor product between vectors in the following computations)
        \begin{equation}  \bm e x,\bm ey, \bm ez, \end{equation} 
        is simple. This is easy to see from the multiplication of $A$, which is like delta-functions that picks out $x$ or $y,z$. Let $M$ be a non-zero submodule of $\lambda_0\ot A$, then there is $w:=c_x\bm ex+c_y\bm ey+c_z\bm ez\in M$ where at least one of $c_x,c_y,c_z$ is non-zero. Suppose $c_x$ is non-zero, by multiplying $x$, one finds $w\cdot x=c_x\bm e x\in M$ and thus $\bm ex\in M$. Then $\bm ey=b(\bm ex)\in M$ and $\bm ez=b(\bm ey)\in M$. Thus $M=\lambda_0\ot A$\footnote{Or in other words, we cannot find a subspace for $\la \bm e x,\bm ey, \bm ez \ra$ such that this subspace is invaraint under both the right $A$-module action (i.e., the multiplication of $A$) and the group action (almost the same group action as in Diagram (\ref{diag.s31})).}. 
        \item More generally, the special form of the multiplication of $A$ implies that any nonzero submodule of a free module $X\ot A$ must contain $\bm w x,b(\bm w) y,b^2(\bm w) z$ for some non-zero $\bm w\in X$.
        \item $\lambda_1\ot A=\la \bm ox,\bm oy,\bm oz\ra $ is also simple.
        \item $\Lambda\ot A=\la \bm0x,\bm0y,\bm0z,\bm1x,\bm1y,\bm1z\ra$ is isomorphic to $(\lambda_0\ot A)\oplus (\lambda_1\ot A)$.  The symmetric $A$-module maps are as follows:
        \begin{align}
            \lambda_0\otimes A&\to \Lambda\otimes A,\nonumber\\
            \bm ex &\mapsto (\bm0+\bm1)x,\nonumber\\
            \bm ey=b(\bm ex)&\mapsto (\omega\bm0+\omega^2\bm1)y=b((\bm0+\bm1)x),\nonumber\\
            \bm ez=b(\bm ey)&\mapsto (\omega^2\bm0+\omega\bm1)z=b((\omega\bm0+\omega^2\bm1)y),\\
            \lambda_1\otimes A&\to \Lambda\otimes A,\nonumber\\
            \bm ox &\mapsto (\bm0-\bm1)x,\nonumber\\
            \bm oy=b(\bm ox)&\mapsto (\omega\bm0-\omega^2\bm1)y=b((\bm0-\bm1)x),\nonumber\\
            \bm oz=b(\bm oy)&\mapsto (\omega^2\bm0-\omega\bm1)z=b((\omega\bm0-\omega^2\bm1)y).
        \end{align}
        The two maps are symmetric, which can be easily seen from the following diagrams:
        \begin{equation}
            \begin{tikzcd}
            \bm ex \arrow["a"', loop, distance=2em, in=215, out=145] \arrow[r, "b"] & \bm ey \arrow[r, "b"] \arrow[<->,r, "a"', bend right=49] & \bm ez\arrow[ll, "b"', bend right=49]
            \end{tikzcd},
            \begin{tikzcd}
            (\bm0+\bm1)x \arrow["a"', loop, distance=2em, in=215, out=145] \arrow[r, "b"] & (\omega\bm0+\omega^2\bm1)y \arrow[r, "b"] \arrow[<->,r, "a"', bend right=49] & (\omega^2\bm0+\omega\bm1)z\arrow[ll, "b"', bend right=49]
            \end{tikzcd};
          \end{equation} 
          \begin{equation}
            \begin{tikzcd}
            \bm ox \arrow["-a"', loop, distance=2em, in=215, out=145] \arrow[r, "b"] & \bm oy \arrow[r, "b"] \arrow[<->,r, "-a"', bend right=49] & \bm oz\arrow[ll, "b"', bend right=49]
            \end{tikzcd},
            \begin{tikzcd}
            (\bm0-\bm1)x \arrow["-a"', loop, distance=2em, in=215, out=145] \arrow[r, "b"] & (\omega\bm0-\omega^2\bm1)y \arrow[r, "b"] \arrow[<->,r, "-a"', bend right=49] & (\omega^2\bm0+\omega\bm1)z\arrow[ll, "b"', bend right=49]
            \end{tikzcd}.
       \end{equation}
    \end{enumerate}
    We conclude that $(\Rep S_3)_A\cong \Rep \Z_2.$

    Now we check the free $A$-$A$-bimodules:
    \begin{enumerate}
        \item $A\ot \lambda_0\ot A$ has 9 basis vectors. As discussed before, $x\bm e x$ must be in a non-zero sub bimodule. We may check the cyclic $S_3$ representation generated by $x\bm e x$:
        \begin{equation}
            \begin{tikzcd}
            x\bm ex \arrow["a"', loop, distance=2em, in=215, out=145] \arrow[r, "b"] & y\bm ey \arrow[r, "b"] \arrow[<->,r, "a"', bend right=49] & z\bm ez\arrow[ll, "b"', bend right=49]
            \end{tikzcd}.
        \end{equation}
         This sub $A$-$A$-bimodule is simple and isomorphic to $A$. It will be denoted by $\bar\lambda_0$ (we will soon see why this notation is reasonable).
        Then we check the cyclic $S_3$ representation generated by $x\bm ey$:
        \begin{equation}
            \begin{tikzcd}
            x\bm e y \arrow[d, "b"] \arrow[rr,<->, "a"]                 && x\bm ez \arrow[dd, "b", bend left=49]                  \\
            y\bm ez \arrow[d, "b"] \arrow[rr,<->, "a"]                 && z\bm e y \arrow[u, "b"']                 \\
            z\bm e x \arrow[uu, "b", bend left=49] \arrow[rr,<->,"a"] && y\bm ex \arrow[u, "b"']
            \end{tikzcd}.
        \end{equation}
This sub bimodule is also simple and will be denoted by $\bar \Lambda$. Thus $A\ot\lambda_0\ot A\cong$\noindent\\
$\la x\bm e x,y\bm e y,z\bm e z \ra\oplus \la x\bm e y,x\bm e z,y\bm e z,z\bm e y,z\bm e x,y\bm e x \ra = \bar\lambda_0\oplus\bar\Lambda.$
\item $A\ot \lambda_1\ot A$ has 9 basis vectors. The cyclic $S_3$ representation generated by $x\bm ox$ is 
\begin{equation}
            \begin{tikzcd}
            x\bm ox \arrow["-a"', loop, distance=2em, in=215, out=145] \arrow[r, "b"] & y\bm oy \arrow[r, "b"] \arrow[r, "-a"',<->, bend right=49] & z\bm oz\arrow[ll, "b"', bend right=49]
            \end{tikzcd}.
\end{equation}
            It is simple and will be denoted by $\bar \lambda_1$. The cyclic $S_3$ representation generated by $x\bm o y$ is:
            \begin{equation}
            \begin{tikzcd}
            x\bm o y \arrow[d, "b"] \arrow[rr,<->, "a"]                 && -x\bm oz \arrow[dd, "b", bend left=49]                  \\
            y\bm oz \arrow[d, "b"] \arrow[rr,<->, "a"]                 && -z\bm o y \arrow[u, "b"']                 \\
            z\bm o x \arrow[uu, "b", bend left=49] \arrow[rr,<->,"a"] && -y\bm ox \arrow[u, "b"']
            \end{tikzcd}.
            \end{equation}
            It is isomorphic to $\bar \Lambda$ (just identify vectors at corresponding nodes in the two diagrams). Thus $A\ot\lambda_1\ot A\cong \bar\lambda_1\oplus\bar\Lambda.$
        \item $A\ot \Lambda\ot A\cong A\ot ((\lambda_0\ot A)\oplus ( \lambda_1\ot A))\cong(A\ot\lambda_0\ot A)\oplus(A\ot\lambda_1\ot A)\cong  \bar \lambda_0\oplus\bar\Lambda\oplus \bar \lambda_1\oplus \bar \Lambda.$
    \end{enumerate}
    Next we compute the monoidal structure $\otr[A]$ in $_A(\Rep S_3)_A$. Again, due to the special multiplication of $A$, whenever the $x,y,z$ in the middle does not match, the tensor over $A$ is zero, for example, $x \bm o  y\otr[A] z \bm e y=0$. We thus use the abbreviation $x\bm o y\bm e z:=x\bm o y\otr[A] y\bm e z$. Then we can easily compute:
    \begin{enumerate}
        \item $\bar\lambda_0\otr[A]\bar\lambda_0=\la x\bm ex\bm ex,y\bm ey\bm ey,z\bm ez\bm ez\ra\cong \bar\lambda_0$.
        \item $\bar\lambda_0\otr[A]\bar\lambda_1=\la x\bm ex\bm ox,y\bm ey\bm oy,z\bm ez\bm oz\ra\cong \bar\lambda_1$.
        \item $\bar\lambda_1\otr[A]\bar\lambda_0=\la x\bm ox\bm ex,y\bm oy\bm ey,z\bm oz\bm ez\ra\cong \bar\lambda_1$.
        \item $\bar\lambda_1\otr[A]\bar\lambda_1=\la x\bm ox\bm ox,y\bm oy\bm oy,z\bm oz\bm oz\ra\cong \bar\lambda_0$.
        \item $\bar\Lambda\cong \bar\Lambda\otr[A]\bar\lambda_0\cong \bar\Lambda\otr[A]\bar\lambda_1\cong \bar\lambda_0\otr[A]\bar\Lambda\cong\bar\lambda_1\otr[A]\bar\Lambda$.
        \item $\bar\Lambda\otr[A]\bar\Lambda\cong\bar\lambda_0\oplus\bar\lambda_1\oplus\bar\Lambda$, where as sub bimodules of $\bar\Lambda\otr[A]\bar\Lambda$, 
        \begin{gather}
            \la x\bm e(y+z)\bm e x,y\bm e(z+x)\bm e y,z\bm e(x+y)\bm e z\ra\cong \bar\lambda_0,\nonumber\\
            \la x\bm e(y-z)\bm e x,y\bm e(z-x)\bm e y,z\bm e(x-y)\bm e z\ra\cong \bar\lambda_1,\nonumber\\
            \la x\bm e y\bm e z,x \bm e z\bm e y,y\bm e z\bm e x,y\bm ex\bm ez, z\bm ex\bm ey,z\bm e y\bm ex\ra\cong \bar\Lambda.
        \end{gather}
    \end{enumerate}
    Therefore, we conclude that $_A(\Rep S_3)_A\cong \Rep S_3$ as fusion categories. 

    In the end we compute the \TL{superconducting} quantum currents. As an example we compute the bimodule $(C_b,1)\ot A$. The basis vectors are \begin{equation} b\ot x,b\ot y,b\ot z,b^2\ot x,b^2\ot y,b^2\ot z. \end{equation} The right $A$ action is easy. To compute the left $A$ action (the first diagram in Diagram \eqref{eq.bimoleftact}), note that the half-braiding is just the group action on $A$ (recall Eq.~\eqref{eq.regbraid}):
    \begin{align}
        \beta: (C_b,1)\ot A &\to A\ot (C_b,1),\nonumber\\
        b\ot x &\mapsto bx\ot b=y\ot b,\\
        \cdots\nonumber
    \end{align}
    Thus the left action is
    \begin{gather}
    y\cdot (b\ot x)=bx\cdot(b\ot x)=b\ot (x\cdot x)=b\ot x,\nonumber\\
    z\cdot(b\ot y)=b\ot y, x\cdot(b\ot z)=b\ot z,\nonumber\\ 
    z\cdot(b^2\ot x)=b^2\ot x, x\cdot(b^2\ot y)=b^2\ot y, y\cdot(b^2\ot z)=b^2\ot z,
    \end{gather}
    while all others are zero. We can then identify, for example, $b\ot x$ with $y\bm e x$, just by checking the two-sided action of $A$. It is then easy to see $(C_b,1)\ot A\cong \bar\Lambda$ as bimodules. We can similarly compute
    \begin{gather}
        (C_1,\lambda_0)\ot A \cong \bar\lambda_0\cong A,\nonumber\\
        (C_1,\lambda_1)\ot A \cong \bar\lambda_1,\nonumber\\
        (C_1,\Lambda)\ot A\cong A\oplus \bar\lambda_1,\nonumber\\
        (C_b,1)\ot A\cong (C_b,\omega)\ot A\cong (C_b,\omega^2)\ot A\cong \bar\Lambda,\nonumber\\
        (C_a,+)\ot A\cong A\oplus \bar\Lambda,\nonumber\\
        (C_a,-)\ot A\cong \bar\lambda_1\oplus \bar\Lambda.
    \end{gather}
\end{example}
Therefore using the adjunction \begin{equation} \Hom_{_A(\Rep S_3)_A}(Q\ot A, A)\cong \Hom_{Z_1(\Rep S_3)}(Q,[A,A]), \end{equation}
 we conclude that 
 \begin{equation} [A,A]\cong (C_1,\lambda_0)\oplus(C_1,\Lambda)\oplus (C_a,+). \end{equation}
 \begin{example}
     Let $A=\lambda_0$ the trivial algebra, we have $_A(\Rep S_3)_A\cong \Rep S_3$ and in this case \begin{equation} [A,A]\cong (C_1,\lambda_0)\oplus (C_b,1)\oplus (C_a,+), \end{equation}
     which corresponds to condensing pure $S_3$ fluxes. It is not just a coincidence that exchanging $(C_1,\Lambda)$ and $(C_b,1)$ in the Lagrangian algebra leads to the same category of excitations; in fact, exchanging $(C_1,\Lambda)$ and $(C_b,1)$ is moreover a braided auto-equivalence of $Z_1(\Rep S_3)$.
 \end{example}
 \begin{example}
     Let $A=\Fun(S_3)$, we have $_A(\Rep S_3)_A\cong \Ve_{S_3}$ and in this case \begin{equation} [A,A]\cong (C_1,\lambda_0)\oplus (C_1,\lambda_1)\oplus (C_1,\Lambda)\oplus (C_1,\Lambda)\cong \Fun(S_3), \end{equation}
     which corresponds to condensing all $S_3$ charges.
 \end{example}
 \begin{example}
     Let $A=\Fun(S_3/\Z_3)$, we have $_A(\Rep S_3)_A\cong \Ve_{S_3}$ and in this case \begin{equation} [A,A]\cong (C_1,\lambda_0)\oplus (C_1,\lambda_1)\oplus (C_b,1)\oplus (C_b,1). \end{equation}
     Note that the two Lagrangian algebra for $\Fun(S_3)$ and $\Fun(S_3/\Z_3,\C)$ again differ by exchanging $(C_1,\Lambda)$ and $(C_b,1)$. 
 \end{example}

 \begin{remark}
     If we begin with $\cC=\Ve_{S_3}$ and follow the same algorithm to compute the fixed-point models, we will obtain four models \TL{whose emergent symmetries are in one-to-one correspondence with} those obtained from $\Rep S_3$. Usually we think $\Rep S_3 $ as the category of symmetry charges while $\Ve_{S_3}$ as the category of symmetry defects. It turns out symmetry charge becomes a relative notion; we can equally consider $\Ve_{S_3}$ as the symmetry charges and correspondingly $\Rep S_3$ as the symmetry defects. The two perspectives lead to the same classification of fixed-point models (or gapped phases). Choosing the category of symmetry charges is like choosing an inertial frame of reference.
 \end{remark}

\section{Conclusion} 
In this paper, we established the general formulation for quantum currents. Given the category $\cC$ whose objects are symmetry charges and morphisms are symmetric operators, we showed that quantum currents can be identified with the Drinfeld center $Z_1(\cC)$ of $\cC$.

We also gave a rigorous analysis on the renormalization process and fixed points in 1+1D. We showed that the fixed-points correspond to Frobenius algebras in $\cC$, and in turn Lagrangian algebras in $Z_1(\cC)$. From the quantum current point of view, it is the \TL{superconducting} of quantum currents that determine the fixed-points. Since fixed-points represent phases of matter, the \TL{superconducting} of quantum currents also determines gapped phases.

The Frobenius algebra fixed-point model is constructed by symmetric operators in $\cC$, so, \textit{a priori}, it has the symmetry $\cC$. But in the end, the symmetry $\cC$ is (partially) spontaneously broken; a new emergent symmetry (the category of excitations) is observed, which turns out to be the category of bimodules over the Frobenius algebra.  Quantum currents provide an invariant for all gapped phases arising in this way. Mathematically, the fusion categories of excitations in these phases are Morita equivalent. Physically, these phases share the same holographic categorical symmetry; the holographic categorical symmetry remains the same upon spontaneous symmetry breaking.

Let's collect all relevant notions regarding the phases sharing the same holographic categorical symmetry $Z_1(\cC)$ for a global view. We begin with one of them exhibiting the category of excitations as $\cC$. First, there is 2-category $\alg(\cC)$ whose objects are Frobenius algebras in $\cC$, 1-morphisms are bimodules in $\cC$ and 2-morphims are bimodule maps. Second, there is a 2-category $_\cC 2\Ve$ whose objects are left $\cC$-module categories in $2\Ve$, 1-morphisms are $\cC$-module functors and 2-morphisms are $\cC$-module natural transformations. These two 2-categories are equivalent under the following identification
\begin{align}
    \alg(\cC) &\to {}_\cC 2\Ve,\\
    A &\mapsto \cC_A,\nonumber\\
    _A M_B&\mapsto -\otr[A] M.\nonumber
\end{align}
Thus we introduce the notation $\Sigma \cC\cong \alg(\cC)\cong {}_\cC 2\Ve$, known as the \emph{delooping} or \emph{condensation completion}~\cite{KW1405.5858,GJ1905.09566,KLW+2003.08898,KLW+2005.14178} of $\cC$. Moreover, $\Sigma\cC$ does not depend on the beginning choice: for any $\cD$ that is Morita equivalent to $\cC$, $\Sigma\cD\cong \Sigma \cC$ (this is in fact an alternative definition of Morita equivalence). Physically, this result indicates that Morita-equivalent $\cD$ and $\cC$ should be viewed on equal footing; we can equally call objects in $\cD$ as symmetry charges and consider $\cC$ as a (generalized) SSB phase of $\cD$. The collection of all generalized SSB phases   $\Sigma\cC\cong \Sigma\cD$ does not depend on which we call as symmetry charges.

Note that the physical interpretations of the two realizations of $\Sigma\cC$ are not exactly the same. The invariant $Z_1(\cC)$ of Morita equivalence is given by~\cite{KZ2011.02859}
\begin{align}
    Z_1(\cC)\cong \Omega Z_0(\Sigma\cC):=\Hom_{\Fun(\Sigma\cC,\Sigma\cC)} (\id_{\Sigma\cC},\id_{\Sigma\cC}).
\end{align}
We conclude these notions (together with the physical interpretations in parenthesis), in Table~\ref{tab:conclusion1d}.
\begin{table}[ht]
    \centering
    \begin{tabular}{c|c|c}
     $\Sigma \cC$   & $\alg(\cC)$ & $_\cC 2\Ve$ \\\hline
     objects    & \makecell{Frobenius algebras \\(fixed-point Hamiltonians\\ and ground states)}& \makecell{$\cC$-modules \\(all fixed-point\\boundary conditions)}\\\hline
     1-morphisms & \makecell{bimodules \\(fixed-point defects)} & $\cC$-module functors\\\hline
     2-morphisms & \makecell{bimodule maps \\(boundary/defect change)} & natural transformations
    \end{tabular}
    \caption{Ingredients of 1+1D holographic categorical symmetry $Z_1(\cC)$.}
    \label{tab:conclusion1d}
\end{table}

Note that in 1+1D, we can only talk about the \emph{total} charge transported between subsystems. In higher dimensions, however, the charge distribution, as well as current density, is also of interest. Moreover, for higher symmetries, there can be extended charged object of intrinsic higher dimensions~\cite{KLW+2005.14178}.

Our formulation for quantum current can be generalized to higher dimensions by replacing (1-)category with higher category. However, since a computable model of higher category is not available at the moment, detailed calculation is not possible. We just sketch the abstract formulation here: 
\begin{enumerate}
    \item A higher symmetry in $n+1$D is described by a fusion $n$-category $\cC$, whose objects, 1-morphisms, ... , $n-1$ morphisms are symmetry charges of codimension 1, 2, ... ,$n$, respectively, and $n$-morphisms are symmetric operators. The symmetry charges of various dimensions should automatically encode the information of, for example, charge density, as well as intrinsic higher dimensional charges.
    \item The quantum currents are identified with the (higher analog of) Drinfeld center $Z_1(\cC)$. It should be a modular $n$-category and should encode the symmetry charge transport along various codimensions.
    \item Fixed-point models are determined by higher analogs of Frobenius algebras in $\cC$.
    \item The quantum currents \TL{superconducting} in the fixed-point models form higher analogs of Lagrangian algebras in $Z_1(\cC)$.
    \item Spontaneous symmetry breaking does not change the holographic categorical symmetry or the quantum currents. Gapped phases in $n+1$D connected to $\cC$ by higher analog of spontaneous (higher) symmetry breaking, should be Morita equivalent to each other, and share the same holographic categorical symmetry.\footnote{Note that in 1+1D (for fusion 1-categories), $\cC$ is Morita equivalent to $\cD$ if and only if $Z_1(\cC)\cong Z_1(\cD)$. In higher dimensions (for fusion 2-categories or higher), if $\cC$ is Morita equivalent to $\cD$, then $Z_1(\cC)\cong Z_1(\cD)$; however, the converse is not true.}
    They together form an $n+1$-category $\Sigma\cC$, the condensation completion of $\cC$.
\end{enumerate}

\section*{Acknowledgements}
% Acknowledgements should follow immediately after the conclusion.

% % TODO: include author contributions
% \paragraph{Author contributions}
% This is optional. If desired, contributions should be succinctly described in a single short paragraph, using author initials.

% TODO: include funding information
\paragraph{Funding information}
This work is supported by start-up funding and Direct Grant No.~4053501 from The Chinese University of Hong Kong, and 
by funding from Hong Kong Research Grants Council
(ECS No.~2191310). 
% Authors are required to provide funding information, including relevant agencies and grant numbers with linked author's initials. Correctly-provided data will be linked to funders listed in the \href{https://www.crossref.org/services/funder-registry/}{\sf Fundref registry}.

\begin{appendix}
\section{Drinfeld center of \texorpdfstring{$\Rep G$}{RepG}} \label{sec:dcenrepg}
It is well known that the Drinfeld center $Z_1(\Rep G)$ of $\Rep G$ is equivalent to the category of representations of the quantum double of $G$. The simple objects of $Z_1(\Rep G)$ are given by pairs $(C_x,\tau:N_x\to GL(V) )$, where
\begin{enumerate}
   \item $x\in G$ is a group element;
   \item $C_x=\{b\in G|\exists g\in G, b=g x g^{-1} \}$ is the conjugacy class containing $x$;
   \item $N_x=\{b\in G| bx=xb\}$ is the centralizer subgroup of $x$ in $G$;
   \item $V$ is a vector space, and $(V,\tau)$ is an irreducible representation of $N_x$.
\end{enumerate}
A simple object $(C_x,\tau)$ also carries a representation of $G$. To describe such representation, we need to make some auxiliary choices. First, it is clear that $G/N_x\cong C_x$. Let $\{z_i\}_{z_i\in i, i\in G/N_x}$ be a chosen set of representatives of left cosets (Definition \ref{coset}). Then for any group element $h\in G$, there is a unique pair $i\in G/N_x,h'\in N_x$ such that $h=z_i h'$. 

Now we form a vector space $C(C_x)\ot V$ and define the group action of $G$ on it by
\begin{align}
   g\triangleright (z_ix z_i^{-1} \ot y)= g(z_ixz_i^{-1})g^{-1} \ot \tau_h (y)=z_k x z_k^{-1} \ot \tau_{z_k^{-1} g z_i}(y), \label{eq.centergaction}
\end{align}
where $z_k, h$ are determined by the unique coset decomposition of $gz_i$, i.e., $gz_i=z_k h$, $h\in N_x$. It is not hard to check that different of choices of $z_i$ lead to isomorphic $G$-representations on $C(C_x)\ot V$.

\begin{example}
    Let $G=S_3$ (Example \ref{S3}). For example we write down the set of left cosets $S_3/{\{1,a\}}$:
    \begin{align}
        S_3/{\{1,a\}}=\{{\{1,a\}},{\{b,ab^2\}},{\{b^2,ab\}}\},
    \end{align}
    where we can choose the representatives of these three left cosets to be $a,ab^2,ab$ respectively. Thereby, any group element in $S_3$ can be expressed by a representative in $\{a,ab^2,ab\}$ multiply with an element in subset ${\{1,a\}}$.
\end{example}

Next, we will recover the above results by directly solving the half-braiding conditions. Given object $(Q,\beta_{Q,-})\in Z_1(\Rep G)$, it suffices to check the half-braiding $\beta_{Q,R}$ between $Q$ and $R=C(G)$, the regular representation (Definition \ref{def.reg}), as $R$ is ``universal'' in $\Rep G$: $R$ contains all possible irreducible representations.
Our convention for group actions on $R$ is by left multiplication:
$g\triangleright h= gh$. It is easy to check that the intertwiners between $R$ and $R$ itself is $\Hom(R,R)\cong C(G)$, by right multiplication. We will write for $w\in C(G)$, $r_w\in \Hom(R,R)$,
\begin{align}
   r_w : R &\to R,\nonumber\\
   h &\mapsto r_w(h)=hw.
\end{align}
Moreover, we have intertwiners
\begin{align}
   m_y: &R\ot R\to R,\nonumber\\
   &g \ot h \mapsto \delta_{gy, h} g,\\
   \Delta_y: &R \to R\ot R,\nonumber\\
   &g \mapsto g \ot gy,
\end{align}
for each $y\in G$. They satisfy
\begin{align}
   m_{y_1}\Delta_{y_2}=\delta_{y_1,y_2}\id_R, \quad \sum_{y\in G}\Delta_y m_y=\id_{R\ot R},
\end{align}
by which we know that $R\ot R$ is the direct sum of $|G|$ copies of $R$. With these preparations, we now examine the form of $Q$ and $\beta_{Q,R}$. Firstly, since $\beta_{Q,R}$ is symmetric and natural in the $R$ component, we have for any $a\in Q$,
\begin{align}
   \beta_{Q,R}(g a\ot h)&=\beta_{Q,R}(g(a\ot g^{-1} h))
   \nonumber \\ &=\beta_{Q,R}(\id_Q\ot r_{g^{-1}h}(g(a\ot e)))
   \nonumber \\ &=(r_{g^{-1}h}\ot \id_Q) g\beta_{Q,R}(a\ot e).
\end{align}

Second, we want to prove that $Q$ is graded by $G$. Consider the following linear map for each $h\in G$,
\begin{align}
   P_h: Q&\to Q, \nonumber\\
   a &\mapsto (\delta_h\ot \id_Q)\beta_{Q,R}(a\ot e),
\end{align}
where $\delta_h:R\to \C$ is defined by $\delta_h(g)=\delta_{h,g}$. We have 
\begin{align}
   P_h P_g &=
  \itk{
      \node (l) at (-2,-1) {};
      \node[] (m1) at (-1,-0.5) {};
      \node[] (m2) at (1,0.5) {};
      \node (r) at (2,1) {};
      \node[draw] (e1) at (-1,-2) {$e$};
      \node[draw](e2) at (1,-2) {$e$};
      \node[draw](g) at (-1,2) {$\delta_g$};
      \node[draw](h) at (1,2) {$\delta_h$};
      \draw
      (e1) -- (m1)node[below right]{$\beta$} -- node[right]{$R$} (g)
      (e2)--  (m2)node[below right]{$\beta$} -- node[right]{$R$} (h)
      (l)node[left]{$Q$} -- (r)node[right]{$Q$};
   } 
   =\sum_{y\in G}
  \itk{
      \node (l) at (-2,-1.5) {};
      \node[] (m1) at (-1,-1) {};
      \node[] (m2) at (1,0) {};
      \node (r) at (2,0.5) {};
      \node[draw] (e1) at (-1,-2) {$e$};
      \node[draw](e2) at (1,-2) {$e$};
      \node[draw](g) at (-1,4) {$\delta_g$};
      \node[draw](h) at (1,4) {$\delta_h$};
      \node[draw](my) at (0,1.5) {$m_y$};
      \node[draw](dy) at (0,2.5) {$\Delta_y$};
      \draw
      (e1)-- (m1)node[below right]{$\beta$} --(-1,0.5)-- (my) -- (dy) --(g)
      (e2)-- (m2)node[below right]{$\beta$} --(1,0.5) -- (my) -- (dy) --(h)
      (l)node[left]{$Q$} -- (r)node[right]{$Q$};
   } 
   \nonumber\\
   ~
   \nonumber\\
   &=\sum_{y\in G}
  \itk{
      \node (l) at (-2,-1) {};
      \node[] (m) at (0,0) {};
      \node (r) at (2,1) {};
      \node[draw] (e1) at (-1,-2) {$e$};
      \node[draw](e2) at (1,-2) {$e$};
      \node[draw](g) at (-1,2) {$\delta_g$};
      \node[draw](h) at (1,2) {$\delta_h$};
      \node[draw](my) at (0,-1) {$m_y$};
      \node[draw](dy) at (0,1) {$\Delta_y$};
      \draw
      (e1)-- (my) --(m)node[below right]{$\beta$}-- (dy) --(g)
      (e2)-- (my) --(m)-- (dy) --(h)
      (l)node[left]{$Q$} --  (r)node[right]{$Q$};
   } 
   =
  \itk{
      \node (l) at (-2,-1) {};
      \node[] (m) at (0,0) {};
      \node (r) at (2,1) {};
      \node[draw] (e1) at (-1,-2) {$e$};
      \node[draw](e2) at (1,-2) {$e$};
      \node[draw](g) at (-1,2) {$\delta_g$};
      \node[draw](h) at (1,2) {$\delta_h$};
      \node[draw](my) at (0,-1) {$m_e$};
      \node[draw](dy) at (0,1) {$\Delta_e$};
      \draw
      (e1)-- (my) --(m)node[below right]{$\beta$}-- (dy) --(g)
      (e2)-- (my) --(m)-- (dy) --(h)
      (l)node[left]{$Q$} -- (r)node[right]{$Q$};
  }=\delta_{h,g}P_h. 
\end{align}
Thus we conclude that $P_h$ are mutually orthogonal projections. Also using the fact that $\sum_{h\in G}\delta_h$ is an intertwiner, we have $\sum_h P_h=\id_Q$. Therefore,
$Q=\oplus_{h\in G} P_h Q$, i.e., $Q$ is graded by $G$.

Now we focus on the subspace $P_x Q$. Let $a\in P_x Q$, i.e., $(\delta_x\ot \id_Q)\beta_{Q,R}(a\ot e) =a$ or $\beta_{Q,R}(a\ot e)=x \ot a$. We have
\begin{equation}
   \beta_{Q,R}(ga\ot e)=(r_{g^{-1}}\ot\id_Q)g \beta_{Q,R}(a\ot e)=gxg^{-1}\ot ga,
\end{equation}
which means that $ga\in P_{gxg^{-1}} Q$. It is then clear that $P_x Q$ carries a representation of $N_x$. We make the choice $z_i$ for representatives of $G/N_x$ as before. $P_{z_i x z_i^{-1}}Q$ carries a representation of $N_{z_i x z_i^{-1}}$.
$N_{z_ix z_i^{-1}}$ is isomorphic to $N_x$ via
\begin{align}
   N_x &\to N_{z_i xz_i^{-1}},\nonumber\\
   h &\mapsto z_i h z_i^{-1}. 
\end{align}
The representation carried by $P_{z_i xz_i^{-1}} Q$ is also is isomorphic to that carried by $P_x Q$. To see this, suppose $b\in P_{z_i x z_i^{-1}}Q$, then $z_i^{-1}b\in P_x$. For $h\in N_x$ we have
\begin{equation}
   h z_i^{-1} b=z_i^{-1} (z_i h z_i^{-1}) b.
\end{equation}
This equation means that action by $z_i^{-1}$ is an intertwiner from $P_{z_i x z_i^{-1}}$ to $P_x$, which is clearly invertible. Moreover, for a general $g\in G$, $gz_i=z_k h$, $h\in N_x$ and $b\in N_{z_i x z_i^{-1}}, z_i^{-1}b\in N_x $ we have 
\begin{equation}
   gb=z_k z_k^{-1}g z_i (z_i^{-1}b)=z_k h (z_i^{-1}b)\in P_{z_k x z_k^{-1}} Q,
\end{equation}
which agrees with Eq. \eqref{eq.centergaction}.

We conclude that a general object $(Q,\beta_{Q,-})$ in the Drinfeld center $Z_1(\Rep G)$ is a representation $Q\in \Rep G$ whose underlying vector space is graded by $G$, such that for $a\in Q$ graded by $x\in G$,
\begin{enumerate}
   \item  $ga$ is graded by $gxg^{-1}$.
   \item For the regular representation $R=C(G)$, $h\in R$,
\begin{equation}
   \beta_{Q,R}(a\ot h)=xh\ot a.
\end{equation}
For a general representation $(V,\rho)$, $b\in V$,
\begin{equation}\label{eq.regbraid}
   \beta_{Q,V}(a\ot b)=\rho_x (b)\ot a.
\end{equation}
\end{enumerate}

A simple object in $Z_1(\Rep G)$ is then given by restricting the grading by $G$ to only a conjugacy class, and the representation of $N_x$ carried by the $x$-graded subspace to an irreducible one. One can check that for a simple object $ (X,\beta_{X,-})\in Z_1(\Rep G)$, $X$ is a cyclic representation of $G$.

\section{Some basic concepts in category theory}
\begin{definition}[Module category]
\label{modulecat}
    Let $(\cC,\otimes,\one,\alpha,\lambda,\rho)$ be a monoidal category, where $\one$ is the tensor unit, $\alpha,\lambda,\rho$ are natural isomorphisms called associator, left unitor and right unitor respectively. A left $\cC$-module is a category $\cM$ equipped with a monoidal functor $\cC\to \text{Fun}(\cM,\cM)$, or equivalently, an action functor $\rhd:\cC\times\cM\to\cM$ equipped with 
    \begin{itemize}
        \item An associator: a natural transformation $\alpha:(-\ot -)\rhd -\to -\rhd(-\rhd -)$ such that $\forall X,Y,Z\in\cC,M\in\cM$,
        \begin{equation}
        \label{modulecat1}
            \begin{tikzcd}
	{((X\otimes Y)\otimes Z)\rhd M} &&&& {(X\otimes Y)\rhd (Z\rhd M)} \\
	{(X\otimes (Y\otimes Z))\rhd M} && {X\rhd( (Y\otimes Z)\rhd M)} && {X\rhd (Y\rhd (Z\rhd M))}
	\arrow["{\alpha_{X,Y,Z}\rhd \id_{M}}"', from=1-1, to=2-1]
	\arrow["{\alpha_{X\otimes Y,Z,M}}", from=1-1, to=1-5]
	\arrow["{\alpha_{X,Y\otimes Z,M}}"', from=2-1, to=2-3]
	\arrow["{\alpha_{Y,Z,M}}"', from=2-3, to=2-5]
	\arrow["{\alpha_{X,Y,Z\rhd M}}", from=1-5, to=2-5]
            \end{tikzcd}.
        \end{equation}  
        \item A unitor: a natural transformation $\mu:\one\rhd -\to \id_{\cM}$ such that $\forall X\in\cC,M\in\cM$,
        \begin{equation}
            \begin{tikzcd}
	{(\one\otimes X)\rhd M} && {\one\rhd( X\rhd M)} \\
	& {X\rhd M}
	\arrow["{\lambda_X\rhd \id_M}"', from=1-1, to=2-2]
	\arrow["{\mu_{X\rhd M}}", from=1-3, to=2-2]
	\arrow["{\alpha_{\one,X,m}}", from=1-1, to=1-3]
            \end{tikzcd},
        \end{equation}
        \begin{equation}
        \label{modulecat2}
            \begin{tikzcd}
	{(X\otimes\one)\rhd M} && {X\rhd(\one\rhd M)} \\
	& {X\rhd M}
	\arrow["{\rho_X\rhd \id_M}"', from=1-1, to=2-2]
	\arrow["{\id_X\rhd\mu_{M}}", from=1-3, to=2-2]
	\arrow["{\alpha_{X,\one,m}}", from=1-1, to=1-3]
            \end{tikzcd}.
        \end{equation}  
    \end{itemize}
    A right $\cC$-module is defined similarly.
\end{definition}

\begin{definition}[Enriched category]
    Let $(\cC,\ot,\one)$ be a monoidal category. An $\cC$-enriched category ${^{\cC}\cM}$ consists of
    \begin{itemize}
        \item Objects: $\text{Ob}(^{\cC}\cM)$;
        \item Hom-objects: $^{\cC}\cM(X,Y)$ in $\cC$ for every pair $X,Y\in {^{\cC}\cM}$;
        \item Unit morphisms: $1_X:\one\to {^{\cC}\cM}(X,X)$ in $\cC$ for every $X\in {^{\cC}\cM}$;
        \item Composition of morphisms: $^{\cC}\cM(Y,Z)\ot {^{\cC}\cM}(X,Y)\overset{\circ}{\to} {^{\cC}\cM(X,Z)}$ in $\cC$ for every triple $X,Y,Z\in {^{\cC}\cM}$;
    \end{itemize}
    such that $\forall X,Y,Z,W\in {^{\cC}\cM}$, the following diagrams commute:
    \begin{equation}
        \begin{tikzcd}
	^{\cC}\cM(Y,Z)\ot {}^{\cC}\cM(X,Y)\ot {}^{\cC}\cM(W,X) && {^{\cC}\cM(Y,Z)\ot{}^{\cC}\cM(W,Y)} \\
	{{}^{\cC}\cM(X,Z)\ot{}^{\cC}\cM(W,Z)} && {^{\cC}\cM(W,Z)}
	\arrow["{\id \ot\circ}", from=1-1, to=1-3]
	\arrow["\circ\ot\id"', from=1-1, to=2-1]
	\arrow["\circ"', from=2-1, to=2-3]
	\arrow["\circ", from=1-3, to=2-3]
    \end{tikzcd},
    \end{equation}
    \begin{equation}
        \begin{tikzcd}
	{\one\ot{^{\cC}\cM(X,Y)}} & {{^{\cC}\cM(X,Y)}} & {{^{\cC}\cM(X,Y)}\ot\one} \\
	{{^{\cC}\cM(Y,Y)}\ot{^{\cC}\cM(X,Y)}} && {{^{\cC}\cM(X,Y)}\ot{^{\cC}\cM(X,X)}}
	\arrow["\cong", from=1-1, to=1-2]
	\arrow["{1_Y\ot \id}"', from=1-1, to=2-1]
	\arrow["\circ"', from=2-1, to=1-2]
	\arrow["\circ", from=2-3, to=1-2]
	\arrow["{\id\ot 1_X}", from=1-3, to=2-3]
	\arrow["\cong"', from=1-3, to=1-2]
    \end{tikzcd}.
    \end{equation}
    $\cC$ is called the \emph{background category} of ${^{\cC}\cM}$.
\end{definition}

\begin{definition}[Internal Hom]
     Let $\cA$ be a monoidal category and $\cL$ be a left $\cC$-module. Given $M,N\in\cM$, if the functor $\Hom_\cM(-\rhd M.N):\cC^{\text{op}}\to\Ve$ is representable and the representing object in $\cC$ is denoted as $[M,N]$, i.e., there is a natural isomorphism
     \begin{align}
         \Hom_\cM(-\rhd M,N)\cong \Hom_\cC(-,[M,N]),
     \end{align}
     then $[M,N]$ is called the \emph{internal hom}. And $\cM$ is called \emph{enriched in $\cC$} if the internal hom $[M,N]$ exists in $\cC$ for all $M,N\in\cM$.
\end{definition}

\begin{prop}
    If $\cM$ is enriched in $\cC$, $\cM$ is promoted to a $\cC$-enriched category ${^{\cC}\cM}$. (Details please refer to Ref. \cite{KYZZ2104.03121}) This is called the \emph{canonical construction} of enriched category.
\end{prop}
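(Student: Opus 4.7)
The plan is to explicitly construct the $\cC$-enriched category $^\cC\cM$ using the internal hom as the hom-objects and then verify the enriched category axioms by repeatedly applying the Yoneda lemma through the defining adjunction $\Hom_\cM(- \rhd M, N) \cong \Hom_\cC(-, [M,N])$. The key observation is that this adjunction produces, for each $M, N \in \cM$, a universal evaluation morphism $\ev_{M,N} : [M,N] \rhd M \to N$, corresponding to $\id_{[M,N]}$ on the right hand side, and every structure morphism in $^\cC\cM$ will be uniquely determined by how it interacts with such evaluations.

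First I would set $^\cC\cM(M,N) := [M,N]$. The unit morphism $1_M : \one \to [M,M]$ is defined as the image of the unitor $\mu_M : \one \rhd M \to M$ under the adjunction $\Hom_\cM(\one \rhd M, M) \cong \Hom_\cC(\one, [M,M])$. For composition, I would construct the morphism $\circ : [N,P] \ot [M,N] \to [M,P]$ as the adjunct of the composite
\begin{equation*}
([N,P] \ot [M,N]) \rhd M \xrightarrow{\alpha_{[N,P],[M,N],M}} [N,P] \rhd ([M,N] \rhd M) \xrightarrow{\id \rhd \ev_{M,N}} [N,P] \rhd N \xrightarrow{\ev_{N,P}} P,
\end{equation*}
using the module associator to re-bracket. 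This gives all the data required by the definition of a $\cC$-enriched category.

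Next I would verify the associativity axiom for composition: the two ways of composing $[P,R] \ot [N,P] \ot [M,N] \to [M,R]$ must agree. By the adjunction, it suffices to check that the two corresponding morphisms $([P,R] \ot [N,P] \ot [M,N]) \rhd M \to R$ are equal. Both unfold into chains of evaluations $\ev_{M,N}, \ev_{N,P}, \ev_{P,R}$ interleaved with the module associator $\alpha$, and their equality follows from the pentagon axiom \eqref{modulecat1} for the $\cC$-action on $\cM$ together with naturality of $\alpha$. The unit axioms reduce, after one application of the adjunction, to the triangle coherence \eqref{modulecat2} together with the defining property $\ev_{M,M}(1_M \rhd \id_M) = \mu_M$, which in turn follows from how $1_M$ was defined as the adjunct of $\mu_M$.

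The main obstacle I anticipate is purely bookkeeping: every verification requires translating equations in $\cM$ into equations in $\cC$ via the adjunction, and one must carefully track the module associator and unitor so that the coherence diagrams \eqref{modulecat1}--\eqref{modulecat2} can be invoked in the correct form. Because the adjunction $\Hom_\cM(- \rhd M, N) \cong \Hom_\cC(-, [M,N])$ is natural in \emph{all} its arguments, no additional data needs to be chosen, and uniqueness of adjuncts guarantees that the promoted $^\cC\cM$ is canonical (hence the name \emph{canonical construction}); this is why the same construction appears, without essential change, in Ref.~\cite{KYZZ2104.03121}.
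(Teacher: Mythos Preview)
Your proposal is correct and follows the standard canonical construction. Note, however, that the paper itself does not supply a proof of this proposition: it simply defers to Ref.~\cite{KYZZ2104.03121}. So there is no ``paper's own proof'' to compare against; what you have written is essentially the argument that the cited reference would contain, and it is the expected one.
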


\begin{definition}[$\Ve_G$]
    The category of $G$-graded vector spaces is a unitary fusion category consists of
    \begin{itemize}
        \item Objects: Finite-dimensional $G$-graded vector spaces $\{V:=\oplus_{g\in G}V_g\}$, where $V_g$ is the sub vector space of $V$ graded by $g$.
        \item Morphisms: $\Hom_{\Ve_G}(V,W):=\{ f\in\Hom_\Ve(\oplus_{g\in G}V_g,\oplus_{h\in G}W_h)| f(V_g)\subset W_g\}$. 
        \item Tensor product: $(V\ot W)_g=\oplus_{ab=g}V_a\ot W_b$.
        \item Tensor unit: One dimensional vector space graded by $e\in G$ the identity element.
    \end{itemize}
     A simple object is a one-dimensional vector space graded by $g\in G$. One-dimensional vector spaces are isomorphic if and only if they are graded by the same $g$.
\end{definition}

\begin{definition}[Monomorphism]
    In category $\cC$, a morphism $f:X\to Y$ is called a monomorphism if it is left-cancellative, i.e., $\forall Z\in\cC$ and $\forall g_1,g_2:Z\to X$, 
    \begin{align}
        fg_1=fg_2 \Rightarrow g_1=g_2.
    \end{align}
\end{definition}

\begin{definition}[Epimorphism]
    In category $\cC$, a morphism $f:X\to Y$ is called an epimorphism if it is right-cancellative, i.e., $\forall Z\in\cC$ and $\forall g_1,g_2:Y\to Z$, 
    \begin{align}
        g_1f=g_2f \Rightarrow g_1=g_2.
    \end{align}
\end{definition}

\begin{definition}[Image]
\label{image}
    Let $f:X\to Y$ be a morphism in $\cC$. The image of $f$ is an object $\Ima f\in\cC$ together with a monomorphism $j:\Ima f\to Y$ satisfying
    \begin{itemize}
        \item There exists a morphism $i:X\to \Ima f$ such that $f=ji$, called a \emph{factorization} of $f$.
        \item For any triple $(I',i':X\to I',j':I'\to Y)$ where $j'$ is a monomorphism and $f=j'i'$, there exists a unique morphism $v:\Ima f\to I'$ such that $j=j'v$.
    \end{itemize}
    The universal property of kernel can be depicted by the following commutative diagram,
    \begin{equation}
    \begin{tikzcd}
	X && Y \\
	& {\Ima f} \\
	& {I'}
	\arrow["f", from=1-1, to=1-3]
	\arrow["i", from=1-1, to=2-2]
	\arrow["j", hook, from=2-2, to=1-3]
	\arrow["{j'}"', hook, from=3-2, to=1-3]
	\arrow["{i'}"', from=1-1, to=3-2]
	\arrow["{\exists! v}"', near start, dashed, from=2-2, to=3-2]
    \end{tikzcd}.\end{equation}
    If $\Ima f$ exists, it is unique up to a unique isomorphism, and the following defined (co)equalizer and (co)kernel all have this property.
\end{definition}

\begin{definition}[Coequalizer]
\label{def.coeq}
    Let $f,g:X\to Y$ be a pair of morphisms in $\cC$. The coequalizer of $f,g$ is an object $C$ together with a morphism $c:Y\to C$ such that
    \begin{itemize}
        \item $cf=cg$.
        \item For any pair $(C',c':Y\to C')$ such that $c'f=c'g$, there exists a unique morphism $\gamma:C\to C'$ such that $c'=\gamma c$.
    \end{itemize}
    In terms of diagram,
    \begin{equation}
    \begin{tikzcd}
	X & Y & C \\
	&& {C'}
	\arrow["c", from=1-2, to=1-3]
	\arrow["{c'}"', from=1-2, to=2-3]
	\arrow["{\exists!\gamma}", dashed, from=1-3, to=2-3]
	\arrow["f", shift left=1, from=1-1, to=1-2]
	\arrow["g"', shift right=1, from=1-1, to=1-2]
    \end{tikzcd}.\end{equation}
    Denote the coequalizer of $f$ and $g$ as $\text{coeq}(f,g)$. The equalizer $\text{eq}(f,g)$ is similarly defined.
\end{definition}

\begin{example}
    In the category of sets denoted as $\textbf{Set}$, a coequalizer of two maps $f,g:X\to Y$ is the quotient set $Y/\sim$, where the equivalence relation $\sim$ is generated by $f(x)\sim g(x),\forall x\in X$.
\end{example}

Below we always assume $\cC$ to be an additive category.

\begin{definition}[Kernel]
    Let $\cC$ be an additive category and $f:X\to Y$ is a morphism in $\cC$. The kernel of $f$ is an object $\ker f$ together with a morphism $k:\ker f\to X$ such that
    \begin{itemize}
        \item $fk=0$.
        \item For any pair $(K',k':K'\to X)$ such that $fk'=0$, there exists a unique morphism $l:K'\to \ker f$ such that $kl=k'$.
    \end{itemize}
    In terms of diagram, \begin{equation}
    \begin{tikzcd}
	\ker f & X & Y \\
	{K'}
	\arrow["f", from=1-2, to=1-3]
	\arrow["k", from=1-1, to=1-2]
	\arrow["{k'}"', from=2-1, to=1-2]
	\arrow["{\exists! l}", dashed, from=2-1, to=1-1]
    \end{tikzcd}.\end{equation}
\end{definition}

\begin{definition}[Cokernel]
    Let $f:X\to Y$ be a morphism in $\cC$. The cokernel of $f$ is an object $\coker f$ together with a morphism $c:Y\to\coker f$ such that
    \begin{itemize}
        \item $cf=0$.
        \item For any pair $(C',c':Y\to C')$ such that $c'f=0$, there exists a unique morphism $l:\coker f \to C'$ such that $lc=c'$.
    \end{itemize}
    In terms of diagram, \begin{equation}
    \begin{tikzcd}
	X & Y & {\coker f} \\
	&& {C'}
	\arrow["f", from=1-1, to=1-2]
	\arrow["c", from=1-2, to=1-3]
	\arrow["{c'}"', from=1-2, to=2-3]
	\arrow["{\exists! l}", dashed, from=1-3, to=2-3]
    \end{tikzcd}.\end{equation}
\end{definition}

\begin{example}
        Given $f,g:X\to Y$ in $\cC$,   $\text{eq}(f,g)=\ker (f-g)$, $\text{coeq}(f,g)=\coker (f-g).$
\end{example}

\begin{remark}
    Let $f:X\to Y$ be a morphism in an abelian category. $f$ is a monomorphism if and only if $\ker f=0$, and $f$ is an epimorphism if and only if $\coker f=0$.
\end{remark}

\section{Algebra and module category}
\begin{definition}[Algebra]
    Let $\cC$ be a monoidal category. An (associative unital) algebra in $\cC$ is a triple $(A,m,\eta)$, which is an object $A\in \cC$ together with a multiplication $m:A\otimes A\to A$ and a unit morphism $\eta:\one\to A$ satisfying associativity and identity:
    \begin{equation}
        \begin{tikzcd}
	{(A\otimes A)\otimes A} && {A\otimes(A\otimes A)} \\
	{A\otimes A} && {A\otimes A} \\
	& A
	\arrow["{\alpha_{A,A,A}}", from=1-1, to=1-3]
	\arrow["{m\otimes \id_A}"', from=1-1, to=2-1]
	\arrow["{\id_A \otimes m}", from=1-3, to=2-3]
	\arrow["m"', from=2-1, to=3-2]
	\arrow["m", from=2-3, to=3-2]
    \end{tikzcd},
    \end{equation}
    \begin{equation}
        \begin{tikzcd}
	{\one\otimes A} && {A\otimes \one} \\
	{A\otimes A} & A & {A\otimes A}
	\arrow["m"', from=2-1, to=2-2]
	\arrow["{\eta\otimes \id_A}"', from=1-1, to=2-1]
	\arrow["{\id_A}", from=1-1, to=2-2]
	\arrow["m", from=2-3, to=2-2]
	\arrow["{\id_A \otimes\eta}", from=1-3, to=2-3]
	\arrow["{\id_A}"', from=1-3, to=2-2]
        \end{tikzcd},
    \end{equation}  
where $\id_A:A\to A$ is the unique identity map on $A$. 
\end{definition}

\begin{definition}[Algebra homomorphism]
    Given two algebras $(A_1,m_1,\eta_1)$ and $(A_2,m_2,\eta_2)$ in $\cC$, an algebra homomorphism between them is a morphism $f:A_1 \to A_2$ such that
    \begin{equation}
    \begin{tikzcd}
	{A_1\otimes A_1} && {A_2\otimes A_2} \\
	\\
	{A_1} && {A_2}
	\arrow["{f\otimes f}", from=1-1, to=1-3]
	\arrow["{m_1}"', from=1-1, to=3-1]
	\arrow["{m_2}", from=1-3, to=3-3]
	\arrow["f"', from=3-1, to=3-3]
    \end{tikzcd},\ \ 
    \begin{tikzcd}
	{A_1} && {A_2} \\
	& {\one}
	\arrow["f", from=1-1, to=1-3]
	\arrow["{\eta_1}", from=2-2, to=1-1]
	\arrow["{\eta_2}"', from=2-2, to=1-3]
    \end{tikzcd}.
    \end{equation}
\end{definition}

\begin{definition}[Module over an algebra]\label{def.Amodule}
    Given an algebra $(A,m,\eta)$ in $\cC$, a right $A$-module is a pair $(M,\rho)$, where $M\in \cC$ and $\rho:M\otimes A\to M$ is an action morphism such that
    \begin{equation}
    \label{Amodule}
       \begin{tikzcd}
	{(M\otimes A)\otimes A} && {M\otimes (A\otimes A)} \\
	{M\otimes A} && {M\otimes A} \\
	& M
	\arrow["{\alpha_{M,A,A}}", from=1-1, to=1-3]
	\arrow["{\rho\otimes \id_A}"', from=1-1, to=2-1]
	\arrow["\rho"', from=2-1, to=3-2]
	\arrow["{\id_M \otimes m}", from=1-3, to=2-3]
	\arrow["\rho", from=2-3, to=3-2]
        \end{tikzcd},\ \ 
        \begin{tikzcd}
	{M\otimes \one} \\
	{M\otimes A} & M
	\arrow["{\id_M \otimes\eta}"', from=1-1, to=2-1]
	\arrow["\rho"', from=2-1, to=2-2]
	\arrow["{\id_M}", from=1-1, to=2-2]
        \end{tikzcd}.
    \end{equation}
    A left $A$-module is defined similarly.
\end{definition}

\begin{definition}[Bimodule]\label{def.bimo}
    Given two algebras $A$ and $B$ in $\cC$, an $A$-$B$-bimodule is a triple $(M,\lambda,\rho)$ such that
    \begin{itemize}
        \item $(M,\lambda)$ is a left $A$-module and $(M,\rho)$ right $B$-module.
        \item The diagram commutes:
        \begin{equation}   
        \begin{tikzcd}
	{A\ot M\ot B} && {M\ot B} \\
	\\
	{A\ot M} && M
	\arrow["{\lambda\ot \id_B}", from=1-1, to=1-3]
	\arrow["{\id_A \ot \rho}"', from=1-1, to=3-1]
	\arrow["\rho", from=1-3, to=3-3]
	\arrow["\lambda"', from=3-1, to=3-3]
        \end{tikzcd}.
        \end{equation}
    \end{itemize}
\end{definition}

%\begin{align}
%    \itk{
%      \draw (0,0)node[below]{$A$}--(0,3)node[above]{$A$};
%      \draw (2,0)node[below]{$A$}--(2,3)node[above]{$A$};
%      \draw (0,1)--node[above]{$A$} (2,2);
%   }
%=
%\itk{
%      \draw (1,1)node[right]{$A$} -- (0,0)-- (-1,1)node[left]{$A$}; 
%      \draw (0,-1) -- (0,-0.5)node[right]{$A$}  --(0,0);
%      \draw (-1,-2)node[left]{$A$} -- (0,-1) -- (1,-2)node[right]{$A$}; 
%}
%\end{align}

\begin{definition}[Relative tensor product]\label{def.rela}
    Given an algebra $A$ in $\cC$, a right $A$-module $(M,\rho)$ and a left $A$-module $(N,\lambda)$, the relative tensor product of $M$ and $N$ over $A$, denoted by $M\otr[A] N$, is  the coequalizer of $\rho\ot\id_N$ and $\id_M\ot \lambda$:
    \begin{equation}
        \begin{tikzcd}
	{M\ot A\ot N} && {M\ot N} & {M\otr[A] N}
	\arrow["{\rho\ot\id_N}", shift left=1, from=1-1, to=1-3]
	\arrow["{\id_M \ot \lambda}"', shift right=1, from=1-1, to=1-3]
	\arrow[from=1-3, to=1-4]
        \end{tikzcd}.
    \end{equation}  
\end{definition}

\begin{prop}
    For a right $A$-module $M$ and left $A$-module $N$, $M\otr[A] A\cong M$ and $A\otr[A] N\cong N$.
\end{prop}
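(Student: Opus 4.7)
The plan is to exhibit $\rho:M\otimes A\to M$ as the coequalizer of the two parallel maps
\[
\rho\otimes\id_A,\ \id_M\otimes m\ :\ M\otimes A\otimes A\rightrightarrows M\otimes A,
\]
where $A$ is viewed as a left $A$-module via its own multiplication, and then appeal to the uniqueness (up to unique isomorphism) of coequalizers to conclude $M\otr[A]A\cong M$. By Definitions~\ref{def.coeq} and~\ref{def.rela}, this requires three checks: the cofork identity for $\rho$, existence of a mediating morphism for any competing cofork, and uniqueness of that mediating morphism.

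First, I would observe that the cofork identity $\rho\circ(\rho\otimes\id_A)=\rho\circ(\id_M\otimes m)$ is literally the associativity square appearing in the definition of a right $A$-module \eqref{Amodule} (after suppressing the associator $\alpha_{M,A,A}$, which I would do throughout since $\cC$ is strictifiable and this only affects bookkeeping).

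Second, for any object $X$ and morphism $f:M\otimes A\to X$ satisfying $f\circ(\rho\otimes\id_A)=f\circ(\id_M\otimes m)$, I would define the mediating morphism by
\[
\bar f\ :=\ f\circ(\id_M\otimes\eta)\circ r_M^{-1}\ :\ M\to X,
\]
where $r_M:M\otimes\one\cong M$ is the right unitor. To verify $\bar f\circ\rho=f$, the key move is to pre-compose the cofork equation with $\id_{M\otimes A}\otimes\eta:M\otimes A\otimes\one\to M\otimes A\otimes A$. The left-hand side becomes $f\circ(\rho\otimes\eta)$, and the right-hand side simplifies to $f$ via the unit axiom $m\circ(\id_A\otimes\eta)=\id_A$ for the algebra $A$; these are precisely the two expressions for $\bar f\circ\rho$ and $f$ respectively.

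Third, for uniqueness, I would use that $\rho$ is a split epimorphism: the module unit axiom \eqref{Amodule} gives $\rho\circ(\id_M\otimes\eta)\circ r_M^{-1}=\id_M$. Hence any $g:M\to X$ with $g\circ\rho=f$ must satisfy $g=g\circ\rho\circ(\id_M\otimes\eta)\circ r_M^{-1}=f\circ(\id_M\otimes\eta)\circ r_M^{-1}=\bar f$. The isomorphism $A\otr[A]N\cong N$ follows by the entirely symmetric argument with the left action $\lambda:A\otimes N\to N$ in place of $\rho$. There is no genuine obstacle here; the whole proof is a short diagram chase using only the algebra unit axiom and the module associativity/unit axioms, and the only bookkeeping subtlety is tracking the associator and unitors in the non-strict case, which I would omit in favour of a clean strict-monoidal presentation.
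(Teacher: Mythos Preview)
Your proof is correct and essentially identical to the paper's: both exhibit $\rho:M\otimes A\to M$ as the coequalizer, verify the cofork identity via the module associativity axiom, and produce the mediating morphism as $f\circ(\id_M\otimes\eta)$ using the algebra and module unit axioms. The only differences are cosmetic: you are more explicit about the right unitor $r_M$ and separate out the uniqueness step by noting $\rho$ is a split epimorphism, while the paper compresses existence and uniqueness into a single line.
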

\begin{proof}
    We show that $(M,\rho)$ is the coequalizer:
    \begin{equation}
        \begin{tikzcd}
	{M\ot A\ot A} && {M\ot A} & {M}
	\arrow["{\rho\ot\id_A}", shift left=1, from=1-1, to=1-3]
	\arrow["{\id_M \ot m}"', shift right=1, from=1-1, to=1-3]
    \arrow["{\rho}",from=1-3, to=1-4]
        \end{tikzcd}.
    \end{equation}  
    First, from Diagram (\ref{Amodule}), we have $\rho(\rho\ot\id_A)=\rho(\id_M \ot m)$.
    Second, for any pair $(X,f:M\ot A\to X)$ such that $f(\rho\ot\id_A)=f(\id_M \ot m)$, there exists a unique morphism $\gamma=f(\id_M\ot\eta)$ such that $f=\gamma\rho$. These are two conditions in Definition (\ref{def.coeq}), and by the fact that a coequalizer, if exists, is unique up to a unique isomorphism, we conclude that $M\otr[A] A\cong M$. The other part is similar.
\end{proof}

\begin{definition}[Module map]\label{def.modulemap}
    Given two $A$-modules $(M,\rho)$ and $(N,\tau)$, a morphism $f:M\to N$ between them is called a left $A$-module map if
        \begin{equation}
        \label{modulemap}
            \begin{tikzcd}
	{M\otimes A} && {N\otimes A} \\
	M && N
	\arrow["{f\otimes \id_A}", from=1-1, to=1-3]
	\arrow["f"', from=2-1, to=2-3]
	\arrow["\rho"', from=1-1, to=2-1]
	\arrow["\tau", from=1-3, to=2-3]
            \end{tikzcd}.
        \end{equation}
    A right $A$-module map is defined similarly.
     If an isomorphism is a module map, it is automatically an isomorphism between modules.
\end{definition}

\begin{definition}[Bimodule map]
    Given two $A$-$B$-bimodules $M$ and $N$, a morphism $f:M\to N$ is called an $A$-$B$-bimodule map if it is both a left $A$-module map and a right $B$-module map.
        %\begin{equation}
        %\begin{tikzcd}
	%{A\ot M\ot B} && {A\ot N\ot B} \\
	%M && N
	%\arrow["{\id_A \ot f\ot \id_B}", from=1-1, to=1-3]
	%\arrow["f"', from=2-1, to=2-3]
	%\arrow[from=1-1, to=2-1]
	%\arrow[from=1-3, to=2-3]
    %    \end{tikzcd}.
    %    \end{equation}
\end{definition}

\begin{definition}[Category of modules over an algebra]
\label{algmodulecat}
    Given an algebra $A$ in $\cC$, the category of right $A$-modules $\cC_A$ consists of:
    \begin{itemize}
        \item Objects: Right A-modules.%$(M,\rho)$,$(N,\tau)$,....
        \item Morphisms: $A$-module maps.
    \end{itemize}
    The category of left $A$-modules denoted as ${_A}\cC$ is defined similarly. And given another algebra $B$ in $\cC$, the category of $A$-$B$-bimodules denoted as ${_A}\cC_B$ consists of $A$-$B$-bimodules as objects and $A$-$B$-bimodule maps as morphisms.
    %The composition map and identity morphism satisfying associativity and identity axioms are defined similarly as usual categories.
\end{definition}

\begin{remark}
\label{CAismodule}
    Given a monoidal category $\cC$ and a algebra $A$ in $\cC$, the category of right $A$-modules $\cC_A$ is a left $\cC$-module category. Explicitly, the module action functor is defined by
    \begin{align}
        \rhd:\cC\times\cC_A &\to \cC_A
        \nonumber\\
        (X,(M,\rho)) &\mapsto (X\otimes M,(X\otimes M)\otimes A\xrightarrow{\alpha_{X,M,A}} X\otimes (M\otimes A)\xrightarrow[]{\id_X\ot \rho} X\otimes M).
    \end{align}
    % where $\rho':=(\id_X \otimes\rho)\alpha_{X,M,A}$.
    Then we can check that $\cC_A$ together with functor $\rhd$, associator $(X\ot Y)\ot M\to X\ot (Y\ot M)$ and unitor $\one\ot M\to M$ satisfies diagrams (\ref{modulecat1}) and (\ref{modulecat2}), and thus is a left $\cC$-module. 
\end{remark}

\begin{definition}[Group algebra]
    Denote the category of finite dimensional vector spaces as \textbf{Vec}. A group algebra is a triple $(\mathbb{C}[G],m,\eta)$ in \textbf{Vec}, where $G$ is a basis of $\C[G]$, the multiplication is defined as
    \begin{align}
        m:\mathbb{C}[G]\times \mathbb{C}[G]&\to\mathbb{C}[G]
        \nonumber\\
        (\sum_{g\in G} a_g g,\sum_{h\in G} b_h h)&\mapsto \sum_{gh} a_g b_h gh,
    \end{align}
    and the unit morphism is
    \begin{align}
        \eta:\mathbb{C}&\to\mathbb{C}[G]
        \nonumber\\
        1&\mapsto e,
    \end{align}
    where $e\in G$ is the identity element.
\end{definition}

\begin{remark}
    The category of all left $\mathbb{C}[G]$-modules $_{\mathbb{C}[G]}\textbf{Vec}$ is exactly $\Rep G$.
\end{remark}

\begin{remark}\label{rmk.freebimo}
    For any algebra $A\in\cC$, for each $X\in\cC$ we can construct a free $A$-module through the $\cC$-module functor ($\cC$ itself is a $\cC$-module category)
    \begin{align}
        \text{Free}:\cC &\to \cC_A,\nonumber\\
        X &\mapsto (X\ot A,(X\ot A)\ot A\xrightarrow{\alpha_{X,A,A}} X\ot (A\ot A)\xrightarrow{\id_X\ot m} X\ot A).
    \end{align}
    There is also a forgetful $\cC$-module functor $\text{Forg}:\cC_A\to\cC$, $\text{Forg}(M,\rho)=M$. These two functors are adjoints to each other, i.e.,
    \begin{align}
        \Hom_{\cC_A}(\text{Free}(X),(M,\rho))\cong \Hom_{\cC}(X,\text{Forg}(M,\rho)).
    \end{align}
    It is also a common notation to suppress the $A$-action, by writing $-\ot A:=\text{Free}$, 
    \begin{align}
        \Hom_{\cC_A}(X\ot A,M)\cong \Hom_{\cC}(X,M).
    \end{align}
\end{remark}

\begin{definition}[Algebra Morita equivalence]
    Two algebras $A,B$ in $\cC$ are called Morita equivalent if $\cC_A$ and $\cC_B$ are equivalent $\cC$-module categories.
\end{definition}

\section{Some concepts in group representation theory}
\begin{definition}[Coset and quotient group]\label{coset}
    Let $H$ be a subgroup of group $G$. A left coset of $H$ in $G$ is a set 
    \begin{align}
        gH=\{gh|h\in H\} \text{ for some }g\in G,
    \end{align}
    where $g$ is called a representative of coset $gH$.

    And the set of all left cosets of $H$ in $G$ is denoted as $G/H$, i.e.,
    \begin{align}
        G/H=\{gH|g\in G\}.
    \end{align}
    It is a group when $H$ is a normal subgroup $gHg^{-1}=H,\forall g\in G$, and called a quotient group. Right cosets are similarly defined.
\end{definition}

\begin{definition}[Cyclic representation of an algebra in \textbf{Vec}]
    Given an algebra $A$ in \textbf{Vec} and a representation of $A$ (an $A$-module) denoted as $(V,\rho)$, $v\in V$ is called a cyclic vector if $Av=V$. And $(V,\rho)$ is called a cyclic representation if there exists a cyclic vector in $V$.
\end{definition}

\begin{definition}[Regular representation of a group]
\label{def.reg}
    A (left) regular representation of a group $G$ is an object $(C(G),\rho)\in\Rep G$, where the group action of each $g$ is defined as
    \begin{align}
        \rho_g:C(G)&\to C(G)
        \nonumber\\
        h&\mapsto \rho_g(h)=gh.
    \end{align} 
\end{definition}

\begin{remark}
\label{cyclicthm}
    A cyclic representation of group algebra $\mathbb{C}[G]$ in \textbf{Vec} is a cyclic representation in $\Rep G$. The dimension of a cyclic representation in $\Rep G$ cannot be larger than the number of group elements in $G$.
\end{remark}

\begin{prop}
    Given an irrep $X_a\in \Rep G$, the dual representation $X_a^*$ is isomorphic to $X_a$ if there exists a trivial representation $\mathbf{1}$ in the direct sum decomposition of $X_a\ot X_a$, i.e.,
    \begin{align}
        X_a\ot X_a\cong \mathbf{1}\oplus ... .
        \label{xaxa}
    \end{align}
\end{prop}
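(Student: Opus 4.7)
The plan is to use the rigid structure of $\Rep G$ to convert the hypothesis into a statement about $\Hom(X_a, X_a^*)$, and then invoke Schur's lemma.

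First I would translate the hypothesis. Saying that $\mathbf{1}$ appears in the direct sum decomposition of $X_a \otimes X_a$ means $\Hom(X_a \otimes X_a, \mathbf{1}) \neq 0$ (equivalently, the multiplicity of $\mathbf{1}$ in the Clebsch--Gordan decomposition is at least one). Using the duality adjunction in the rigid category $\Rep G$ (which is exactly the pairing/copairing structure reviewed in the preliminaries), there is a natural isomorphism
\begin{equation}
\Hom(X_a \otimes X_a, \mathbf{1}) \;\cong\; \Hom(X_a, X_a^*),
\end{equation}
explicitly implemented by bending one leg using the copairing $\mathbf{1} \to X_a \otimes X_a^*$. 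Therefore the hypothesis yields a nonzero intertwiner $f : X_a \to X_a^*$.

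Next I would apply Schur's lemma. The object $X_a^*$ is again a simple object in $\Rep G$, because in any rigid semisimple category the duality functor $(-)^*$ is an equivalence (it sends simples to simples); concretely, any nonzero subrepresentation of $X_a^*$ would dualize back to a nonzero subrepresentation of $X_a \cong X_a^{**}$. Since both $X_a$ and $X_a^*$ are simple, $\ker f$ is either $0$ or $X_a$, and $\Ima f$ is either $0$ or $X_a^*$; nonzero $f$ forces $\ker f = 0$ and $\Ima f = X_a^*$, so $f$ is an isomorphism. Hence $X_a \cong X_a^*$.

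There is no real obstacle here, since the argument is two short steps (rigid adjunction, then Schur). The only thing to be mildly careful about is phrasing the adjunction in a way consistent with the notational conventions of the paper (i.e.\ using the symmetric pairing $X_a^* \otimes X_a \to \mathbf{1}$ and copairing $\mathbf{1} \to X_a \otimes X_a^*$ that were already shown to be intertwiners in the preliminaries); one can make this entirely graphical by noting that a nonzero morphism $X_a \otimes X_a \to \mathbf{1}$ composed with the copairing on the second factor produces the desired $X_a \to X_a^*$, and vice versa, so the two hom-spaces are in explicit bijection.
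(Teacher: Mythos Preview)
Your proposal is correct and follows essentially the same approach as the paper: both use the rigid adjunction $\Hom(X_a\otimes X_a,\mathbf{1})\cong\Hom(X_a,X_a^*)$ to produce a nonzero intertwiner, then Schur's lemma to upgrade it to an isomorphism. The only cosmetic difference is that the paper shows $X_a^*$ is simple via the adjunction chain $\Hom(X_a,X_a)\cong\Hom(X_a\otimes X_a^*,\mathbf{1})\cong\Hom(X_a^*,X_a^*)$ rather than your appeal to $(-)^*$ being an equivalence; the content is the same.
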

\begin{proof}
    %Given \eqref{xaxa} and
    %\begin{align}
     %   X_a^*\ot X_a\cong \mathbf{1}\oplus ... .
    %\end{align}
    By Schur's lemma, we have
    \begin{align}
       \C\cong \Hom (X_a,X_a)\cong \Hom (X_a\ot X_a^*,\mathbf{1})\cong\Hom (X_a^*,X_a^*),
    \end{align}
    which means that $X_a^*$ is also an irrep. From isomorphism \eqref{xaxa}, 
    \begin{align}
      \Hom (X_a,X_a^*) \cong \Hom (X_a\ot X_a,\mathbf{1})\neq 0.
    \end{align}
    Again by Schur's lemma, the two irreps are isomorphic $X_a\cong X_a^*$.
\end{proof}

\section{Module functor}
\begin{definition}[Module functor]
    Let $\cC$ be a monoidal category and $\cM,\cN$ be two left $\cC$-modules with associator $\alpha$ amd $\alpha'$, a $\cC$-module functor is a functor $F:\cM\to\cN$ equipped with a natural isomorphism
    \begin{equation} s_{X,M}:F(X\ot M)\to X\ot F(M),\ \ \forall X\in\cC,M\in\cM , \end{equation}
    such that $\forall X,Y\in\cC,M\in\cM$,
    \begin{equation}
        \begin{tikzcd}
	{F(X\rhd(Y\rhd M))} && {F((X\ot Y)\rhd M)} && {(X\ot Y)\rhd F(M)} \\
	{X\rhd F(Y\rhd M)} &&&& {X\rhd(Y\rhd F(M))}
	\arrow["{s_{X,Y\ot M}}"', from=1-1, to=2-1]
	\arrow["{F(\alpha_{X,Y,M})}"', from=1-3, to=1-1]
	\arrow["{s_{X\ot Y,M}}", from=1-3, to=1-5]
	\arrow["{\id_X \ot s_{Y,M}}"', from=2-1, to=2-5]
	\arrow["{\alpha'_{X,Y,F(M)}}", from=1-5, to=2-5]
        \end{tikzcd},
    \end{equation}
    and
    \begin{equation}
        \begin{tikzcd}
	{F(\one\rhd M)} && {1\rhd F(M)} \\
	& {F(M)}
	\arrow["{F(\mu_M)}"', from=1-1, to=2-2]
	\arrow["{\mu_{F(M)}}", from=1-3, to=2-2]
	\arrow["{s_{\one,M}}", from=1-1, to=1-3]
        \end{tikzcd}.
    \end{equation}
\end{definition}

\begin{definition}[Module natural transformation]
    Let $(F,s)$ and $(G,t)$ be two $\cC$-module functors. A module natural transformation between them is a natural transformation $\nu:F\Rightarrow G$ such that $\forall X\in \cC,M\in\cM$,
    \begin{equation}
        \begin{tikzcd}
	{F(X\rhd M)} && {X\rhd F(M)} \\
	{G(X\rhd M)} && {X\rhd G(M)}
	\arrow["{s_{X,M}}", from=1-1, to=1-3]
	\arrow["{t_{X,M}}"', from=2-1, to=2-3]
	\arrow["{\nu_{X\ot M}}"', from=1-1, to=2-1]
	\arrow["{\id_X \rhd\nu_M}", from=1-3, to=2-3]
        \end{tikzcd}.
    \end{equation}
\end{definition}

\begin{remark}
    Let $\cM,\cN$ be two left $\cC$-modules.
    %Under some good assumptions, 
    We denote by $\Fun_\cC(\cM,\cN)$ the category of left $\cC$-module functors from $\cM$ to $\cN$ and module natural transformations.
\end{remark}

\begin{remark}\label{rmk.bimofunc}
    Let $\cC_A$ and $\cC_B$ be the categories of right $A$-modules and $B$-modules for some algebras $A,B$ in $\cC$. The category of module functors $\Fun_\cC(\cC_A,\cC_B)$ is equivalent to the category of $A$-$B$-bimodules ${_A}\cC_B$ via the functor
    \begin{align}
        {_A}\cC_B &\to \Fun_\cC(\cC_A,\cC_B)\nonumber\\
        M &\mapsto -\otr[A] M.
    \end{align}
\end{remark}

\section{Frobenius algebra, separable algebra and Lagrangian algebra} 
\begin{definition}[Coalgebra]
    A (unital associative) coalgebra in a monoidal category $\cC$ is a triple $(C,\Delta,\epsilon)$, which is an object $C\in \cC$ together with a comultiplication $\Delta:C\to C\ot C$ and a counit morphism $\epsilon:C\to\one$ satisfying coassociativity and coidentity:
    \begin{equation}
       \begin{tikzcd}
	{(C\otimes C)\otimes C} && {C\otimes(C\otimes C)} \\
	{C\otimes C} && {C\otimes C} \\
	& C
	\arrow["{\id_C \otimes \Delta}"', from=2-3, to=1-3]
	\arrow["\Delta"', from=3-2, to=2-3]
	\arrow["\Delta", from=3-2, to=2-1]
	\arrow["{\Delta\otimes \id_C}", from=2-1, to=1-1]
	\arrow["{\alpha_{C,C,C}}", from=1-1, to=1-3]
        \end{tikzcd},
    \end{equation}
    \begin{equation}
        \begin{tikzcd}
	{\one\otimes C} && {C\otimes \one} \\
	{C\otimes C} & C & {C\otimes C}
	\arrow["{\lambda_C}", from=1-1, to=2-2]
	\arrow["{\rho_C}"', from=1-3, to=2-2]
	\arrow["{\epsilon\otimes \id_C}", from=2-1, to=1-1]
	\arrow["\Delta", from=2-2, to=2-1]
	\arrow["\Delta"', from=2-2, to=2-3]
	\arrow["{\id_C \otimes\epsilon}"', from=2-3, to=1-3]
        \end{tikzcd}.
    \end{equation}  
\end{definition}

\begin{definition}[Frobenius algebra]
    A Frobenius algebra in $\cC$ is a tuple $(A,m,\eta,\Delta,\epsilon)$ satisfying
    \begin{itemize}
        \item $(A,m,\eta)$ is an algebra and $(A,\Delta,\epsilon)$ is a coalgebra.
        \item The Frobenius condition:
        \begin{align}
        \label{Frobenius}
        (\id_A\ot m)\alpha_{A,A,A}(\Delta \ot \id_A)=\Delta m=
        (m\ot \id_A)\alpha^{-1}_{A,A,A}(\id_A \ot \Delta).
        \end{align}
    \end{itemize}
\end{definition}

\begin{definition}[Isometric algebra]
    Given a unitary fusion category (UFC) $\cC$, an algebra $(A,m,\eta)$ in $\cC$ is called isometric if $mm^{\dagger}=\id_A$ (the $\dagger$ strcture is from the UFC $\cC$).
\end{definition}

\begin{remark}
    By Theorem~\ref{thm.Fro}, an isometric algebra $(A,m,\eta)$ is a Frobenius algebra by taking $\Delta=m^\dag, \epsilon=\eta^\dag$.
\end{remark}

\begin{remark}
    Given a algebra $(A,m,\eta)$ in a UFC $\cC$, $m^\dagger$ is an $A$-$A$-bimodule map if and only if it satisfies the Frobenius condition \eqref{Frobenius}.
\end{remark}

\begin{definition}[Algebra of function on a group]
    The algebra of $\C$-valued function on a group $G$ is a triple $(\Fun(G),\delta,\varepsilon)$ in $\Rep G$, where the multiplication is defined as
    \begin{align}
        \delta:\Fun(G)\times \Fun(G)&\to \Fun(G)
        \nonumber\\
        (\sum_{g} a_{g} \delta_g,\sum_{k} b_{k} \delta_k)&\mapsto \sum_{g} a_{g} b_{g} \delta_g,
        \end{align}
    where $a_{g},b_{g}\in\C$, $\delta_g$ denotes the delta function $\delta_g(h)=\delta_{g,h}$. The function $\sum_g a_g\delta_g$ is thus \begin{equation}  \sum_g a_g\delta_g (h)=a_h. \end{equation} In calculations we may abuse the notation, drop the $\delta$ symbol and write $g\equiv\delta_g$ as basis vectors of $\Fun(G)$. In this sense, the coefficients of the \emph{formal} linear combination $\sum_g a_g g$ and the functions on $G$ determines each other.
    The unit morphism is
        \begin{align}
        \varepsilon:\C &\to \Fun(G)
        \nonumber\\
        1 &\mapsto \sum_g g,
    \end{align}
    or the function $\varepsilon(g)=1$ for any $g$.
\end{definition}

\begin{remark}
    We have $\Rep G_{\Fun (G)}\cong \Ve$, $\Ve_{\Fun (G)}\cong\Ve_G$, $_{\Fun (G)} \Rep G _{\Fun (G)}\cong {}_{\Fun (G)}\Ve\cong\Ve_ G$.
\end{remark}

\begin{prop}\label{prop.omega2}
     Algebras in $\Rep G$ are classified by (up to Morita equivalence) by $(H\subset G, \omega_2)$~\cite{EGNO15}, where $\omega_2\in H^2(H,U(1))$ is a 2-cocycle.
\end{prop}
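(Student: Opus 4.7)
The plan is to prove this via the dictionary between algebras in $\cC = \Rep G$ (up to Morita equivalence) and indecomposable finite semisimple left $\cC$-module categories (up to equivalence), which is stated in Remark~\ref{rmk.MCA}. Under this dictionary, an algebra $A$ corresponds to the module category $\cC_A$, and Morita equivalence on the algebra side is exactly module-categorical equivalence on the other. So what we really need is a classification of indecomposable finite semisimple module categories over $\Rep G$.

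First I would recall the classification result (Ostrik's theorem): indecomposable finite semisimple left $\Rep G$-module categories are in bijection with pairs $(H, \omega_2)$, where $H \subset G$ is a subgroup (up to conjugation) and $\omega_2 \in H^2(H, U(1))$, via the assignment $(H,\omega_2) \mapsto \Rep^{\omega_2}(H)$, the category of $\omega_2$-projective representations of $H$, regarded as a $\Rep G$-module via restriction $\Rep G \to \Rep H$ followed by tensoring with projective $H$-reps. The intuition is: fix a simple object $M \in \cM$ and look at the internal endomorphism algebra; its underlying object controls $H$ (the stabilizer-like data of a $G$-orbit of simples) and the obstruction to rigidifying the action gives a class in $H^2(H, U(1))$.

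Second, I would explicitly exhibit an algebra $A(H,\omega_2)$ realizing each such module category, so that the correspondence is effective. For $\omega_2$ trivial, the algebra is $A = \Fun(G/H)$ as in Example~\ref{eg.funcoset}, and one checks $\cC_{\Fun(G/H)} \cong \Rep H$ (Proposition~\ref{prop.repH} below). For nontrivial $\omega_2$, the algebra is the twisted group algebra construction of Example~\ref{eg.twistedgroupalg}: choose a section $s: G/H \to G$ and take the object $\bigoplus_{gH \in G/H} s(gH) \cdot \C_{\omega_2}[H]$ inside $\Fun(G) \otimes \C_{\omega_2}[H]$, with multiplication twisted by $\omega_2$. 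One checks that this is indeed an associative algebra object in $\Rep G$, and that its module category recovers $\Rep^{\omega_2}(H)$.

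Third, I would verify that distinct pairs $(H, \omega_2)$ give non-Morita-equivalent algebras. This follows because equivalence of the module categories $\Rep^{\omega_2}(H) \simeq \Rep^{\omega_2'}(H')$ as $\Rep G$-modules forces $H$ and $H'$ to be $G$-conjugate (by comparing the $G$-orbit structure of simples) and $\omega_2 = \omega_2'$ in $H^2(H,U(1))$ (by comparing the projective representation classes occurring as internal hom).

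The main obstacle is the second step: producing the twisted algebra for nontrivial $\omega_2$ and checking that its category of modules is equivalent to $\Rep^{\omega_2}(H)$ as a $\Rep G$-module category. The book-keeping with the cocycle, the section $s: G/H \to G$, and the associator is delicate; one has to verify that different choices of section change the algebra only by an inner Morita equivalence, so that the Morita class depends only on the cohomology class of $\omega_2$. Once this cocycle-level check is done, the rest of the argument is essentially a direct invocation of Remark~\ref{rmk.MCA} and Ostrik's theorem.
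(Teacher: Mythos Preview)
The paper does not actually prove this proposition: it is stated as a classification result with a citation to \cite{EGNO15} and no argument is given. The subsequent Examples~\ref{eg.funcoset} and~\ref{eg.twistedgroupalg} simply exhibit explicit representatives of each Morita class $(H,\omega_2)$ without verifying that the list is exhaustive or irredundant.

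Your proposal is therefore not a comparison against the paper's proof but rather a sketch of the standard Ostrik argument that the paper is citing. As such it is essentially correct in outline: the dictionary of Remark~\ref{rmk.MCA} reduces the problem to classifying indecomposable finite semisimple $\Rep G$-module categories, and these are known to be the $\Rep^{\omega_2}(H)$. Your identification of the delicate step (the twisted-algebra construction for nontrivial $\omega_2$ and the independence of the section up to Morita equivalence) is accurate. One small correction: in your third step, subgroups $H$ and $H'$ should be identified up to conjugacy in $G$, and the cocycles should match only after transporting along the conjugation, so the statement ``$\omega_2 = \omega_2'$ in $H^2(H,U(1))$'' is slightly imprecise as written.
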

\begin{example}\label{eg.funcoset}
    For the case of trivial $\omega_2$, we consider the set of algebras $\{(\Fun(G/H),\delta,\varepsilon)|H\subset G\}$, where $\Fun(G/H)$ is the $\C$-valued function of cosets $G/H$, and the multiplication is defined as
    \begin{align}
        \delta:\Fun(G/H)\otimes \Fun(G/H)&\to \Fun(G/H)
        \nonumber\\
        (\sum_{gH} a_{gH} gH,\sum_{kH} b_{kH} kH)&\mapsto \sum_{gH,kH}\delta_{gH,kH} a_{gH} b_{kH} gH=\sum_{gH} a_{gH} b_{gH} gH.
    \end{align}
    Here we abuse the notation $gH$ for the delta function which is $1$ on the coset $gH$ and $0$ on other cosets.
\end{example}
\begin{example}\label{eg.twistedgroupalg}
    Given a subgroup $H\subset G$, one algebra in the Morita class $(H, \omega_2\in H^2(H,U(1))$ can be realized as a sub-representation of $\Fun(G)\ot \Fun(G)$
    \begin{equation}  \la x\ot y, x\in G, y\in G, x^{-1}y\in H\ra \end{equation}
    with multiplication
    \begin{equation}  (x \ot y) \cdot (w\ot z) =\frac1{\sqrt{|H|}}\delta_{yw}\omega_2(x^{-1}y, y^{-1}z) x\ot z. \end{equation}
\end{example}
\begin{remark}
    The algebras in Example~\ref{eg.funcoset} and Example~\ref{eg.twistedgroupalg} are all isometric Frobenius algebras.
\end{remark}

\begin{prop}\label{prop.repH}
    For any Frobenius algebra $A=\Fun(G/H)$ in $\Rep G$, $\Rep G_A$ is equivalent to $\Rep H$\cite{XZ2205.09656}.
\end{prop}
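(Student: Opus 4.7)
The plan is to construct explicit inverse equivalences $F : \Rep G_A \rightleftarrows \Rep H : I$ and verify they are $G$-equivariant.

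First, I would exploit the special structure of $A=\Fun(G/H)$: it has a basis of mutually orthogonal idempotents $\{\delta_{gH}\}_{gH\in G/H}$ summing to $\varepsilon$, and the $G$-action permutes them as $g\cdot \delta_{kH}=\delta_{gkH}$. For any right $A$-module $(V,\tau)$ in $\Rep G$, the idempotents yield an internal direct-sum decomposition $V=\bigoplus_{gH\in G/H} V_{gH}$, where $V_{gH}:=\tau(V\otimes \delta_{gH})$. The key compatibility, which follows from $\tau$ being a morphism in $\Rep G$ (so $\tau\circ(\rho^V_g\otimes \rho^A_g)=\rho^V_g\circ\tau$), is that $\rho^V_g(V_{kH})=V_{gkH}$. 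In other words, objects of $\Rep G_A$ are precisely $G$-equivariant vector bundles on the $G$-set $G/H$.

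Second, I would define the restriction functor $F:\Rep G_A\to \Rep H$ by $F(V)=V_{eH}$, equipped with the $H$-action inherited from $\rho^V$ (well-defined because $H$ stabilises the coset $eH$, so $\rho^V_h$ preserves $V_{eH}$). Module maps restrict to $H$-equivariant maps, so this is functorial. Conversely, I would define the induction functor $I:\Rep H\to \Rep G_A$ by $I(W)=\Ind_H^G W$, realised concretely as the space of functions $f:G\to W$ satisfying $f(gh)=\rho^W_{h^{-1}}f(g)$, with $G$-action $(g\cdot f)(x)=f(g^{-1}x)$ and $A$-action $(f\cdot \delta_{kH})(x)=\delta_{kH,xH}f(x)$. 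Checking that the $A$-action is $G$-equivariant and satisfies the right-module axioms is a direct computation from the definitions.

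Third, I would exhibit natural isomorphisms $FI\cong \id_{\Rep H}$ and $IF\cong \id_{\Rep G_A}$. For the first, evaluating an induced function at the identity $e\in G$ gives an $H$-equivariant isomorphism $\mathrm{ev}_e:F(I(W))\xrightarrow{\sim} W$. For the second, given $V\in \Rep G_A$, I would send $v\in V_{gH}$ to the function $f_v:G\to V_{eH}$ defined on coset representatives $g$ by $f_v(g)=\rho^V_{g^{-1}}v$ (extended by the $H$-equivariance condition); this is manifestly natural, $G$-equivariant, and respects the $A$-action by construction.

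The main obstacle, and essentially the only substantive check, is verifying that the $A$-action on $I(W)$ transports correctly under the $G$-action---that is, that $\tau$ on $I(W)$ is indeed a morphism in $\Rep G$ rather than a mere linear map---together with showing that the unit $v\mapsto f_v$ of the second natural isomorphism is well-defined (independent of choice of coset representative) and compatible with the idempotent decomposition. Both reduce to straightforward bookkeeping using the relation $\rho^V_h(V_{eH})\subseteq V_{eH}$ for $h\in H$ together with the defining equivariance of functions in $\Ind_H^G W$; no further input is required. Once these are in place, Remark~\ref{CAismodule} ensures both sides are canonically $\Rep G$-module categories (via $X\rhd V=X\otimes V$ on the left and $X\rhd W=\mathrm{Res}^G_H X\otimes W$ on the right after transport), and the Frobenius reciprocity implicit in $F\dashv I$ upgrades the equivalence to one of $\Rep G$-module categories.
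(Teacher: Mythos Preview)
The paper does not actually supply a proof of this proposition; it merely states the result with a citation to Ref.~\cite{XZ2205.09656}. So there is no in-paper argument to compare against.

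Your argument is the standard one and is correct: right $A$-modules in $\Rep G$ are precisely $G$-equivariant vector bundles on the transitive $G$-set $G/H$, and taking the fibre at the identity coset gives the equivalence with $\Rep H$, with induction as the inverse. One small point worth tightening: when you define $f_v$ for $v\in V_{gH}$, you should make explicit that $f_v$ is \emph{supported on the single coset} $gH$ (i.e.\ $f_v(x)=0$ for $xH\neq gH$, and $f_v(x)=\rho^V_{x^{-1}}v$ for $x\in gH$), so that the $A$-module structure matches. Your formula already gives this once extended by $H$-equivariance, and the independence of representative is the computation $f_v(gh)=\rho^V_{h^{-1}}\rho^V_{g^{-1}}v=\rho^V_{(gh)^{-1}}v$, but stating the support condition up front makes the compatibility with the idempotent decomposition transparent. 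The final upgrade to a $\Rep G$-module equivalence via $F(X\otimes V)=(X\otimes V)_{eH}=X\otimes V_{eH}$ is also correct.
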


\begin{definition}[Separable algebra]
    An algebra $(A,m,\eta)$ in a monoidal category $\cC$ is called separable if there is an $A$-$A$-bimodule map $e:A\to A\ot A$ such that $me=\id_A$.
\end{definition}

\begin{prop}\label{prop.surjsim}
    Let $A$ be an isometric algebra in a unitary fusion category $\cC$. Any $A$-module is a direct summand of some free module. Moreover, $\cC_A$ is both semisimple and unitary. Similar results also hold for $A$-$A'$-bimodules.
\end{prop}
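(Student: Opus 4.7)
The plan is to exhibit an explicit $A$-module-map section of the counit $\rho\colon M\ot A\to M$ of the free/forgetful adjunction, whose existence is the module-theoretic manifestation of the separability of any isometric Frobenius algebra. The only structural facts used are Theorem~\ref{thm.Fro}, which equips an isometric $A$ with a Frobenius structure $(A,m,\eta,m^\dag,\eta^\dag)$ satisfying $mm^\dag=\id_A$, and the Frobenius condition \eqref{eq.Fro}, which is precisely the statement that $m^\dag\colon A\to A\ot A$ is an $A$-$A$-bimodule map.

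Given a right $A$-module $(M,\rho)$, I would define $s:=(\rho\ot\id_A)(\id_M\ot m^\dag)(\id_M\ot\eta)\colon M\to M\ot A$. The identity $\rho s=\id_M$ is a routine calculation using module associativity $\rho(\rho\ot\id_A)=\rho(\id_M\ot m)$, the isometry condition $mm^\dag=\id_A$, and unitality $m(\id_A\ot\eta)=\id_A$. To show that $s$ is an $A$-module map, the strategy is to first derive from \eqref{eq.Fro} the algebraic identity $(m\ot\id_A)(\id_A\ot m^\dag\eta)=m^\dag=(\id_A\ot m)(m^\dag\eta\ot\id_A)$ in $\Hom_\cC(A,A\ot A)$, and then tensor on the left by $\id_M$ and post-compose with $\rho$; using module associativity one then reads off the required diagrammatic equality $(\id_M\ot m)(s\ot\id_A)=s\rho$. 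This exhibits $M$ as a direct summand of the free module $M\ot A$ in $\cC_A$, split by the pair $(s,\rho)$.

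The semisimplicity and unitarity of $\cC_A$ will then follow from standard arguments. Since $\cC$ is semisimple, any $M\in\cC$ decomposes as $\bigoplus_j X_j^{\oplus n_j}$ with $X_j$ simple, and therefore $M\ot A\cong\bigoplus_j(X_j\ot A)^{\oplus n_j}$ as right $A$-modules; combining this with the direct-summand property above and splitting idempotents in the finite-dimensional algebras $\End_{\cC_A}(X_j\ot A)$ decomposes $M$ into simple $A$-modules. For unitarity, by Remark~\ref{droppiso} I may restrict to modules whose actions are partially isometric; for such modules the $\cC$-dagger of a module map is again a module map, as one verifies from $\rho_M\rho_M^\dag=\id_M$ together with the module-Frobenius identity of Theorem~\ref{thm.rlm}(a). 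The bimodule case $_A\cC_{A'}$ is handled either by the analogous construction, using a section of the bimodule counit $A\ot B\ot A'\to B$ built from both $m^\dag$ and $(m')^\dag$, or more economically by transporting semisimplicity and unitarity through the equivalence $_A\cC_{A'}\cong\Fun_\cC(\cC_A,\cC_{A'})$ of Remark~\ref{rmk.bimofunc}. The main obstacle is precisely the module-map verification for $s$ in the first step, as this is the only place where the Frobenius structure on $A$ is genuinely used; every subsequent assertion reduces to it via standard semisimple/dagger-category manipulations.
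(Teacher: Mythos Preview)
Your proposal is correct and follows essentially the same strategy as the paper: separability of the isometric Frobenius algebra $A$ (via $mm^\dag=\id_A$ and the Frobenius condition) is used to exhibit every module $M$ as a retract of the free module $M\ot A$, and semisimplicity and unitarity are then deduced by standard arguments.

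There are only stylistic differences worth noting. For the splitting, the paper argues more conceptually: since $m^\dag$ is an $A$-$A$-bimodule section of $m$, one has $A\ot A\cong A\oplus\ker m$ as bimodules, and applying the functor $M\otr[A](-)$ gives $M\ot A\cong M\oplus(M\otr[A]\ker m)$. You instead write down the section $s=(\rho\ot\id_A)(\id_M\ot m^\dag\eta)$ explicitly and verify the module-map condition by hand; this is the same splitting, unpacked. For semisimplicity, the paper uses the cleaner ``everything is projective'' argument (free modules are projective via the adjunction $\Hom_{\cC_A}(M\ot A,-)\cong\Hom_\cC(M,-)$, and direct summands of projectives are projective), whereas you invoke idempotent-splitting in the finite-dimensional algebras $\End_{\cC_A}(X_j\ot A)$; both are standard and yield the same conclusion. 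Your unitarity and bimodule paragraphs match the paper's essentially verbatim.
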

\begin{proof}
    Since $m^\dag$ is automatically an $A$-$A$-bimodule map, an isometric algebra $(A,m,\eta)$ is automatically separable, i.e., we have
    \begin{equation}
    \begin{tikzcd}
	{A\ot A} & A
	\arrow["m", shift left=1, from=1-1, to=1-2]
	\arrow["e", shift left=1, from=1-2, to=1-1]
    \end{tikzcd},\end{equation}
    such that $me=\id_A$ where $e=m^\dag$. Recall Definition \ref{def.ds}, $A\ot A\cong  A\oplus \ker m$. In other words, $A$ is a direct summand of $A\ot A$, both as $A$-$A$-bimodules. 
    
    The remaining proof is essentially as in  \cite{DMNO1009.2117} Proposition 2.7; we elaborate more on the details for the reader's convenience. Given any right $A$-module $M$, $M\cong M\otr[A] A$ is a direct summand of the free module $ M\ot A\cong M\otr[A] (A\ot A)$, where
    \begin{equation} M\otr[A] (A\ot A)\cong M\otr[A] (A\oplus \ker m)\cong (M\otr[A] A)\oplus (M\otr[A]\ker m). \end{equation}

    Semisimpleness of an abelian category is equivalent to that any object is projective, i.e., $\Hom(X,-)$ preserves colimits for all $X$. Since $\cC$ is semisimple, $\Hom_{\cC_A}(M\otimes A,-)\cong \Hom_\cC(M,-)$ preserves colimits, i.e., all free modules are projective. We have shown in the above that any $A$-module is a direct summand of a free module, and is thus also projective. Therefore, $\cC_A$ is semisimple. The proof for $_A\cC_A'$ is similar. 

    The unitary structure is inherited from $\cC$: given an $A$-module map $f:M\otimes A\to N \otimes A$, since the action on free modules are all partially isometric, $f^\dag$ is automatically an $A$-module map due to Theorem~\ref{thm.rlm}. Since any $A$-module is a direct summand of some free module, the unitary structure is induced.
\end{proof}
\begin{remark}\label{separable}
    Thus to find all simple $A$-modules, we only need to decompose the free modules $i\ot A$ for all simples $i$. Similarly, given any bimodule $B$, $B\cong A\otr[A]B\otr[A'] A'$ is a direct summand of the free bimodule $A\ot B\ot A'\cong (A\ot A)\otr[A]B\otr[A'] (A'\ot A')$. Thus all simple bimodules can be found by decomposing the free bimodules $A\ot i\ot A'$ for all simples $i$. 
\end{remark}

\begin{definition}[Commutative algebra]
    An algebra $(A,m,\eta)$ in a braided monoidal category $\cB$ is called commutative if
    \begin{equation}
        \begin{tikzcd}
	{A\otimes A} && {A\otimes A} \\
	& A
	\arrow["{\beta_{A,A}}", from=1-1, to=1-3]
	\arrow["m"', from=1-1, to=2-2]
	\arrow["m", from=1-3, to=2-2]
        \end{tikzcd},
    \end{equation}
    where $\beta$ is the braiding isomorphism in $\cB$.

    Similarly, a coalgebra $(C,\Delta,\epsilon)$ in $\cB$ is called co-commutative if $\beta_{C,C}\Delta=\Delta$.
\end{definition}

% \begin{example}[Dual group algebra]
%      Given a group algebra $(\mathbb{C}[G],m,\eta)$ in \textbf{Vec}, its dual algebra $(\mathbb{C}[G]^*,\Delta^*,\epsilon^*)$ is a commutative algebra, where
%     \begin{align}
%         \mathbb{C}[G]^*=\{\sum_{g^*} a_{g^*} g^*|g^*(h)=\delta_{g,h},g,h\in G,a_{g^*}\in \mathbb{C} \},
%     \end{align}
%     and the product is defined as
%     \begin{align}
%         \Delta^*:\mathbb{C}[G]\times \mathbb{C}[G]&\to\mathbb{C}[G]
%         \nonumber\\
%         (\sum_{g^*} a_{g^*} g^*,\sum_{h^*} b_{h^*} h^*)&\mapsto \sum_{g^* h^*} \delta_{g,h}h^*.
%     \end{align}
%     The dual group algebra $(\mathbb{C}[G]^*,\Delta^*,\epsilon^*)$ in \textbf{Vec} is isomorphic to $(\Fun(G),\delta,\varepsilon)$ in $\Rep G$. We note that the group algebra itself is not commutative if the group is non-Abelian.
% \end{example}

\begin{definition}[Local module]
    Given an algebra $A$ in a braided category $\cB$, a right $A$-module $(M,\rho)$ is called local if
    \begin{equation}
        \begin{tikzcd}
	{M\ot A} && M \\
	{A\ot M} && {M\ot A}
	\arrow["\rho", from=1-1, to=1-3]
	\arrow["{\beta_{M,A}}"', from=1-1, to=2-1]
	\arrow["{\beta_{A,M}}"', from=2-1, to=2-3]
	\arrow["\rho"', from=2-3, to=1-3]
        \end{tikzcd}.
    \end{equation}  
    We denote by $\cC_A^{\text{loc}}$ the full subcategory of $\cC$ consisting of local right $A$-modules.
\end{definition}

\begin{definition}[Lagrangian algebra]
    A commutative algebra $A_L$ in a braided category $\cB$ is called a Lagrangian algebra if any local
    $A_L$-module in $\cB$ is a direct sum of copies of $A_L$.
\end{definition}

\end{appendix}

% TODO:
% Provide your bibliography here. You have two options:

% FIRST OPTION - write your entries here directly, following the example below, including Author(s), Title, Journal Ref. with year in parentheses at the end, followed by the DOI number.
%\begin{thebibliography}{99}
%\bibitem{1931_Bethe_ZP_71} H. A. Bethe, {\it Zur Theorie der Metalle. i. Eigenwerte und Eigenfunktionen der linearen Atomkette}, Zeit. f{\"u}r Phys. {\bf 71}, 205 (1931), \doi{10.1007\%2FBF01341708}.
%\bibitem{arXiv:1108.2700} P. Ginsparg, {\it It was twenty years ago today... }, \url{http://arxiv.org/abs/1108.2700}.
%\end{thebibliography}

% SECOND OPTION:
% Use your bibtex library
% \bibliographystyle{SciPost_bibstyle} % Include this style file here only if you are not using our template

\bibliography{local.bib}
% \bibliography{../../ref/all}
\nolinenumbers

\end{document}